\newtheorem{thm}{Theorem}[section]\crefname{thm}{Theorem}{Theorems}
\newtheorem{lem}[thm]{Lemma}\crefname{lem}{Lemma}{Lemmas}
\crefname{claim}{Claim}{Claims}
\newtheorem{definition}[thm]{Definition}\crefname{defn}{Definition}{Definitions}
\newtheorem{prp}[thm]{Proposition}\crefname{prop}{Proposition}{Propositions}
\newtheorem{prb}[thm]{Problem}\crefname{prb}{Problem}{Problems}
\newtheorem{rem}[thm]{Remark}\crefname{rem}{Remark}{Remarks}
\newtheorem{cor}[thm]{Corollary}\crefname{cor}{Corollary}{Corollaries}
\numberwithin{equation}{section}
\DeclareMathOperator{\poly}{poly}
\DeclareMathOperator{\polylog}{polylog}
\DeclareMathOperator{\diag}{diag}
\DeclareMathOperator{\conv}{conv}
\renewcommand{\vec}[1]{\bm{#1}}
\newcommand{\mat}[1]{\bm{#1}}
\newcommand{\A}[2]{\mat A(\vec{#1}, \vec{#2})}
\newcommand{\RR}{\mathbb R}
\newcommand{\N}{\mathbb N}
\newcommand{\R}{\mathbb R}
\newcommand{\Exp}{\mathbb E}
\newcommand{\indic}{\mathbbm{1}}
\newcommand{\grad}{\nabla}
\newcommand{\eps}{\varepsilon}
\renewcommand{\epsilon}{\varepsilon}
\newcommand{\bigO}[1]{O\left(#1\right)}
\renewcommand{\i}{\mathbf{i}\,}
\NewDocumentCommand{\bigOt}{o m}{%
  \IfNoValueTF{#1}
    {\ensuremath{\widetilde{O}\left(#2\right)}}
    {\ensuremath{\widetilde{O}_{#1}\left(#2\right)}}%
}
\newcommand{\approxA}{\gamma}
\renewcommand{\vec}[1]{\boldsymbol{\mathbf{#1}}}
\DeclarePairedDelimiter{\abs}{\lvert}{\rvert}
\DeclarePairedDelimiter{\norm}{\lVert}{\rVert}
\DeclarePairedDelimiterX{\ip}[2]{\langle}{\rangle}{#1,#2}
\newcommand{\Step}{\Delta}
\newcommand{\prevrowmarginal}{\ensuremath{\phi_t}}
\newcommand{\prevcolmarginal}{\ensuremath{\psi_t}}
\crefname{line}{line}{lines}
\newcommand{\ApproxScalingFactor}{\hyperref[subroutine:ApproxScalingFactor spec]{\texttt{\ref*{subroutine:ApproxScalingFactor spec}}}\xspace}
\newcommand{\TestScaling}{\hyperref[subroutine:TestScaling spec]{\texttt{\ref*{subroutine:TestScaling spec}}}\xspace}
\newcommand{\QuantumApproximateSum}{\hyperref[subroutine:QAC]{\texttt{QuantumApproximateSum}}\xspace}
\newcommand{\GreaterOrEqual}{\hyperref[alg:GreaterOrEqual]{\texttt{GreaterOrEqual}}\xspace}
\newcommand{\RelativeEntryAdditiveApprox}{\hyperref[alg:RelativeEntryAdditiveApprox]{\texttt{RelativeEntryAdditiveApprox}}\xspace}
\newcommand*\wthelper[2]{%
        \hbox{\dimen@\accentfontxheight#1%
                \accentfontxheight#11.3\dimen@
                $\m@th#1\widetilde{#2}$%
                \accentfontxheight#1\dimen@
        }%
}
\newcommand*\accentfontxheight[1]{%
        \fontdimen5\ifx#1\displaystyle
                \textfont
        \else\ifx#1\textstyle
                \textfont
        \else\ifx#1\scriptstyle
                \scriptfont
        \else
                \scriptscriptfont
        \fi\fi\fi3
}
\newcommand{\ApproxScalingFactorIO}{\Input{Oracle access to rational~$\vec a \in [0,1]^n$, rational $r \in (0, 1]$, oracle access to~$\vec y \in \R^n$ encoded in~$(b_1, b_2)$ fixed-point format, desired precision~$\delta \in (0, 1]$, desired failure probability $\eta \in [0,1]$, lower bound~$\mu > 0$ on non-zero entries of $\vec a$}\Output{A number~$x$ encoded in~$(b_1, b_2)$ fixed-point format}}
\newcommand{\email}[1]{\href{mailto:#1}{\texttt{#1}}}
\def\lemgeneralizedpinskercontent{
  Let $\vec a, \vec b \in \vec \R_+^n$ and assume $\norm{\vec a}_1 = 1$ and $b_\ell > 0$ for all $\ell \in [n]$.
  Define the function $w\colon (-1, \infty) \to \R$ by $w(\beta) = \beta - \ln(1 + \beta)$.
  Then
  \begin{align*}
    D(\vec a \Vert \vec b) \geq w(\norm{\vec a - \vec b}_1).
  \end{align*}
  For $\beta \in [0, 1]$, we have the estimate $w(\beta) \geq \beta^2 / 4$, while for $\beta \geq 1$ we have $w(\beta) \geq (1 - \ln 2) \beta$.
  In particular, if $\norm{\vec a - \vec b}_1 \leq 1$, then $D(\vec a \Vert \vec b) \geq \norm{\vec a - \vec b}_1^2 / 4$.
}
\title{Quantum algorithms for matrix scaling and matrix balancing}
\date{}
\author{Joran van Apeldoorn\thanks{Institute for Information Law and QuSoft, University of Amsterdam, the Netherlands. Supported by the Dutch Research Council (NWO/OCW), as part of the Quantum Software Consortium programme (project number 024.003.037). \email{work@bitofbytes.com}} \and
Sander Gribling\thanks{IRIF, Universit\'e de Paris, CNRS, Paris, France. Partially supported by SIRTEQ-grant QuIPP. \email{gribling@irif.fr}} \and
Yinan Li\thanks{Graduate School of Mathematics, Nagoya University, Japan. Partially supported by MEXT Quantum Leap Flagship Program (MEXT Q-LEAP) Grant Number JPMXS0120319794.
\email{Yinan.Li@math.nagoya-u.ac.jp}} \and
Harold Nieuwboer\thanks{Korteweg--de Vries Institute for Mathematics and QuSoft, University of Amsterdam. \email{h.a.nieuwboer@uva.nl}} \and
Michael Walter\thanks{Korteweg--de Vries Institute for Mathematics, Institute for Theoretical Physics, Institute for Language, Logic, and Computation, and QuSoft, University of Amsterdam. Partially supported by the Dutch Research Council (NWO) through Veni grant no.~680-47-459. \email{m.walter@uva.nl}} \and
Ronald de Wolf\thanks{QuSoft, CWI and University of Amsterdam, the Netherlands. Partially supported by the Dutch Research Council (NWO/OCW), as part of the Quantum Software Consortium programme (project number 024.003.037), and through QuantERA ERA-NET Cofund project QuantAlgo (680-91-034). \email{rdewolf@cwi.nl}}
}
\begin{document}

\maketitle

\begin{abstract}
    Matrix scaling and matrix balancing are two basic linear-algebraic problems with a wide variety of applications, such as approximating the permanent, and pre-conditioning linear systems to make them more numerically stable.
    We study the power and limitations of quantum algorithms for these problems.

    We provide quantum implementations of two classical (in both senses of the word) methods: Sinkhorn's algorithm for matrix scaling and Osborne's algorithm for matrix balancing.
    Using amplitude estimation as our main tool, our quantum implementations
    both run in time $\widetilde O(\sqrt{mn}/\eps^4)$ for scaling or balancing an $n \times n$ matrix (given by an oracle) with $m$ non-zero entries to within $\ell_1$-error~$\eps$.
    Their classical analogs use time~$\widetilde O(m/\eps^2)$, and every classical algorithm for scaling or balancing with small constant $\eps$ requires $\Omega(m)$ queries to the entries of the input matrix.
    We thus achieve a polynomial speed-up in terms of~$n$, at the expense of a worse polynomial dependence on the obtained $\ell_1$-error~$\eps$.
    We emphasize that even for constant $\eps$ these problems are already non-trivial (and relevant in applications).

    Along the way, we extend the classical analysis of Sinkhorn's and Osborne's algorithm to allow for errors in the computation of marginals.
    We also adapt an improved analysis of Sinkhorn's algorithm for entrywise-positive matrices to the $\ell_1$-setting, leading to an $\widetilde O(n^{1.5}/\eps^3)$-time quantum algorithm for $\eps$-$\ell_1$-scaling in this case.

    We also prove a lower bound, showing that our quantum algorithm for matrix scaling is essentially optimal for constant $\eps$: every quantum algorithm for matrix scaling that achieves a constant $\ell_1$-error with respect to uniform marginals needs to make at least $\Omega(\sqrt{mn})$ queries.
\end{abstract}

\clearpage

\setcounter{tocdepth}{2}
\tableofcontents

\clearpage

\section{Introduction}\label{sec:intro}

\subsection{Matrix scaling and matrix balancing}
Matrix scaling is a basic linear-algebraic problem with many applications. A \emph{scaling} of an~$n\times n$ matrix~$\mat A$ with non-negative entries is a matrix  ${\mat B}={\mat{XAY}}$ where $\mat X$ and $\mat Y$ are positive diagonal matrices.\footnote{We assume $\mat A$ is square for simplicity, but everything can straightforwardly be extended to non-square matrices.} In other words, we multiply the $i$-th row with $X_{ii}$ and the $j$-th column with $Y_{jj}$. We say~$\mat A$ is \emph{exactly scalable} to marginals $\vec r\in\R_+^n$ and $\vec c\in\R_+^n$ if there exist $\mat X$ and $\mat Y$ such that the vector ${\vec r}(\mat B)=(\sum_{j=1}^n B_{ij})_{i\in[n]}$ of row sums of the scaled matrix ${\mat B}$ equals $\vec r$, and its vector ${\vec c}(\mat B)$ of column sums equals~$\vec c$.
One typical example would be if $\mat r$ and $\mat c$ are the all-$1$ vectors, which means we want $\mat B$ to be doubly stochastic: the rows and columns of $\mat B$ would be probability distributions.

In many cases it suffices to find \emph{approximate} scalings. Different applications use different notions of approximation. We could for instance require ${\vec r}({\mat B})$ to be $\eps$-close to ${\vec r}$ in $\ell_1$- or $\ell_2$-norm, or in relative entropy (Kullback-Leibler divergence), for some parameter $\eps$ of our choice, and similarly require ${\vec c}({\mat B})$ to be close to ${\vec c}$.

A related problem is \emph{matrix balancing}. Here we do not prescribe desired marginals, but the goal is to find a diagonal $\mat X$ such that the row and column marginals of ${\mat B}={\mat{XAX^{-1}}}$ are close \emph{to each other}. Again, different notions of closeness ${\vec r}(\mat B)\approx{\vec c}(\mat B)$ are possible.

An important application, used in theory as well as in practical linear-algebra software (e.g.~LAPACK~\cite{lapack} and MATLAB~\cite{matlab-balance}), is in improving the numerical stability of linear-system solving. Suppose we are given matrix $\mat A$ and vector $\vec b$, and we want to find a solution to the linear system ${\mat A}{\vec v}={\vec b}$. Note that $\vec v$ is a solution iff ${\mat B}{\vec v'}={\vec b'}$ for ${\vec v'}={\mat X}{\vec v}$ and ${\vec b'}={\mat X}{\vec b}$. An appropriately balanced matrix $\mat B$ will typically be more numerically stable than the original $\mat A$, so solving the linear system ${\mat B}{\vec v'}={\vec b'}$ and then computing ${\vec v}={\mat X}^{-1} {\vec v'}$, is often a better way to solve the linear system ${\mat A}{\vec v}={\vec b}$ than directly computing ${\mat A^{-1}}{\vec b}$.

Matrix scaling and balancing have surprisingly many and wide-ranging applications. Matrix scaling was introduced by Kruithof for Dutch telephone traffic computation~\cite{kruithof:telefoon}, and has also been used in other areas of economics~\cite{stone:socialaccounting}.
In theoretical computer science it has been used for instance to approximate the permanent of a given matrix~\cite{lsw00}, as a tool to get lower bounds on unbounded-error communication complexity~\cite{forster:probcc}, and for approximating optimal transport distances~\cite{altschuler2017nearlinear}.
In mathematics, it has been used as a common tool in practical linear algebra computations~\cite{livne&golub:scaling,bradley:phd,pock&chambolle,COPB:splitting}, but also in statistics~\cite{sinkhorn:scaling}, optimization~\cite{rothblum&schneider:scaling}, and for strengthening the Sylvester-Gallai theorem~\cite{BDYW11}.
Matrix balancing has a similarly wide variety of applications, including pre-conditioning to make practical matrix computations more stable (as mentioned above), and approximating the min-mean-cycle in a weighted graph~\cite{altschuler2020minmeancycle}.
Many more applications of matrix scaling and balancing are mentioned in~\cite{lsw00,idel2016review,gargoliveira:recent}. Related scaling problems have applications to algorithmic non-commutative algebra~\cite{garg2019operator,burgisser2019towards}, functional analysis~\cite{garg2017algorithmic}, and quantum information~\cite{gurvits2004classical,burgisser2018alternating,burgisser2018efficient}.

\subsection{Known (classical) algorithms}

Given the importance of good matrix scalings and balancings, how efficiently can we actually find them? For concreteness, let us first focus on scaling. Note that left-multiplying $\mat A$ with a diagonal matrix $\mat X$ corresponds to multiplying the $i$-th row of $\mat A$ with $X_{ii}$. Hence it is very easy to get the desired row sums:
just compute all row sums $r_i({\mat A})$ of $\mat A$ and define $\mat X$ by $X_{ii}=r_i/r_i({\mat A})$, then~$\mat{XA}$ has exactly the right row sums. Subsequently, it is easy to get the desired column sums: just right-multiply the current matrix $\mat{XA}$ with diagonal matrix $\mat Y$ where $Y_{jj}=c_j/c_j({\mat{XA}})$, then~$\mat{XAY}$ will have the right column sums. The problem with this approach, of course, is that the second step is likely to undo the good work of the first step, changing the row sums away from the desired values; it is not at all obvious how to \emph{simultaneously} get the row sums and column sums right. Nevertheless, the approach of alternating row-normalizations with column-normalizations turns out to work. This alternating algorithm is known as \emph{Sinkhorn's algorithm}~\cite{sinkhorn:scaling}, and has actually been (re)discovered independently in several different contexts.

For matrix balancing there is a similar method known as \emph{Osborne's algorithm}~\cite{10.1145/321043.321048,parlett&reinsch:balancing}. In each iteration this chooses a row index $i$ and defines $X_{ii}$ such that the $i$-th row sum and the $i$-th column sum become equal. Again, because each iteration can undo the good work of earlier iterations it is not at all obvious that this converges to a balancing of $\mat A$.
Remarkably, even though Osborne's algorithm was proposed more than six decades ago and is widely used in linear algebra software, an explicit bound on its convergence rate has only been proven very recently~\cite{schulman&sinclair:precond,doi:10.1137/1.9781611974782.11}!

At the same time there have been other, more sophisticated algorithmic approaches for scaling and balancing.
Just to mention one: we can parametrize ${\mat X}=\diag(e^{\vec x})$ and ${\mat Y}=\diag(e^{\vec y})$ by vectors ${\vec x},{\vec y}\in\R^n$ and consider the following convex potential function:
\[
f({\vec x},{\vec y})=\sum_{i,j=1}^n A_{ij}e^{x_i+y_j} - \sum_{i=1}^n r_i x_i - \sum_{j=1}^n c_j y_j.
\]
Note that the partial derivative of this~$f$ w.r.t.\ the variable $x_i$ is
$\sum_{j=1}^n A_{ij}e^{x_i+y_j}-r_i=r_i(\mat{XAY})-r_i$,
and the partial derivative w.r.t.\ $y_j$ is $c_j(\mat{XAY}) - c_j$.
A minimizer ${\vec x},{\vec y}$ of~$f$ will have the property that all these $2n$ partial derivatives are equal to~0, which means $\mat{XAY}$ \emph{is exactly scaled!}
Accordingly, (approximate) scalings can be obtained by finding (approximate) minimizers using methods from convex optimization. In fact, Sinkhorn's original algorithm can be interpreted as coordinate descent on this $f$, and Osborne's algorithm can similarly be derived by slightly modifying~$f$. More advanced methods from convex optimization have also been applied, such as ellipsoid methods~\cite{KALANTARI199687,doi:10.1137/S0895479895289765,NEMIROVSKI1999435}, box-constrained Newton methods~\cite{azlow17,cmtv17} and interior-point methods~\cite{cmtv17,burgisser2020interior}.

Historically, research on matrix scaling and matrix balancing (and generalizations such as operator scaling) has focused on finding $\eps$-$\ell_2$-scalings. More recently also algorithms for finding $\eps$-$\ell_1$-scalings have been extensively studied, due to their close connection with permanents and finding perfect matchings in bipartite graphs~\cite{lsw00,chakrabarty2020better}, and because the $\ell_1$-distance is an important error measure for statistical problems such as computing the optimal transport distance between distributions~\cite{NIPS2013_af21d0c9,altschuler2017nearlinear}, even already for constant~$\eps$. By the Cauchy-Schwarz inequality, an $(\eps/\sqrt{n})$-$\ell_2$-scaling for $\mat A$ is also an $\eps$-$\ell_1$-scaling, but often more direct methods work better for finding an $\eps$-$\ell_1$-scaling.

Below in \cref{tab: best time complexity for uniform r c} we tabulate the best known algorithms for finding $\eps$-scalings in $\ell_1$-norm for entrywise-positive matrices and general non-negative matrices.\footnote{For entrywise-positive matrices, the second-order methods (i.e., those that use the Hessian, not just the gradient) theoretically outperform the \emph{classical} first-order methods in any parameter regime. However, they depend on highly non-trivial results for graph sparsification and Laplacian system solving which are not easy to efficiently  implement in practice, in sharp contrast with the eminently practical Sinkhorn and Osborne algorithms.}
For the well-definedness of the algorithms, we will always assume the $n\times n$ input matrix $\mat A$ has at least one non-zero entry in every row and column, and every entry of the target marginals $\vec r, \vec c$ is non-zero. In addition, we assume $\mat A$ is \emph{asymptotically scalable}: for every $\eps>0$, there exist $\mat X$ and $\mat Y$ such that
\[
\norm{{\vec r}({\mat B})-{\vec r}}_1+ \norm{{\vec c}({\mat B})-{\vec c}}_1\leq \eps,
\]
where $\mat B=\mat{XAY}$.
A sufficient condition for this is that the matrix is entrywise-positive.
To state the complexity results, let $m$ be the number of non-zero entries in $\mat A$, assume $\sum_{i,j=1}^n A_{ij} =1$, assume that its non-zero entries lie in $[\mu, 1]$, and $\norm{\vec r}_1=\norm{\vec c}_1=1$ (so a uniform marginal would be~$\vec 1/n$).
We will assume $\eps\in(0,1)$. The input numbers to the algorithm are all assumed to be rational, with bit size bounded by $\polylog(n)$, unless specified otherwise.
\begin{table}[ht!]
 {   \centering
\begin{tabular}{|c|l|l|l|}
\hline
& \multicolumn{1}{c|}{$(\vec 1/n, \vec 1/n)$} & \multicolumn{1}{c|}{$(\vec r,\vec c)$} & \multicolumn{1}{c|}{References and remarks} \\
\hline
\multirow{5}{*}{General} & $\widetilde O(m/\eps^2)$ & $\widetilde O(m/\eps^2)$ & Sinkhorn, via KL \cite{chakrabarty2020better}\footnote{Their proofs work only for input matrices that are exactly scalable. However, with our potential gap bound we can generalize their analysis to work for arbitrary asymptotically-scalable matrices.} \\
& $\widetilde O(mn^{2/3}/\eps^{2/3})$&$\widetilde O(mn/(h^{1/3}\eps^{2/3}))$& first-order, via $\ell_2$ \cite{azlow17} \\
& $\widetilde O(m\log\kappa)$&$\widetilde O(m\log\kappa)$ & box-constrained method, via $\ell_2$ \cite{cmtv17} \\
& $\widetilde O(m^{1.5})$ &$\widetilde O(m^{1.5})$& interior-point method, via $\ell_2$ \cite{cmtv17} \\
& \bm{$\widetilde O(\sqrt{mn}/\eps^4)$}& \bm{$\widetilde O(\sqrt{mn}/\eps^4)$}& \textbf{Sinkhorn, quantum, \cref{cor:full sinkhorn performance l1}} \\
\hline
&$\widetilde O(n^2/\eps)$ & $\widetilde O(n^3/\eps)$ & Sinkhorn, via $\ell_2$~\cite{KALANTARI1993237,klrs08}, $h\eps\leq\sqrt{2n}$ \\
Entrywise & $\widetilde O(n^2/\eps^2)$ & $\widetilde O(n^2/\eps^2)$ & Sinkhorn, via KL \cite{altschuler2017nearlinear,chakrabarty2020better} \\
positive & $\widetilde O(n^2)$&$\widetilde O(n^2)$& box-constrained, via $\ell_2$~\cite{azlow17,cmtv17} \\
& $\bm{\widetilde O(n^{1.5}/ \eps^3)}$& $\bm{\widetilde O(n^{1.5}/ \eps^3)}$& \textbf{Sinkhorn, quantum, \cref{thm: faster for positive informal}} \\
\hline
\end{tabular}
}
    \caption{State-of-the-art time complexity of first- and second-order methods for finding an $\eps$-$\ell_1$-scaling, both to uniform marginals and to arbitrary marginals. The boldface lines are from this paper, and the only quantum algorithms for scaling that we are aware of. Here $h$ is the smallest integer such that $h\vec r$ and $h\vec c$ are integer vectors; $m$ is an upper bound on the number of non-zero entries of $\mat A$; $\kappa$ represents the ratio between the largest and the smallest entries of the optimal scalings $\mat X$ and $\mat Y$, which can be exponential in $n$. Many referenced results originally use a different error model (e.g., $\ell_2$ or Kullback-Leibler divergence), which we convert to guarantees in the $\ell_1$-norm for comparison.  Here the $\widetilde O$-notation hides polylogarithmic factors in $n$, $1/\eps$ and $1/\mu$.}
    \label{tab: best time complexity for uniform r c}
\end{table}

For matrix balancing, Osborne's algorithm has very recently been shown to produce an $\eps$-$\ell_1$-balancing in time $\widetilde O(m/\eps^2)$ when in each iteration the update is chosen randomly~\cite{altschuler2020random}. Algorithms based on box-constrained Newton methods and interior-point methods can find $\eps$-balancings in time $\widetilde O(m\log \kappa)$ and $\widetilde O(m^{1.5})$, respectively, where $\kappa$ denotes the ratio between the largest and the smallest entries of the optimal balancing.

\subsection{Our first contribution: quantum algorithms for \texorpdfstring{$\ell_1$}{l\_1}-scaling and balancing}
Because a classical scaling algorithm has to look at each non-zero matrix entry (at least with large probability), it is clear that $\Omega(m)$ is a classical lower bound. This would be $\Omega(n^2)$ in the case of a dense or even entrywise-positive matrix~$\mat A$. As can be seen from~\cref{tab: best time complexity for uniform r c}, the best classical algorithms also achieve this $m$ lower bound up to logarithmic factors, with various dependencies on $\eps$. The same is true for matrix balancing: $\Omega(m)$ queries are necessary, and this is achievable in different ways, with different dependencies on $\eps$ and/or other parameters.

Our main contribution in this paper is to give quantum algorithms for scaling and balancing that beat the best-possible classical algorithms, at least for relatively large~$\eps\in(0,1)$:
\begin{quote}
{\bf Scaling:} We give a quantum algorithm that (with probability $\geq 2/3$) finds an $\eps$-$\ell_1$-scaling for an asymptotically-scalable $n\times n$ matrix $\mat A$ with $m$ non-zero entries (given by an oracle) to desired positive marginals $\vec r$ and $\vec c$ in time $\widetilde O(\sqrt{mn}/\eps^4)$.
When $\mat A$ is entrywise positive (and hence $m=n^2$), the upper bound can be improved to $\widetilde O(n^{1.5}/\eps^3)$.\\[0.3em]
{\bf Balancing:} We give a quantum algorithm that (with probability $\geq 2/3$) finds an $\eps$-$\ell_1$-balancing for an asymptotically-balanceable $n\times n$ matrix $\mat A$ with $m$ non-zero entries (given by an oracle) in time $\widetilde O(\sqrt{mn}/\eps^4)$.
\end{quote}
Our scaling algorithms in fact achieve closeness measured in terms of the relative entropy, and then use Pinsker's inequality ($\norm{\vec p - \vec q}_1^2=O(D(\vec p||\vec q)$)) to convert this to an upper bound on the $\ell_1$-error.

Note that compared to the classical algorithms we have polynomially better dependence on $n$ and $m$, at the expense of a worse dependence on $\eps$.
There have recently been a number of new quantum algorithms with a similar tradeoff: they are better than classical in terms of the main size parameter but worse in terms of the precision parameter. Examples are the quantum algorithms for solving linear and semidefinite programs~\cite{brandao2016QSDPSpeedup,apeldoorn2017QSDPSolvers,brandao2017QSDPSpeedupsLearning,apeldoorn2018ImprovedQSDPSolving} and for boosting of weak learning algorithms~\cite{arunachalam&maity:qboosting,hamoudiea:qhedge,izdebski&wolf:qboosting}.

Conceptually our algorithms are quite simple: we implement the Sinkhorn and Osborne algorithms but replace the exact computation of each row and column sum by \emph{quantum amplitude estimation};
this allows us to approximate the sum of $n$ numbers up to some small multiplicative error $\delta$ (with high probability) at the expense of roughly $\sqrt{n} / \delta$ queries to those numbers, and a similar number of other operations.

Our analysis is based on a potential argument (for Sinkhorn we use the above-mentioned potential~$f$). The error~$\delta$ causes us to make less progress in each iteration compared to an ``exact'' version of Sinkhorn or Osborne. If $\delta$ is too large we may even make backwards progress, while if $\delta$ is very small there is no quantum speed-up! We show that there is a choice of $\delta$ for which the negative contribution due to the approximation errors is of the same order as the progress in the ``exact'' version, and that choice also results in a speed-up.
We should caution, however, that it is quite complicated to actually implement this idea precisely and to keep track of and control the various approximation errors and error probabilities induced by the quantum estimation algorithms, as well as by the fact that we cannot represent the numbers involved with infinite precision.\footnote{This issue of precision is sometimes swept under the rug in classical research on scaling algorithms.}

\subsection{Our second contribution: quantum lower bound for scaling}
A natural question would be whether our upper bounds can be improved further.
Since the output has length roughly $n$, there is an obvious lower bound of $n$ even for quantum algorithms. An $\widetilde O(n)$ quantum algorithm would, however, still be an improvement over our algorithm, and it would be a quadratic speed-up over the best classical algorithm.
We dash this hope here by showing that our algorithm is essentially optimal for constant~$\eps$, even for the special case of $\mat A$ that is exactly scalable to uniform marginals:
\begin{quote}
There exists a constant $\eps>0$ such that every quantum algorithm that (with probability $\geq 2/3$) finds an $\eps$-$\ell_1$-scaling for given $n\times n$ matrix $\mat A$ that is exactly scalable to uniform marginals and has $m$ potentially non-zero entries, has to make $\Omega(\sqrt{mn})$ queries to~$\mat A$.
\end{quote}
Our proof constructs a set of instances $\mat A$ that hide permutations, it shows how approximate scalings of such an $\mat A$ give us some information about the hidden permutation, and then uses the adversary method~\cite{ambainis:lowerboundsj} to lower bound the number of quantum queries to the matrix needed to find that information. In particular, we show that for a permutation $\sigma\in S_n$, learning for each $i \in [n]$ the value $\sigma(i) \bmod 2$, takes $\Omega(n\sqrt{n})$ queries to the entries of the associated permutation matrix.

\paragraph{Organization.}
In \cref{sec:preliminaries} we discuss notation, problem definitions, our input model and our computational model.
\Cref{sec: full testing} contains the analysis of Sinkhorn's algorithm where we allow for errors in the computation of the marginals and assume access to two subroutines. The classical and quantum implementations of those two subroutines are presented in \cref{sec:arithmetic}.
In \cref{sec:overview random} we study a randomized version of Sinkhorn's algorithm that is known to have good performance in practice; the asymptotic complexity will be the same as that of the algorithm described in \cref{sec: full testing}, but its analysis is more involved.
The analysis of the randomized version of Sinkhorn's algorithm does extend naturally to Osborne's algorithm for matrix balancing which we discuss in \cref{sec:balancing}; it seems that randomization is necessary to obtain a quantum speed-up here.
In \cref{sec:lower bound} we discuss our (matching) lower bound on matrix scaling.
In \cref{sec:potential bound} and \cref{sec:generalized pinsker} we present two technical results: a bound on the potential function we use for matrix scaling and a version of Pinsker's inequality for non-normalized vectors.
Finally in \cref{sec:positive} we give an improved analysis of Sinkhorn's algorithm for the special case of entrywise-positive matrices.

\section{Preliminaries}\label{sec:preliminaries}

\subsection{Notation and conventions}
We abbreviate $\R_+ = [0,\infty)$ and write $[n]=\{1,\dots,n\}$. We use $\log_2$ to denote the logarithm with base $2$ and $\ln$ to denote the natural logarithm with base $e$. We use $\indic_S$ to denote the indicator function of a probabilistic event $S$. When necessary, we use the $(b_1,b_2)$-fixed-point format to denote numbers: we use $b_1$ leading bits, $b_2$ trailing bits, and $1$ bit to denote the sign. That is, a number $a$ written in $(b_1,b_2)$-fixed-point format is a number of the form $a = \pm \sum_{i=-b_2}^{b_1-1} a_i 2^i$ where $a_i \in \{0,1\}$ for all $i$. For $a\in\R$ and a precision $\delta$, we say $\hat{a}$ is a $\delta$-additive approximation of $a$ if $\hat{a}\in[a+\delta,a-\delta]$; for positive $a$ we say $\tilde{a}$ is a $(1 \pm \delta)$-multiplicative approximation of $a$ if $\tilde{a}\in[(1-\delta)a,(1+\delta)a]$.

We write vectors $\vec x$ and matrices $\mat A$ in boldface, but their entries $x_i$ and $A_{ij}$ are written in regular face. Let $\R^n$ be the (Euclidean) space of all $n$-dimensional real vectors and $\R^{n\times n}$ be the space of all $n\times n$ real matrices. By convention we use $\vec 1$ for the all-$1$ vector and $\vec 0$ for the all-$0$ vector. For a matrix $\mat A\in\R^{n\times n}$, let $\mat A_{\ell\bullet}\in\R^n$ be the vector corresponding to the $\ell$-th row of $\mat A$ and $\mat A_{\bullet\ell}\in\R^n$ be the vector corresponding to the $\ell$-th column of $\mat A$.

We use the standard definition of big-O as a set. We write $\polylog(n) = \bigcup_{i=0}^{\infty} \bigO{\ln^i(n)}$. We use the big-$\widetilde O$ notation to hide polylogarithmic factors in the variables appearing within the parentheses.

For $p \in [1, \infty)$, we use the $\ell_p$-norm of $\vec x \in \R^n$ which is defined as $\|\vec x\|_p = \big(\sum_{i=1}^n |x_i|^p\big)^{1/p}$. Similarly, for $p = \infty$ we let the $\ell_\infty$-norm be $\|\vec x\|_\infty = \max_{i \in [n]} |x_i|$. We use $x_{\max}$ and $x_{\min}$ to denote the largest and smallest entry of $\vec x$, respectively. We define the $\ell_p$-norm of a matrix $\mat A\in\R^{n\times n}$ by viewing~$\mat A$ as a vector in $\R^{n^2}$ and computing its $\ell_p$-norm, e.g., $\norm{\mat A}_1=\sum_{i,j=1}^n |A_{ij}|$.

We apply functions to vectors entry-wise, for example $\sqrt{\vec x} = (\sqrt{x_i})_{i \in [n]}\in\R^n_+$ for $\vec x \in \R^n_+$ and $e^{\vec x} = (e^{x_i})_{i \in [n]}\in\R^n$ for $\vec x \in \R^n$.

We say $\vec x$ and $\mat A$ are \emph{non-negative} if all of their entries are greater or equal to $0$; and we say $\vec x$ and $\mat A$ are \emph{entrywise-positive} if all of their entries are strictly greater than $0$. Let $\R_+^n$ be the cone of all $n$-dimensional non-negative vectors and $\R_+^{n\times n}$ be the cone of all $n\times n$ non-negative matrices.

\subsection{Distance measures}
For pairs of non-negative vectors we are interested in their \emph{relative entropy} (or \emph{Kullback-Leibler divergence}) and \emph{Hellinger distance}. To define this we need the function $\rho\colon \R_+ \times \R_+ \to [0,\infty]$ which is defined as $\rho(a \Vert b) = b - a + a \ln \frac{a}{b}$ (with the usual conventions, in particular $0\ln 0=0$). The \emph{relative entropy} or \emph{Kullback-Leibler divergence} $D\colon \R_+^n \times \R_+^n \to [0,\infty]$ is then defined as $D(\vec a \Vert \vec b) = \sum_{i=1}^n \rho(a_i \Vert b_i)$.
If $\vec a,\vec b$ are probability distributions then $D(\vec a \Vert \vec b) = \sum_{i=1}^n a_i\ln \frac{a_i}{b_i}$, but we will also consider unnormalized $\vec a,\vec b$.  The (unnormalized) Hellinger distance between $\vec a, \vec b \in \R_+^n$ is defined as $\|\sqrt{\vec a}-\sqrt{\vec b}\|_2$.
A particularly useful inequality for the relative entropy is Pinsker's inequality. We will use a generalization of Pinsker's inequality that allows for unnormalized $\vec b$:

\begin{restatable}[Generalized Pinsker]{lem}{lemGenPinsker}
  \label{lem:generalized pinsker}
\lemgeneralizedpinskercontent
\end{restatable}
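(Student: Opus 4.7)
The plan is to prove the main inequality $D(\vec a \Vert \vec b) \geq w(\norm{\vec a - \vec b}_1)$ by reducing to a binary (two-element) case via the log-sum inequality, and then closing the binary case by combining convexity of $w$ with its asymmetry $w(-x) \geq w(x)$ for $x \in [0,1)$.

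First, I would apply the log-sum inequality separately on the partition $S = \{i : a_i \geq b_i\}$ and $S^c$, which gives
\[
  D(\vec a \Vert \vec b) \geq \rho(P \Vert Q) + \rho(P' \Vert Q'),
\]
where $P = \sum_{i \in S} a_i$, $P' = 1 - P$, $Q = \sum_{i \in S} b_i$, and $Q' = \sum_{i \in S^c} b_i$. Setting $s := P - Q \geq 0$ and $t := Q' - P' \geq 0$, the specific choice of $S$ ensures $s + t = \norm{\vec a - \vec b}_1$. Using the identity $\rho(a \Vert b) = a\, w\bigl((b - a)/a\bigr)$, which is immediate from the definitions of $\rho$ and $w$, the right-hand side rewrites as $P\, w(\alpha) + P'\, w(\beta)$ with $\alpha := -s/P \in [-1,0]$ and $\beta := t/P' \geq 0$, and the constraint becomes $P\abs{\alpha} + P'\beta = \norm{\vec a - \vec b}_1$.

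To close the binary case I would verify two properties of $w$. Convexity is direct from $w''(\beta) = 1/(1+\beta)^2 > 0$, together with $w(0) = 0$. The asymmetry $w(-x) \geq w(x)$ for $x \in [0,1)$ follows from $w(-x) - w(x) = \ln\frac{1+x}{1-x} - 2x \geq 0$, which is immediate from the series expansion of $\ln\bigl((1+x)/(1-x)\bigr)$. Chaining these yields
\[
  P\, w(\alpha) + P'\, w(\beta) \geq P\, w(\abs{\alpha}) + P'\, w(\beta) \geq w\bigl(P\abs{\alpha} + P'\beta\bigr) = w\bigl(\norm{\vec a - \vec b}_1\bigr),
\]
the second step being Jensen's inequality with weights $P + P' = 1$. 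Degenerate boundary cases where $P \in \{0,1\}$ are handled via the convention $P'\, w(t/P') \to t$ as $P' \to 0^+$; in fact one can check by hand that $P = 0$ or $P' = 0$ produces equality throughout the chain.

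For the two pointwise estimates in the last part of the statement: when $\beta \in [0,1]$, the bound $w'(\beta) = \beta/(1+\beta) \geq \beta/2$ integrates to $w(\beta) \geq \beta^2/4$; when $\beta \geq 1$, convexity of $w$ gives $w(\beta) \geq w(1) + w'(1)(\beta - 1) = (1-\ln 2) + (\beta - 1)/2$, which in turn exceeds $(1-\ln 2)\beta$ for $\beta \geq 1$ because $\ln 2 > 1/2$. The main obstacle I anticipate is identifying the asymmetry step $w(\alpha) \geq w(-\alpha)$: without it, a direct Jensen's inequality applied to $P\, w(\alpha) + P'\, w(\beta)$ would only produce $w(P\alpha + P'\beta) = w(t - s) = w(\norm{\vec b}_1 - 1)$, which is uniformly weaker than the target since $\abs{\norm{\vec b}_1 - 1} \leq \norm{\vec a - \vec b}_1$ (often strictly). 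Recognizing that one must first reflect the negative argument $\alpha$ before invoking Jensen's is the key insight that converts a signed convex combination into the required $\ell_1$-bound.
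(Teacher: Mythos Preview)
Your proof is correct. Both your argument and the paper's rest on the same two properties of $w$: convexity (for a Jensen step) and the asymmetry $w(-x) \geq w(x)$ on $[0,1)$. The difference is purely structural: the paper never reduces to a binary problem. Instead it applies the asymmetry coordinate-wise, showing $\rho(a_\ell \Vert b_\ell) = a_\ell\, w(d_\ell/a_\ell) \geq a_\ell\, w(|d_\ell|/a_\ell)$ for every $\ell$ with $d_\ell = b_\ell - a_\ell$, and then uses Jensen once over all $n$ weights $a_\ell$ (equivalently, concavity of $t \mapsto \ln(1+t)$) to obtain $\sum_\ell a_\ell\, w(|d_\ell|/a_\ell) \geq w(\norm{\vec d}_1)$. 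Your initial log-sum reduction to two groups is thus an extra (valid) step that the paper bypasses; in exchange you only need to invoke the asymmetry once rather than $n$ times. One small quibble: your parenthetical that $P' = 0$ ``produces equality throughout the chain'' is not quite right---the asymmetry step $w(\alpha) \geq w(|\alpha|)$ is generally strict there---but this does not affect the inequality you need.
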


We also have a lower bound on the Hellinger distance between two unnormalized vectors in terms of their $\ell_1$-distance, which is similar to Pinsker's inequality for the relative entropy.

\begin{restatable}[Lower bound on Hellinger distance]{lem}{lemPinskerHell}
  \label{lem:unnormalized hellinger}
  Let $\vec a, \vec b \in \R_+^n$ with at least one of the two vectors being non-zero.
  Then
  \[
    \norm{\sqrt{\vec a} - \sqrt{\vec b}}_2^2 \geq \frac{\norm{\vec a - \vec b}_1^2}{2 (\norm{\vec a}_1 + \norm{\vec b}_1)}.
  \]
\end{restatable}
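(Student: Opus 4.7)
The plan is to reduce the claim to a single application of the Cauchy--Schwarz inequality by exploiting the algebraic identity $a_i - b_i = (\sqrt{a_i}-\sqrt{b_i})(\sqrt{a_i}+\sqrt{b_i})$, which is valid for non-negative entries. This identity converts the $\ell_1$-distance $\|\vec a - \vec b\|_1$ into a sum whose summands factor as a product of a ``Hellinger-like'' term and a ``size'' term, precisely the form to which Cauchy--Schwarz applies. The condition that at least one of $\vec a, \vec b$ is non-zero is needed only to guarantee that the denominator $\|\vec a\|_1 + \|\vec b\|_1$ is positive so that the statement is well-defined.

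Concretely, I would first write
\[
\norm{\vec a - \vec b}_1 = \sum_{i=1}^n \abs{\sqrt{a_i} - \sqrt{b_i}} \cdot \bigl(\sqrt{a_i} + \sqrt{b_i}\bigr),
\]
then apply Cauchy--Schwarz to bound this sum by
\[
\sqrt{\sum_{i=1}^n (\sqrt{a_i} - \sqrt{b_i})^2} \cdot \sqrt{\sum_{i=1}^n (\sqrt{a_i} + \sqrt{b_i})^2} = \norm{\sqrt{\vec a} - \sqrt{\vec b}}_2 \cdot \sqrt{\sum_{i=1}^n (\sqrt{a_i} + \sqrt{b_i})^2}.
\]
The second factor is then controlled by the AM--GM inequality $2\sqrt{a_i b_i} \leq a_i + b_i$, which gives
\[
\sum_{i=1}^n (\sqrt{a_i} + \sqrt{b_i})^2 = \sum_{i=1}^n \bigl(a_i + 2\sqrt{a_i b_i} + b_i\bigr) \leq 2\bigl(\norm{\vec a}_1 + \norm{\vec b}_1\bigr).
\]
Squaring the resulting inequality $\norm{\vec a - \vec b}_1 \leq \norm{\sqrt{\vec a} - \sqrt{\vec b}}_2 \cdot \sqrt{2(\norm{\vec a}_1 + \norm{\vec b}_1)}$ and rearranging yields the claim.

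There is no real obstacle here beyond choosing the right factorization in the first step; once the product structure is identified, Cauchy--Schwarz and AM--GM do all the work, and both inequalities are sharp in the ``uniform'' regime (e.g.\ $\vec a = \vec b$, where both sides are zero, and in limits where $a_i/b_i$ is constant across $i$), so the factor of $2$ in the denominator is of the correct order.
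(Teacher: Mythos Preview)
Your proof is correct and follows essentially the same approach as the paper: factor $a_i - b_i = (\sqrt{a_i}-\sqrt{b_i})(\sqrt{a_i}+\sqrt{b_i})$, apply Cauchy--Schwarz, and then bound $\sum_i(\sqrt{a_i}+\sqrt{b_i})^2 \leq 2(\norm{\vec a}_1+\norm{\vec b}_1)$ via AM--GM. The paper's argument is identical step for step.
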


The proofs of both lemmas can be found in~\cref{sec:generalized pinsker}.

\subsection{Matrix scaling and balancing}
Throughout we use $\vec r, \vec c \in \R^n$ as the desired row and column marginals. Unambiguously, we also use $\vec r\colon \R^{n \times n} \to \R^n$ (resp.\ $\vec c\colon \R^{n \times n} \to \R^n$) as the function that sends an $n \times n$-matrix to its row (resp.\ column) marginal: $\vec r(\mat A)$ (resp.\ $\vec c(\mat A)$) is the vector whose $i$-th entry equals $r_i(\mat A) = \sum_{j=1}^n A_{ij}$ (resp.\ $c_i(\mat A) = \sum_{j=1}^n A_{ji}$). We use $\A x y = (A_{ij} e^{x_i+y_j})_{i,j \in [n]}$ to denote the rescaled matrix $\mat A$ with scalings given by $e^{\vec x}$ and~$e^{\vec y}$. In the sections on matrix balancing we abbreviate $\mat A(\vec x)=\mat A(\vec x, -\vec x)$.

We say a non-negative matrix $\mat A\in\R_+^{n\times n}$ is \emph{exactly scalable} to $(\vec r,\vec c)\in\R_+^n\times \R_+^n$, if there exist $\vec x,\vec y\in\R^n$ such that
\[
  \vec r(\A x y)=\vec r~\text{and}~\vec c(\A x y)=\vec c.
\]
For an $\eps>0$, we say $\mat A\in\R_+^{n\times n}$ is \emph{$\eps$-$\ell_1$-scalable} to $(\vec r,\vec c)\in\R_+^n\times \R_+^n$, if there exist $\vec x,\vec y\in\R^n$ such that
\[
  \norm{\vec r(\A x y) - \vec r}_1 \leq \eps \text{ and } \norm{\vec c(\A x y) - \vec c}_1 \leq \eps.
\]
We say $\mat A\in\R_+^{n\times n}$ is \emph{asymptotically scalable} to $(\vec r,\vec c)\in\R_+^n\times \R_+^n$ if it is $\eps$-$\ell_1$-scalable to $(\vec r,\vec c)$ for every $\eps>0$.

In the matrix-balancing setting we require $\vec y = -\vec x$, and the marginals are compared to each other. We say a non-negative matrix $\mat A\in\R_+^{n\times n}$ is \emph{exactly balanceable}, if there exists a vector $\vec x \in\R^n$ such that
$\vec r(\mat A (\vec x))=\vec c(\mat A (\vec x))$.
For an $\eps>0$, we say $\mat A\in\R_+^{n\times n}$ is \emph{$\eps$-$\ell_1$-balanceable}, if there exists an $\vec x\in\R^n$ such that
\[
  \frac{\norm{\vec r(\mat A (\vec x)) -  \vec c(\mat A (\vec x))}_1}{\norm{\mat A}_1} \leq \eps.
\]
We say $\mat A\in\R_+^{n\times n}$ is \emph{asymptotically balanceable} if it is $\eps$-$\ell_1$-balanceable for every $\eps>0$.

We formally state the optimization problems associated with scaling and balancing below.
\begin{prb}[$\eps$-$\ell_1$-scaling problem]
\label{prob:ell1 scaling}
Given an $n \times n$ matrix $\mat A \in \R_+^{n\times n}$ and desired marginals $\vec r, \vec c \in \R_+^n$ with $\norm{\vec r}_1=\norm{\vec c}_1 = 1$, find $\vec x, \vec y \in \R^n$ such that
\[
\norm{\vec r(\A x y) - \vec r}_1 \leq \eps \text{ and } \norm{\vec c(\A x y) - \vec c}_1 \leq \eps.
\]
\end{prb}
One can use (a generalized version of) Pinsker's inequality to upper bound $\ell_1$-distance by relative entropy, and it turns out that our algorithm is most naturally analyzed with the error measured by the relative entropy. We therefore also consider the following problem:
\begin{prb}[$\eps$-relative-entropy-scaling problem]
  \label{prob:relentropy scaling}
  Given an $n \times n$ matrix $\mat A \in \R_+^{n\times n}$ and desired marginals $\vec r, \vec c \in \R_+^n$ with $\norm{\vec r}_1=\norm{\vec c}_1 = 1$, find $\vec x, \vec y \in \R^n$ such that
  \[
    D(\vec r \Vert \vec r(\mat A(\vec x, \vec y))) \leq \eps \text{ and } D(\vec c \Vert \vec c(\mat A(\vec x, \vec y))) \leq \eps.
  \]
\end{prb}
Finally, the matrix balancing problem is defined in a similar fashion:
\begin{prb}[$\eps$-$\ell_1$-balancing problem]\label{prob: balancing}
  Given an $n \times n$ matrix $\mat A \in \R_+^{n \times n}$, find $\vec x \in \R^n$ such that
  \[
    \frac{\|\vec r(\mat A(\vec x)) - \vec c(\mat A(\vec x))\|_1}{\|\mat A(\vec x)\|_1} \leq \eps.
  \]
\end{prb}

\paragraph{Assumptions on the instances.}
For convenience, we will always assume that the matrix $\mat A$ has at least one non-zero entry in every row and column (for both the scaling and balancing problems), and in the matrix scaling problem we assume the target marginals $\vec r, \vec c$ to be entrywise-positive.

In the matrix balancing problem we additionally assume that the diagonal entries of $\mat A$ are zero; note that this is without loss of generality, as the requirement for being $\eps$-balanced becomes stricter as the diagonal entries of $\mat A$ become smaller. Finally, we assume throughout that $\mat A$ is asymptotically scalable to $\vec r, \vec c$ (or asymptotically balanceable in \cref{sec:balancing}).

To consider the complexity of algorithms for both the matrix scaling problem and the matrix balancing problem, we assume the following about the encoding of the instance: each entry of $\mat A$ (and $\vec r, \vec c$ for the scaling problem) is a rational number whose denominator and numerator can be represented using $\polylog(n)$ bits. We furthermore assume $\mat A \in [0,1]^{n \times n}$ with $\|\mat A\|_1 \leq 1$, and $\mat A$ has $m$ possibly non-zero entries which are assumed to be at least $\mu >0$.\footnote{If such a bound $\mu > 0$ is unknown, it can be found (with high probability) with $\bigO{\sqrt{m}}$ queries and similar time complexity using quantum minimum finding~\cite{durr&hoyer:minimum}. Similarly, one can enforce $\norm{\mat A}_1 \leq 1$ with high probability by first using quantum maximum finding on $\mat A$ and then dividing by $n^2$ times the largest entry.} As mentioned before we assume that $\|\vec r\|_1 = \|\vec c\|_1=1$ in the matrix scaling problem.

\subsection{Data structure and computational model}

\paragraph{Input model.}
Here we describe how our quantum algorithms can access the entries of the input matrix $\mat A$. For classical algorithms the access model is the same, though of course classical algorithms cannot make queries in superposition.

We assume \emph{sparse black-box access} to the elements of $\mat A$ via lists of the potentially non-zero entries for each row and each column, as follows.
We assume in the $i$-th row there are $s^r_i$ potentially non-zero entries (the algorithm doesn't know their locations nor their values in advance), and in the $j$-th column there are $s^c_j$ potentially non-zero entries, where $\sum_{i=1}^n s^r_i=\sum_{j=1}^n s^c_j=m$. If our lists only contain the non-zero entries of the matrix then $m$ would be the total number of non-zero entries in $\mat A$, but our current set-up is a bit more flexible, allowing these lists to also contain some 0-entries. We will for simplicity assume these positive integers $s^r_1,\ldots,s^r_n,s^c_1,\ldots,s^c_n$ are known to the algorithm (either given explicitly as part of the input, or via their own oracle), though they could also be computed efficiently by binary search on the other oracles as explained below.

The oracles (unitaries) $O_I^{\mathrm{row}}$ and $O_I^{\mathrm{col}}$ allow us to find the indices of potentially non-zero elements of rows and columns of $\mat A$. Specifically:
\begin{align*}
  O_I^{\mathrm{row}} \ket{i}\ket{k}\ket{b} &= \ket{i}\ket{k}\ket{b+j(i,k)}  \qquad \text{ for } i,k,b \in [n],\\
  O_I^{\mathrm{col}} \ket{k}\ket{j}\ket{b} &= \ket{k}\ket{j}\ket{b+i(j,k)} \qquad \text{ for } j,k,b \in [n],
\end{align*}
where $j(i,k)\in[n]$ is the position of the $k$-th potentially non-zero element of row $i$ and similarly $i(j,k)\in[n]$ is the position of the $k$-th potentially non-zero element of column $j$. The addition in the last register is modulo $n$.
If $k>s^r_i$ then we define $j(i,k)=0$, so if $O_I^{\mathrm{row}}$ maps $\ket{i}\ket{k}\ket{b}$ to itself, then we learn that $k>s^r_i$ (this is what allows us to learn $s^r_i$ ourselves efficiently via binary search). We do the same for the columns.

We furthermore assume access to binary representations of the numerators and denominators of entries of $\mat A$ and $\vec r, \vec c$ in the usual way: we have oracles $O_{\mat A},O_{\vec r},O_{\vec c}$ that return the numerators and denominators of the entries of $\mat A, \vec r, \vec c$. For $i,j \in [n]$, and $b$ a string of the same number of bits as used for denominator and numerator, we have\footnote{Our algorithms actually do not require $O_{\vec r}$ and $O_{\vec c}$ to be quantum oracles, a classical input suffices.}
\begin{align*}
    O_{\mat A} \ket{i}\ket{j}\ket{b} &= \ket{i}\ket{j}\ket{b \oplus (A_{ij}^\mathrm{den},A_{ij}^\mathrm{num})} \\
    O_{\vec r} \ket{i}\ket{b} &= \ket{i}\ket{b \oplus (r_i^\mathrm{den},r_i^\mathrm{num})} \\
    O_{\vec c} \ket{j} \ket{b} &= \ket{j}\ket{b \oplus (c_j^\mathrm{den},c_j^\mathrm{num})}
\end{align*}
The reason we allow for rational inputs rather than fixed-point format for $\mat A$, $\vec r$ and $\vec c$ is because this allows uniform marginals to be represented exactly, converting from fixed-point inputs to rational inputs is a trivial task, and for consistency with the classical literature.

\paragraph{Computational model.}
Our computational model is of a classical computer (say, a Random Access Machine for concreteness) that, in addition to its classical computations, can invoke a quantum computer. The classical computer can write to a classical-write quantum-read memory (``QCRAM''), and send a description of a quantum circuit that consists of one- and two-qubit gates from some fixed discrete universal gate set\footnote{For concreteness assume our gate set contains the Hadamard gate, $T$-gate, Controlled-NOT, and 2-qubit controlled rotations over angles $2\pi/2^s$ for positive  integers~$s$ (these controlled rotations are used in the circuit for the quantum Fourier transform (QFT), which we invoke later in the paper).}, queries to the input oracles, and queries to the QCRAM to the quantum computer. The quantum computer runs the circuit, measures the full final state in the computational basis, and returns the measurement outcome to the classical computer.
We will use the QCRAM to store the scaling vectors $\vec x, \vec y$ at any point in time in our iterative algorithms and hence will need enough QCRAM to store these $2n$ numbers up  to sufficient precision (the required precision is analyzed later, in the body of the paper).
The QCRAM can be queried by the quantum computer in the same way as the above oracles $O_{\vec r}$ and $O_{\vec c}$.

The cost of the quantum subroutines will be measured by the total number of queries to $\mat A$ and the QCRAM, plus the number of one- and two-qubit gates. The cost of the classical computer will be measured by its number of elementary steps. This includes the cost of writing down the descriptions of the quantum circuits that the classical machine subcontracts to the quantum machine; in our algorithms these will be relatively simple, like versions of Grover's algorithm and amplitude estimation, and hence can be written down with at most a logarithmic overhead over their number of gates.
The total cost (or ``time complexity'') of our algorithms is the sum of their classical and quantum costs.

\section{Sinkhorn algorithm with testing for matrix scaling}\label{sec: full testing}

In this section we state~\cref{alg:FSFP testing}, a variant of the well-known Sinkhorn algorithm, and provide its analysis. The objective of Sinkhorn's algorithm is to find scaling vectors~$\vec x, \vec y \in \R^n$ such that the matrix $\A x y = (A_{ij}e^{x_i + y_j})_{i,j \in [n]}$ has row and column marginals $\vec r$ and $\vec c$, respectively. It does so in an iterative way. Starting from the rational matrix~$\mat A \in [0, 1]^{n \times n}$, it finds a vector $\vec x$ such that the row marginals of $(A_{ij}e^{x_i})$ are $\vec r$, and then it finds a $\vec y$ such that the column marginals of $\A x y$ are~$\vec c$. The second step may have changed the row marginals, so we repeat the procedure. We can view this as updating the coordinates of $\vec x$ and $\vec y$ one at a time, starting from the all-$0$ vectors.
To update the row scaling vectors, we wish to find $\vec x' = \vec x + \vec\Step$ such that
\[
    \vec r(\A {x'} y) = \vec r.
\]
Expanding the above equation yields
\[
    e^{\Step_\ell} \cdot r_\ell(\A x y) = r_\ell,
\]
for $\ell\in[n]$.
Since we assume every row and column contains at least one non-zero entry, the above equation has a unique solution, resulting in the following formula:
\begin{equation}\label{eq:sinkhorn row update}
  x'_\ell = x_\ell + \Step_\ell = x_\ell + \ln \left(\frac{r_\ell}{r_\ell(\A x y)} \right) = \ln \left(\frac{r_\ell}{\sum_{j=1}^n A_{\ell j} e^{y_j}} \right).
\end{equation}
Analogously, we can achieve that $\vec c(\A x {y'}) = \vec c$ if we instead update $\vec y' = \vec y + \vec\Step$, where
\begin{equation}\label{eq:sinkhorn col update}
  y'_\ell = y_\ell + \Step_\ell = y_\ell + \ln \left(\frac{c_\ell}{c_\ell(\A x y)} \right) = \ln \left(\frac{c_\ell}{\sum_{i=1}^n A_{i \ell} e^{x_i}} \right).
\end{equation}
We use the term ``one Sinkhorn iteration'' to refer to the process of updating all $n$ row scaling vectors, or updating all $n$ column scaling vectors.

We study a version of Sinkhorn's algorithm where, instead of computing row and column marginals in each iteration exactly, we use a 
\emph{multiplicative} approximation of the marginals to compute $\delta$-additive approximations of \cref{eq:sinkhorn row  update,eq:sinkhorn col update}.
In the classical literature, the approximation errors can be chosen to be very small, since the cost per iteration scales as $\polylog(1/\delta)$, and hence that error is essentially a minor technical detail.
In the quantum setting, we can obtain a better dependence in terms of $n$ at the cost of allowing for a $\poly(1/\delta)$-dependence.
Therefore, in the analysis below we pay particular attention to the required precision $\delta$.
We state the Sinkhorn algorithm in terms of two subroutines. For both subroutines we provide both classical and quantum implementations in~\cref{sec:arithmetic}. For the analysis of~\cref{alg:FSFP testing}, we only use the guarantees of the subroutines as stated, and do not refer to their actual implementation.

The first subroutine we use is \ApproxScalingFactor, which is used to update the scaling vectors.
In odd iterations we update the row scaling vector~$\vec x$ according to \cref{eq:sinkhorn row update}, while in even iterations we update the column scaling factors~$\vec y$ according to \cref{eq:sinkhorn col update} -- in both cases with additive precision~$\delta$ assuming the subroutine does not fail.
The second subroutine is \TestScaling, which tests whether scaling vectors $(\vec x, \vec y)$ yield a relative-entropy-scaling of the desired precision.
Both of these subroutines have a precision parameter and an upper bound on their failure probability.
Note that allowing for the possibility of failure is essential since the quantum implementation of the subroutines is inherently probabilistic.

The Sinkhorn algorithm thus has a number of tunable parameters. We provide an upper bound~$T$ on the number of Sinkhorn iterations to be performed, and a choice of fixed-point format $(b_1, b_2)$, which is used for storing each entry of the scaling vectors $(\vec x, \vec y)$. Apart from that, we use two precision parameters $\delta, \delta' \in (0, 1)$, one for each subroutine used in the algorithm, and a failure probability $\eta \in [0, 1]$ for each individual subroutine call. In \cref{thm: Full testing} we show how to choose these parameters for~\cref{alg:FSFP testing} such that the output $(\vec x, \vec y) \in \R^n \times \R^n$ forms an $\eps$-relative-entropy-scaling of $\mat A$ to $(\vec r, \vec c)$ with probability at least $2/3$. The resulting time complexity for running the algorithm with these parameters gives us our main result of this section, \cref{cor:full sinkhorn performance}, where we use results from \cref{sec:arithmetic} for the cost of implementing \ApproxScalingFactor and \TestScaling.
Note that the error as measured in relative entropy can be converted to $\ell_1$-error using (a generalization of) Pinsker's inequality (cf.~\cref{lem:generalized pinsker}).

\begin{algorithm}[t]
  \caption{Full Sinkhorn with finite precision and failure probability}\label{alg:FSFP testing}
  \Input{Oracle access to~$\mat A \in [0,1]^{n \times n}$ with~$\norm{\mat A}_1 \leq 1$
  and non-zero entries at least $\mu > 0$, target marginals~$\vec r, \vec c \in (0,1]^n$ with $\norm{\vec r}_1 = \norm{\vec c}_1 = 1$, iteration count~$T\in\N$, bit counts~$b_1, b_2\in\N$, estimation precision~$0<\delta<1$, test precision~$0<\delta'<1$ and subroutine failure probability~$ \eta\in[0, 1]$}
  \Output{Vectors $\vec x, \vec y \in \R^n$ with entries encoded in~$(b_1, b_2)$ fixed-point format}
  \Guarantee{For $\eps \in (0, 1]$, with parameters chosen as in~\cref{thm: Full testing}, $(\vec x, \vec y)$ form an $\eps$-relative-entropy-scaling of $\mat A$ to $(\vec r, \vec c)$ with probability $\geq 2/3$}

  $\vec x^{(0)}, \vec y^{(0)} \leftarrow \vec 0$\tcp*{entries in~$(b_1, b_2)$ fixed-point format}

  \smallskip

  \For{$t\leftarrow 1,2,\dotsc,T$}{
    \uIf{$t$ is odd}{
      \For{$\ell \leftarrow 1,2,\ldots,n$}{
        $x_\ell^{(t)} \leftarrow \ApproxScalingFactor(\mat A_{\ell \bullet}, r_\ell, \vec y^{(t-1)}, \delta, b_1, b_2, \eta, \mu)$\label{algline:sinkhorn approx row}\;
      }
      $\vec y^{(t)} \leftarrow \vec y^{(t-1)}$\;
    }
    \ElseIf{$t$ is even}{
      \For{$\ell \leftarrow 1,2,\ldots,n$}{
        $y_\ell^{(t)} \leftarrow \ApproxScalingFactor(\mat A_{\bullet \ell}, c_\ell, \vec x^{(t-1)}, \delta, b_1, b_2, \eta, \mu)$\label{algline:sinkhorn approx col}\;
      }
      $\vec x^{(t)} \leftarrow \vec x^{(t-1)}$\;
    }
    \If{\textup{\TestScaling}$(\mat A, \vec r, \vec c, \vec x^{(t)}, \vec y^{(t)}, \delta', b_1, b_2, \eta, \mu)$}{
    \Return $(\vec x^{(t)}, \vec y^{(t)})$\;}
    }
\Return{$(\vec{x}^{(T)}, \vec{y}^{(T)})$}\;
\end{algorithm}

\begin{restatable}{thm}{thmFullTestingCost}
  \label{cor:full sinkhorn performance}
  Let $\mat A \in [0, 1]^{n \times n}$ be a rational matrix with $\norm{\mat A}_1 \leq 1$ and $m$ non-zero entries, each at least $\mu > 0$, let $\vec r, \vec c \in (0,1]^n$ with $\norm{\vec r}_1 = \norm{\vec c}_1 = 1$, and let $\eps \in (0, 1]$.
  Assume $\mat A$ is asymptotically scalable to $(\vec r, \vec c)$.
  Then there exists a quantum algorithm that, given sparse oracle access to $\mat A$, with probability $\geq 2/3$, computes $(\vec x, \vec y) \in \R^n \times \R^n$ such that $\mat A(\vec x, \vec y)$ is $\eps$-relative-entropy-scaled to $(\vec r, \vec c)$, for a total time complexity of $\widetilde O(\sqrt{mn} / \eps^2)$.
\end{restatable}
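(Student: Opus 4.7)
The plan is to combine three ingredients: the correctness of \cref{alg:FSFP testing} under a suitable choice of its tunable parameters (which is the content of the earlier \cref{thm: Full testing}); the quantum implementations of the subroutines \ApproxScalingFactor and \TestScaling given in \cref{sec:arithmetic}; and a cost-accounting calculation using Cauchy--Schwarz over the row and column sparsities. Since the hard combinatorial work has already been carried out in \cref{thm: Full testing}, the corollary reduces to summing up the quantum subroutine costs for the prescribed parameter values.

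First I would invoke \cref{thm: Full testing} with parameter choices of the order $T = \widetilde O(1/\epsilon)$ Sinkhorn iterations, fixed-point precisions $b_1, b_2 = \polylog(n, 1/\epsilon, 1/\mu)$, additive precisions $\delta, \delta' = \widetilde\Theta(\epsilon)$ in the subroutines, and per-call failure probability $\eta = 1/\poly(n, 1/\epsilon)$ small enough that a union bound over all $\widetilde O(n/\epsilon)$ subroutine calls keeps the overall failure probability below $1/3$. This setting ensures that some iterate $(\vec x^{(t)}, \vec y^{(t)})$ with $t \leq T$ is an $\epsilon$-relative-entropy-scaling of $\mat A$ to $(\vec r, \vec c)$ with probability at least $2/3$.

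Next I would bound the cost of a single Sinkhorn iteration, which issues $n$ calls to \ApproxScalingFactor, one per row or column of the currently active block. Each such call is implemented (in \cref{sec:arithmetic}) via amplitude estimation on the weighted sum $\sum_j A_{\ell j} e^{y_j}$ defining the updated scaling factor in \cref{eq:sinkhorn row update,eq:sinkhorn col update}, at cost $\widetilde O(\sqrt{s_\ell}/\delta)$, where $s_\ell \in \{s_\ell^r, s_\ell^c\}$ denotes the sparsity of the row or column concerned. By Cauchy--Schwarz, $\sum_{\ell=1}^n \sqrt{s_\ell} \leq \sqrt{n \sum_{\ell=1}^n s_\ell} = \sqrt{nm}$, so one iteration costs $\widetilde O(\sqrt{nm}/\delta)$. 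The terminating \TestScaling call costs only $\widetilde O(\sqrt{m}/\delta')$, which is of lower order. Multiplying by $T = \widetilde O(1/\epsilon)$ and plugging in $\delta = \widetilde\Theta(\epsilon)$ yields the stated $\widetilde O(\sqrt{mn}/\epsilon^2)$ bound, with the bit-precision and gate-synthesis overheads from storing $\vec x, \vec y$ to precision $(b_1, b_2)$ absorbed by the $\widetilde O$.

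The genuine obstacle lies in \cref{thm: Full testing} itself rather than in this corollary: one must show that, even with additive errors of size $\widetilde\Theta(\epsilon)$ in each coordinate update, a Sinkhorn iteration decreases the convex potential $f(\vec x,\vec y)=\sum_{i,j=1}^n A_{ij} e^{x_i+y_j} - \sum_{i=1}^n r_i x_i - \sum_{j=1}^n c_j y_j$ by $\Omega(\epsilon)$ whenever the current scaling is not already $\epsilon$-close in relative entropy, and combine this with the $\widetilde O(1)$ bound on the initial-to-minimum potential gap proved in \cref{sec:potential bound} (which is where the asymptotic-scalability hypothesis enters) to force termination within $T = \widetilde O(1/\epsilon)$ iterations. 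Once that is in place, the corollary follows by the routine summation above.
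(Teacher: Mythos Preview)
Your approach matches the paper's proof: invoke \cref{thm: Full testing} for the parameter choices and iteration bound $T = \widetilde O(1/\eps)$, then sum the per-iteration quantum subroutine costs using Cauchy--Schwarz over the row/column sparsities to get $\widetilde O(\sqrt{mn}/\eps)$ per iteration.

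Two minor corrections that do not affect your final bound. First, \TestScaling is invoked at the end of \emph{every} iteration of \cref{alg:FSFP testing}, not just once at termination. Second, its quantum cost is $\widetilde O(\sqrt{mn}/\delta')$ per call, not $\widetilde O(\sqrt m/\delta')$: as \cref{lem:TestScaling guarantee} shows, \TestScaling internally makes $2n$ calls to \ApproxScalingFactor (one per row and one per column) plus one approximate-sum estimate of $\norm{\A x y}_1$, so the same Cauchy--Schwarz summation applies. With $\delta' = \Theta(\eps)$ this is the same order as the $n$ \ApproxScalingFactor calls in the update step, hence the per-iteration cost is still $\widetilde O(\sqrt{mn}/\eps)$ and the total $\widetilde O(\sqrt{mn}/\eps^2)$ stands.
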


\begin{procedure}[t]
   \caption{ApproxScalingFactor($\vec a, r, \vec y, \delta, b_1, b_2, \eta, \mu$)}
   \label{subroutine:ApproxScalingFactor spec}
   \ApproxScalingFactorIO
   \Guarantee{If $b_1 \geq \lceil \log_2(\abs{\ln(r / \sum_{j=1}^n a_j e^{y_j})}) \rceil$ and $b_2 \geq \lceil \log_2(1/\delta) \rceil$, then with probability at least~$1 - \eta$, $x$ is a $\delta$-additive approximation of $\ln(r / \sum_{j=1}^n a_j e^{y_j})$}
\end{procedure}

\begin{procedure}[t]
   \caption{TestScaling($\mat A, \vec r, \vec c, \vec x, \vec y, \delta, b_1, b_2, \eta, \mu$)}
   \label{subroutine:TestScaling spec}
   \Input{Oracle access to rational~$\mat A \in [0,1]^{n \times n}$  with $\|\mat A\|_1 \leq 1$, rational $\vec r, \vec c \in (0,1]^n$, oracle access to~$\vec x, \vec y \in \R^n$ encoded in~$(b_1, b_2)$ fixed-point format, test precision~$\delta \in (0, 1)$, desired failure probability $\eta \in [0,1]$, lower bound~$\mu > 0$ on non-zero entries of $\mat A$}

   \Output{A bit indicating whether $\vec x, \vec y$ forms a $\delta$-relative-entropy-scaling of $\mat A$ to target marginals $\vec r, \vec c$.}
   \Guarantee{If $b_1 \geq \log_2(\abs{\ln(r_\ell / \sum_{j=1}^n A_{\ell j} e^{y_j})})$ for all $\ell\in [n]$, and similarly for the columns, and furthermore $b_2 \geq \lceil \log_2(1/\delta) \rceil$, then with probability at least~$1 - \eta$: outputs \False if $D(\vec r \Vert \vec r(\A x y)) \geq 2 \delta$ or $D(\vec c \Vert \vec c(\A x y)) \geq 2 \delta$, outputs \True if both are at most $\delta$}
\end{procedure}

\subsection{Potential argument}
The analysis will be based on a potential argument, using the following convex function (already mentioned in the introduction) as potential:
\begin{align*}
  f(\vec x, \vec y) = \sum_{i,j=1}^n A_{ij} e^{x_i + y_j} - \sum_{i=1}^n r_i x_i - \sum_{j=1}^n c_j y_j.
\end{align*}
This potential function is often used in the context of matrix scaling, as its gradient is precisely the difference between the current and the desired marginals (as we mentioned in~\cref{sec:intro}).
Many of the more sophisticated algorithms for matrix scaling also try to minimize this function directly.
For our purposes, we first state a bound on the potential gap $f(\vec 0, \vec 0)-\inf_{\vec x, \vec y \in \R^n} f(\vec x, \vec y)$, whose proof is delayed until~\cref{sec:potential bound}.

\begin{restatable}[Potential gap]{lem}{potentialgap} \label{lem:potential gap}
  Assume $\mat A \in [0, 1]^{n \times n}$ with $\norm{\mat A}_1 \leq 1$ and non-zero entries at least $\mu > 0$.
  If $\mat A$ is asymptotically $(\vec r, \vec c)$-scalable, then we have
  \[
    f(\vec 0, \vec 0) - \inf_{\vec x, \vec y \in \RR^n} f(\vec x, \vec y) \leq \ln\left( \frac1\mu \right).
  \]
\end{restatable}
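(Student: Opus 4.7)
The plan is to give a uniform lower bound $\inf_{\vec x, \vec y} f(\vec x, \vec y) \geq 1 - \ln(1/\mu)$ and combine it with the trivial evaluation $f(\vec 0, \vec 0) = \norm{\mat A}_1 \leq 1$ to conclude. The key device is a non-negative matrix $\mat Q \in \R_+^{n \times n}$ with marginals $\vec r(\mat Q) = \vec r$ and $\vec c(\mat Q) = \vec c$ whose support is contained in that of $\mat A$ (i.e.\ $Q_{ij} = 0$ whenever $A_{ij} = 0$). Such a coupling lets me rewrite $\langle \vec r, \vec x \rangle + \langle \vec c, \vec y \rangle = \sum_{ij} Q_{ij}(x_i + y_j)$, turning $f$ into a sum of independent one-variable convex functions in the quantities $x_i + y_j$.

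To produce $\mat Q$, I would use asymptotic scalability: pick a sequence $(\vec x^{(k)}, \vec y^{(k)}) \in \R^n \times \R^n$ with $\vec r(\A {x^{(k)}} {y^{(k)}}) \to \vec r$ and $\vec c(\A {x^{(k)}} {y^{(k)}}) \to \vec c$. Then $\norm{\A {x^{(k)}} {y^{(k)}}}_1 \to 1$, so the sequence lies in a bounded region and a convergent subsequence yields the desired $\mat Q$, automatically supported on the non-zero entries of $\mat A$. With $\mat Q$ in hand I would write
\[
  f(\vec x, \vec y) = \sum_{ij} \bigl[ A_{ij} e^{x_i + y_j} - Q_{ij}(x_i + y_j) \bigr],
\]
and, for each $(i,j)$ with $Q_{ij} > 0$, minimize the bracketed term in $t := x_i + y_j$; the minimum is attained at $t = \ln(Q_{ij}/A_{ij})$ with value $Q_{ij} - Q_{ij}\ln(Q_{ij}/A_{ij})$. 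The remaining pairs contribute $A_{ij} e^{x_i+y_j} \geq 0$. Summing and using $\norm{\mat Q}_1 = 1$ gives $f(\vec x, \vec y) \geq 1 - \sum_{ij} Q_{ij}\ln(Q_{ij}/A_{ij})$.

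To finish, I would note that $Q_{ij} \leq r_i \leq 1$ and $A_{ij} \geq \mu$ on the support of $\mat Q$, so $\ln(Q_{ij}/A_{ij}) \leq \ln(1/\mu)$ pointwise and therefore the remaining sum is at most $\ln(1/\mu)$. Combining $\inf f \geq 1 - \ln(1/\mu)$ with $f(\vec 0, \vec 0) = \norm{\mat A}_1 \leq 1$ gives $f(\vec 0, \vec 0) - \inf f \leq \ln(1/\mu)$. The step I expect to be most delicate is the coupling-existence argument: one has to observe that asymptotic scalability produces a bounded sequence of matrices whose accumulation points inherit both the prescribed marginals and the support constraint in $\mat A$. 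Once the coupling $\mat Q$ is in place, the remainder of the proof is a one-variable convex-optimization exercise.
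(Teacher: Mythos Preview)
Your argument is correct. The compactness step is fine: since $\norm{\mat A(\vec x^{(k)},\vec y^{(k)})}_1\to 1$, the sequence of rescaled matrices is bounded in $\R_+^{n\times n}$, any subsequential limit $\mat Q$ inherits the marginals $(\vec r,\vec c)$, and the support constraint is automatic because entries with $A_{ij}=0$ stay zero along the sequence. Once $\mat Q$ exists, your term-by-term minimization is a valid lower bound (you are not claiming the $t_{ij}=x_i+y_j$ can be chosen independently, only that each summand is at least its one-variable minimum), and the estimate $Q_{ij}\le r_i\le 1$, $A_{ij}\ge\mu$ on the support of $\mat Q$ finishes it.

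The paper takes a different route. It introduces the log-potential $F(\vec x,\vec y)=\ln\bigl(\sum_{ij}A_{ij}e^{x_i+y_j}\bigr)-\ip{\vec r}{\vec x}-\ip{\vec c}{\vec y}$, shows $\inf f=1+\inf F$, and then bounds $F\ge\ln\mu$ via Farkas' lemma: asymptotic scalability forces $(\vec r,\vec c)\in\conv\{(\vec e_i,\vec e_j):A_{ij}>0\}$, so for every $(\vec x,\vec y)$ at least one exponent $\ip{(\vec e_i,\vec e_j)-(\vec r,\vec c)}{(\vec x,\vec y)}$ is non-negative and the corresponding term alone is $\ge\mu$. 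Your coupling $\mat Q$ is nothing other than the convex-combination coefficients witnessing this membership, so the two proofs share the same existence step; they diverge in how the lower bound on $f$ is extracted. Your approach is more elementary (no Farkas, no auxiliary potential) and has a clear interpretation as weak duality: $\inf f\ge 1-\sum_{ij}Q_{ij}\ln(Q_{ij}/A_{ij})$ for any feasible coupling $\mat Q$. The paper's approach, in exchange, establishes a broader equivalence (boundedness of $f$, boundedness of $F$, asymptotic scalability, and the convex-hull condition are all equivalent), which is of independent interest.
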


\noindent
For matrices $\mat A$ that are \emph{exactly} $(\vec r, \vec c)$-scalable, this bound is well-known (see~e.g.~\cite{klrs08,chakrabarty2020better}), but to the best of our knowledge, it has not yet appeared in the literature when $\mat A$ is only assumed to be asymptotically scalable to $(\vec r, \vec c)$.

One can show that, for a Sinkhorn iteration in which we update the rows exactly, i.e., $\hat x_\ell =  \ln(r_\ell/\sum_{j=1}^n A_{\ell j} e^{y_j})$ for $\ell\in[n]$, the potential decreases by exactly the relative entropy:
\begin{align}\label{eq:exact row update}
f(\vec x,\vec y) - f(\vec{\hat x},\vec y) = D(\vec r\Vert \vec r(\A x y)),
\end{align}
and similarly for exact column updates.
The next lemma generalizes this to allow for error in the update; it shows that we can lower bound the decrease of the potential function in every iteration in terms of the relative entropy between the target marginal and the current marginal, under the assumption that every call to the subroutine \ApproxScalingFactor succeeds.

\begin{lem}\label{lem:progress no failure full}
  Let $\mat A \in \R_+^{n \times n}$, let $\vec x, \vec y \in \R^n$, let $\delta \in [0, 1]$, and let $\vec {\hat x} \in \R^n$ be a vector such that for every $\ell \in [n]$, we have $\abs{\hat x_\ell - \ln(r_\ell / \sum_{j=1}^n A_{\ell j} e^{y_j})} \leq \delta$.
  Then
  \[
    f(\vec x, \vec y) - f(\vec {\hat x}, \vec y) \geq D\bigl(\vec r \big\Vert \vec r(\A x y)\bigr) - 2 \delta.
  \]
  A similar statement holds for an update of $\vec y$ (using $\vec c$ instead of $\vec r$ in the relative entropy).
\end{lem}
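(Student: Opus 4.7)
The plan is to factor the potential change through the exact Sinkhorn row update. Let $\vec x^\star \in \R^n$ denote the exact update, $x^\star_\ell = \ln(r_\ell / \sum_j A_{\ell j} e^{y_j})$, and write $\Delta_\ell = \hat x_\ell - x^\star_\ell$, so by hypothesis $\abs{\Delta_\ell} \leq \delta$. Then
\[
  f(\vec x, \vec y) - f(\vec{\hat x}, \vec y) = \bigl[f(\vec x, \vec y) - f(\vec x^\star, \vec y)\bigr] - \bigl[f(\vec{\hat x}, \vec y) - f(\vec x^\star, \vec y)\bigr],
\]
so it suffices to show that the first bracket equals $D(\vec r \Vert \vec r(\A x y))$ and that the second bracket is at most $2\delta$.

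For the first bracket I would simply verify \cref{eq:exact row update} by direct expansion: writing $s_\ell := \sum_j A_{\ell j} e^{y_j}$ and using $e^{x^\star_\ell} = r_\ell / s_\ell$, all $\vec y$-dependent contributions to $f(\vec x,\vec y) - f(\vec x^\star,\vec y)$ cancel and what remains assembles into $\sum_\ell (e^{x_\ell} s_\ell - r_\ell + r_\ell \ln(r_\ell / (e^{x_\ell} s_\ell))) = D(\vec r \Vert \vec r(\A x y))$, using that $r_\ell(\A x y) = e^{x_\ell} s_\ell$.

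For the second bracket, only the terms $\sum_{i,j} A_{ij} e^{x_i + y_j}$ and $-\sum_i r_i x_i$ of $f$ change between $\vec x^\star$ and $\vec{\hat x}$. Substituting $\hat x_\ell = x^\star_\ell + \Delta_\ell$ and once more using $s_\ell e^{x^\star_\ell} = r_\ell$ collapses the expression to
\[
  f(\vec{\hat x}, \vec y) - f(\vec x^\star, \vec y) = \sum_\ell r_\ell \bigl(e^{\Delta_\ell} - 1 - \Delta_\ell\bigr).
\]
The key analytic input is a pointwise estimate: for $\abs{t} \leq \delta \leq 1$ one has $e^t - 1 - t \leq 2\delta$. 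This holds because $t \mapsto e^t - 1 - t$ is convex with minimum $0$ at $t = 0$, so its supremum on $[-\delta, \delta]$ is $e^\delta - 1 - \delta$, and one then checks $e^\delta \leq 1 + 3\delta$ on $[0,1]$ (both sides agree at $0$ and the right-hand side has the larger derivative throughout this interval since $3 \geq e^\delta$). Summing the pointwise bound with weights $r_\ell$ and using $\|\vec r\|_1 = 1$ yields $f(\vec{\hat x}, \vec y) - f(\vec x^\star, \vec y) \leq 2\delta$, which combined with the first bracket gives the claim.

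The main obstacle is really just bookkeeping: one needs the asymmetry of the bound, since $\Delta_\ell$ can be of either sign and $e^\delta - 1 - \delta > e^{-\delta} - 1 + \delta$, so the convexity argument (rather than a naive second-order Taylor estimate) is essential to keep the constant clean. The analogous column-update statement follows by the identical argument with $(\vec x, \vec r)$ and $(\vec y, \vec c)$ swapped, using symmetry of $f$ in the roles of rows and columns.
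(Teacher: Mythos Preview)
Your proof is correct and takes essentially the same approach as the paper: both arguments reduce to the identity $f(\vec{\hat x},\vec y)-f(\vec x^\star,\vec y)=\sum_\ell r_\ell(e^{\Delta_\ell}-1-\Delta_\ell)$ and the elementary bound $e^z-1-z\leq 2|z|$ for $|z|\leq 1$ (equivalently your $e^\delta\leq 1+3\delta$), combined with $\|\vec r\|_1=1$. The only cosmetic difference is that you factor explicitly through the exact update $\vec x^\star$ and invoke \cref{eq:exact row update}, whereas the paper expands $f(\vec x,\vec y)-f(\vec{\hat x},\vec y)$ directly and identifies the $\rho$-terms along the way; the computations are the same.
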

\begin{proof}
  We first note that we have the equalities
  \begin{align*}
    f(\vec x, \vec y) - f(\vec {\hat x}, \vec y) & = \sum_{\ell, j = 1}^n A_{\ell j} e^{x_\ell + y_j} - \sum_{\ell, j = 1}^n A_{\ell j} e^{\hat x_\ell + y_j} - \sum_{i=1}^n r_i\cdot (x_i-\hat x_i) \\
    & = \sum_{\ell = 1}^n \left(r_\ell(\A x y) - r_\ell(\A {\hat x} y) - r_\ell\cdot (x_\ell-\hat x_\ell)\right)
  \end{align*}
  Denote $z_\ell = \hat x_\ell - \ln(r_\ell/\sum_{j=1}^n A_{\ell j} e^{y_j})$, so that $\abs{z_\ell} \leq \delta$.
  Note that
  \[
    r_\ell(\A {\hat x} y) = e^{\hat x_\ell} \sum_{j=1}^n A_{\ell j} e^{y_j} = r_\ell e^{z_\ell}.
  \]
  Furthermore, we also have
  \[
    x_\ell - \hat x_\ell = \ln(\tfrac{1}{r_\ell} \sum_{j=1}^n A_{\ell, j} e^{x_\ell + y_j}) - z_\ell = -\ln(\tfrac{r_\ell}{r_\ell(\A x y)}) - z_\ell.
  \]
  Therefore we can rewrite
  \[
    r_\ell(\A x y) - r_\ell(\A {\hat x} y) - r_\ell\cdot (x_\ell - \hat x_\ell) = r_\ell(\A x y) - r_\ell (e^{z_\ell}-z_{\ell}) + r_\ell \ln (\tfrac{r_\ell}{r_\ell(\A x y)}).
  \]
  For $z \in [-1, 1]$ one can easily show that $e^z - z \leq 1 + 2 \abs z$, and so
  \begin{align*}
      r_\ell(\A x y) - r_\ell (e^{z_\ell}-z_{\ell}) + r_\ell \ln (\tfrac{r_\ell}{r_\ell(\A x y)}) & \geq r_\ell(\A x y) - r_\ell - 2 r_\ell \abs {z_\ell} + r_\ell \ln(\tfrac{r_\ell}{r_\ell(\A x y)})
      \\
      & = \rho(r_\ell \Vert r_\ell(\A x y)) - 2 r_\ell \abs {z_\ell},
  \end{align*}
  so that we may conclude
  \[
    f(\vec x, \vec y) - f(\vec {\hat x}, \vec y) \geq D(\vec r \Vert \vec r(\A x y)) - 2 \sum_{\ell=1}^n r_\ell \abs{z_\ell} \geq D(\vec r \Vert \vec r(\A x y)) - 2 \delta
  \]
  since $\abs{z_\ell} \leq \delta$ for every $\ell \in [n]$, and $\norm{\vec r}_1 = 1$.
\end{proof}
\noindent
The previous lemma showed that updating the scaling vectors with additive precision~$\delta$ suffices to make progress in minimizing the potential function~$f$, as long as we are still far away from the desired marginals (in relative entropy distance).
As we wish to store the entries of $\vec x$ and $\vec y$ with additive precision $\delta > 0$ using a $(b_1, b_2)$ fixed-point format, we need $b_2 \geq \lceil \log_2(1/\delta) \rceil$.
The guarantees of \ApproxScalingFactor and \TestScaling assert that this choice of $b_2$ is also sufficient.
\Cref{lem:scaling norm bound} shows how large we need to take $b_1$ to ensure that the requirements of \ApproxScalingFactor and \TestScaling are satisfied in any particular iteration.
\begin{rem}
Note that the algorithm returns as soon as {\TestScaling} returns \True, or after $T$ iterations.
However, for the sake of simplifying the analysis, we always assume that $\vec x^{(t)}$ and $\vec y^{(t)}$ are defined for $t = 0, \dotsc, T$.
\end{rem}

\begin{lem}[Bounding the scalings]\label{lem:scaling norm bound}
Let $\mat A \in [0, 1]^{n\times n}$ with $\norm{\mat A}_1 \leq 1$ and non-zero entries at least $\mu > 0$.
Let $T \geq 1$ and $\delta \in [0, 1]$.
Denote $\sigma = \max(\abs{\ln r_{\min}}, \abs{\ln c_{\min}})$. Let $b_2 = \lceil \log_2(1/\delta) \rceil$ and choose $b_1=\lceil\log_2(T) + \log_2 (\ln(\frac1{\mu}) + 1 + \sigma)\rceil$.
If for all $t\in[T]$ the subroutine {\ApproxScalingFactor} succeeds, then for all $t\in[T]$ and $\ell \in [n]$ we have
\begin{align*}
  \left|\ln\left(\frac{r_\ell}{\sum_{j=1}^n A_{\ell j} e^{y_j^{(t)}}}\right)\right| \leq 2^{b_1}, \quad
  \left|\ln\left(\frac{c_\ell}{\sum_{i=1}^n A_{i \ell} e^{x_i^{(t)}}}\right)\right| \leq 2^{b_1}
\end{align*}
and
\begin{align*}
  \norm{(\vec x^{(t)}, \vec y^{(t)})}_\infty \leq t \left( \ln \left(\frac1{\mu}\right) + \delta + \sigma \right) \leq t \left( \ln \left(\frac1{\mu}\right) + 1 + \sigma \right) .
\end{align*}
\end{lem}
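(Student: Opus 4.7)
I would prove both statements simultaneously by induction on $t \in \{0, 1, \dotsc, T\}$, assuming throughout that every call to \ApproxScalingFactor in iterations $1, \dotsc, t$ succeeds. The base case $t = 0$ is immediate: $\vec x^{(0)} = \vec y^{(0)} = \vec 0$, so the $\ell_\infty$ bound trivially holds with equality, and each of the two logarithms reduces to $|\ln r_\ell - \ln(\sum_j A_{\ell j})|$ (or its column analogue), which is at most $\sigma + \ln(1/\mu)$ using $|\ln r_\ell| \leq \sigma$ together with $\mu \leq \sum_j A_{\ell j} \leq \norm{\mat A}_1 \leq 1$ (the lower bound uses the hypothesis that every row and column contains an entry of value at least $\mu$).

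For the inductive step, suppose the claims hold at step $t - 1$ and consider iteration $t$; by symmetry I may assume $t$ is odd, so only $\vec x$ is updated and $\vec y^{(t)} = \vec y^{(t-1)}$. The heart of the argument is a two-sided bound on the denominator appearing in the row update rule: from $\sum_j A_{\ell j} \leq \norm{\mat A}_1 \leq 1$ we obtain $\sum_j A_{\ell j} e^{y_j^{(t-1)}} \leq e^{\norm{\vec y^{(t-1)}}_\infty}$, while picking some $j_0$ with $A_{\ell j_0} \geq \mu$ (which exists by assumption) yields $\sum_j A_{\ell j} e^{y_j^{(t-1)}} \geq \mu e^{-\norm{\vec y^{(t-1)}}_\infty}$. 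Combined with $|\ln r_\ell| \leq \sigma$, these give
\begin{align*}
\left|\ln\left(\frac{r_\ell}{\sum_{j=1}^n A_{\ell j}e^{y_j^{(t-1)}}}\right)\right| \leq \sigma + \ln(1/\mu) + \norm{\vec y^{(t-1)}}_\infty \leq t\bigl(\ln(1/\mu) + \sigma + 1\bigr) \leq 2^{b_1},
\end{align*}
where the second inequality uses the inductive hypothesis $\norm{\vec y^{(t-1)}}_\infty \leq (t-1)(\ln(1/\mu) + \delta + \sigma)$ together with $\delta \leq 1$, and the last step uses the choice of $b_1$.

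Since the precondition of \ApproxScalingFactor is now met, its guarantee of $\delta$-additive accuracy gives $|x_\ell^{(t)}| \leq \delta + \sigma + \ln(1/\mu) + \norm{\vec y^{(t-1)}}_\infty \leq t(\ln(1/\mu) + \delta + \sigma)$, which combined with $\vec y^{(t)} = \vec y^{(t-1)}$ closes the $\ell_\infty$ claim at step $t$. The remaining column logarithmic bound $|\ln(c_\ell/\sum_i A_{i\ell} e^{x_i^{(t)}})| \leq 2^{b_1}$ follows by the identical two-sided denominator argument applied to $\vec x^{(t)}$ and the just-established norm bound, and the even-$t$ case is entirely symmetric (swap the roles of $\vec x,\vec y$ and of $\vec r,\vec c$). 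The main (and only) obstacle is careful bookkeeping of the constants to ensure the bound $T(\ln(1/\mu) + 1 + \sigma)$ really dominates the logarithms throughout both halves of every iteration; the underlying idea is just that one iteration can increase $\norm{(\vec x, \vec y)}_\infty$ by at most an additive $\ln(1/\mu) + \delta + \sigma$, so the scalings grow at most linearly in $t$.
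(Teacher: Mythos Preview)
Your proposal is correct and follows essentially the same approach as the paper's proof: both argue by induction on $t$, establish the two-sided estimate $\mu e^{-\norm{\vec y}_\infty} \leq \sum_j A_{\ell j} e^{y_j} \leq e^{\norm{\vec y}_\infty}$ from $\norm{\mat A}_1 \leq 1$ and the $\mu$-lower bound on nonzero entries, and conclude that each iteration increases $\norm{(\vec x,\vec y)}_\infty$ by at most $\ln(1/\mu)+\delta+\sigma$. Your version is slightly more explicit in checking the column logarithmic bound after the row update, but the substance is identical.
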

\begin{proof}
    We prove the norm bound by induction and the other claim as we go along.
    The norm bound clearly holds at time $t = 0$.
    Now assume it holds at time $t$. Assume that $t+1$ is odd, so that we update the rows in iteration $t+1$ (the case when $t+1$ is even follows similarly). For each $\ell \in [n]$, we bound $x_\ell^{(t+1)}$.
    Note that, by assumption, we have
    \begin{equation}
      \label{eq:simp x_l}
      \delta \geq \abs*{x_\ell^{(t+1)} - \ln\left(\frac{r_\ell}{\sum_{j=1}^n A_{\ell j} e^{y_j^{(t)}}}\right)} = \abs*{x_\ell^{(t+1)} - \ln(r_\ell) + \ln\left(\sum_{j=1}^n A_{\ell j} e^{y_j^{(t)}}\right)}.
    \end{equation}
    Observe that
    \begin{equation*}
        - \ln \left(\sum_{j=1}^n A_{\ell j} e^{y_j^{(t)}}\right)
        \geq - \ln\left(\sum_{j=1}^n A_{\ell j} e^{\norm{\vec y^{(t)}}_\infty}\right)
        = - \norm{\vec y^{(t)}}_\infty -\ln\left(\sum_{j=1}^n A_{\ell j}\right)
        \geq - \norm{\vec y^{(t)}}_\infty,
    \end{equation*}
    where the last inequality uses $\sum_{j=1}^n A_{\ell j} \leq \norm{\mat A}_1 \leq 1$.
    Similarly, for the upper bound, we have
    \begin{equation*}
      - \ln\left(\sum_{j=1}^n A_{\ell j} e^{y_j^{(t)}}\right)
      \leq - \ln\left(\sum_{j=1}^n A_{\ell j} e^{-\norm{\vec y^{(t)}}_\infty}\right)
      \leq \norm{\vec y^{(t)}}_\infty - \ln\left(\sum_{j=1}^n A_{\ell j}\right)
      \leq \norm{\vec y^{(t)}}_\infty - \ln(\mu)
    \end{equation*}
    where we used that all non-zero entries of $\mat A$ are at least~$\mu$,
    and every row contains at least one non-zero entry. Note that these bounds together with the choice of $b_1$ and the inductive assumption on $\norm{\vec x^{(t)}}_\infty$ and $\norm{\vec y^{(t)}}_\infty$ imply the first claim.

    If we use these estimates in \cref{eq:simp x_l}, we obtain
    \begin{align*}
      \ln(r_\ell) - \norm{\vec y^{(t)}}_\infty - \delta \;\leq\; x_\ell^{(t+1)} \;\leq\; \norm{\vec y^{(t)}}_\infty + \ln(r_\ell) + \ln\left(\frac{1}{\mu}\right) + \delta.
    \end{align*}
    Since $\mu \leq 1$ and $r_\ell \leq 1$, this implies
    \begin{align*}
      \abs{x_\ell^{(t+1)}} \leq \norm{\vec y^{(t)}}_\infty + \ln\left(\frac 1 \mu\right) + \abs{\ln (r_\ell)} + \delta.
    \end{align*}
    This shows that $\norm{{\vec x}^{(t+1)}}_\infty \leq \norm{\vec y^{(t)}}_\infty + \ln(\frac{1}{\mu}) + \abs{\ln (r_{\min})} + \delta$.
    The case that we updated the columns in the $(t+1)$-st iteration is treated completely similarly.
    Thus, we conclude that
    \[
        \norm{(\vec{x}^{(t+1)}, \vec{y}^{(t+1)})}_\infty \leq \norm{(\vec x^{(t)}, \vec y^{(t)})}_\infty + \ln \left(\frac{1}{\mu}\right) + \sigma + \delta.
    \]
    By the induction hypothesis and $\delta \leq 1$, the desired upper bound holds at time $t+1$, and it suffices to use $b_1$ bits in any iteration to meet the requirements of \ApproxScalingFactor.
\end{proof}

To formally analyze the expected progress it will be convenient to define the following events.
\begin{definition}[Important events] \label{def: events St}
For $t = 1,\ldots, T$, we define the following events:
\begin{itemize}
    \item Let $S_t$ denote the event that all $n$ calls to {\ApproxScalingFactor} succeed in the $t$-th iteration.
    \item Define $S$ to be the intersection of the events $S_t$, i.e., $S=\cap_{t=1}^{T} S_t$.
\end{itemize}
\end{definition}
To give some intuition, we show below that the event $S$ is the `good' event where a row-update makes the relative entropy between $\vec r$ and the updated row-marginals at most $\delta$ (and similarly for the columns). We only use \cref{lem:other marginals full} in \cref{sec:positive}.
\begin{lem}
\label{lem:other marginals full}
If $S$ holds and $\delta \leq 1$, then the following holds for all $t\in[T]$:
\begin{itemize}
\item If $t$ is odd, then $D(\vec r \Vert \vec r(\mat A(\vec x^{(t)},\vec y^{(t)}))) \leq \delta$.
\item If $t$ is even, then $D(\vec c \Vert \vec c(\mat A(\vec x^{(t)},\vec y^{(t)}))) \leq \delta$.
\end{itemize}
\end{lem}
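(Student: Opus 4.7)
The statement is a direct consequence of the guarantee of \ApproxScalingFactor under the event $S_t$. The plan is to set up explicit error variables $z_\ell$ for the deviations between the computed updates and the exact Sinkhorn updates, express the one-sided marginals of $\mat A(\vec x^{(t)}, \vec y^{(t)})$ in terms of the $z_\ell$ and the target marginal, and then bound the relative entropy by a standard Taylor estimate for $e^z - 1 - z$.

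\textbf{Step 1 (setup).} Fix $t \in [T]$ and assume first that $t$ is odd, so that the algorithm updates the row scaling factors and sets $\vec y^{(t)} = \vec y^{(t-1)}$. By definition of $S \subseteq S_t$, every call to \ApproxScalingFactor on \cref{algline:sinkhorn approx row} succeeds, so for every $\ell \in [n]$ the output $x_\ell^{(t)}$ is a $\delta$-additive approximation of the exact row-update
\[
  \hat{x}_\ell := \ln\!\left(\frac{r_\ell}{\sum_{j=1}^n A_{\ell j} e^{y_j^{(t-1)}}}\right),
\]
which is well-defined since every row of $\mat A$ has at least one non-zero entry. (We may invoke the guarantee of \ApproxScalingFactor because \cref{lem:scaling norm bound} ensures that the chosen $b_1, b_2$ are large enough in every iteration when $S$ holds.) Write $z_\ell := x_\ell^{(t)} - \hat x_\ell$, so $\abs{z_\ell} \leq \delta \leq 1$.

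\textbf{Step 2 (marginals).} Using $\vec y^{(t)} = \vec y^{(t-1)}$, the $\ell$-th row marginal of the updated matrix is
\[
  r_\ell\bigl(\mat A(\vec x^{(t)}, \vec y^{(t)})\bigr)
  = e^{x_\ell^{(t)}} \sum_{j=1}^n A_{\ell j} e^{y_j^{(t-1)}}
  = e^{z_\ell}\, r_\ell.
\]

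\textbf{Step 3 (entropy bound).} Plugging this into the definition of relative entropy,
\[
  D\bigl(\vec r \,\big\Vert\, \vec r(\mat A(\vec x^{(t)}, \vec y^{(t)}))\bigr)
  = \sum_{\ell=1}^n \rho(r_\ell \Vert e^{z_\ell} r_\ell)
  = \sum_{\ell=1}^n r_\ell\,\bigl(e^{z_\ell} - 1 - z_\ell\bigr).
\]
The Taylor estimate $e^z - 1 - z \leq z^2$ for $\abs z \leq 1$ (straightforward from the power series, using that positive terms decrease geometrically when $\abs z \leq 1$) then gives
\[
  D\bigl(\vec r \,\big\Vert\, \vec r(\mat A(\vec x^{(t)}, \vec y^{(t)}))\bigr)
  \leq \sum_{\ell=1}^n r_\ell\, z_\ell^2 \leq \delta^2 \norm{\vec r}_1 = \delta^2 \leq \delta,
\]
where the last inequality uses $\delta \leq 1$. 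For even $t$, the column update is handled identically, with $\vec c$ in place of $\vec r$ and $\vec x^{(t)} = \vec x^{(t-1)}$.

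\textbf{Main obstacle.} There is no real obstacle: the lemma follows mechanically once one writes the updated marginal as $e^{z_\ell} r_\ell$ and uses a second-order bound on $e^z - 1 - z$. The only mild subtlety is to confirm that the guarantee of \ApproxScalingFactor genuinely applies in every iteration, which is exactly what \cref{lem:scaling norm bound} provides (invoking the argument recursively: if $S$ holds, the scalings stay bounded, hence the chosen $b_1$ is sufficient for the very calls that constitute $S$).
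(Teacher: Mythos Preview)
Your proof is correct and follows essentially the same approach as the paper: write the updated row marginal as $e^{z_\ell} r_\ell$, expand $\rho(r_\ell \Vert e^{z_\ell} r_\ell) = r_\ell(e^{z_\ell} - 1 - z_\ell)$, and bound using a Taylor estimate. The only cosmetic difference is that the paper uses the cruder bound $e^z - 1 - z \leq |z|$ for $|z|\leq 1$ to get $D \leq \delta$ directly, whereas you use $e^z - 1 - z \leq z^2$ to get $D \leq \delta^2 \leq \delta$.
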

\begin{proof}
If $S_t$ holds and $t$ is odd, then
\begin{align*}
  \abs*{ x^{(t)}_\ell - \ln\left( \frac{r_\ell}{\sum_{j=1}^n A_{\ell j} e^{y^{(t-1)}_j}} \right) } \leq \delta
\end{align*}
for all $\ell\in[n]$, while $\vec y^{(t)} = \vec y^{(t-1)}$. Accordingly,
\[
  r_\ell(\mat A(\vec x^{(t)},\vec y^{(t)}))) = e^{x_\ell^{(t)}} \sum_{j=1}^n A_{\ell j} e^{y_j^{(t)}} \in [e^{-\delta}, e^\delta] \cdot \frac{r_\ell}{\sum_{j=1}^n A_{\ell j} e^{y_j^{(t-1)}}} \cdot \left(\sum_{j=1}^n A_{\ell j} e^{y_j^{(t-1)}}\right) = [e^{- \delta} r_\ell, e^{\delta} r_\ell].
\]
Since
\begin{align*}
  \rho(r_\ell \Vert r_\ell e^z)
    = r_\ell e^z - r_\ell + r_\ell \ln\left(\frac{r_\ell}{r_\ell e^z}\right)
    = r_\ell \left( e^z - 1 - z \right)
    \leq r_\ell \abs z
\end{align*}
for any $\abs z \leq 1$, we obtain
\begin{align*}
  D(\vec r \Vert \vec r(\mat A(\vec x^{(t)},\vec y^{(t)})))
    = \sum_{\ell = 1}^n \rho(r_\ell \Vert r_\ell(\mat A(\vec x^{(t)},\vec y^{(t)})))
    \leq \sum_{\ell=1}^n r_\ell\delta=\delta,
\end{align*}
again using $\norm{\vec r}_1 = 1$.
A similar computation yields the result for $t$ even.
\end{proof}

We can combine \cref{lem:potential gap,lem:progress no failure full} to show  \cref{alg:FSFP testing} returns, with high probability, an $\eps$-relative-entropy-scaling to $(\vec r, \vec c)$ by choosing $\delta = O(\eps)$.

\begin{prp}\label{thm: Full testing}
Let $\mat A \in [0, 1]^{n\times n}$ with $\norm{\mat A}_1 \leq 1$ and non-zero entries at least $\mu > 0$ and let $\vec r, \vec c \in (0,1]^n$ with $\norm{\vec r}_1 = \norm{\vec c}_1 = 1$.
Assume $\mat A$ is asymptotically scalable to $(\vec r, \vec c)$. For $\eps \in (0, 1]$,
choose
\[
T = \left\lceil \frac{8}{\eps} \ln\left(\frac1\mu\right)\right\rceil + 1,
\]
$\delta = \frac{\eps}{16}$,
$\delta' = \frac{\eps}{2}$,
$\eta = \frac 1{3 (n+1) T}$,
$b_2 = \lceil \log_2(\frac{1}{\delta}) \rceil$,
and $b_1 = \lceil \log_2(T) + \log_2(\ln( \frac1{\mu}) + \sigma+1) \rceil$, where $\sigma = \max(\abs{\ln r_{\min}}, \abs{\ln c_{\min}})$.
Then, \cref{alg:FSFP testing} with these parameters returns a pair $(\vec x, \vec y)$ such that
$D(\vec r \Vert \vec r(\mat A(\vec x, \vec y))) \leq \eps$ and $D(\vec c \Vert \vec c(\mat A(\vec x, \vec y))) \leq \eps$
with probability $\geq 2/3$.
\end{prp}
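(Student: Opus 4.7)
I plan to condition on a single ``good event'': let $E$ be the event that every call to \ApproxScalingFactor and every call to \TestScaling meets its specification. Since there are at most $(n+1)T$ such calls and each fails with probability at most $\eta$, a union bound with $\eta = \frac{1}{3(n+1)T}$ gives $\Pr[E] \geq 2/3$. Thus it suffices to show that under $E$ the algorithm returns a valid $\eps$-relative-entropy-scaling. First, \cref{lem:scaling norm bound} guarantees that with the chosen $b_1$ and $b_2 = \lceil \log_2(1/\delta)\rceil$, the bit-precision preconditions of both subroutines are met in every iteration. Second, by the specification of \TestScaling with parameter $\delta' = \eps/2$: if it returns \True, then both row and column relative entropies are below $2\delta' = \eps$ so the output is a valid $\eps$-scaling, while if it returns \False, then at least one of them strictly exceeds $\delta' = \eps/2$.

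The main argument is then by contradiction: suppose, under $E$, that \TestScaling returns \False in every iteration $t = 1, \dots, T$. For $t \geq 2$, let $e_t$ denote the relative entropy matching the type of update performed in iteration $t$, evaluated at the pre-iteration marginals $(\vec x^{(t-1)}, \vec y^{(t-1)})$. By \cref{lem:other marginals full}, after iteration $t-1$ the \emph{other} relative entropy is at most $\delta < \delta'$, so the \False outcome forces $e_t > \delta'$. Summing the progress bound of \cref{lem:progress no failure full} across all iterations and telescoping, together with the potential gap bound of \cref{lem:potential gap}, gives
\[
\sum_{t=1}^T e_t \;\leq\; \ln(1/\mu) + 2T\delta.
\]
Combining this with $e_t > \delta'$ for $t = 2, \ldots, T$ yields $(T-1)\delta' < \ln(1/\mu) + 2T\delta$. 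Substituting the chosen $\delta = \eps/16$, $\delta' = \eps/2$, and $T \geq \frac{8}{\eps}\ln(1/\mu) + 1$, this simplifies to $\ln(1/\mu) < \eps/16$, which is the desired contradiction provided $\ln(1/\mu) \geq \eps/16$.

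The main obstacle is the residual edge case $\ln(1/\mu) < \eps/16$, in which the initial potential gap is already smaller than the test precision, so the telescoping argument does not bite. Here I would instead argue directly that \TestScaling returns \True at $t = 1$. Applying \cref{lem:progress no failure full} with $\delta = 0$ to an exact row or column update shows that at every point $(\vec x, \vec y)$,
\[
f(\vec x, \vec y) - \inf f \;\geq\; \max\bigl( D(\vec r \Vert \vec r(\mat A(\vec x, \vec y))),\ D(\vec c \Vert \vec c(\mat A(\vec x, \vec y))) \bigr).
\]
Since iteration $1$ can only increase $f$ by at most $2\delta$ (by \cref{lem:progress no failure full} and $D \geq 0$), the potential gap at $(\vec x^{(1)}, \vec y^{(1)})$ is at most $\ln(1/\mu) + 2\delta \leq 3\eps/16 < \delta'$, so both relative entropies are below $\delta'$ and \TestScaling returns \True at $t = 1$. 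This completes the case analysis and hence the proof.
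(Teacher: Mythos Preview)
Your proof is correct and follows essentially the same potential argument as the paper, invoking the same three lemmas (\cref{lem:potential gap}, \cref{lem:progress no failure full}, \cref{lem:other marginals full}) in the same way. The only organizational difference is that the paper bounds the last iteration $\tau$ at which \TestScaling outputs \False and shows $\tau \leq 8\ln(1/\mu)/\eps < T$ directly, which sidesteps the $\ln(1/\mu) < \eps/16$ edge case you handle separately; your treatment of that edge case via the observation $f(\vec x,\vec y) - f^* \geq \max(D(\vec r\Vert\cdot),D(\vec c\Vert\cdot))$ is nonetheless sound.
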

\begin{proof}
  The choice of $\eta$ is such that with probability at least $1 - (n+1)T \eta = 2/3$, the event $S$ holds (i.e., all calls to \ApproxScalingFactor succeed) and all calls to \TestScaling succeed.
  Assume this is the case. If there exists an iteration $t \in [T]$ for which \TestScaling outputs \True, then we have obtained a $2 \delta'$-relative-entropy-scaling of $\mat A$ to $(\vec r, \vec c)$, which is an $\eps$-relative-entropy-scaling by the choice of $\delta'$.
  We now bound the number of iterations in which \TestScaling can output \False, i.e., the number of iterations in which $(\vec x, \vec y)$ is not a $\delta'$-relative-entropy-scaling of $\mat A$.
  Suppose that $\tau \in [T]$ is the last iteration such that \TestScaling outputs \False.
  By \cref{lem:potential gap}, we have
  \[
  f_0 - f_\tau \leq \ln\left(\frac 1 \mu\right),
  \]
  which is positive since $\mu \leq 1$.
  We now lower bound the left-hand side by a telescoping sum: using \cref{lem:progress no failure full} (and implicitly \cref{lem:scaling norm bound}) we obtain
  \[
    f_0 - f_\tau = \sum_{t=1}^\tau (f_{t-1} - f_t) \geq 2\tau\delta =\frac{\eps \tau}{8}
  \]
  It follows that $\tau \leq 8\ln(\tfrac 1\mu)/\eps < T$, so \TestScaling must output \True in the $T$-th iteration at the latest.
  This concludes the proof.
\end{proof}

\subsection{Time complexity}
With the performance guarantees provided by~\cref{thm:quantum ApproxScalingFactor guarantee} and~\cref{lem:TestScaling guarantee} for quantum implementations of \ApproxScalingFactor and \TestScaling, we can state the time complexity of computing an $\eps$-relative-entropy-scaling of $\mat A$ to marginals $(\vec r, \vec c)$. We now prove the main theorem of this section, already stated earlier and repeated here for convenience.

\thmFullTestingCost*
\begin{proof}
  We show that \cref{alg:FSFP testing} with the parameters chosen as in \cref{thm: Full testing} has the stated time complexity. Note that the cost of computing these parameters from the input will be dominated by the runtime of the algorithm. \cref{thm: Full testing} shows that \cref{alg:FSFP testing} runs for at most $O(\ln(1/\mu)/\eps)$ iterations. Next we show the time complexity per iteration is $\widetilde O(\sqrt{mn}/\eps)$, which implies the claimed total time complexity of $\widetilde O(\sqrt{mn} / \eps^2)$.

  \Cref{thm:quantum ApproxScalingFactor guarantee} shows that invoking \ApproxScalingFactor with precision $\delta = \Theta(\eps)$ on a row containing $s$ potentially non-zero entries incurs a cost of order $\widetilde O(\sqrt{s} / \eps)$, where we suppress a polylogarithmic dependence on $n$.
  Since in one iteration of~\cref{alg:FSFP testing} we apply \ApproxScalingFactor once to each row or once to each column, using Cauchy--Schwarz the total cost of the calls to \ApproxScalingFactor in one iteration is
  \[
    \widetilde O\left(\sum_{i=1}^n \sqrt{s^r_i}/\eps + \sum_{j=1}^n \sqrt{s^c_j}/\eps\right) \subseteq \widetilde O(\sqrt{mn}/\eps),
  \]
  where we recall that $s^r_i$ and $s^c_j$ are the numbers of potentially non-zero entries in the $i$-th row and $j$-th column of $\mat A$, respectively, and $m$ is the total number of potentially non-zero entries in $\mat A$ (i.e., $\sum_{i=1}^n s^r_i = m = \sum_{j=1}^n s^c_j$).
  Similarly, \cref{lem:TestScaling guarantee} shows invoking \TestScaling with precision $\delta' = \Theta(\eps)$ incurs a cost of order $\widetilde O(\sqrt{mn} / \eps)$. Finally we observe that compiling the quantum circuits (and preparing their inputs) for the calls to \ApproxScalingFactor and \TestScaling can be done with at most a polylogarithmic overhead.
\end{proof}
\noindent
Note that the dependency on $\ln(1/\mu)$ is suppressed by the $\widetilde O$, since we assume that the numerator and denominator of any rational number in the input is bounded above by a polynomial in $n$.
Using a generalization of Pinsker's inequality (cf.~\cref{lem:generalized pinsker}), an $\eps$-relative-entropy-scaling is a $O(\sqrt{\eps})$-$\ell_1$-scaling, which implies the following:
\begin{cor}
  \label{cor:full sinkhorn performance l1}
  Let $\mat A \in [0, 1] \in \R^{n \times n}$ be a rational matrix with $\norm{\mat A}_1 \leq 1$ and $m$ non-zero entries, each at least $\mu > 0$, let $\vec r, \vec c \in (0,1]^n$ with $\norm{\vec r}_1 = \norm{\vec c}_1 = 1$, and let $\eps \in (0, 1]$.
  Assume $\mat A$ is asymptotically scalable to $(\vec r, \vec c)$.
  Then there exists a quantum algorithm that, given sparse oracle access to $\mat A$, with probability $\geq 2/3$,  computes $(\vec x, \vec y) \in \R^n \times \R^n$ such that $\mat A(\vec x, \vec y)$ is $\eps$-$\ell_1$-scaled to $(\vec r, \vec c)$, at a total time complexity of $\widetilde O(\sqrt{mn} / \eps^4)$.
\end{cor}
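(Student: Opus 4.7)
The plan is to reduce the corollary to \cref{cor:full sinkhorn performance} by suitably sharpening the target precision, and then to convert the resulting relative-entropy guarantee into an $\ell_1$-guarantee via the generalized Pinsker inequality (\cref{lem:generalized pinsker}). Concretely, I would set $\eps' := \eps^2/4 \in (0, 1/4]$ and invoke the quantum algorithm of \cref{cor:full sinkhorn performance} with target precision $\eps'$. By that theorem, this yields (with probability at least $2/3$) scaling vectors $(\vec x, \vec y) \in \R^n \times \R^n$ such that $D(\vec r \Vert \vec r(\mat A(\vec x, \vec y))) \leq \eps'$ and $D(\vec c \Vert \vec c(\mat A(\vec x, \vec y))) \leq \eps'$, at total time complexity
\[
\widetilde O(\sqrt{mn}/(\eps')^2) = \widetilde O(\sqrt{mn}/\eps^4),
\]
absorbing the constant factor of $16$ into the $\widetilde O$.

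Next, I would apply \cref{lem:generalized pinsker} to the pairs $(\vec r, \vec r(\mat A(\vec x, \vec y)))$ and $(\vec c, \vec c(\mat A(\vec x, \vec y)))$. The hypotheses of the lemma are met: $\norm{\vec r}_1 = \norm{\vec c}_1 = 1$, and the marginals of $\mat A(\vec x, \vec y)$ are strictly positive since every row and column of $\mat A$ contains at least one non-zero entry (and $\vec x, \vec y$ are finite). For the row marginals, the lemma yields $D(\vec r \Vert \vec r(\mat A(\vec x, \vec y))) \geq w(\beta)$, where $\beta := \norm{\vec r - \vec r(\mat A(\vec x, \vec y))}_1$. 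A brief case analysis completes the conversion: if $\beta \leq 1$, then the quadratic bound $w(\beta) \geq \beta^2/4$ combined with $D \leq \eps^2/4$ gives $\beta \leq \eps$; if $\beta > 1$, then the linear bound $w(\beta) \geq (1 - \ln 2)\beta > 1/4 \geq \eps'$ would contradict the relative-entropy guarantee (using $\eps \leq 1$). The same argument applied to the columns yields $\norm{\vec c - \vec c(\mat A(\vec x, \vec y))}_1 \leq \eps$, so $(\vec x, \vec y)$ is an $\eps$-$\ell_1$-scaling.

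There is no real obstacle here: the bookkeeping on the complexity is immediate, and the entropy-to-$\ell_1$ conversion is an essentially direct application of \cref{lem:generalized pinsker}. The one subtlety worth flagging is that the standard Pinsker inequality does not apply directly, because $\vec r(\mat A(\vec x, \vec y))$ and $\vec c(\mat A(\vec x, \vec y))$ need not be probability distributions during the scaling process; this is precisely why the generalized version (which only requires $\vec b$ to be entrywise positive, not normalized) is used. The additional verification that the $\beta > 1$ regime is ruled out is straightforward given $\eps \leq 1$.
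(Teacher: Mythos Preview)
Your proposal is correct and follows exactly the same approach as the paper: invoke \cref{cor:full sinkhorn performance} with precision $\Theta(\eps^2)$ and convert the relative-entropy guarantee to an $\ell_1$-guarantee via the generalized Pinsker inequality (\cref{lem:generalized pinsker}). You have in fact supplied more detail than the paper, which merely states that an $\eps$-relative-entropy-scaling is an $O(\sqrt\eps)$-$\ell_1$-scaling and cites \cref{lem:generalized pinsker}; your case analysis ruling out $\beta>1$ makes this precise.
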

\noindent
In~\cref{sec:lower bound} we show that the complexity in terms of $n$ and $m$ is tight (up to logarithmic factors): $\Omega(\sqrt{mn})$ queries to the input are needed to solve the scaling problem for constant $\ell_1$-error (by Pinsker's inequality, the same lower bound is then implied for relative-entropy scaling as well).

In \cref{sec:positive},~\cref{thm: faster for positive}, we show that if the matrix $\mat A$ is entrywise-positive, then the number of iterations to obtain an $\eps$-relative-entropy-scaling can be reduced to roughly $1/\sqrt{\eps}$ rather than roughly $1/\eps$, leading to a quantum algorithm with time complexity $\widetilde O(n^{1.5} / \eps^{1.5})$. This also implies that one can find an $\eps$-$\ell_1$-scaling in time $\widetilde O(n^{1.5} / \eps^3)$.
\begin{thm}[Informal $\ell_1$-version of~\cref{thm: faster for positive}] \label{thm: faster for positive informal}
  Let $\mat A \in [\mu,1] \in \R^{n \times n}$ be a rational matrix with $\norm{\mat A}_1 \leq 1$, let $\vec r, \vec c \in (0,1]^n$ with $\norm{\vec r}_1 = \norm{\vec c}_1 = 1$, and let $\eps \in (0, 1]$.
  Assume $\mat A$ is asymptotically scalable to $(\vec r, \vec c)$.
  Then there exists a quantum algorithm that, given sparse oracle access to $\mat A$, with probability $\geq 2/3$,  computes $(\vec x, \vec y) \in \R^n \times \R^n$ such that $\mat A(\vec x, \vec y)$ is $\eps$-$\ell_1$-scaled to $(\vec r, \vec c)$, at a total time complexity of $\widetilde O(n^{1.5} / \eps^3)$.
\end{thm}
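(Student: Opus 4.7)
The plan is to run \cref{alg:FSFP testing} on the entrywise-positive matrix $\mat A$, targeting relative-entropy precision $\eps_{\text{KL}} = \Theta(\eps^2)$ so that the resulting scaling is an $\ell_1$-$\eps$-scaling by Pinsker's inequality (\cref{lem:generalized pinsker}). The quantum subroutines \ApproxScalingFactor and \TestScaling are used without modification, but with a sharper iteration count tailored to the entrywise-positive case: I claim that for such $\mat A$, relative-entropy $\eps_{\text{KL}}$-scaling is achieved in $T = \widetilde O(1/\sqrt{\eps_{\text{KL}}})$ iterations, as opposed to the $T = \widetilde O(1/\eps_{\text{KL}})$ iterations of \cref{thm: Full testing}. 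Since for entrywise-positive $\mat A$ we have $m = n^2$, each iteration costs $\widetilde O(n^{1.5}/\delta)$ with $\delta = \Theta(\eps_{\text{KL}})$ by \cref{thm:quantum ApproxScalingFactor guarantee} and Cauchy--Schwarz over the $\sqrt{s^r_i}$ row-costs. Multiplying the iteration count by the per-iteration cost gives $\widetilde O(n^{1.5}/\eps_{\text{KL}}^{1.5})$ for relative-entropy scaling, which under $\eps_{\text{KL}} = \Theta(\eps^2)$ becomes the claimed $\widetilde O(n^{1.5}/\eps^3)$.

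The technical crux is the improved iteration bound for entrywise-positive matrices. Following the adaptation of Sinkhorn's analysis for entrywise-positive matrices to the $\ell_1$-setting that was promised in the introduction, I would prove a sharpened version of \cref{lem:progress no failure full}: for entrywise-positive $\mat A$ with entries in $[\mu, 1]$ and iterates produced by \cref{alg:FSFP testing}, one Sinkhorn row-iteration satisfies $f_t - f_{t+1} \geq c_{\mu,n} \sqrt{D_t} - 2\delta$, where $D_t = D(\vec r \Vert \vec r(\mat A(\vec x^{(t-1)}, \vec y^{(t-1)})))$ and $c_{\mu,n} = \Omega(1/\polylog(n, 1/\mu))$ (and symmetrically for column iterations). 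Entrywise positivity enters through the iterate-norm bound \cref{lem:scaling norm bound}, which in this case guarantees that every entry of $\mat A(\vec x^{(t)}, \vec y^{(t)})$ is bounded both above and below by quantities controlled by $\mu$ throughout the run. Together with the Hellinger inequality \cref{lem:unnormalized hellinger}, this uniform lower bound promotes the KL-type progress bound to a $\sqrt{D_t}$-type bound, because when the rescaled marginals are bounded away from zero, the $\ell_1$- and Hellinger-distances become comparable to the square root of the KL divergence rather than its square. Combined with the potential-gap bound $f_0 - f^* \leq \ln(1/\mu)$ from \cref{lem:potential gap}, telescoping yields $T = \widetilde O(\ln(1/\mu)/\sqrt{\eps_{\text{KL}}}) = \widetilde O(1/\sqrt{\eps_{\text{KL}}})$ iterations until $D_t \leq \eps_{\text{KL}}$.

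The main obstacle will be establishing the sharpened progress lemma itself: showing that for entrywise-positive matrices the per-iteration potential drop can be lower-bounded by $\sqrt{D_t}$ (up to polylog factors) rather than by $D_t$. For general asymptotically-scalable matrices this stronger bound cannot hold, so positivity must be used in an essential way, via the fact that the rescaled row (resp.\ column) sums remain uniformly bounded away from zero along the trajectory. Once the lemma is in hand, verifying that the additive error $-2\delta$ from \ApproxScalingFactor is harmless is routine: with $\delta = \Theta(\eps_{\text{KL}}) = \Theta(\eps^2)$ and $\sqrt{D_t} \geq \sqrt{\eps_{\text{KL}}} = \Theta(\eps)$ before convergence, the net per-iteration progress is still $\Omega(\sqrt{D_t})$. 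The remaining ingredients of the proof --- the choice of $T, b_1, b_2, \eta$, the union bound over subroutine failure probabilities, the use of \TestScaling as a stopping criterion, and the final Pinsker-based conversion from relative entropy to $\ell_1$-distance --- then follow exactly the template of \cref{thm: Full testing}, \cref{cor:full sinkhorn performance}, and \cref{cor:full sinkhorn performance l1}, with only the iteration count and the target relative-entropy precision adjusted.
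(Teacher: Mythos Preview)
Your high-level plan is sound---run \cref{alg:FSFP testing} to relative-entropy precision $\Theta(\eps^2)$ and convert via Pinsker---and the per-iteration cost analysis is correct. The gap is in the mechanism you propose for the $\widetilde O(1/\sqrt{\eps_{\text{KL}}})$ iteration bound.

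The sharpened progress lemma you state, $f_t - f_{t+1} \geq c_{\mu,n}\sqrt{D_t} - 2\delta$, is false. For an \emph{exact} Sinkhorn row update the progress is exactly $D_t$ (cf.~\cref{eq:exact row update}), and for small $D_t$ one has $\sqrt{D_t} > D_t$, so no lower bound of the form $c\sqrt{D_t}$ with $c$ bounded away from zero can hold. Entrywise positivity cannot change this: the per-iteration potential drop is pinned at the relative entropy, not at its square root. Your appeal to \cref{lem:scaling norm bound} as the place where positivity enters is also misplaced---that lemma gives an $\ell_\infty$-bound on the iterates that grows linearly in $t$ regardless of support, and says nothing special for positive matrices.

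The paper takes the opposite route: it keeps the ordinary progress lemma $f_t - f_{t+1} \geq D_t - 2\delta$ unchanged and instead sharpens the \emph{remaining potential gap} at the current iterate. Entrywise positivity gives an iteration-independent bound on the variation $x^{(t)}_{\max} - x^{(t)}_{\min}$ (\cref{lem:full support bounded variation}); convexity of $f$ then yields $f_t - f^* \leq O(1)\cdot\bigl(\|\vec r^{(t)} - \vec r\|_1 + \delta\bigr)$ (\cref{lem:full support l1 potential gap fp}, \cref{cor:fs adaptive potential bound}), and Pinsker converts this to $f_t - f^* \lesssim \sqrt{D_t}$. Combining the unchanged progress lemma with this adaptive gap bound, the number of iterations to halve $D_t$ is $N_t \leq (f_t - f^*)/(D_t/2 - 2\delta) = O(1/\sqrt{D_t})$, and a geometric summation over $\log_2(1/\eps_{\text{KL}})$ halvings gives the $\widetilde O(1/\sqrt{\eps_{\text{KL}}})$ total (\cref{thm: faster for positive}). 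With that in hand, the rest of your outline goes through verbatim.
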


\section{Classical and quantum implementations of subroutines}\label{sec:arithmetic}

In this section we describe how to implement the \ApproxScalingFactor and \TestScaling subroutines described and invoked in \cref{sec: full testing}.
We provide both a classical and a quantum implementation.
Whereas the former uses classical methods to find maxima and compute sums, the latter uses quantum maximum finding and quantum approximate counting algorithms to achieve a better dependence on $n$, at the expense of a worse dependence on the desired precision $\delta$.

One obstacle for efficiently implementing the oracles is the fact that one cannot simply compute all numbers appearing in the expressions to sufficient precision. For \ApproxScalingFactor for instance, we aim to compute a number of the form $\ln(r / \sum_{j=1}^n a_j e^{y_j})$, but the $y_j$ can (and typically do) grow linearly with $n$, so we cannot compute $e^{y_j}$ with sufficiently high precision in time sublinear in $n$.
The idea is to instead compute additive approximations of relative quantities such as $e^{y_i - y_j} \leq 1$ for $i, j \in [n]$ with $y_j \geq y_i$, and use properties of the logarithm to relate this to the original desired quantity.
This approach is widely used in practice, and discussed for instance in \cite{altschuler2020random}.
Note that these are issues that concern both the classical and quantum setting, but are particularly important for the latter, since we aim for a better dependence on $m$ and $n$ for the time complexity.
Furthermore, we implement everything in a way such that the fixed-point format $(b_1, b_2)$ for both the input and output of the oracles is the same, as this avoids the need for the Sinkhorn and Osborne algorithms to change the encoding format in every iteration.

Throughout this section, we will take for granted the fact that there exist arithmetic circuits for computing ratios, exponentials, logarithms and trigonometric functions, with the following property:
given two fixed-point formats $(b_1, b_2)$ and $(b_3, b_4)$, for all inputs encoded in $(b_1, b_2)$ fixed-point format for which the number to be computed can be encoded in $(b_3, b_4)$ fixed-point format with additive error at most $2^{-b_4}$, the output of the circuit is such an additive approximation.
Furthermore, these circuits have size at most polynomial in $b_1$, $b_2$, $b_3$ and $b_4$.
For exponentials and logarithms, one may achieve this (for instance) by using repeated squaring and Taylor series approximations; see e.g.~\cite{brentzimmermann2010}.
This assumption also implies that given a rational number, encoded by its numerator and denominator, we can efficiently compute a fixed-point representation of the rational number to a desired additive precision (by treating the numerator and denominator as fixed-point numbers and computing their ratio), as long as the number of leading bits for the output format is chosen to be sufficiently large.

The following subroutine \RelativeEntryAdditiveApprox is common to both the classical and quantum settings, it efficiently computes the ratio of two numbers of the form $a e^y$ up to a certain precision.
\begin{algorithm}[ht!]
  \caption{RelativeEntryAdditiveApprox$(a_1, a_2, y_1, y_2, b, b_1, b_2, c, d)$}\label{alg:RelativeEntryAdditiveApprox}
  \Input{Non-negative numbers $a_1, a_2 \in [0, 1 - 2^{-b}]$ encoded in $(0, b)$ fixed-point format, numbers $y_1, y_2 \in \R$ encoded in $(b_1, b_2)$ fixed-point format, natural numbers $c, d$}
  \Output{$2^{-b}$-additive approximation $\gamma$ of $\min\{e^{y_1 - y_2} a_1 / a_2, 2^d - 2^{-c}\}$ encoded in $(d, c)$ fixed-point format, with the convention that $a_1 / a_2 = \infty$ whenever $a_2 = 0$}

  \If{$a_2 = 0$}{
    \Return{$2^d - 2^{-c}$}\;
  }
  \If{$a_1 = 0$}{
    \Return{$0$}\;
  }
  compute $\Delta \leftarrow y_1 - y_2$\;
  \uIf{$\Delta > b + d$}{
    \Return{$2^d - 2^{-c}$}\;
  }
  \uElseIf{$\Delta < -b - c$}{
    \Return{$0$}\;
  }
  \Else{
    compute estimate $\alpha \geq 0$ of $e^\Delta$ encoded in $(2 (b+d), b+c+3)$ fixed-point format\;
    compute estimate $\beta \geq 0$ of $a_1 / a_2$ encoded in $(b, 2(b+d)+c+3)$ fixed-point format\;
    compute $\gamma' = \alpha \cdot \beta$ exactly encoded in $(3b+2d, 3b+2d+2c+6)$ fixed-point format\;

    let $\gamma$ be the result of rounding $\min\{\gamma', 2^d - 2^{-c}\}$ to the nearest integer multiple of $2^{-c}$\;
    \Return{$\gamma$}\;
  }

\end{algorithm}
\begin{lem}
  \label{lem:RelativeEntryAdditiveApprox guarantee}
  Let $y_1, y_2 \in \R$ be two numbers encoded in $(b_1, b_2)$ fixed-point format, and let $a_1, a_2 \in [0, 1 - 2^{-b}]$ be non-negative numbers encoded in $(0, b)$ fixed-point format.
  Furthermore, let $c, d \geq 1$ be natural numbers.
  Then \cref{alg:RelativeEntryAdditiveApprox} with these inputs returns a non-negative number $\gamma \in [0, 2^d - 2^{-c}]$ encoded in $(d, c)$ fixed-point format, such that
  \[
    \abs{\gamma - \min\{\tfrac{a_1}{a_2} e^{y_1 - y_2}, 2^d - 2^{-c}\}} \leq 2^{-c}.
  \]
  Here we use the convention that $a_1 / a_2 = \infty$ whenever $a_2 = 0$.
  The algorithm terminates in time polynomial in $b, b_1, b_2, c, d$.
\end{lem}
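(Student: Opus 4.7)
The plan is to verify correctness by following the branching structure of \cref{alg:RelativeEntryAdditiveApprox} case by case, and then separately argue the runtime bound. The easy cases are $a_2 = 0$ (where by convention the target value is $2^d - 2^{-c}$ and this is returned exactly) and $a_1 = 0$ with $a_2 > 0$ (where the target is $0$). For the clamping cases, I would use the input ranges $a_1, a_2 \in [0, 1 - 2^{-b}]$, which imply $a_1 / a_2 \in [2^{-b}, 2^b - 1]$ whenever both are non-zero. If $\Delta = y_1 - y_2 > b + d$, then $(a_1/a_2) e^\Delta \geq 2^{-b} \cdot e^{b+d} > 2^d$, so the clamped target equals $2^d - 2^{-c}$, which is returned. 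If $\Delta < -b - c$, then $(a_1/a_2) e^\Delta < 2^b \cdot e^{-b-c} \leq 2^{-c}$, so returning $0$ lies within additive error $2^{-c}$ of the target.

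The main work is in the remaining case where $-b - c \leq \Delta \leq b + d$. Here I would bound the error of $\gamma' = \alpha \beta$ via the telescoping identity
\[
  \alpha \beta - \tfrac{a_1}{a_2} e^\Delta = (\alpha - e^\Delta)\, \beta + e^\Delta \bigl(\beta - \tfrac{a_1}{a_2}\bigr).
\]
Using the promised precisions of the two estimates ($\abs{\alpha - e^\Delta} \leq 2^{-(b+c+3)}$ and $\abs{\beta - a_1/a_2} \leq 2^{-(2(b+d)+c+3)}$), together with the uniform bounds $\beta \leq 2^b$ and $e^\Delta \leq e^{b+d} \leq 2^{2(b+d)}$ (the latter because $e < 4$), each of the two terms is at most $2^{-c-3}$, giving $\abs{\gamma' - (a_1/a_2)\,e^\Delta} \leq 2^{-c-2}$. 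This is also why the chosen leading-bit counts $2(b+d)$ and $b$ suffice to encode $\alpha$ and $\beta$ in the formats specified by the algorithm, and why the product $\gamma'$ can be computed exactly in the stated $(3b+2d, 3b+2d+2c+6)$ format.

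Next I would argue that neither the clamping to $2^d - 2^{-c}$ nor the final rounding hurts the approximation. Taking $\min$ with $2^d - 2^{-c}$ of both $\gamma'$ and the true value $v = (a_1/a_2) e^\Delta$ is a $1$-Lipschitz operation, so it preserves the $2^{-c-2}$ bound with respect to the clamped target $\min\{v,\, 2^d - 2^{-c}\}$. Rounding to the nearest multiple of $2^{-c}$ contributes at most $2^{-c-1}$ of additional error, giving the claimed $2^{-c-2} + 2^{-c-1} \leq 2^{-c}$ bound overall. For the output format I would note that $2^d - 2^{-c}$ is itself a multiple of $2^{-c}$, so rounding a value in $[0, 2^d - 2^{-c}]$ to the nearest such multiple stays inside this interval and is representable in $(d, c)$ fixed-point format.

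Finally, the runtime bound is immediate from the assumption that arithmetic circuits for ratios, exponentials, multiplications, and comparisons on fixed-point numbers have size polynomial in the bit counts of their formats, together with the observation that every format appearing in the algorithm has parameters polynomial in $b, b_1, b_2, c, d$. The main obstacle is the error analysis of the product step: one must pick trailing-bit counts for $\alpha$ and $\beta$ that are tuned precisely against the worst-case magnitudes of $\beta$ and $e^\Delta$ in the main case, so that both terms in the telescoping identity become comparably small — this is exactly what the choices $b + c + 3$ and $2(b+d) + c + 3$ accomplish.
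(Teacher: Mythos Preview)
Your proposal is correct and follows essentially the same approach as the paper: the same case split, the same use of $a_1/a_2 \in [2^{-b}, 2^b]$ and $e^\Delta \leq 2^{2(b+d)}$ in the main branch, and the same clamping-plus-rounding argument. The only cosmetic difference is that you bound the product error via the telescoping identity (yielding two terms and a $2^{-c-2}$ bound), whereas the paper expands $\alpha\beta$ directly into three error terms and obtains $2^{-c-1}$; both feed into the final $2^{-c}$ bound in the same way.
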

\begin{proof}
  We use the same notation as in the algorithm.
  The cases where $a_2 = 0$ or $a_1 = 0$ are clear.
  Otherwise, if $\Delta = y_1 - y_2 > b + d$ then
  \[
    \frac{a_1}{a_2} e^\Delta > \frac{a_1}{a_2} e^{b + d} \geq 2^{-b} e^{b + d} > 2^d
  \]
  so $\min\{e^\Delta a_1 / a_2, 2^d - 2^{-c}\} = 2^d - 2^{-c}$, and returning this value is a correct output.
  Similarly, if $\Delta < -b-c$, then
  \begin{align*}
    \frac{a_1}{a_2} e^\Delta < \frac{a_1}{a_2} e^{-b-c} <  \frac{a_1}{a_2} 2^{-b-c} \leq 2^{-c}
  \end{align*}
  so $0$ is a $2^{-c}$-additive approximation of $e^\Delta a_1 / a_2$.
  For the final and most interesting case, note that $e^\Delta < 2^{2(b+d)}$ and $\tfrac{a_1}{a_2} \leq 2^b$, so it suffices to use $2(b+d)$ and $b$ leading bits to ensure that we include the first possibly non-trivial binary digit of $e^\Delta$ and $a_1 / a_2$, respectively.
  Due to our choice of the number of trailing bits, $\alpha$ and $\beta$ satisfy
  \[
    \abs{\alpha - e^\Delta} \leq 2^{-b-c-3}, \quad \abs{\beta - \tfrac{a_1}{a_2}} \leq 2^{-2(b+d)-c-3},
  \]
  and as a consequence we see that
  \begin{align*}
    \gamma' = \alpha \cdot \beta & \leq \frac{a_1}{a_2} e^{\Delta} + 2^{-b-c-3} \frac{a_1}{a_2} + 2^{-2(b+d)-c-3} e^{\Delta} + 2^{-(3b+2d+2c+6)} \\
                        & \leq \frac{a_1}{a_2} e^\Delta + 2^{-c-3} + 2^{-c-3} + 2^{-c-3} \\
                        & \leq \frac{a_1}{a_2} e^\Delta + 2^{-c-1}
  \end{align*}
  where we used $\tfrac{a_1}{a_2} \leq 2^b$ and $e^\Delta \leq 2^{2(b+d)}$ in the second inequality.
  Similarly, we obtain the lower bound
  \begin{align*}
    \gamma' = \alpha \cdot \beta & \geq \frac{a_1}{a_2} e^{\Delta} - 2^{-b-c-3} \frac{a_1}{a_2} - 2^{-2(b+d)-c-3} e^{\Delta} - 2^{-(3b+2d+2c+6)} \\
                        & \geq \frac{a_1}{a_2} e^{\Delta} - 2^{-c-3} - 2^{-c-3} - 2^{-4b-2c-6} \\
                        & \geq \frac{a_1}{a_2} e^{\Delta} - 2^{-c-1}
  \end{align*}
  again using $\tfrac{a_1}{a_2} \leq 2^b$ and $e^\Delta \leq 2^{2(b+d)}$ in the second inequality.
  The quantity $\gamma' = \alpha \cdot \beta$ can be computed exactly using $3(b+d)$ leading bits and $3b + 2d + 2c + 6$ trailing bits, and is guaranteed to be a $2^{-c-1}$-additive approximation of $e^\Delta a_1 / a_2$.
  Therefore, $\min\{\gamma', 2^d - 2^{-c}\}$ is a $2^{-c-1}$-additive approximation of $\min\{e^\Delta a_1 / a_2, 2^d - 2^{-c}\}$, and rounding to the nearest integer multiple of $2^{-c}$ incurs an additional additive error of at most $2^{-c-1}$, so $\abs{\gamma - \min\{e^\Delta a_1 / a_2, 2^d - 2^{-c}\}} \leq 2^{-c}$.
\end{proof}
With \cref{alg:RelativeEntryAdditiveApprox} in hand, we can construct \cref{alg:GreaterOrEqual}, which allows us to compare two numbers of the form $a e^y$. Note that the comparisons cannot be exact, since we cannot compute these numbers explicitly; if two numbers of this form are approximately equal, then we may return an ``incorrect'' result, but for the purposes of our algorithms, approximate comparisons suffice.
\begin{algorithm}[ht!]
  \caption{GreaterOrEqual$(a_1, a_2, y_1, y_2, c)$}\label{alg:GreaterOrEqual}
  \Input{Access to non-negative numbers $a_1, a_2 \in [0, 1 - 2^{-b}]$ with entries encoded in $(0, b)$ fixed-point format, numbers $y_1, y_2 \in \R$ encoded in $(b_1, b_2)$ fixed-point format, a natural number $c \geq 1$}
  \Output{A boolean value indicating whether $e^{y_1 - y_2} a_1 / a_2 \geq 1$ or not}
  \Guarantee{\True whenever $e^{y_1 - y_2} a_1 / a_2 \geq 1 + 2^{-c}$ or $(a_1, y_1) = (a_2, y_2)$, and \False whenever $e^{y_1 - y_2} a_1 / a_2 \leq 1 - 2^{-c}$, with $a_1 / a_2 = \infty$ when $a_2 = 0$}

  \lIf{$(a_1, y_1) = (a_2, y_2)$}{\Return{\True}}
  $\gamma \leftarrow $ \RelativeEntryAdditiveApprox$(a_1, a_2, y_1, y_2, c, 1)$ encoded in $(1, c)$ fixed-point format\;
  \Return{\True if $\gamma \geq 1$; and \False if $\gamma<1$}\;
\end{algorithm}
\begin{cor}
  \cref{alg:GreaterOrEqual} returns \True whenever $e^{y_1 - y_2} a_1 / a_2 \geq 1 + 2^{-c}$ or $(a_1, y_1) = (a_2, y_2)$, and $\False$ if $e^{y_1 - y_2} a_1 / a_2 \leq 1 - 2^{-c}$, and runs in time polynomial in $b, b_1, b_2, c$.
\end{cor}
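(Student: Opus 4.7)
The plan is a short, direct reduction to \cref{lem:RelativeEntryAdditiveApprox guarantee}. The opening conditional of \cref{alg:GreaterOrEqual} returns \True immediately when $(a_1, y_1) = (a_2, y_2)$, which already discharges that half of the guarantee without invoking any of the approximation machinery.

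For the two remaining cases, I would invoke \cref{lem:RelativeEntryAdditiveApprox guarantee} with the output format $(d,c) = (1,c)$ to conclude that the returned $\gamma$ is a $2^{-c}$-additive approximation of $\min\{e^{y_1 - y_2} a_1/a_2,\, 2 - 2^{-c}\}$, encoded as a multiple of $2^{-c}$. The key observation, used twice, is that for $c \geq 1$ one has $1 + 2^{-c} \leq 2 - 2^{-c}$, so the saturating cap sits strictly above the upper threshold of interest and can be ignored near $1$. If $e^{y_1 - y_2} a_1/a_2 \geq 1 + 2^{-c}$, the min is at least $1 + 2^{-c}$ and the additive guarantee forces $\gamma \geq 1$, so \True is returned. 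Symmetrically, if $e^{y_1 - y_2} a_1/a_2 \leq 1 - 2^{-c}$, the min equals $e^{y_1 - y_2} a_1/a_2 \leq 1 - 2^{-c}$ and the additive guarantee gives $\gamma < 1$, so \False is returned.

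For the running time, the only substantive work is the single call to \RelativeEntryAdditiveApprox, which is polynomial in $b, b_1, b_2, c$ by \cref{lem:RelativeEntryAdditiveApprox guarantee}; the surrounding code consists of an equality check on the inputs and one comparison of $\gamma$ with $1$, contributing at most polylogarithmic overhead. I do not expect any serious obstacle; the only mildly delicate point is ensuring that on the \False side the $2^{-c}$ slack in the approximation does not push $\gamma$ up to exactly $1$, but this is absorbed by the $2^{-c-1} + 2^{-c-1}$ split between computation error and rounding error already tracked in the proof of the preceding lemma.
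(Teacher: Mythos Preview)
Your proposal is correct and matches the paper's approach: the paper states this as an immediate corollary of \cref{lem:RelativeEntryAdditiveApprox guarantee} with no explicit proof, and your argument is precisely the natural unpacking of that implication. Your observation about the boundary case on the \False side is well-taken and correct---the strict inequality $3\cdot 2^{-c-3} < 2^{-c-1}$ in the lemma's proof ensures $\gamma' < 1 - 2^{-c-1}$ strictly when the ratio is at most $1-2^{-c}$, so rounding to the nearest multiple of $2^{-c}$ cannot reach $1$.
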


\subsection{Classical implementation of ApproxScalingFactor}
In this subsection we show how to implement \ApproxScalingFactor classically.
\begin{algorithm}[ht!]
   \caption{Classical implementation of ApproxScalingFactor$(\vec a, r, \vec y, \delta, b_1, b_2, \eta, \mu)$}\label{alg:ApproxMarginalClassicalImpl}
   \ApproxScalingFactorIO
   replace the oracle for $\vec a$ by an oracle which maps $j$ to an encoding of $a_j$ in $(0, \lceil \log_2(1 / (\delta \mu)) + 2 \rceil)$ fixed-point format\;
   replace $r$ by an encoding of $r$ in $(0, \lceil \log_2(1/(r \delta)) + 2 \rceil)$ fixed-point format\;
   set $\delta' = \delta / 2$\;
   set $c = \lceil \log_2(n/\delta') + 4 \rceil$\;
   find $j^*$ be a $j$ such that $a_j e^{y_j}$ is maximal, using \GreaterOrEqual$(a_i, a_j, y_i, y_j, 1)$ for comparisons\;
   compute $S \leftarrow \sum_{j =1}^n$ \RelativeEntryAdditiveApprox$(a_j, a_{j^*}, y_j, y_{j^*}, c, 1)$ in $(\lceil \log_2(n) + 1 \rceil, c)$ fixed-point format\; \label{algline:classical REAA}
   compute estimate $\alpha \geq 0$ of $\ln(S)$ encoded in $(\lceil \log_2 (\log_2(n) + 1) \rceil, \lceil \log_2(1/\delta') + 3 \rceil)$ fixed-point format\;
   compute estimate $\beta$ of $\ln(r)$ encoded in $(2 \lceil \abs{\log_2(r)} \rceil, \lceil \log_2(1/\delta') + 3 \rceil)$ fixed-point format\;
   compute estimate $\gamma$ of $\ln(a_{j^*})$ encoded in $(b, \lceil \log_2(1/\delta') + 3 \rceil)$ fixed-point format\;
   \Return{$\beta - (y_{j^*} + \gamma + \alpha)$} encoded in $(b_1, b_2)$ fixed-point format\;
\end{algorithm}

\begin{thm}
  \label{thm:classical ApproxScalingFactor guarantee}
  For rational~$\vec a \in [0, 1]^n$ (whose non-zero entries are $\geq \mu$) and~$r \in (0, 1]$, a vector $\vec y \in \R^n$ with entries encoded in $(b_1, b_2)$ fixed-point format and $\delta \in (0, 1]$, such that $b_1 \geq \lceil \log_2(\abs{\ln(r / \sum_{j=1}^n a_j e^{y_j})}) \rceil$ and $b_2 \geq \lceil \log_2(1/\delta) \rceil$, \cref{alg:ApproxMarginalClassicalImpl} returns a $\delta$-additive approximation of
  \[
    \ln\left(r / \sum_{j=1}^n a_j e^{y_j}\right),
  \]
  encoded in $(b_1, b_2)$ fixed-point format, within time $\widetilde O(n \cdot \poly(b_1, b_2))$, where $\widetilde O$ hides polylogarithmic factors in $n$, $1 / \delta$, $1/\mu$, and $1/r$.
\end{thm}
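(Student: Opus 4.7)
The plan is to reduce the computation of $\ln(r/\sum_{j=1}^n a_j e^{y_j})$ to approximating a single sum of well-behaved ratios. For any index $j^*$ with $a_{j^*}>0$, setting $S^* := \sum_{j=1}^n (a_j/a_{j^*}) e^{y_j - y_{j^*}}$ one has the identity
\[
  \ln\!\left(\frac{r}{\sum_{j=1}^n a_j e^{y_j}}\right) = \ln(r) - y_{j^*} - \ln(a_{j^*}) - \ln(S^*),
\]
which \cref{alg:ApproxMarginalClassicalImpl} mimics numerically: $y_{j^*}$ is stored exactly in the $(b_1, b_2)$ fixed-point format, while $\ln(r)$, $\ln(a_{j^*})$, and $\ln(S)$ are computed via the arithmetic circuits assumed at the start of the section. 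The delicate quantity is $\ln(S^*)$, which is approximated via $\ln(S)$ where $S$ is a sum of $n$ outputs of \RelativeEntryAdditiveApprox.

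The purpose of the initial max-finding step is to ensure that each summand $(a_j/a_{j^*}) e^{y_j - y_{j^*}}$ lies in $[0, 2 - 2^{-c}]$, so that the $(d,c) = (1,c)$-format output of \RelativeEntryAdditiveApprox never hits the clamp, and that $S^* \geq 1$, which holds because the $j = j^*$ term contributes exactly $1$. I would argue that a max-finding routine using \GreaterOrEqual can be arranged (for instance using a tournament with a slightly smaller precision parameter than the constant $1$ shown in the pseudocode, so as to control error accumulation over the $O(\log n)$ comparison rounds) so that the returned $j^*$ satisfies $a_j e^{y_j} \leq 2\,a_{j^*} e^{y_{j^*}}$ for every $j$, in $\widetilde O(n)$ comparisons.

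With this setup, the rest of the proof is a systematic error accounting. By \cref{lem:RelativeEntryAdditiveApprox guarantee}, with $c = \lceil \log_2(n/\delta') + 4 \rceil$ and $\delta' = \delta/2$, each summand of $S$ is approximated to within additive error $2^{-c}$, so $|S - S^*| \leq n\,2^{-c} \leq \delta'/16$. Because $S^* \geq 1$, the mean-value theorem yields $|\ln S - \ln S^*| \leq 2|S - S^*| \leq \delta'/8$. The quantities $\alpha$, $\beta$, $\gamma$ are computed as $\delta'/8$-additive approximations of $\ln(S)$, $\ln(r)$, $\ln(a_{j^*})$ by the chosen trailing bit counts. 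Summing, the output $\beta - y_{j^*} - \gamma - \alpha$ differs from the target by at most $\delta'/2 = \delta/4$, and the final rounding to $(b_1, b_2)$ fixed-point format contributes an additional $2^{-b_2-1} \leq \delta/2$, giving total error $\leq 3\delta/4 < \delta$. The hypothesis $b_1 \geq \lceil \log_2 |\ln(r/\sum_j a_j e^{y_j})| \rceil$ ensures the output fits in the stated format, and intermediate quantities can be kept representable by enlarging the leading bit count internally by a small constant.

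The time complexity follows from the $O(n)$ calls to \GreaterOrEqual and \RelativeEntryAdditiveApprox, each running in $\poly(b_1, b_2, \log(n/\delta), \log(1/\mu), \log(1/r))$ time; combined with the polynomial cost of the concluding logarithms and fixed-point arithmetic, this yields the claimed $\widetilde O(n \cdot \poly(b_1, b_2))$ bound. The main obstacle will be rigorously establishing the max-finding guarantee: \GreaterOrEqual can misorder near-equal pairs, and a naive sequential scan can accumulate error multiplicatively across updates, so one must design the max-finding carefully (or add a verification sweep) to certify that $j^*$ is good enough that no summand gets clamped. Once that is in hand, the remaining analysis is routine tracking of additive errors through the fixed-point layers.
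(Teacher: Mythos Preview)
Your argument follows the same route as the paper's: the identity $\ln(r/\sum_j a_j e^{y_j}) = \ln r - y_{j^*} - \ln a_{j^*} - \ln S^*$, the max-finding step to keep each summand below the clamp, and the additive error tracking through \RelativeEntryAdditiveApprox and the final logarithms. The one step you omit is the one the paper handles first: the opening two lines of \cref{alg:ApproxMarginalClassicalImpl} convert the rational inputs $\vec a$ and $r$ to fixed-point format, and this introduces a $(1\pm\delta/8)$-multiplicative error in each entry (using the lower bound $\mu$ on nonzero $a_j$), hence an additive $\delta/2$ error in the target logarithm. This is exactly why the algorithm sets $\delta'=\delta/2$: the remaining budget is spent on the rest of the computation. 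Without this step in your accounting, your $3\delta/4$ plus the conversion's $\delta/2$ would exceed~$\delta$.

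On the max-finding, the paper simply asserts that classical maximum finding with \GreaterOrEqual$(\cdot,\cdot,\cdot,\cdot,1)$ as comparator returns a $j^*$ with $e^{y_j-y_{j^*}}a_j/a_{j^*}\leq 3/2$ for all $j$, without further argument. Your worry about multiplicative error accumulation across a chain of approximate comparisons is legitimate; the paper does not address it, so on this point you are being more careful than the paper rather than less.
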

\begin{proof}
  We first show that the conversion of $\vec a$ and $r$ from rational to fixed-point format is harmless, as long as we use the specified precision.
  For every $j \in [n]$, the fixed-point encoding $\hat a_j$ of $a_j$ is guaranteed to be a $\frac{\delta \mu}{8}$-additive approximation of $a_j$, which is zero if and only if $a_j = 0$.
  Since the non-zero entries of $\vec a$ are lower bounded by $\mu$, we know that $\hat a_j$ is a $(1 \pm \frac{\delta}{8})$-multiplicative approximation of $a_j$ for every $j \in [n]$.
  Similarly, the fixed-point encoding $\hat r$ of $r$ is guaranteed to be a $(1 \pm \frac{\delta}{8})$-multiplicative approximation of $r$. Note that $\hat r$ is encoded in $(0, \lceil \log_2(1/(r \delta)) + 2 \rceil)$ fixed-point format, which depends on $r$, but one may also use a lower bound for $r$ for determining the number of bits.
  Since $\hat a_j$ and $\hat r$ are multiplicative approximations of $a_j$ and $r$, the bound $\abs{\ln(1 + z)} \leq 2 \abs{z}$ for $z \in [-1/2, 1/2]$ gives
  \[
    \abs*{\ln\left(\hat r / \sum_{j=1}^n \hat a_j e^{y_j}\right) - \ln\left(r / \sum_{j=1}^n a_j e^{y_j}\right)} \leq \frac{\delta}{2}
  \]
  and it suffices to approximate $\ln(\hat r / \sum_{j=1}^n \hat a_j e^{y_j})$ to additive precision $\delta' = \frac{\delta}{2}$.
  In the rest of the proof we simply write $a_j$ and $r$ for their fixed-point versions $\hat a_j$ and $\hat r$.

  From the guarantees of \GreaterOrEqual, it follows that classical maximum finding (in time $\widetilde O(n \cdot \poly(b_1, b_2))$) returns a $j^* \in [n]$ such that $e^{y_j - y_{j^*}} a_j / a_{j^*} \leq 3/2 \leq 2 - 2^{-c}$ for any $j \in [n]$.
  Define $\xi_j$ to be the result of the $j$-th call to \RelativeEntryAdditiveApprox on \cref{algline:classical REAA} in \cref{alg:ApproxMarginalClassicalImpl}.
  By the guarantees of \RelativeEntryAdditiveApprox, $\xi_j$ is a $2^{-c}$-additive approximation of $\min\{e^{y_j - y_{j^*}} a_j / a_{j^*}, 2 - 2^{-c}\} = e^{y_j - y_{j^*}} a_j / a_{j^*}$, where the equality follows from the guarantee on $j^*$.
  It follows that $S$ is a $n2^{-c}$-additive approximation of $\sum_{j=1}^n e^{y_j - y_{j^*}} a_j / a_{j^*}$. Therefore, we see that
  \begin{align*}
    \ln(S) - \ln\left(\sum_{j=1}^n e^{y_j - y_{j^*}} a_j / a_{j^*}\right) \in  \left[\ln\left(1 - \frac{n 2^{-c}}{\sum_{j=1}^n e^{y_j - y_{j^*}} a_j / a_{j^*}}\right), \ln\left(1 + \frac{n 2^{-c}}{\sum_{j=1}^n e^{y_j - y_{j^*}} a_j / a_{j^*}}\right)\right].
  \end{align*}
  Since $\sum_{j=1}^n e^{y_j - y_{j^*}} a_j / a_{j^*} \geq e^{y_{j*} - y_{j^*}} a_{j^*} / a_{j^*} = 1$,
  \begin{align*}
    \ln\left(1 + \frac{n 2^{-c}}{\sum_{j=1}^n e^{y_j - y_{j^*}} a_j / a_{j^*}}\right) \leq \ln(1 + n 2^{-c}) \leq \ln\left(1 + \frac{\delta'}{16}\right) \leq \frac{\delta'}{8}
  \end{align*}
  using the bound $\abs{\ln(1 + z)} \leq 2 \abs z$ for $z \in [-1/4, 1/4]$.
  Similarly, we can obtain the lower bound
  \[
    \ln\left(1 - \frac{n 2^{-c}}{\sum_{j=1}^n e^{y_j - y_{j^*}} a_j / a_{j^*}}\right) \geq - \frac{\delta'}{8}.
  \]
  Therefore, computing $\ln(S)$ using $\lceil \log_2(1/\delta') + 3 \rceil$ trailing bits of precision yields a number $\alpha$ such that
  \[
    \abs*{\alpha - \ln\left(\sum_{j=1}^n e^{y_j - y_{j^*}} a_j / a_{j^*}\right)} \leq
    \abs*{\alpha - \ln(S)}+\abs*{\ln(S)-\left(\sum_{j=1}^n e^{y_j - y_{j^*}} a_j / a_{j^*}\right)} \leq
    \frac{\delta'}{4}.
  \]
  Since $\abs{\beta - \ln(r)} \leq \delta'/8$ and $\abs{\gamma - \ln(a_{j^*})} \leq \delta'/8$, we have
  \[
    \abs*{\beta - (y_{j^*} + \gamma + \alpha) - \ln\left(r / \sum_{j=1}^n a_j e^{y_j}\right)} \leq \frac{\delta'}{2}.
  \]
  as desired.
  By the assumption on $b_1$ and $b_2$, we may encode this number in a $(b_1, b_2)$ fixed-point format (which is the desired output format) with an additional error of at most $\frac{\delta'}{2}$, yielding a $\delta'$-additive approximation of $\ln(r / \sum_{j=1}^n a_j e^{y_j})$ as desired.
\end{proof}

\subsection{Quantum implementation of ApproxScalingFactor}

We use the subroutine \QuantumApproximateSum, which allows us to efficiently find a multiplicative approximation of the sum of the entries of a positive vector.
\begin{algorithm}[ht!]
  \caption{QuantumApproximateSum($U_{\vec v}$, $b$, $\delta$, $\eta$)}
  \label{subroutine:QAC}
  \Input{Oracle $U_{\vec v}$ for access to the entries of the vector $\vec v \in [0, 3/4]^n$ given in $(0, b)$ fixed-point format, such that at least one entry of $\vec v$ is $\geq 1/4$, precision $\delta \in (0, 1/2]$, failure probability $\eta > 0$}
  \Output{A number $\tilde s$ encoded in $(\lceil \log_2(n) + 1 \rceil, b)$ fixed-point format}
  \Guarantee{If $b \geq \lceil \log_2(1/\delta) \rceil + 6$, then with probability at least $1 - \eta$, $\tilde s$ is a $(1\pm\delta)$-multiplicative approximation of~$\sum_{j=1}^n v_j$}
  create an oracle $U_{\vec \zeta}$, using $U_{\vec v}$, which maps $j\mapsto \zeta_j$ where $\zeta_j$ is a $(1, \lceil \log_2(n/\delta) \rceil + 4)$-fixed-point representation of $\arcsin(\sqrt{v_j})$\;
  let $V$ be a unitary that prepares (with some auxiliary qubits that start and end in $\ket{0}$)
  \[
    \ket{\psi} = \frac{1}{\sqrt{n}}\sum_{j=1}^n\ket{j}\left(\sqrt{\tilde{v}_j}\ket{0}+\sqrt{1-\tilde{v}_j}\ket{1}\right)
  \]
  from $\ket{0^{\lceil \log_2 n\rceil}1}$, where $\tilde{v}_j = \sin(\zeta_j)^2$\;
  let $\mathcal G  = V(2\ket{0^{\lceil \log_2 n \rceil}1}\!\bra{0^{\lceil \log_2 n\rceil}1} - I)V^{\dagger} \left(I\otimes (2\ket{1}\!\bra{1} - I)\right)$ be the unitary that first reflects through the part of the state with a $\ket{1}$ in the last qubit, and then reflects through $\ket{\psi}$.
   $\mathcal G$ has eigenvalues $e^{ \pm \i \phi}$ on the 2-dimensional space~$\mathrm{span}\{(I \otimes \ket{z}\!\bra{z}) \ket{\psi} \colon z \in \{0,1\}\}$ for $\phi = 2\arcsin(\sqrt{\tfrac1n \sum_{j=1}^n \tilde v_j})$\;
  perform phase estimation on $\mathcal G$ with starting state $\ket{\psi}$ to obtain a $(0, \lceil \log_2(\sqrt{n}/\delta) \rceil + 8)$-fixed-point representation $\beta$ of $\pm\phi /(2 \pi)$ with probability $\geq 2/3$\;\label{eq:phase1}
  compute estimate $\gamma$ of $\sin(\pi \beta )^2$ in $(1, \lceil \log_2(n/\delta) \rceil + 8)$ fixed-point format\;\label{eq:phase2}
  compute $\tilde{s} = n \cdot \gamma$ in $(\lceil \log_2 n+1\rceil,b)$ fixed-point format\;\label{eq:phase3}
  repeat \cref{eq:phase1,eq:phase2,eq:phase3} $\bigO{\ln(1/\eta)}$ times and \Return the median of the $\tilde{s}$-values of the different runs\;
\end{algorithm}

We first describe how to implement \QuantumApproximateSum efficiently via standard amplitude-estimation techniques~\cite{bhmt:countingj}.
For completeness we give a full proof, paying special attention to the required bit-complexity of each operation.
\begin{thm}\label{thm:QAC}
  \QuantumApproximateSum can be implemented in time $\widetilde O(\sqrt{n} \cdot  \frac{1}{\delta} \cdot \ln(\tfrac 1 \eta))$, where $\widetilde O$ here hides polynomial factors in $b$, and polylogarithmic factors in $1/\delta$ and $n$.
\end{thm}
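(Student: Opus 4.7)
The plan is to analyze \cref{subroutine:QAC} as a standard amplitude-estimation-based counting procedure, paying careful attention to the bit-precision and error propagation at each step. First I would verify that the preparation of $\ket\psi$ is correct: the oracle $U_{\vec\zeta}$ produces a fixed-point approximation $\zeta_j$ of $\arcsin(\sqrt{v_j})$, and a controlled single-qubit rotation by angle $\zeta_j$ turns $\ket{j}\ket{0^k}\ket{1}$ into $\ket{j}\ket{0^k}(\sqrt{\tilde v_j}\ket{0}+\sqrt{1-\tilde v_j}\ket{1})$ with $\tilde v_j=\sin^2(\zeta_j)$. Since $|\sin x -\sin y|\leq |x-y|$, choosing the arcsin precision $\Theta(\delta/n)$ ensures $|\tilde v_j-v_j|\leq \delta/(16n)$ for every $j$, which only perturbs the target sum by a relative factor $1\pm\delta/4$ (using the lower bound $\sum_j v_j\geq 1/4$). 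Thus it suffices to obtain a $(1\pm\delta/2)$-multiplicative approximation of $\sum_j\tilde v_j$.

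Next I would invoke the standard fact that the Grover operator $\mathcal G$, constructed from the reflection about the ``good'' subspace and about $\ket\psi$, has eigenvalues $e^{\pm \i\phi}$ on the relevant two-dimensional invariant subspace containing $\ket\psi$, where $\phi=2\arcsin\bigl(\sqrt{\tfrac1n\sum_j \tilde v_j}\bigr)$ (see \cite{bhmt:countingj}). Phase estimation on $\mathcal G$ with starting state $\ket\psi$ using $t=\lceil\log_2(\sqrt n/\delta)\rceil+8$ output qubits returns, with probability $\geq 2/3$, a $(0,t)$-fixed-point estimate $\beta$ with $|\beta\mp\phi/(2\pi)|\leq 2^{-t}$; this step uses $O(2^t)=O(\sqrt n/\delta)$ applications of $\mathcal G$, each costing $O(1)$ oracle queries and $\poly(b)$ extra gates (including the QFT).

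Then I would carry out the two error conversions needed to go from ``additive on the phase'' to ``multiplicative on the sum''. Using the identity $|\sin^2(\pi\beta)-\sin^2(\phi/2)|\leq \pi|\beta-\phi/(2\pi)|\cdot\sin(\phi)$ and $\sin(\phi)\leq 2\sin(\phi/2)=2\sqrt{\tfrac1n\sum_j\tilde v_j}$, the additive error on $\beta$ of size $O(\delta/\sqrt n)$ translates into an additive error on $\sin^2(\pi\beta)$ of size $O(\delta/\sqrt n\cdot\sqrt{\tfrac1n\sum_j\tilde v_j})$. Because $\sum_j v_j\geq 1/4$ (at least one entry is $\geq 1/4$), the quantity $\tfrac1n\sum_j\tilde v_j$ is at least $1/(8n)$ for $\delta\leq 1/2$, so dividing by it yields a relative error of at most $\delta/4$. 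Multiplying by $n$ produces a $(1\pm\delta/4)$-multiplicative approximation of $\sum_j\tilde v_j$, which together with the earlier $(1\pm\delta/4)$ perturbation gives the claimed $(1\pm\delta)$-approximation of $\sum_j v_j$. Finally, the classical post-processing steps (computing $\sin^2$, multiplying by $n$, and producing the fixed-point representation in $(\lceil\log_2 n+1\rceil,b)$-format) only cost $\poly(b,\log n,\log(1/\delta))$ by the assumption on arithmetic circuits, and the assumption $b\geq\lceil\log_2(1/\delta)\rceil+6$ guarantees enough trailing bits to store the final answer within the required precision.

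The last step is the median-of-$O(\log(1/\eta))$ boosting, which is the textbook powering lemma (Chernoff bound on repeated independent trials with bias $\geq 1/6$) and brings the failure probability below $\eta$ at a multiplicative cost of $O(\log(1/\eta))$. Summing everything, the total complexity is $\widetilde O(\sqrt n\cdot\tfrac1\delta\cdot\ln(1/\eta))$ with the stated $\widetilde O$-hidden factors. The main obstacle is the bookkeeping of precisions: ensuring that the $\arcsin$-discretization error, the additive phase-estimation error, and the classical rounding errors in computing $\sin^2(\pi\beta)$ and the final multiplication by $n$ each contribute only a fraction of the target $\delta$ to the \emph{multiplicative} error on the sum, which requires exploiting the lower bound $\sum_j v_j\geq 1/4$ at exactly the right point in the chain of inequalities.
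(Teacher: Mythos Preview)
Your approach is essentially identical to the paper's: prepare $\ket\psi$ with controlled rotations, apply phase estimation on the Grover operator, convert the additive phase error into a multiplicative error on the sum via the lower bound $a=\tfrac1n\sum_j\tilde v_j\geq 1/(8n)$, and boost by taking medians. The paper carries out exactly this plan, citing \cite[Lemma~7]{bhmt:countingj} for the key error conversion.

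One small imprecision: the inequality you state, $|\sin^2(\pi\beta)-\sin^2(\phi/2)|\leq \pi|\beta-\phi/(2\pi)|\cdot\sin(\phi)$, is not quite correct as written. Using $\sin^2 x-\sin^2 y=\sin(x+y)\sin(x-y)$ one only gets $|\sin^2(\pi\beta)-\sin^2(\phi/2)|\leq \Delta\,|\sin(\phi+(\pi\beta-\phi/2))|\leq \Delta\sin(\phi)+\Delta^2$ with $\Delta=|\pi\beta-\phi/2|$; this is precisely the BHMT bound $2\Delta\sqrt{a(1-a)}+\Delta^2$ that the paper invokes. The missing $\Delta^2$ term is harmless here since $\Delta^2=O(\delta^2/n)$ and dividing by $a\geq 1/(8n)$ still yields a relative error $O(\delta^2)\leq O(\delta)$, so your conclusion stands---but you should include that quadratic term to make the argument rigorous.
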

\begin{proof}
We may assume without loss of generality that $n = 2^k$ for some integer $k \geq 1$ (by padding $\vec v$ with additional $0$s to make its dimension a power of~2).
With two queries to the oracle $U_{\vec v}$ and $\poly(b, \log_2 n, \log_2(1/\delta))$ other gates, we can prepare the state
\[
\ket{\psi} = \frac{1}{\sqrt{n}}\sum_{j=1}^n\ket{j}\left(\sqrt{\tilde v_j}\ket{0}+\sqrt{1- \tilde v_j}\ket{1}\right),
\]
where $\tilde v_j = \sin(\zeta_j)^2$ is the squared sine of $\zeta_j$, and $\zeta_j \in [0, \pi/2]$ is a $\frac{\delta}{16n}$-additive approximation of $\arcsin(\sqrt{v_j}) \in [0, \pi/2]$. We let $a = \frac1n \sum_{j=1}^n \tilde{v}_j$ be the squared $\ell_2$-norm of the part of $\ket{\psi}$ ending in $\ket{0}$; this is the quantity we will estimate below. 
We write $V$ for a unitary that prepares the state $\ket{\psi}$ starting from $\ket{0^{\log_2 n}1}$. 
Such a unitary $V$ can be implemented by creating a uniform superposition in the first register, applying $U_{\vec \zeta}$, performing $\lceil \log_2(n/\delta) \rceil + 4$ controlled single-qubit rotations to create the amplitudes $\sqrt{\tilde{v}_j}$, and applying $U_{\vec \zeta}^{\dagger}$.
For $V$'s gate complexity, note that creating a uniform superposition $\tfrac1{\sqrt{n}} \sum_{j=1}^n \ket{j}$ can be done with an exact circuit consisting of $\log_2 n$ Hadamard gates, where we use the assumption that $n$ is a power of $2$.

Let $\ket{\psi_{good}} = \frac{1}{\sqrt{n}}\sum_{j=1}^n\ket{j}\sqrt{\tilde{v}_j}\ket{0}$ and $\ket{\psi_{bad}} =  \frac{1}{\sqrt{n}}\sum_{j=1}^n\ket{j}\sqrt{1-\tilde{v}_j}\ket{1}$ be the projections of $\ket{\psi}$ onto the subspaces with a 0 or 1 in the last qubit, respectively. 
Let $R_{bad} = \left(I\otimes (2\ket{1}\!\bra{1} - I)\right)$ be the unitary that reflects through states with a $1$ in the last register. Similarly, let $R_{\psi} = V(2\ket{0^{\log_2 n}1}\!\bra{0^{\log_2 n}1} - I)V^{\dagger}$ be the unitary that reflects through $\ket{\psi} = V\ket{0^{\log_2 n}1}$. Let $\mathcal G = R_{\psi} R_{bad}$.
In the literature $\mathcal G$ is known as the amplitude amplification operator (or Grover operator), and it is well known that $\mathcal G$ acts as a rotation by an angle $\phi = 2 \arcsin(\sqrt{a})$ on the $2$-dimensional subspace spanned by $\ket{\psi_{good}}$ and $\ket{\psi_{bad}}$. In particular, when viewed as an operator on this $2$-dimensional subspace $\mathcal G$ has eigenvalues $e^{\i \phi}$ and $e^{-\i \phi}$. Hence, performing phase estimation on $\mathcal G$ from the starting state $\ket{\psi}$ (which is a superposition of $\ket{\psi_{good}}$ and $\ket{\psi_{bad}}$) gives with probability $\geq 2/3$ an estimate of either $\phi$ or $-\phi$. 
To be more precise, phase estimation on $\mathcal G$ from the starting state $\ket{\psi}$ can be used to compute (with probability $\geq 2/3$) a $(0, \lceil \log_2(\sqrt{n}/\delta) \rceil + 8)$-fixed-point representation $\beta$ of either $+\phi/(2\pi)$ or $-\phi/(2\pi)$. Since we only use $\beta$ to compute the even function $\sin(\pi\beta)^2$ (see \cref{eq:phase2}), we will assume for ease of notation that $\beta$ is a representation of $\phi/(2\pi)$. 
Phase estimation can be done with the desired precision using $2^{\lceil \log_2(\sqrt{n}/\delta) \rceil + 8} \in O(\sqrt{n}/\delta)$ controlled applications of $\mathcal G$, and $\bigOt{(\lceil \log_2(\sqrt{n}/\delta) \rceil + 8)^2}$ other gates. Note that here we also use the assumption that $n$ is a power of~$2$, since the ``other gates'' include the circuit for the quantum Fourier transform.
We now show that phase estimation on $\mathcal G$ with this precision suffices to compute a $(1 \pm \delta)$-multiplicative approximation of $\sum_{j=1}^n v_j$.

First note that the squared sine is $1$-Lipschitz, so $\zeta_j$ being a $\tfrac{\delta}{16n}$-additive approximation of $\arcsin(\sqrt{v_j})$ and $\tilde v_j = \sin(\zeta_j)^2$ imply that $\abs{\tilde v_j - v_j} \leq \tfrac{\delta}{16n}$ and $\sum_{j=1}^n \tilde v_j$ is a $\tfrac\delta{16}$-additive approximation of $\sum_{j=1}^n v_j$. 
Since $\sum_{j=1}^n v_j \geq \tfrac{1}{4}$, it follows that $\sum_{j=1}^n \tilde v_j$ is also a $(1\pm\frac{\delta}{4})$-multiplicative approximation of $\sum_{j=1}^n v_j$.
Therefore, we only need to show that the returned quantity~$\tilde s$ is a sufficiently precise multiplicative approximation of $n \cdot a = \sum_{j=1}^n \tilde v_j$.

Since $v_j \in [0, 3/4]$ for every $j$, $\tilde v_j \in [0, 7/8]$ for every $j$, and because at least one $v_j \geq 1/4$, at least one $\tilde v_j \geq 1/8$.
As a consequence, the squared norm $a = \frac{1}{n} \sum_{j=1}^n \tilde v_j$ of the part of $\ket{\psi}$ ending in $\ket{0}$ satisfies $a \in [\frac{1}{8n}, \frac{7}{8}]$. 

Recall that $\phi = 2 \arcsin(\sqrt{a})$ and $\beta$ is a $(0, \lceil \log_2(\sqrt{n} / \delta) \rceil + 8)$-fixed-point representation of $\phi / (2 \pi)$ as obtained (with probability $\geq 2/3$) via phase estimation on the operator $\mathcal G$.
In other words, $\beta$ is a $\frac{\delta}{256 \sqrt{n}}$-additive approximation of $\phi/(2\pi)$, so $\pi \beta$ is a $\frac{\pi \delta}{256 \sqrt{n}}$-additive approximation of $\phi/2$. 
In~\cite[Lemma~7]{bhmt:countingj}, the following is shown: for $\theta_1, \theta_2 \in [0, 2\pi]$ with $\abs{\theta_1 - \theta_2} \leq \Delta$, we have
\begin{equation}\label{eq: ineq in BHMT}
    \abs{\sin(\theta_1)^2 - \sin(\theta_2)^2} \leq 2 \Delta \sqrt{\sin(\theta_2)^2 (1 - \sin(\theta_2)^2)} + \Delta^2.
\end{equation}
Applying this bound with $\theta_1 = \pi \beta$ and $\theta_2 = \phi/2 = \arcsin(\sqrt{a})$, and using that
\[
    \abs*{\pi \beta - \arcsin(\sqrt{a})} = \abs*{\pi \beta - \frac{\phi}{2}} \leq \frac{\pi \delta}{256 \sqrt{n}},
\]
we see that
\begin{align*}
    \abs*{\sin(\pi \beta)^2 - a} & \leq \frac{2 \pi \delta \sqrt{a (1 - a)}}{ 256 \sqrt{n}} + \frac{\pi^2 \delta^2}{256^2 \cdot n} \\
    & \leq \frac{8 \pi \delta}{256} \cdot a \cdot \sqrt{1 - a} + \frac{16\pi^2 \delta^2}{256^2} \cdot a \\
    & \leq \frac{\delta}{8} \cdot a,
\end{align*}
where the second inequality follows from the bound $a \geq 1/(8n) \geq 1/(16n)$, and the last inequality follows from $\sqrt{1 - a} \leq 1$, $\delta \leq 1$ and $8 \pi/256+16\pi^2/256^2\leq 1/8$.
In particular, this implies that $\sin(\pi \beta)^2$ is a $(1 \pm \tfrac{\delta}{8})$-multiplicative approximation of $a$.

Since $\sin(\pi \beta)^2$ is a $(1 \pm \tfrac{\delta}{8})$-multiplicative approximation of $a \geq 1/(8n)$, we see that $\sin(\pi \beta)^2 \geq 1/(16n)$.
Thus a $\tfrac{\delta}{256 n}$-additive approximation $\gamma$ of $\sin(\pi \beta)^2$ is a $(1 \pm \tfrac{\delta}{16})$-multiplicative approximation of $\sin(\pi \beta)^2$.
Since $\sin(\pi \beta)^2$ is a $(1 \pm \tfrac{\delta}{8})$-multiplicative approximation of $a$, $\gamma$ is a $(1 \pm \tfrac{\delta}{4})$-multiplicative approximation of $a$.
As a consequence, $n \cdot \gamma$ is a $(1 \pm \tfrac{\delta}{4})$-multiplicative approximation of $n \cdot a$.
Let $\tilde s$ be the $(\lceil\log_2(n) + 1\rceil, b)$-fixed-point encoding of $n \cdot \gamma$; then $\abs{\tilde s - n \cdot \gamma} \leq 2^{-b}$. 
Since $b \geq \lceil \log_2(1/\delta) \rceil + 6$, one has $2^{-b} \leq 2^{-6} \cdot \delta = \tfrac{\delta}{64}$. As we have the lower bound $n \cdot \gamma \geq n \cdot a \cdot \tfrac{3}{4} \geq \tfrac{1}{8} \cdot \tfrac{3}{4} \geq \tfrac{3}{32}$, computing $n \cdot \gamma$ with $2^{-b}$-additive error amounts to computing it with at most $(1 \pm \tfrac{\delta}{6})$-multiplicative error.
In other words, $\tilde s$ is a $(1 \pm \tfrac{\delta}{6})$-multiplicative approximation of $n \cdot \gamma$, which is a $(1 \pm \tfrac{\delta}{4})$-multiplicative approximation of $n \cdot a$,  which is a $(1 \pm \tfrac{\delta}{4})$-multiplicative approximation of $\sum_{j=1}^n v_j$.
As $\delta \leq 1$, one may check that this implies that $\tilde s$ is a $(1 \pm \delta)$-multiplicative approximation of $\sum_{j=1}^n v_j$.

We can further reduce the error probability to a small $\eta>0$ of our choice, by running the amplitude-estimation procedure $O(\ln(1/\eta))$ times and outputting the median of these runs. This can clearly be done using a polylogarithmic overhead.
The upper bound on the error probability follows from applying a Chernoff bound in order to upper bound the probability that the median has multiplicative error~$>\delta$.
\end{proof}

\begin{algorithm}[ht!]
   \caption{Quantum implementation of ApproxScalingFactor$(\vec a, r, \vec y, \delta, b_1, b_2, \eta, \mu)$}\label{alg:ApproxMarginalQuantumImpl}
   \ApproxScalingFactorIO
   replace the oracle for $\vec a$ by an oracle which maps $j$ to $a_j$, encoded in $(0, \lceil \log_2(1 / (\delta \mu)) + 2 \rceil)$ fixed-point format\;
   replace $r$ by the encoding of $r$ in $(0, \lceil \log_2(1/(r \delta)) + 2 \rceil)$ fixed-point format\;
   set $\delta' = \delta / 2$\;

   set $c = \lceil \log_2(n/\delta') \rceil + 6$\;
   find with quantum maximum finding a $j^*$ such that $e^{y_j - y_{j^*}} a_j / a_{j^*} \leq 3/2$ for all $j \in [n]$, using \GreaterOrEqual$(\vec a, \vec y, i, j, 1)$ for comparison, with failure probability $\eta / 2$\;
   let $U_{\vec v}$ be an oracle for the map $j \mapsto \frac{1}{2} \cdot $\RelativeEntryAdditiveApprox$(a_j, a_{j^*}, y_j, y_{j^*}, c, 1)$\;
    \label{algline:quantum REAA}
   use \QuantumApproximateSum with $U_{\vec v}$ to compute $S' = \sum_{j = 1}^n v_j$ in $(\lceil \log_2(n) \rceil + 1, c + 1)$ fixed-point format, with failure probability $\eta / 2$ and multiplicative error $\delta'/32$\;
   compute estimate $\alpha \geq 0$ of $\ln(2 S')$ in $(\lceil \log_2 (\log_2(n) + 2) \rceil, \lceil \log_2(1/\delta') + 3 \rceil)$ fixed-point format\;
   compute estimate $\beta$ of $\ln(r)$ in $(2 \lceil \abs{\log_2 r} \rceil, \lceil \log_2(1/\delta') + 3 \rceil)$ fixed-point format\;
   compute estimate $\gamma$ of $\ln(a_{j^*})$ in $(b, \lceil \log_2(1/\delta') + 3 \rceil)$ fixed-point format\;
   \Return{$\beta - (y_{j^*} + \gamma + \alpha)$} in $(b_1, b_2)$ fixed-point format\;
\end{algorithm}

With an implementation of \QuantumApproximateSum in hand, we prove the following theorem.
\begin{thm}
  \label{thm:quantum ApproxScalingFactor guarantee}
  With probability at least $1 - \eta$, \cref{alg:ApproxMarginalQuantumImpl} yields a $\delta$-additive approximation of $\ln(r / \sum_{j=1}^n a_j e^{y_j})$ encoded in $(b_1, b_2)$ fixed-point format, in time $\widetilde O (\sqrt n /\delta)$, where $\widetilde O$ here hides polynomial factors in $b$, $b_1$, $b_2$ and the encoding length of $\vec a$, and polylogarithmic factors in $n$, $1/\delta$, $1/\eta$, $1/\mu$ and $1/r$.
\end{thm}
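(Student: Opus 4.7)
The proof will closely parallel the classical analysis of \cref{thm:classical ApproxScalingFactor guarantee}, with the central difference being that the sum $\sum_{j=1}^n e^{y_j - y_{j^*}} a_j/a_{j^*}$ is approximated quantumly via \QuantumApproximateSum rather than classically. The plan has four pieces: (i) justify the rational-to-fixed-point conversion; (ii) control the quantum maximum-finding step; (iii) control \QuantumApproximateSum and derive an additive approximation of the target logarithm; (iv) combine everything and analyze runtime and failure probability.

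First I would reuse the conversion argument verbatim from the classical proof: after replacing $\vec a$ and $r$ by their fixed-point encodings $\hat{\vec a}$ and $\hat r$ with the precisions prescribed in \cref{alg:ApproxMarginalQuantumImpl}, the bound $|\ln(1+z)|\le 2|z|$ for $|z|\le 1/2$ shows that it suffices to approximate $\ln(\hat r/\sum_j \hat a_j e^{y_j})$ to additive precision $\delta' = \delta/2$. From here on I work with the fixed-point $\vec a, r$. Next, I invoke quantum maximum finding using \GreaterOrEqual with precision parameter $c=1$: by its guarantee, the returned index $j^*$ satisfies $e^{y_j - y_{j^*}} a_j/a_{j^*} \le 3/2$ for every $j\in[n]$, with failure probability at most $\eta/2$ and cost $\widetilde O(\sqrt n)$.

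The key step is applying \QuantumApproximateSum to $U_{\vec v}$ where $v_j = \tfrac12 \cdot \RelativeEntryAdditiveApprox(a_j, a_{j^*}, y_j, y_{j^*}, c, 1)$. I must check the preconditions of \cref{subroutine:QAC}: by the guarantee of \RelativeEntryAdditiveApprox (\cref{lem:RelativeEntryAdditiveApprox guarantee}) combined with the bound $e^{y_j - y_{j^*}} a_j/a_{j^*} \le 3/2$, the returned value is in $[0, 2-2^{-c}]$ so $v_j \in [0, 3/4]$; moreover $v_{j^*} = 1/2 \ge 1/4$, so the "at least one large entry" condition holds. With $c = \lceil\log_2(n/\delta')\rceil+6$ we have enough trailing bits, so \QuantumApproximateSum returns a $(1\pm \delta'/32)$-multiplicative approximation $S'$ of $\sum_j v_j$ with failure probability $\le\eta/2$, at cost $\widetilde O(\sqrt n/\delta)\cdot\polylog(1/\eta)$. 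Since each $v_j$ is a $2^{-c-1}$-additive approximation of $\tfrac12 \cdot e^{y_j - y_{j^*}} a_j/a_{j^*}$, the sum of the $n$ additive errors is at most $n\cdot 2^{-c-1} \le \delta'/64$ times the trivial lower bound $\sum_j v_j \ge v_{j^*} = 1/2$, so $2S'$ is a $(1\pm \delta'/8)$-multiplicative approximation of $\sum_j e^{y_j - y_{j^*}} a_j/a_{j^*}$. Taking $\alpha$ to be the fixed-point logarithm of $2S'$ with the specified precision, the usual estimate $|\ln(1+z)|\le 2|z|$ gives $|\alpha - \ln\sum_j e^{y_j - y_{j^*}} a_j/a_{j^*}| \le \delta'/4$.

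Finally I would assemble the output: writing $\ln(r/\sum_j a_j e^{y_j}) = \ln r - y_{j^*} - \ln a_{j^*} - \ln\sum_j e^{y_j-y_{j^*}} a_j/a_{j^*}$, and using the $\delta'/8$ precision of $\beta$ and $\gamma$ for the last two logarithms, the combined additive error of $\beta - (y_{j^*}+\gamma+\alpha)$ relative to $\ln(r/\sum_j a_j e^{y_j})$ is at most $\delta'/2$; rounding to $(b_1,b_2)$ fixed-point format incurs at most another $\delta'/2$ of error (the assumption $b_1 \geq \lceil \log_2|\ln(r/\sum_j a_j e^{y_j})|\rceil$ guarantees no overflow), yielding a $\delta'$-additive approximation of the required quantity, which equals $\delta/2 \leq \delta$ since $\delta'=\delta/2$. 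A union bound over the two subroutine failures gives total failure probability at most $\eta$, and summing the costs $\widetilde O(\sqrt n)$ for maximum finding and $\widetilde O(\sqrt n/\delta)$ for \QuantumApproximateSum (with the logarithmic overhead for $\ln(1/\eta)$ median-of-runs boosting absorbed into $\widetilde O$) yields the claimed $\widetilde O(\sqrt n/\delta)$ runtime. The principal bookkeeping challenge is tracking how the multiplicative error in $S'$, the additive error in each $v_j$, and the downstream logarithm/rounding errors propagate to ensure the composite additive error stays within $\delta$ without a larger blowup in the constants governing the choices of $c$ and the various precision parameters.
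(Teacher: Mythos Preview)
Your proposal is correct and follows essentially the same approach as the paper's proof: rational-to-fixed-point conversion, quantum maximum finding for $j^*$, \QuantumApproximateSum on the halved relative entries, then assembling the logarithms. Two minor points of sloppiness do not affect correctness: (i) $v_{j^*}$ is not exactly $1/2$ but only a $2^{-c-1}$-additive approximation of it, which still gives $v_{j^*}\ge 1/4$; (ii) your bound $|\alpha-\ln\sum_j\chi_j|\le\delta'/4$ needs to also absorb the $\delta'/8$ rounding error in computing $\alpha$, which your looser $(1\pm\delta'/8)$ multiplicative bound on $2S'$ does not quite leave room for---but since the actual multiplicative error you derive is only about $3\delta'/64$, the slack is there and the paper exploits it in the same way.
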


\begin{proof}
  The details for rounding the input to fixed-point format are dealt with exactly as in the classical case (\cref{thm:classical ApproxScalingFactor guarantee}).
  Assume an index $j^*$ as stated in the algorithm is indeed found.
  For each $j \in [n]$, let $\xi_j$ be what would be the classical result of the $j$-th call to \RelativeEntryAdditiveApprox on \cref{algline:quantum REAA} in \cref{alg:ApproxMarginalQuantumImpl}.
  Note that in the algorithm, such calls are not made individually but in superposition (by the \QuantumApproximateSum subroutine); however, the $\xi_j$ is well-defined, as \RelativeEntryAdditiveApprox is a deterministic subroutine.
  Then for all $j \in [n]$, $\xi_j$ satisfies
  \begin{equation}
    \label{eq:xi_j bound}
    \abs{\xi_j - e^{y_j - y_{j^*}} a_j / a_{j^*}} \leq \frac{\delta'}{64 n},
  \end{equation}
  as $c \geq \log_2(n/\delta') + 6$ and $e^{y_j - y_{j^*}} a_j / a_{j^*} \leq 2^1 - 2^{-c}$.
  For convenience, we write $\chi_j = e^{y_j - y_{j^*}} a_j / a_{j^*}$.
  The number $S'$ returned by \QuantumApproximateSum satisfies
  \[
    S' \in \left[1 - \frac{\delta'}{32}, 1 + \frac{\delta'}{32}\right] \cdot \sum_{j=1}^n \frac{1}{2} \xi_j,
  \]
  so in particular, $2 S'$ satisfies
  \[
    2S' \in \left[\left(1 - \frac{\delta'}{32}\right) \sum_{j=1}^n \left(\chi_j - \frac{\delta'}{64 n}\right), \left(1 + \frac{\delta'}{32}\right) \sum_{j=1}^n \left(\chi_j + \frac{\delta'}{64n}\right)\right]
  \]
  by \cref{eq:xi_j bound}.
  Note that $\sum_{j=1}^n (\chi_j - \delta' / (64n))$ is non-negative, since every $\chi_j$ is non-negative, $\chi_{j^*} \geq 1$ and $\delta' \leq 1$.
  This implies that
  \begin{align*}
    \ln(2S') & \geq \ln\left(1 - \frac{\delta'}{32}\right) + \ln\left(\sum_{j=1}^n \left(\chi_j - \frac{\delta'}{64 n}\right)\right) \\
    & \geq - \frac{\delta'}{16} + \ln\left(\sum_{j=1}^n \chi_j\right) + \ln\left(1 - \frac{\delta'}{64 \sum_{j=1}^n \chi_j}\right) \\
    & \geq - \frac{\delta'}{16} + \ln\left(\sum_{j=1}^n \chi_j\right) + \ln\left(1 - \frac{\delta'}{64}\right) \\
    & \geq - \frac{\delta'}{16} + \ln\left(\sum_{j=1}^n \chi_j\right) - \frac{\delta'}{32}
  \end{align*}
  where we have used $\sum_{j=1}^n \chi_j \geq \chi_{j^*} = 1$ in the second inequality, and $\ln(1 - z) \geq -2z$ for $z \in [0, 1/2]$ in the first and third inequality.
  A similar computation shows that
  \[
    \ln(2 S') \leq \ln\left(\sum_{j=1}^n \chi_j\right) + \frac{\delta'}{16} + \frac{\delta'}{32}.
  \]
  To summarize, this shows that
  \[
    \abs*{\ln(2 S') - \ln\left(\sum_{j=1}^n \chi_j\right)} \leq \frac{\delta'}{16} + \frac{\delta'}{32}.
  \]
  For the rest of the argument, we proceed as in the classical case:
  since $\alpha \geq 0$ is an estimate of $\ln(2S')$ with $\lceil \log_2(1/\delta') + 3 \rceil$ bits of precision, we get
  \[
    \abs*{\alpha - \ln\left(\sum_{j=1}^n \chi_j\right)} \leq \abs{\alpha - \ln(2S')} + \frac{\delta'}{8} \leq \frac{\delta'}{4}.
  \]
  As we also have
  \[
    \abs*{\beta - \ln(r)} \leq \frac{\delta'}{8}, \quad \abs{\gamma - \ln(a_{j^*})} \leq \frac{\delta'}{8},
  \]
  we get
  \[
    \abs*{\beta - (y_{j^*} + \gamma + \alpha) - \ln\left(r / \sum_{j=1}^n a_j e^{y_j}\right)} \leq \frac{\delta'}{2}.
  \]
  Truncating to $b_2 \geq \lceil \log_2(1/\delta') \rceil$ bits introduces an additional error of at most $\frac{\delta'}{2}$, so the returned result is a $\delta'$-additive approximation of $\ln(r / \sum_{j=1}^n a_j e^{y_j})$.

  For the time complexity statement, note that the expensive operations are finding the maximum of the $\xi_j$ and approximating the sum of the $\xi_j$.
  The maximum finding subroutine returns a correct $j^*$ with error probability at most $\eta$ in time $\widetilde O(\sqrt n \ln(1/\eta))$.
  Approximating the sum can be done in time $\widetilde O(\sqrt{n} \cdot \frac{1}{\delta} \cdot \ln(\frac{1}{\eta}))$ by \cref{thm:QAC}.
  The other (arithmetic) operations can be implemented in time polynomial in $b_1, b_2, b$, the encoding length of $\vec a$, and polylogarithmic in $n$, $1/\eta$, $1/\mu$, $1/\delta$ and $1/r$, yielding the desired time complexity.
\end{proof}

\subsection{TestScaling implementations}
In this subsection, we describe how to efficiently implement \TestScaling; our implementation is shown in~\cref{alg:TestScaling impl}.
It relies on \ApproxScalingFactor and its subroutines, and its behavior is analyzed in~\cref{lem:TestScaling guarantee}.
Again, we can obtain either a classical or a quantum implementation, by using the respective version of \ApproxScalingFactor.
\begin{algorithm}
    \caption{Implementation of TestScaling$(\mat A, \vec r, \vec c, \vec x, \vec y, \delta, b_1, b_2, \eta, \mu)$}\label{alg:TestScaling impl}

    Compute $\approxA  \in [(1-\delta/80) \min\{\norm{\A x y}_1, 20\},(1+\delta/80) \min\{\norm{\A x y}_1, 20\}]$ with probability $\geq 1 - \eta / 2$\;\label{line:TestScaling ac}
    \If{$\approxA \geq 10$}{\Return{\False}\;}
    \For{$\ell \in [n]$}{
        Compute $a_\ell = -x_\ell + \ApproxScalingFactor(\mat A_{\ell \bullet}, r_\ell, \vec y, \delta / 4, b_1, b_2 + 2, \eta / 4n, \mu)$\;
        \tcp{on success, $a_\ell$ is a $\tfrac{\delta}{4}$-additive approx.~of $\ln(r_\ell/r_\ell(\A x y))$}
        Compute $b_\ell = -y_\ell + \ApproxScalingFactor(\mat A_{\bullet \ell}, c_\ell, \vec x, \delta / 4, b_1, b_2 + 2, \eta / 4n, \mu)$\;
        \tcp{on success, $b_\ell$ is a $\tfrac{\delta}{4}$-additive approx.~of $\ln(c_\ell/c_\ell(\A x y))$}
    }
    \Return{\True if $\approxA - 1 + \sum_{\ell=1}^n r_\ell a_\ell \leq 3\delta/2$ and $\approxA - 1 + \sum_{\ell=1}^n c_\ell b_\ell \leq 3\delta/2$}, otherwise \False\;
\end{algorithm}

We first show how to perform the first step of \TestScaling (\cref{line:TestScaling ac}), which is to check whether $\norm{\A x y}_1$ is at most a constant (here chosen to be $20$).

\begin{lem} \label{lem:multiplicative ell1}
  Let $\mat A \in [0, 1]^{n\times n}$ with $\norm{\mat A}_1 \leq 1$
and $m$ non-zero entries, each at least $\mu > 0$. Assume access to $\vec x, \vec y \in \R^n$.
 Then we can compute a multiplicative $(1\pm \delta)$-approximation of $\min\{\|\A x y\|_1,20\}$ in time $\widetilde O(m \polylog(1/\delta))$ using a classical algorithm, and (with probability $\geq 2/3$) in  time $\widetilde O(\sqrt{m}/\delta)$ using a quantum algorithm.
\end{lem}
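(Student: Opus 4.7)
The plan is to reduce to multiplicatively approximating a sum whose entries lie in a bounded range, using the standard trick of normalizing by the largest term. Write $S = \norm{\A x y}_1 = \sum_{(i,j)\colon A_{ij} \neq 0} A_{ij} e^{x_i + y_j}$. First find (approximately) an argmax $(i^*, j^*)$ of $A_{ij} e^{x_i + y_j}$ by running \cref{alg:GreaterOrEqual} with $c = 1$ as the comparator, and set $M = A_{i^* j^*} e^{x_{i^* }+ y_{j^*}}$; this guarantees $r_{ij} := A_{ij} e^{x_i + y_j}/M \in [0, 3/2]$ and $r_{i^* j^*} = 1$. Each ratio $r_{ij}$ is precisely of the form $(A_{ij}/A_{i^* j^*}) \cdot e^{(x_i + y_j) - (x_{i^*} + y_{j^*})}$ that \cref{alg:RelativeEntryAdditiveApprox} is designed to additively approximate (using exact addition to form $x_i + y_j$ in a $(b_1+1, b_2)$ fixed-point format), and the normalized sum $R = \sum_{ij} r_{ij}$ satisfies $R \in [1, 3m/2]$ and $S = MR$. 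Separately, $\ln M = \ln A_{i^* j^*} + x_{i^*} + y_{j^*}$ can be computed to additive precision $O(\delta)$ by a single arithmetic circuit of $\polylog(1/\delta)$ cost, hence $M$ is obtainable to multiplicative precision $O(\delta)$.

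For the classical algorithm, the plan is to: (i) scan all $m$ potentially non-zero entries with \cref{alg:GreaterOrEqual} to find $(i^*, j^*)$; (ii) compute $\ln M$ to additive precision $\delta/8$ and, if the result exceeds $\ln 20 + \delta/8$, return $20$ immediately (which is then a $(1 \pm \delta)$-multiplicative approximation of $\min\{S, 20\} = 20$); (iii) otherwise, form $\tilde R = \sum_{ij} \gamma_{ij}$ where each $\gamma_{ij}$ is a call to \cref{alg:RelativeEntryAdditiveApprox} with additive precision $\delta/(4m)$, so that $\abs{\tilde R - R} \leq \delta/4 \leq (\delta/4) R$, i.e.\ a $(1 \pm \delta/4)$-multiplicative approximation; (iv) form $\tilde S = M \tilde R$ and return $\min\{\tilde S, 20\}$. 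Each of the $O(m)$ arithmetic operations costs $\polylog(1/\delta)$, yielding the claimed $\widetilde O(m \polylog(1/\delta))$ bound.

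For the quantum algorithm, the plan is to replace the scan in (i) by quantum maximum finding using \cref{alg:GreaterOrEqual} as the comparator, which yields an approximate argmax in time $\widetilde O(\sqrt m)$ with failure probability at most $1/6$; and to replace the sum in (iii) by \cref{subroutine:QAC} applied to the oracle $j \mapsto r_{ij}/2$ over a power-of-two index set of size $\Theta(m)$. The $m$ potentially non-zero entries are enumerated by precomputing the cumulative sums of $s^r_1, \dots, s^r_n$ (an $O(n) \subseteq O(m)$ classical preprocessing step that does not dominate), combined with the oracle $O_I^{\mathrm{row}}$. The preconditions of \cref{subroutine:QAC} are met because $r_{ij}/2 \in [0, 3/4]$ and $r_{i^* j^*}/2 = 1/2 \geq 1/4$; so by \cref{thm:QAC} we obtain a $(1 \pm \delta/4)$-multiplicative approximation of $R/2$, hence of $R$, in time $\widetilde O(\sqrt m / \delta)$ with failure probability at most $1/6$. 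A union bound over the two probabilistic steps gives success probability $\geq 2/3$ and total time $\widetilde O(\sqrt m/\delta)$.

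The main technical obstacle is distributing the error budget $\delta$ between the approximation of $\ln M$, the multiplicative approximation of $R$, the additive errors introduced by each \cref{alg:RelativeEntryAdditiveApprox} call, and the final fixed-point rounding. The cap at $20$ is harmless in both directions: if $\tilde S > 20$ then $S \geq \tilde S / (1 + \delta/2) > 20(1 - \delta)$, so replacing $\tilde S$ by $20$ remains within a $(1 \pm \delta)$-multiplicative factor of $\min\{S, 20\}$; and the threshold $\ln 20 + \delta/8$ on $\ln M$ used in the early-return branch is chosen precisely to yield the analogous guarantee when we return $20$ without ever computing $R$.
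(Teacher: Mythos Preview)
Your proposal is correct and follows essentially the same approach as the paper: find an approximate argmax using \GreaterOrEqual, compute $\ln M$ separately (so as to avoid ever forming the possibly exponentially large quantity $M$ directly), normalize all entries by $M$ using \RelativeEntryAdditiveApprox, and then approximate the normalized sum either by direct summation (classically) or by \QuantumApproximateSum (quantumly). The only cosmetic difference is that the paper works with $\ln S$ throughout and thresholds on that to decide whether to output $20$, whereas you insert an early-return branch on $\ln M$ and then compute $\tilde S = \tilde M\tilde R$ explicitly once $M$ is known to be bounded; both are equivalent bookkeeping choices.
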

\begin{proof}
We first compute the location of the largest entry $\nu>0$ of the matrix $\A x y$ using \GreaterOrEqual. Classically this can be done in time $\widetilde O(m \polylog(1/\delta))$, quantumly in time $\widetilde O(\sqrt{m} \polylog(1/\delta))$ using maximum-finding (see also the proof of \cref{thm:QAC}).

We then proceed as in the classical or quantum implementation of \ApproxScalingFactor and use \RelativeEntryAdditiveApprox and either summing (classically) or \QuantumApproximateSum (quantumly) to compute a multiplicative $(1\pm O(\delta))$-approximation of $\|\A x y\|_1/\nu$. This gives us an $O(\delta)$-additive approximation of $\ln(\|\A x y\|_1/\nu) + \ln(\nu) = \ln(\|\A x y\|_1)$. We can use this to determine whether $\norm{\A x y}_1$ is at most $20$ or at least $15$. In the latter case we can use the $O(\delta)$-additive approximation of $\ln(\|\A x y\|_1)$ to give a multiplicative $(1\pm \delta)$-approximation of $\min\{\|\A x y\|_1,20\}$. In the former case, we can efficiently exponentiate an $O(\delta)$-additive approximation of $\ln(\norm{\A x y}_1)$ to obtain a multiplicative approximation of $\norm{\A x y}_1$ since we have an upper bound on its value.

Note that the above has avoided computing the largest entry $\nu$ in $\A x y$ explicitly. This is important since $\nu$ may be exponentially large in $n$. We can, however, compute $\ln(\nu)$ efficiently since it is the logarithm of an entry of $\mat A$ plus the corresponding coordinates of $\vec x$ and $\vec y$.

Finally, note that (by \cref{lem:RelativeEntryAdditiveApprox guarantee} and \cref{thm:QAC}) the multiplicative $(1\pm O(\delta))$-approximation of $\|\A x y\|_1/\nu$ can be computed in time $\widetilde O(m \polylog(1/\delta))$ classically or $\widetilde O(\sqrt{m}/\delta)$ quantumly.
\end{proof}

We now analyze our implementation of \TestScaling, with the formal guarantee as follows.
\begin{thm}
\label{lem:TestScaling guarantee}
Let $\mat A \in [0, 1]^{n\times n}$ with $\norm{\mat A}_1 \leq 1$
and non-zero entries at least $\mu > 0$, and let $\vec r, \vec c \in (0,1]^n$ with $\norm{\vec r}_1 = \norm{\vec c}_1 = 1$.
Assume furthermore that $b_1 \geq \log_2(\abs{\ln(r_\ell / \sum_{j=1}^n A_{\ell j} e^{y_j})})$ and $b_1 \geq \log_2(\abs{\ln(c_\ell / \sum_{i=1}^n A_{i \ell} e^{x_i})})$ for every $\ell \in [n]$,
and that $b_2 \geq \lceil \log_2(1/\delta) \rceil$.
Then \TestScaling with inputs $\vec x, \vec y \in \R^n$ and $\delta\in(0,1]$ outputs, with success probability $\geq 1-\eta$,
\begin{itemize}
    \item \True if $(\vec x, \vec y)$ forms a $\delta$-relative-entropy-scaling of $\mat A$ to $(\vec r, \vec c)$.
    \item \False if either $D(\vec r \Vert \vec r(\A x y)) \geq 2 \delta$ or $D(\vec c \Vert \vec c(\A x y)) \geq 2 \delta$.
\end{itemize}
It does so using $1$ query to a subroutine for obtaining a multiplicative estimate of the sum of the matrix entries, and $2n$ queries to \ApproxScalingFactor, for a total classical time complexity of $\widetilde O(m\polylog(1/\delta))$, or a total quantum time complexity of $\widetilde O(\sqrt{mn} / \delta)$.
\end{thm}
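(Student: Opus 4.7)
The key identity is that since $\norm{\vec r}_1 = 1$,
\begin{align*}
  D(\vec r \Vert \vec r(\A x y)) = \norm{\A x y}_1 - 1 + \sum_{\ell=1}^n r_\ell \ln\frac{r_\ell}{r_\ell(\A x y)},
\end{align*}
so the quantity $\approxA - 1 + \sum_\ell r_\ell a_\ell$ computed by \cref{alg:TestScaling impl} is an approximation of $D(\vec r \Vert \vec r(\A x y))$, and analogously for the columns. The plan is to show that, conditioned on all $2n+1$ randomized subroutine calls succeeding (which by a union bound happens with probability at least $1-\eta$), this approximation has additive error at most $\delta/2$, so that returning \True exactly when both computed values are at most $3\delta/2$ correctly distinguishes the specification cases $D\leq\delta$ and $D\geq 2\delta$.

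I would first handle the main branch. If the test $\approxA \geq 10$ fails then $\approxA$ is necessarily a multiplicative approximation of $\norm{\A x y}_1$ itself (not of the truncation at $20$): otherwise $\norm{\A x y}_1 \geq 20$ would force $\approxA \geq (1-\delta/80)\cdot 20 > 10$. In particular $\norm{\A x y}_1 < 20$, so the multiplicative guarantee yields the additive bound $|\approxA - \norm{\A x y}_1| \leq \delta/4$. By the guarantee of \ApproxScalingFactor each $a_\ell$ is a $\delta/4$-additive approximation of $\ln(r_\ell/r_\ell(\A x y))$, and since $\norm{\vec r}_1=1$ this gives $|\sum_\ell r_\ell a_\ell - \sum_\ell r_\ell \ln(r_\ell/r_\ell(\A x y))| \leq \delta/4$, for a total additive error of at most $\delta/2$ in the test quantity. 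The one slightly delicate step is the early-return branch: one must verify that $\approxA \geq 10$ already implies $D(\vec r \Vert \vec r(\A x y)) > \delta$, so that returning \False is consistent with the specification. For this, Jensen's inequality applied to the identity above gives $D(\vec r \Vert \vec r(\A x y)) \geq \norm{\A x y}_1 - 1 - \ln \norm{\A x y}_1$, while $\approxA \geq 10$ forces $\norm{\A x y}_1 \geq 10/(1+\delta/80) > 9$ for $\delta \leq 1$, and $9-1-\ln 9 > 5 \geq \delta$. A symmetric estimate using the same inequality shows that whenever $D \leq \delta$ one has $\norm{\A x y}_1 \leq 4$, so $\approxA < 10$ and the first check never erroneously rejects a valid scaling.

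For the complexity, each of the $2n$ calls to \ApproxScalingFactor on a row or column with $s$ potentially-nonzero entries costs $\widetilde O(\sqrt{s}/\delta)$ quantumly by \cref{thm:quantum ApproxScalingFactor guarantee}, or $\widetilde O(s\,\polylog(1/\delta))$ classically by \cref{thm:classical ApproxScalingFactor guarantee}. Summing over the rows and columns of $\mat A$ and applying Cauchy--Schwarz exactly as in the proof of \cref{cor:full sinkhorn performance} yields the claimed $\widetilde O(\sqrt{mn}/\delta)$ quantum bound and $\widetilde O(m\,\polylog(1/\delta))$ classical bound. The cost of the initial step is absorbed by \cref{lem:multiplicative ell1}, and the hypothesis on $b_1$ together with the choice $b_2+2 \geq \lceil \log_2(4/\delta) \rceil$ ensures that each call to \ApproxScalingFactor meets its precondition.
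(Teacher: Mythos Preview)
Your proof is correct and follows essentially the same approach as the paper: the same key identity for $D(\vec r\Vert\vec r(\A x y))$, the same $\delta/2$ additive-error decomposition, and the same Cauchy--Schwarz complexity argument. The only noteworthy difference is in the early-return branch: the paper invokes the generalized Pinsker inequality (\cref{lem:generalized pinsker}) to conclude that $\approxA\geq 10$ forces $D>\delta$, whereas you derive the pointwise bound $D\geq \norm{\A x y}_1-1-\ln\norm{\A x y}_1$ directly from Jensen's inequality, which is a slightly more self-contained route to the same conclusion.
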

\begin{proof}
First observe that the choices of $b_1$ and $b_2$ are assumed to be such that the assumptions for every call to \ApproxScalingFactor are satisfied. We use \cref{lem:multiplicative ell1} to implement \cref{line:TestScaling ac} with success probability $\geq 1-\eta/2$ so that $\approxA$ is a $(1 \pm \delta/80)$-multiplicative approximation of $\min\{ \norm{\A x y}_1, 20\}$.
Next, note that each call to \ApproxScalingFactor succeeds with probability at least $1 - \frac{\eta}{4n}$, so the probability of everything succeeding is at least $1 - \eta$ by a union bound.
Recall that
\begin{align*}
    D(\vec r\Vert\vec r(\A x y)) & = \sum_{i=1}^n \rho(r_i\Vert r_i(\A x y)) = \sum_{i=1}^n  (r_i(\A x y) - r_i + r_i \ln(\frac{r_i}{ r_i(\A x y)})) \\
    & = \norm{\A x y}_1 - 1 + \sum_{i=1}^n r_i \ln\left(\frac{r_i}{ r_i(\A x y)}\right).
\end{align*}
Therefore, we may estimate $D(\vec r \Vert \vec r(\A x y))$ to additive precision $\delta/2$ by estimating $\norm{\A x y}_1$ and $\sum_{i=1}^n r_i \ln(\frac{r_i}{r_i(\A x y)})$ to additive precision $\delta/4$.
Since $\norm{\A x y}_1 \leq 20$, obtaining a $(1\pm \delta/80)$-multiplicative approximation of  $\norm{\A x y}_1$ suffices to estimate it up to an additive error $\delta/4$.
We can now distinguish two cases:
\begin{enumerate}
\item If $\approxA > 10 \geq 5/(1-\delta/80)$, then we can conclude that $(\vec x, \vec y)$ does not form a $\delta$-relative-entropy-scaling of $\mat A$ to $(\vec r, \vec c)$. Indeed, a generalized version of Pinsker's inequality provided in~\cref{lem:generalized pinsker} shows that if $\norm{\vec r(\A x y)}_1 \geq 5$, then $D(\vec r\Vert \vec r(\A x y)) \geq (1-\ln 2)\cdot 4 > 1 \geq \delta$.\footnote{Note that naively applying Pinsker's inequality implies that whenever $D(\vec r\Vert\vec r(\A x y)) \leq \delta$ then also $\norm{\vec r - \vec r(\A x y)}_1 = O(\sqrt{\delta})$, which would indeed imply that $\norm{\vec r(\A x y)}_1 = O(1)$ for all $\delta \leq 1$. However, we may only apply Pinsker's inequality to probability distributions, which $\vec r(\A x y)$ need not  be. In~\cref{lem:generalized pinsker} we show a generalized version of Pinsker's inequality that says that $D(\vec r\Vert\vec r(\A x y))\geq \norm{\vec r - \vec r(\A x y)}_1 - \ln(1+\norm{\vec r - \vec r(\A x y)}_1)$ (the latter is lower bounded by $\norm{\cdot}_1^2/4$ on $[0,1]$ and by $(1-\ln 2)\norm{\cdot}_1$ on $[1,\infty)$).}
\item If $\approxA \leq 10$, then $\norm{\A x y}_1 \leq 10(1+\delta/80) \leq 15$ and a multiplicative $(1\pm \delta/80)$-approximation of $\norm{\A x y}_1$ thus forms an additive $15\delta/80 \leq \delta/4$-approximation of $\norm{\A x y}_1$.
\end{enumerate}
Finally, to estimate the last term, note that an additive $\delta/4$-approximation of $\ln(r_i/r_i(\A x y)$ (the output of \ApproxScalingFactor) for each $i \in [n]$ leads to an approximation of $\sum_{i=1}^n r_i \ln(\frac{r_i}{r_i(\A x y)})$ with additive error at most $\sum_{i=1}^n r_i \delta/4 = \delta/4$.
Therefore, the quantity $\gamma - 1 + \sum_{\ell=1}^n r_\ell a_\ell$ computed in \TestScaling is a $\delta/2$-additive approximation of $D(\vec r \Vert \vec r (\A x y))$. If $D(\vec r \Vert \vec r (\A x y))\leq \delta$, then the approximation is at most $3\delta/2$, and the condition evaluates to \True. Similarly, if $D(\vec r \Vert \vec r (\A x y)) \geq 2\delta$, then the approximation is at least $3\delta/2$, and the condition evaluates to \False. Note that if the approximation is exactly $3\delta/2$, then either return value is acceptable.
We can compute a $\delta/2$-additive approximation of $D(\vec c \Vert \vec c(\A x y))$ in the same manner, showing that our implementation satisfies the guarantees of the oracle.

Finally, for the time complexity of \TestScaling in the quantum setting, note that each call to \ApproxScalingFactor takes time $\widetilde O(\sqrt{s}/\delta)$ where $s$ is the number of potentially non-zero entries of the $\ell$-th row or column (and we suppress a polylogarithmic dependence on $n$). Since we call \ApproxScalingFactor once for each row and column and the square-root is a concave function, we thus obtain a time complexity of $\widetilde O(\sqrt{mn}/\delta)$ for \TestScaling.
\end{proof}

\section{Randomized Sinkhorn algorithm}\label{sec:overview random}

In this section we discuss a randomized version of the Sinkhorn algorithm, and sketch its analysis. Classically, this randomized version is of interest due to its good performance in practice (asymptotically the complexity is not better than the usual Sinkhorn algorithm). It is therefore natural to also provide a quantum analog.

Instead of updating all rows or all columns within a single iteration, the randomized Sinkhorn algorithm (\cref{alg:RSFP}) selects a uniformly random row or column, and only updates this row or column.
Because rows and columns have $m/n$ non-zero entries on average, this results in an \emph{expected} quantum time complexity of $\widetilde O(\sqrt{m/n} / \eps)$ per iteration, where $\eps > 0$ is the desired precision (measured in relative entropy).
Furthermore, we show that $\widetilde O(n / \eps)$ iterations suffice to obtain an $\eps$-scaling with probability $\geq 2/3$.
Note that we do not necessarily alternate between choosing rows or columns.
While the resulting runtime is comparable to that of the full Sinkhorn algorithm as in \cref{alg:FSFP testing}, the analysis is somewhat more difficult.
The reason is that we are no longer able to test whether the matrix is $\eps$-scaled at every iteration, since this has a quantum time complexity of roughly $\sqrt{mn} / \eps$.
Therefore, while the algorithm is running, we do not know whether the potential is still decreasing every iteration, and may lose progress during subsequent iterations.
However, we can show that for any probability $p > 0$ and subsequent appropriate choices of parameters (in particular for large enough $T$), a $1 - p$ fraction of the produced iterates $(x^{(t)},y^{(t)})$ yield an $\eps$-relative-entropy-scaling. This implies that a uniformly random choice of stopping iteration yields an $\eps$-relative-entropy-scaling with probability $1 - p$.
\begin{algorithm}[ht]
  \caption{Randomized Sinkhorn with finite precision and failure probability}
  \label{alg:RSFP}
  \Input{Oracle access to~$\mat A \in [0,1]^{n \times n}$ with~$\norm{\mat A}_1 \leq 1$ and non-zero entries at least $\mu > 0$, target marginals~$\vec r, \vec c \in [0,1]^n$ with $\norm{\vec r}_1 = \norm{\vec c}_1 = 1$, iteration count~$T \geq 0$, bit counts~$b_1, b_2 \geq 0$, precision~$\delta \in (0, 1)$ and subroutine failure probability~$\eta \in [0,1]$}
  \Output{Vectors $\vec x, \vec y \in \R^n$ with entries encoded in~$(b_1, b_2)$ fixed-point format}
  \Guarantee{For $p \in [0,1]$ and $\eps \in (0, 1]$ and parameters as in~\cref{thm: Full testing}, with probability at least $1 - p$, $(\vec x, \vec y)$ form an $\eps$-$(\vec r, \vec c)$-relative-entropy-scaling of $\mat A$}

  $\vec x^{(0)}, \vec y^{(0)} \leftarrow \vec 0$\;
  \smallskip
  \For{$t\leftarrow 1,2,\dotsc,T$}{
    Pick $\beta \in \{0,1\}$ uniformly at random\;\label{algline:RSFP row col choice}
    Pick $\ell \in [n]$ uniformly at random\;\label{algline:RSFP index choice}
    \uIf{$\beta =0$}{\label{algline:RSFP if row}
      $\vec x^{(t)} \leftarrow \vec x^{(t-1)}$\;
      $x_\ell^{(t)} \leftarrow \ApproxScalingFactor(\mat A_{\ell \bullet}, r_\ell, \vec y^{(t-1)}, \delta, b_1, b_2, \eta, \mu)$\;
      \label{algline:RSFP row estimate}
      \smallskip
      $\vec y^{(t)} \leftarrow \vec y^{(t-1)}$\;
    }
    \Else{
      $\vec y^{(t)} \leftarrow \vec y^{(t-1)}$\;
      $y_\ell^{(t)} \leftarrow \ApproxScalingFactor(\mat A_{\bullet \ell}, c_\ell, \vec x^{(t-1)}, \delta, b_1, b_2, \eta, \mu)$\;
      \label{algline:RSFP col estimate}
      \smallskip
      $\vec x^{(t)} \leftarrow \vec x^{(t-1)}$\;
    }
  }

  \smallskip

  Pick $\tau \in [T]$ uniformly at random\;
  \Return{$(\vec{x}^{(\tau-1)}, \vec{y}^{(\tau-1)})$}\;
\end{algorithm}
\begin{rem}
Note that \cref{alg:RSFP} returns $\vec x^{(\tau-1)}$ and $\vec y^{(\tau-1)}$ and could therefore stop after $\tau$ iterations.
However, for the sake of simplifying the analysis, we always continue for $T$ iterations.
\end{rem}
We use the same potential as in the analysis of the full Sinkhorn algorithm, namely $f\colon \R^n \times \R^n \to \R$ given by
\begin{align*}
  f(\vec x, \vec y) = \sum_{i,j=1}^n A_{ij} e^{x_i + y_j} - \sum_{i=1}^n r_i x_i - \sum_{j=1}^n c_j y_j.
\end{align*}
Recall that the potential gap $f(\vec 0, \vec 0) - f^*$ is upper bounded in \cref{lem:potential gap}.

We now show that on expectation we make progress in each iteration.
Recall that $\rho\colon \R_+ \times \R_+ \to [0,\infty]$ is given by $\rho(a \Vert b) = b - a + a \ln \frac{a}{b}$ (with the usual conventions).
\Cref{lem:progress no failure} shows that we can bound the progress of a row or column update in terms of this quantity if we ignore the effects of the truncation.

\begin{lem}
\label{lem:progress no failure}
 Let $\mat A \in \R_+^{n \times n}$ and $\vec x, \vec y \in \R^n$, let $\delta \in [0, 1]$ and $\ell \in [n]$, and let $\vec {\hat x} \in \R^n$ be a vector such that $\abs{\hat x_\ell -\ln(r_\ell / \sum_{j=1}^n A_{\ell j} e^{y_j})}\leq\delta$, and for every $k \neq \ell$, $\hat x_k = x_k$.
  Then
  \[
    f(\vec x, \vec y) - f(\vec {\hat x}, \vec y) \geq \rho\bigl(r_\ell \big\Vert r_\ell(\A x y)\bigr) - 2 \delta r_\ell.
  \]
  A similar statement holds for an update of $y_\ell$ (using $c_\ell$ rather than $r_\ell$).
\end{lem}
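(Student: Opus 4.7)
The plan is to mirror the proof of \cref{lem:progress no failure full} verbatim, except that since $\hat x_k = x_k$ for every $k \neq \ell$, only the $\ell$-th row contributes to the potential difference, so all sums over rows collapse to a single term. Concretely, I would first expand
\[
f(\vec x, \vec y) - f(\vec{\hat x}, \vec y) = \sum_{i,j=1}^n A_{ij} e^{x_i+y_j} - \sum_{i,j=1}^n A_{ij} e^{\hat x_i+y_j} - \sum_{i=1}^n r_i (x_i - \hat x_i)
\]
and use that $\hat x_k = x_k$ for $k \neq \ell$ to reduce the right-hand side to
\[
r_\ell(\A x y) - e^{\hat x_\ell}\sum_{j=1}^n A_{\ell j} e^{y_j} - r_\ell (x_\ell - \hat x_\ell).
\]

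Next, I would introduce the error variable $z_\ell = \hat x_\ell - \ln(r_\ell / \sum_{j=1}^n A_{\ell j} e^{y_j})$, which by hypothesis satisfies $|z_\ell| \leq \delta \leq 1$. Exactly as in the proof of \cref{lem:progress no failure full}, this gives two identities I would need: $e^{\hat x_\ell}\sum_{j=1}^n A_{\ell j} e^{y_j} = r_\ell e^{z_\ell}$, and
\[
x_\ell - \hat x_\ell = -\ln\!\left(\tfrac{r_\ell}{r_\ell(\A x y)}\right) - z_\ell,
\]
obtained by writing $x_\ell = \ln r_\ell(\A x y) - \ln(\sum_j A_{\ell j} e^{y_j})$ and subtracting. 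Substituting both identities turns the reduced expression into
\[
r_\ell(\A x y) - r_\ell\bigl(e^{z_\ell} - z_\ell\bigr) + r_\ell \ln\!\left(\tfrac{r_\ell}{r_\ell(\A x y)}\right).
\]

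Finally, I would apply the elementary inequality $e^z - z \leq 1 + 2|z|$ valid on $[-1,1]$ (the same one used in the full-update proof), which yields
\[
r_\ell(\A x y) - r_\ell\bigl(e^{z_\ell}-z_\ell\bigr) + r_\ell\ln\!\left(\tfrac{r_\ell}{r_\ell(\A x y)}\right) \geq \rho\bigl(r_\ell \big\Vert r_\ell(\A x y)\bigr) - 2 r_\ell |z_\ell|,
\]
and conclude using $|z_\ell|\leq\delta$. The analogous statement for a $y_\ell$ update is obtained by swapping the roles of rows and columns.

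There is no real obstacle here: the proof is essentially a coordinate-restricted version of \cref{lem:progress no failure full}. The only thing to be careful about is that the error term comes out as $2\delta r_\ell$ rather than $2\delta$, because we no longer sum over all rows and hence no longer invoke $\|\vec r\|_1 = 1$; this stronger per-coordinate bound is exactly what will be needed later for the expectation argument over the uniformly random choice of $\ell$ in \cref{alg:RSFP}.
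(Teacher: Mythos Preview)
Your proposal is correct and is exactly the approach the paper has in mind: the paper itself states that ``the proof is completely analogous to \cref{lem:progress no failure full}, and left as an exercise for the reader,'' and your write-up is precisely that coordinate-restricted version, including the correct observation that the error term is $2\delta r_\ell$ rather than $2\delta$.
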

The proof is completely analogous to~\cref{lem:progress no failure full}, and left as an exercise for the reader.
We can use \cref{lem:progress no failure} to lower bound the expected progress in each iteration in terms of the \emph{relative entropy} or \emph{Kullback-Leibler divergence} $D\colon \R_+^n \times \R_+^n \to [0,\infty]$, which we recall is defined as $D(\vec a \Vert \vec b) = \sum_{i=1}^n \rho(a_i \Vert b_i)$.
The relative entropy is zero if and only if $\vec a = \vec b$.
To formally analyze the expected progress it will be useful to define the following events.

\begin{definition}
\label{def: events Gt and St}
For $t = 1,\ldots, T$, we define the following events:
\begin{itemize}
    \item Let $G_t$ denote the event that $D(\vec r \Vert \vec r(\mat A^{(t-1)})) \leq \eps$ and $D(\vec c \Vert \vec c(\mat A^{(t-1)})) \leq \eps$.
    \item Let $S_t$ denote the event that the call to {\ApproxScalingFactor} on line~\ref{algline:RSFP row estimate} succeeds (if $\beta=0$ in this iteration) or that the call to {\ApproxScalingFactor} on line~\ref{algline:RSFP col estimate} succeeds (if $\beta=1$ in this iteration).
    \item Define $S$ to be the intersection of the events $S_t$, i.e., $S =\cap_{t=1}^{T} S_t$.
\end{itemize}
\end{definition}
\noindent Let us also abbreviate $f_t = f(\vec x^{(t)}, \vec y^{(t)})$.
\begin{lem}
\label{lem:expected progress when S holds}
Assume $\Pr[S_t] \geq 1 - \eta$ for $t\in[T]$.
Then, for any $t\in[T]$, we have
\begin{align*}
  \Exp[(f_{t-1} - f_{t}) \indic_S] \geq \frac{\eps}{2n} \Pr[\overline{G_t}] - \eps \eta T - \frac{2 \delta}{n}.
\end{align*}
\end{lem}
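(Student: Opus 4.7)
The idea is to establish a pointwise lower bound on $P_t := f_{t-1}-f_t$ valid on $S$, then average this bound over the uniform random choices $(\beta^{(t)}, \ell^{(t)})$, and finally pass to unconditional probabilities using the tower property together with the union bound $\Pr[\overline S] \leq \sum_s \Pr[\overline{S_s}] \leq T\eta$.

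Concretely, on $S \subseteq S_t$ the call to \ApproxScalingFactor in iteration~$t$ returns a $\delta$-additive approximation of the exact coordinate update, so \cref{lem:progress no failure} gives pointwise
\[
P_t \geq \rho\bigl(r_{\ell^{(t)}} \big\Vert r_{\ell^{(t)}}(\mat A(\vec x^{(t-1)}, \vec y^{(t-1)}))\bigr) - 2\delta\, r_{\ell^{(t)}}
\]
when $\beta^{(t)} = 0$, and the analogous column bound when $\beta^{(t)} = 1$. Averaging over $(\beta^{(t)}, \ell^{(t)})$ uniform in $\{0,1\} \times [n]$, using $\norm{\vec r}_1 = \norm{\vec c}_1 = 1$, and invoking the definition of $G_t$ (so that $\overline{G_t}$ forces the sum of the two KL-divergences to exceed $\eps$), one obtains
\[
\Exp[P_t \indic_S \mid \mathcal F_{t-1}] \geq \Pr[S \mid \mathcal F_{t-1}]\cdot\left(\tfrac{\eps}{2n}\indic_{\overline{G_t}} - \tfrac{2\delta}{n}\right),
\]
where $\mathcal F_{t-1}$ denotes the history up to the start of iteration~$t$. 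Taking unconditional expectations: the positive part contributes $\tfrac{\eps}{2n}\Pr[S \cap \overline{G_t}]$ by the tower property (as $\indic_{\overline{G_t}}$ is $\mathcal F_{t-1}$-measurable), and the negative part is bounded in absolute value by $\tfrac{2\delta}{n}$ since $\Pr[S \mid \mathcal F_{t-1}] \leq 1$. The union bound $\Pr[\overline S]\leq T\eta$ yields $\Pr[S \cap \overline{G_t}] \geq \Pr[\overline{G_t}] - T\eta$, and absorbing the resulting $\tfrac{\eps T\eta}{2n}$ into $\eps \eta T$ gives the claim.

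The main technical subtlety is the averaging step: conditioning on $S = \bigcap_{s=1}^T S_s$ involves iterations after~$t$, whose subroutine-success probabilities depend on the iterate at time~$t$ and hence on the realized choice $(\beta^{(t)}, \ell^{(t)})$. Since each subroutine fails with probability at most~$\eta$ regardless of input, the conditional law of $(\beta^{(t)}, \ell^{(t)})$ given $S$ and $\mathcal F_{t-1}$ deviates from uniform by a factor of only $1\pm O(T\eta)$, and the associated distortion in the averaged bound is of order $\eps T\eta$, which is already accounted for by the $\eps\eta T$ slack in the stated inequality.
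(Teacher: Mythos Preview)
Your argument has a genuine gap in the final paragraph. You correctly observe that the conditional law of $(\beta^{(t)},\ell^{(t)})$ given $S$ and $\mathcal F_{t-1}$ deviates from uniform by a multiplicative factor $1\pm O(T\eta)$. However, such a deviation produces a distortion in the averaged bound of order $T\eta$ times the \emph{size of the integrand}, not $T\eta$ times~$\eps$. The integrand here is $\rho(M_t\Vert m_t)-2\delta M_t$, whose positive part averages to $\tfrac1{2n}\bigl(D(\vec r\Vert\vec r^{(t-1)})+D(\vec c\Vert\vec c^{(t-1)})\bigr)$. On $\overline{G_t}$ this quantity is only known to be \emph{at least} $\eps/(2n)$; there is no upper bound of order~$\eps$ (indeed the relative entropies can be as large as $\Theta(t\ln(1/\mu))$ even on $S^{\leq t-1}$). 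Consequently the distortion term can swamp the $\eps\eta T$ slack in the stated inequality, and the argument does not close.

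The paper avoids this difficulty by a short but essential trick: instead of working with $\rho(M_t\Vert m_t)-2\delta M_t$, it works with the \emph{truncated} quantity
\[
  R_t \;=\; \min\{\rho(M_t\Vert m_t),\,\eps\} \;-\; 2\delta M_t,
\]
which is still a pointwise lower bound for $f_{t-1}-f_t$ on~$S$ but now satisfies $R_t\leq\eps$ everywhere. This allows the clean decomposition
\[
  \Exp[R_t\indic_S] \;=\; \Exp[R_t\indic_{\overline{G_t}}] + \Exp[R_t\indic_{G_t}] - \Exp[R_t\indic_{\overline S}],
\]
in which the troublesome last term is bounded directly by $\eps\Pr[\overline S]\leq\eps\eta T$, with no conditioning on future iterations at all. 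The truncation costs nothing in the main term: on $\overline{G_t}$ one uses $\sum_\ell \min\{a_\ell,\eps\}\geq\min\{\sum_\ell a_\ell,\eps\}=\eps$ to recover exactly the $\eps/(2n)$ lower bound. Inserting this truncation into your outline would repair the proof.
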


\begin{proof}
In this proof we use that for independent random variables $X$, $Y$ and any function $h(x,y)$ we have\
\begin{align}
\label{eq:indep}
  \Exp[h(X,Y)] = \sum_{y\in Y} \Pr[Y=y] \, \Exp[h(X,y)].
\end{align}

  For $t\in[T]$, let $\beta^{(t)}$ be the random choice of row versus column scaling made on \cref{algline:RSFP row col choice}, and let $\ell^{(t)} \in [n]$ be the random index chosen on \cref{algline:RSFP index choice}.
  Then, \cref{lem:progress no failure} shows that if $S$ holds then we have
  \begin{align}\label{eq:f vs rho}
    f_{t-1} - f_{t} \geq
    \rho(M_t \Vert m_t) - 2\delta M_t,
    \quad\text{ where }\quad
    (M_t, m_t) =
    \begin{cases}
      (r_{\ell^{(t)}}, r_{\ell^{(t)}}(\mat A^{(t-1)})) & \text{ if $\beta^{(t)} = 0$}, \\
      (c_{\ell^{(t)}}, c_{\ell^{(t)}}(\mat A^{(t-1)})) & \text{ if $\beta^{(t)} = 1$}.
    \end{cases}
  \end{align}
  We define $R_t = \min\{\rho(M_t \Vert m_t),\eps\} - 2 \delta M_t$. Then the above implies
  \begin{align*}
    \Exp[(f_{t-1} - f_t) \indic_S] \geq \Exp[R_t \indic_S].
  \end{align*}
  We may expand this lower bound as
  \begin{equation}
    \label{eq:random sinkhorn exp decomposition}
    \Exp[R_t \indic_S] = \Exp[R_t] - \Exp[R_t \indic_{\overline S}] =  \Exp[R_t \indic_{\overline{G_t}}] + \Exp[R_t \indic_{G_t}] - \Exp[R_t \indic_{\overline S}].
  \end{equation}
  To lower bound the first term, recall that~$(\beta^{(t)}, \ell^{(t)})$ are drawn independently from $\mat A^{(t-1)}$, while the event $G_t$ only depends on $\mat A^{(t-1)}$ (and hence is independent from $(\beta^{(t)}, \ell^{(t)})$).
  Therefore, using \cref{eq:indep} we can lower bound
  \begin{align*}
    \Exp[ R_t \indic_{\overline{G_t}} ]
  &= \Exp[ (\min\{\rho(M_t \Vert m_t), \eps\} - 2 \delta M_t) \indic_{\overline{G_t}} ] \\
  &= \frac1{2n} \Exp\left[ \sum_{\ell=1}^n \left( \min\{\rho(r_\ell \Vert r_\ell(\mat A^{(t-1)})), \eps\} + \min\{\rho(c_\ell \Vert c_\ell(\mat A^{(t-1)})), \eps\} - 2 \delta (r_\ell + c_\ell) \right) \indic_{\overline{G_t}} \right] \\
  &= \frac1{2n} \Exp\left[ \sum_{\ell=1}^n \left( \min\{\rho(r_\ell \Vert r_\ell(\mat A^{(t-1)})), \eps\} + \min\{\rho(c_\ell \Vert c_\ell(\mat A^{(t-1)})), \eps\} \right) \indic_{\overline{G_t}} \right] - \frac1{2n} \Exp[4 \delta \indic_{\overline{G_t}}] \\
  &\geq \frac1{2n} \Exp\left[ \min\left\{ \sum_{\ell=1}^n \rho(r_\ell \Vert r_\ell(\mat A^{(t-1)})) + \rho(c_\ell \Vert c_\ell(\mat A^{(t-1)})), \, \eps \right\} \indic_{\overline{G_t}} \right] - \frac{2 \delta}{n} \Pr[\overline{G_t}] \\
  &= \frac1{2n} \Exp\left[ \min\left\{ D(\vec r \Vert \vec r(\mat A^{(t-1)})) + D(\vec c \Vert \vec c(\mat A^{(t-1)})), \, \eps \right\} \indic_{\overline{G_t}} \right]  - \frac{2 \delta}{n} \Pr[\overline{G_t}] \\
  &= \frac1{2n} \Exp\left[ \eps \indic_{\overline{G_t}} \right] - \frac{2 \delta}{n} \Pr[\overline{G_t}]
  = \left(\frac\eps{2n} - \frac{2 \delta}{n}\right) \Pr[\overline{G_t}]
  \end{align*}
  where we first used the inequality $\sum_{\ell=1}^n \min\{a_i,b\} \geq \min\{\sum_{\ell=1}^n a_i, b\}$, which holds for any real numbers $a_1,\dots,a_n\in\R$ and $b \geq 0$, and we then used that $D(\vec r \Vert \vec r(\mat A^{(t-1)})) + D(\vec c \Vert \vec c(\mat A^{(t-1)})) \geq \eps$ whenever $G_t$ does not hold.

  To lower bound $\Exp[R_t \indic_{G_t}]$, note that we also have the bound $R_t \geq - 2 \delta M_t$ as $\min\{\rho(M_t \Vert m_t), \eps\}$ is non-negative, so again using independence of $(\beta^{(t)}, \ell^{(t)})$ from $G_t$, we obtain
  \begin{align*}
    \Exp[R_t \indic_{G_t}] \geq -\frac{2 \delta}{n} \Pr[G_t]=\frac{2 \delta}{n}\Pr[\overline{G_t}]-\frac{2 \delta}{n}.
  \end{align*}
  Lastly, to upper bound $\Exp[R_t \indic_{\overline{S}}]$, note that $R_t \leq \eps$, so
  \begin{align*}
    \Exp[R_t \indic_{\overline{S}}] \leq \eps \Pr[\overline{S}] \leq \eps \eta T
  \end{align*}
  where the last step follows from the union bound.
  Combining the bounds in~\cref{eq:random sinkhorn exp decomposition} then yields the desired bound.
\end{proof}

\Cref{lem:scaling norm bound} shows how large we need to take $b_1$ to ensure that all components of $(\vec x^{(t)}, \vec y^{(t)})$ are in the interval $[-2^{b_1}, 2^{b_1}]$.\footnote{Technically, \cref{lem:scaling norm bound} is about updating all entries of either $\vec x$ or $\vec y$ in each iteration, however its proof shows that the same bound also applies if we update only a single coordinate per iteration.}
Finally, we can combine \cref{lem:potential gap,lem:expected progress when S holds} to show that \cref{alg:RSFP} returns an $\eps$-relative-entropy-scaling (with probability $\geq 2/3$).
This is also an $O(\sqrt{\eps})$-scaling in $\ell_1$-norm by Pinsker's inequality.

\begin{thm}
\label{thm:random sinkhorn guarantee}
Let $\mat A \in [0, 1]^{n \times n}$ be a matrix whose non-zero entries are at least $\mu > 0$ and $\norm{\mat A}_1 \leq 1$.
Assume that $\mat A$ is asymptotically scalable to $(\vec r, \vec c)$.
Let $p\in(0,1]$ and $\eps \in (0, 1]$.
Choose
\[
T =  \left\lceil\frac{6 n}{\eps p} \ln(\tfrac1\mu)\right\rceil,
\]
$\delta = \frac{\eps p}{12}$,
$\eta = \frac p{6 n T}$,
$b_2 = \lceil \log_2(1/\delta) \rceil$,
and $b_1 = \lceil \log_2(T) + \log_2(\ln(\frac1{\mu}) + 1 + \sigma) \rceil$, where $\sigma = \max(\abs{\ln r_{\min}}, \abs{\ln c_{\min}})$.
Then, \cref{alg:RSFP} with these parameters returns a pair $(\vec x, \vec y)$ such that
$D(\vec r \Vert \vec r(\mat A(\vec x, \vec y))) \leq \eps$ and $D(\vec c \Vert \vec c(\mat A(\vec x, \vec y))) \leq \eps$
with probability at least $1 - p$.
The total cost, measured in the number of calls to {\ApproxScalingFactor}, is $T \in \widetilde O(n / (\eps p))$, leading to a total classical time complexity of $\widetilde O(m / (\eps p))$ on expectation, or a total quantum time complexity of $\widetilde O(\sqrt{mn} / (\eps^2 p^2))$ on expectation.
\end{thm}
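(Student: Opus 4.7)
The plan is to combine the potential telescoping idea already used for \cref{alg:FSFP testing} with the per-iteration \emph{expected} progress bound provided by \cref{lem:expected progress when S holds}, then average over the uniformly random stopping iteration~$\tau$. Before starting, I would verify the parameter choices satisfy the requirements of \ApproxScalingFactor: the proof of \cref{lem:scaling norm bound} (as noted in the footnote) also applies when only one coordinate per iteration is updated, so conditional on all oracle calls succeeding up to iteration~$t$, the vector $(\vec x^{(t)}, \vec y^{(t)})$ is representable in the chosen $(b_1,b_2)$ fixed-point format and the $b_1$-hypothesis of \ApproxScalingFactor is met at iteration $t+1$. An induction on~$t$ then gives $\Pr[S_t] \geq 1-\eta$, licensing the use of \cref{lem:expected progress when S holds}.

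Next I would telescope. Deterministically $f_T \geq \inf_{x,y} f(x,y)$, so by \cref{lem:potential gap} we have $f_0 - f_T \leq \ln(1/\mu)$, and hence $\sum_{t=1}^T \Exp[(f_{t-1}-f_t)\indic_S] \leq \ln(1/\mu)$. Applying \cref{lem:expected progress when S holds} termwise and rearranging gives
\[
  \frac{1}{T}\sum_{t=1}^T \Pr[\overline{G_t}] \;\leq\; \frac{2n \ln(1/\mu)}{\eps T} + 2n\eta T + \frac{4\delta}{\eps}.
\]
Since $\tau$ is drawn uniformly from $[T]$ independently of the iterates, the left-hand side equals $\Pr[\overline{G_\tau}]$, which is exactly the probability that the returned pair $(\vec x^{(\tau-1)},\vec y^{(\tau-1)})$ fails to be an $\eps$-relative-entropy-scaling. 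Plugging in the specified parameters $T = \lceil 6n\ln(1/\mu)/(\eps p)\rceil$, $\eta = p/(6nT)$ and $\delta = \eps p/12$ makes each of the three summands equal to $p/3$, giving $\Pr[\overline{G_\tau}] \leq p$.

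For the cost, \cref{alg:RSFP} issues exactly $T = \widetilde{O}(n/(\eps p))$ calls to \ApproxScalingFactor. Each call, by \cref{thm:quantum ApproxScalingFactor guarantee} with $\delta = \Theta(\eps p)$, costs $\widetilde O(\sqrt{s}/(\eps p))$ quantum time where $s \in \{s^r_\ell, s^c_\ell\}$ for a uniformly random $\ell \in [n]$. Jensen's inequality gives $\Exp[\sqrt s] \leq \sqrt{m/n}$, so the expected per-iteration quantum cost is $\widetilde O(\sqrt{m/n}/(\eps p))$; multiplying by $T$ yields $\widetilde O(\sqrt{mn}/(\eps^2 p^2))$. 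The classical bound follows identically using \cref{thm:classical ApproxScalingFactor guarantee} and $\Exp[s] = m/n$.

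The main obstacle I anticipate is bookkeeping the conditioning: the $b_1$-hypothesis of \ApproxScalingFactor is only guaranteed when all prior oracle calls have succeeded, so $S_t$ is really a conditional success event and one must be careful to preserve the inequality $\Pr[S_t] \geq 1-\eta$ (or its union-bound consequence $\Pr[\overline S] \leq \eta T$) throughout. Everything else — the telescoping, the averaging over $\tau$, and the parameter algebra — is mechanical once the conditioning is handled cleanly and \cref{lem:expected progress when S holds} is invoked.
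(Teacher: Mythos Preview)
Your proposal is correct and follows essentially the same route as the paper's proof: bound $\Exp[(f_0-f_T)\indic_S]$ by $\ln(1/\mu)$ via \cref{lem:potential gap}, telescope and apply \cref{lem:expected progress when S holds} termwise, rearrange to bound $\frac1T\sum_t\Pr[\overline{G_t}]=\Pr[\overline{G_\tau}]$ by the same three summands, and finish with the identical parameter algebra and Jensen-based cost computation. Your treatment is in fact slightly more careful than the paper's about verifying the $b_1$-hypothesis of \ApproxScalingFactor inductively (so that $\Pr[S_t]\geq 1-\eta$ is indeed justified), but this is a cosmetic difference rather than a divergence in approach.
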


\begin{proof}
  By \cref{lem:potential gap}, we have
  \begin{align}\label{eq:upper}
    \Exp[(f_0 - f_T) \indic_S]
  \leq \Exp[\ln(\tfrac1\mu) \indic_S]
  = \ln(\tfrac1\mu) \Pr[S]
  \leq \ln(\tfrac1\mu).
  \end{align}
  using $\mu \leq 1$.
  We now lower bound the left-hand side by a telescoping sum, using \cref{lem:expected progress when S holds},
  \begin{align*}
    \Exp[(f_0 - f_T) \indic_S]
  &= \sum_{t=1}^T \Exp[(f_{t-1} - f_t) \indic_S]  \\
  &\geq \sum_{t=1}^T \left( \frac{\eps}{2n} \Pr[\overline{G_t}] - \eps \eta T - \frac{2 \delta}{n} \right) \\
  &= T \left( \frac{\eps}{2n} - \eps \eta T - \frac{2 \delta}{n} \right) - \frac{\eps}{2n} \sum_{t=1}^T \Pr[G_t].
  \end{align*}
  Together with \cref{eq:upper}, we obtain the bound
  \begin{align}\label{eq:succbound}
     \Pr[G_\tau]
  &= \frac1T \sum_{t=1}^T \Pr[G_t]
  \geq 1 - 2n \eta T - \frac{4\delta}{\eps} - \frac{\frac{2n}{\eps} \ln(\tfrac1\mu)}{T} \\
  \nonumber
 &= 1- \left(2n \cdot \frac{p}{6n (\frac{6 n}{\eps p} \ln(\tfrac1\mu))} \cdot  \frac{6 n}{\eps p} \ln(\tfrac1\mu)\right) - \left(\frac{4 \eps p}{12 \eps}\right) - \left( \frac{\frac{2n}{\eps} \ln(\tfrac1\mu)}{ \frac{6 n}{\eps p} \ln(\tfrac1\mu)}\right) = 1 - p.
  \end{align}
  Since we return $(\vec{x}^{(\tau-1)},\vec{y}^{(\tau-1)})$ for $\tau\in[T]$ independently and uniformly at random, $\Pr[G_\tau]$ is the success probability of our algorithm and the first equality in \cref{eq:succbound} holds.
  This shows the output of~\cref{alg:RSFP} satisfies the guarantees of the theorem.

  Finally, the time complexity of~\cref{alg:RSFP} follows from~\cref{thm:classical ApproxScalingFactor guarantee},~\cref{thm:quantum ApproxScalingFactor guarantee} and an application of Jensen's inequality (using concavity of the square root function). Indeed, the complexity of applying \ApproxScalingFactor to a row or column with $s$ possibly non-zero entries is $\widetilde O(s)$ classically (hiding also $\polylog(1/\delta)$ factors) and $\widetilde O(\sqrt{s}/\delta)$ quantumly, and hence the expected time complexity per application of \ApproxScalingFactor is $\widetilde O(m/n)$ classically (hiding also $\polylog(1/\delta)$ factors) and $\widetilde O(\sqrt{m/n}/\delta)$ quantumly. The total expected time complexity then follows from linearity of expectation and the bound on the number of iterations $T$.
\end{proof}

Note that the number of iterations~$T$ (and hence the runtime) of this algorithm scales inverse-polynomially with the failure probability~$p$. This is not optimal. Instead, to achieve a probability of success $1-p$, we can use  \cref{alg:RSFP} $O(\ln(1/p))$ many times with a constant success probability, say $2/3$, and use \TestScaling to test if we have obtained a good scaling after each run. Note that the resulting (quantum) time complexity is not better than that of the full Sinkhorn algorithm (\cref{sec: full testing}).

\section{Randomized Osborne algorithm for matrix balancing}\label{sec:balancing}
In this section we present an algorithm for the matrix-balancing problem (\cref{prob: balancing}).
The algorithm that we analyze is a quantum implementation of a randomized variant of Osborne's algorithm~\cite{10.1145/321043.321048} that was very recently analyzed by Altschuler and Parrilo~\cite{altschuler2020random}.
The analysis in the quantum setting is more complicated than in the classical case.
The basic argument follows similar lines to the one given in~\cite{altschuler2020random} (a potential argument like for matrix scaling, but with the relative entropy replaced by the Hellinger distance).
However, in the classical setting, one can increase the precision of individual updates at a very small cost, and one does not have to deal with the possibility of making backwards progress.
In the quantum setting, in contrast, we do not have this luxury: we cannot test whether the matrix is $\eps$-$\ell_1$-balanced each iteration, and the relatively high imprecision of the updates can cause subsequent iterations to destroy this property.
This situation is similar to the one discussed in \cref{sec:overview random} for the randomized Sinkhorn algorithm, and we adapt the ideas developed in that section in the analysis here.

Recall that an entrywise non-negative matrix $\mat A \in \R_+^{n \times n}$ is $\eps$-$\ell_1$-balanceable if there exists $\vec x \in \R^n$ such that
\[
    \frac{\norm{\vec r(\mat A(\vec x)) - \vec c(\mat A(\vec x))}_1}{\norm{\mat A(\vec x)}_1} \leq \eps,
\]
where $\mat A(\vec x) =\mat A(\vec x, - \vec x)$ is the matrix whose $(i,j)$-th entry is $A_{ij} e^{x_i - x_j}$.
For convenience, we assume that the diagonal entries of $\mat A$ are zero; and we also assume that every row and every column of $\mat A$ contains at least one non-zero entry.

We first describe Osborne's algorithm for finding an $\eps$-$\ell_1$-balancing. Similarly to Sinkhorn's algorithm for matrix scaling, the idea is to fix the requirement of being $\eps$-$\ell_1$-balanced for individual coordinates, one at a time.
More precisely, given an index $\ell \in [n]$, the update is given by $\vec x' = \vec x + \Step_\ell \vec e_\ell$, where $\Step_\ell$ is chosen such that
\[
    r_\ell(\mat A(\vec x')) = c_\ell(\mat A(\vec x')).
\]
Expanding the above equation and using $A_{\ell \ell} = 0$ yields
\[
    e^{\Step_\ell} \cdot r_\ell(\mat A(\vec x)) = e^{- \Step_\ell} \cdot c_\ell(\mat A(\vec x)).
\]
Since we assume every row and column contains at least one non-zero entry, the above equation has a unique solution, given by
\begin{equation}
    \label{eq:osborne update}
    \Step_\ell = \ln \left(\sqrt{\frac{c_\ell(\mat A(\vec x))}{r_\ell(\mat A(\vec x))}}\right).
\end{equation}

Note that the updates of multiple coordinates \emph{cannot} be done simultaneously, since each coordinate can potentially affect all row and column marginals.
This is in contrast with the Sinkhorn algorithm for matrix scaling, where all rows or all columns can be updated at the same time.
Therefore a choice must be made as to which index to update at each iteration.
Altschuler and Parrilo~\cite{altschuler2020random} analyze several such choices, including greedy, random and cyclic variants.
The greedy and random variants have better guaranteed performance than the cyclic variant.
However, to implement the greedy version, one has to maintain an auxiliary data structure which contains all current row and column marginals, which can be updated after each iteration with a cost of order $O(n)$.
Therefore, even though we can accelerate each iteration with quantum approximate counting to become sublinear in $n$, the greedy iterations would still incur at least a cost of order $n$ for the updates. So instead, we focus on a randomized variant of Osborne's algorithm.

\begin{algorithm}[ht]
  \caption{Random Osborne with finite precision and failure probability}\label{alg:Random Osborne}
  \Input{Oracle access to~$\mat A \in [0,1]^{n \times n}$ and non-zero entries at least $\mu > 0$, iteration count~$T \geq 0$, bit counts~$b_1, b_2 \geq 0$, update precision~$\delta \in (0, 1)$ and subroutine failure probability~$\eta \in [0,1]$}
  \Output{Vector $\vec x \in \R^n$ with entries encoded in~$(b_1, b_2)$ fixed-point format}
  \Guarantee{For $p \in (0,1]$ and $\eps \in (0, 1]$ and parameters as in~\cref{thm:random osborne guarantee}, with probability at least $1 - p$, $\mat A(\vec x)$ is $\eps$-$\ell_1$-balanced}

  $\vec x^{(0)} \leftarrow \vec 0$\tcp*{entries in~$(b_1, b_2)$ fixed-point format}

  \smallskip

  \For{$t\leftarrow 1,2,\dotsc,T$}{
    Pick $\ell \in [n]$ uniformly at random\;
    $\vec x^{(t)} \leftarrow \vec x^{(t-1)}$\;
    $x_\ell^{(t)} \leftarrow \frac{1}{2}$ \ApproxScalingFactor$(\mat A_{\ell \bullet}, 1, -\vec x^{(t-1)}, \delta, b_1, b_2, \eta / 2, \mu) -$ \\ \hspace{1.05cm} $\frac{1}{2}$ \ApproxScalingFactor$(\mat A_{\bullet \ell}, 1, \vec x^{(t-1)}, \delta, b_1, b_2, \eta / 2, \mu)$\;
  }

  Pick $\tau \in [T]$ uniformly at random\;
  \Return{$(\vec{x}^{(\tau)}, \vec{y}^{(\tau)})$}\;
\end{algorithm}

Our randomized version of Osborne's algorithm is given in \cref{alg:Random Osborne}.
It allows for an additive error~$\delta$ in computing the update as compared to \cref{eq:osborne update}.
To see that for $\delta=0$ the update in \cref{alg:Random Osborne} is exactly the same as in \cref{eq:osborne update}, one can rewrite
\begin{align*}
    x_\ell + \Step_\ell & = x_\ell + \ln\left(\sqrt{\frac{c_\ell(\mat A(\vec x))}{r_\ell(\mat A(\vec x))}}\right) \\
    & = x_\ell + \ln\left(\sqrt{\frac{\sum_{i=1}^n A_{i \ell} e^{x_i - x_\ell}}{\sum_{j=1}^n A_{\ell j} e^{x_\ell - x_j}}}\right) \\
    & = \frac{1}{2} \left(\ln\left(\frac1{\sum_{j=1}^n A_{\ell j} e^{- x_j}}\right) - \ln\left(\frac1{\sum_{i=1}^n A_{i \ell} e^{x_i}}\right) \right).
\end{align*}
In each iteration of \cref{alg:Random Osborne}, the two calls to the \ApproxScalingFactor subroutine compute the two logarithms to additive precision~$\delta$.
Hence each iteration computes an approximation to the ideal Osborne update with additive precision~$\delta$ (assuming no errors).

To analyze \cref{alg:Random Osborne}, we use the (convex) potential $f\colon\R^n \to \R$ given by
\[ f(\vec x) = \sum_{i,j=1}^n A_{ij} e^{x_i - x_j} = \norm{\mat A(\vec x)}_1, \]
in analogy with the potential for the analysis of matrix scaling.
Let $f^*$ be the infimum of~$f(\vec x)$.
We first prove a lower bound on the potential.
\begin{lem} \label{lem: balancing potential}
  If $\mat A \in \R_+^{n \times n}$ is asymptotically balanceable and its non-zero entries are at least $\mu > 0$, then $f^* \geq \mu$.
\end{lem}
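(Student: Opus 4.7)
The plan is to prove the lower bound $f^* \geq \mu$ by exhibiting, for every $\vec x \in \R^n$, a subset of the (non-negative) terms of $f(\vec x) = \sum_{i,j} A_{ij} e^{x_i - x_j}$ whose sum is at least $\mu$. The natural candidate is the set of terms corresponding to the edges of a directed cycle in the support graph of $\mat A$, because along any cycle the exponents $x_i - x_j$ telescope to zero, making an AM--GM estimate extremely clean.

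First I would verify that the support graph $G$ of $\mat A$ (with a directed edge from $i$ to $j$ whenever $A_{ij} > 0$) contains at least one directed cycle. This follows from the standing assumption that every row and every column of $\mat A$ has a non-zero entry, which means every vertex of $G$ has positive out-degree (and in-degree). Starting from any vertex and following outgoing edges, the pigeonhole principle guarantees that some vertex is revisited, which produces a directed cycle $i_1 \to i_2 \to \dotsb \to i_k \to i_1$. Because we also assume $A_{ii} = 0$, the cycle has length $k \geq 2$. (Asymptotic balanceability is in fact not needed for this step; it would only tighten the support-graph structure, which we do not require.)

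Next I would apply the AM--GM inequality to the $k$ terms of $f(\vec x)$ indexed by the edges of the cycle. Writing $i_{k+1} = i_1$, each $A_{i_\ell i_{\ell+1}} \geq \mu$, and the exponent $\sum_{\ell=1}^k (x_{i_\ell} - x_{i_{\ell+1}})$ vanishes by telescoping, so
\[
\sum_{\ell=1}^k A_{i_\ell i_{\ell+1}} e^{x_{i_\ell} - x_{i_{\ell+1}}}
\;\geq\; k \left(\prod_{\ell=1}^k A_{i_\ell i_{\ell+1}}\right)^{\!1/k} \left(\prod_{\ell=1}^k e^{x_{i_\ell} - x_{i_{\ell+1}}}\right)^{\!1/k}
\;\geq\; k\,\mu \;\geq\; 2\mu.
\]
Since every term of $f(\vec x)$ is non-negative, $f(\vec x)$ is bounded below by this cycle sum, uniformly in $\vec x$. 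Taking the infimum yields $f^* \geq 2\mu \geq \mu$ as required.

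The argument is short and the only mildly delicate point is the existence of the cycle, which I would state explicitly rather than appeal to a black-box strong-connectivity result. Note in particular that the same proof gives the slightly stronger bound $f^* \geq 2\mu$, but $\mu$ is all that is needed in the downstream potential analysis.
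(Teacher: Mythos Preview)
Your proof is correct, and it takes a genuinely different route from the paper's. The paper uses the asymptotic-balanceability hypothesis to argue that $\vec 0$ lies in the convex hull of $\{\vec e_i - \vec e_j : A_{ij} > 0\}$, and then invokes Farkas' lemma to conclude that for each $\vec x$ there is some pair $(k,\ell)$ with $A_{k\ell} > 0$ and $x_k - x_\ell \geq 0$, so that the single term $A_{k\ell} e^{x_k - x_\ell}$ already dominates $\mu$. In contrast, you locate a fixed directed cycle in the support graph (using only that every vertex has positive out-degree), and then AM--GM plus telescoping handles all $\vec x$ at once. Your argument is more elementary (no Farkas), does not actually use asymptotic balanceability, and yields the mildly stronger constant $2\mu$. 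The paper's approach, on the other hand, ties the lower bound directly to the balanceability hypothesis and mirrors the convex-geometry viewpoint used elsewhere in the paper (e.g., \cref{thm:generalpotentialbound}), which is conceptually uniform across the scaling and balancing settings.
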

\begin{proof}
  Suppose $\mat A$ is asymptotically balanceable.
  Then for every $\eps > 0$, there exists $\vec x \in \R^n$ such that $\mat A(\vec x)$ is $\eps$-$\ell_1$-balanced, i.e.,
  \[ \frac{\norm{\vec r(\mat A(\vec x)) - \vec c(\mat A(\vec x))}_1}{\norm{\mat A(\vec x)}_1} \leq \eps. \]
  Now observe that
  \begin{align*}
      \frac{\vec r(\mat A(\vec x)) - \vec c(\mat A(\vec x))}{\norm{\mat A(\vec x)}_1} = \frac{\sum_{i,j=1}^n A_{ij} e^{x_i - x_j} (\vec e_i - \vec e_j)}{\sum_{i,j=1}^n A_{ij} e^{x_i - x_j}},
  \end{align*}
  which is a convex combination of vectors in  $W = \{\vec e_i - \vec e_j : A_{ij} > 0\}$.
  By taking the limit $\eps \to 0$, this implies that $\vec 0 \in \R^n$ is in the convex hull of $W$, since the convex hull of finitely many points is closed.
  By Farkas' lemma, this implies that for every $\vec x \in \R^n$, there exists some $\vec e_k - \vec e_\ell \in W$ (so~$A_{k\ell}>0$) such that $x_k-x_\ell\geq 0$, and in particular, we see that
  \begin{align*}
      f(\vec x) = \norm{\mat A(\vec x)}_1 = \sum_{i,j=1}^n A_{ij} e^{x_i - x_j} \geq A_{k\ell} e^{x_k - x_\ell} \geq A_{k\ell}.
  \end{align*}
  Since $A_{k\ell} > 0$, we have $A_{k\ell} \geq \mu$, so $f(\vec x) \geq \mu$ for all $\vec x \in \R^n$, and we obtain the bound $f^* \geq \mu$.
\end{proof}

The next lemma gives a lower bound on the progress made by an approximate Osborne update.

\begin{lem}
  \label{lem:osborne progress lower bound}
  Let $\ell \in [n]$ be an index, $\vec x\in\R^n$ be a vector, $\delta \in [0, 1]$, and let $\vec x'$ be the vector with $x_k' = x_k$ for $k \neq \ell$ and
  \begin{equation}\label{eq:delta approx osborne condition}
    \left|x_\ell' - \left(x_\ell + \ln\left(\sqrt{\frac{c_\ell(\mat A(\vec x))}{r_\ell(\mat A(\vec x))}}\right)\right)\right|\leq \delta,
  \end{equation}
  i.e., $\vec x'$ is a $\delta$-additive approximation of the Osborne update of $\vec x$ for the $\ell$-th index.
  Then
  \begin{align*}
    f(\vec x) - f(\vec x') \geq \left( \sqrt{r_\ell(\mat A(\vec x))} - \sqrt{c_\ell(\mat A(\vec x))} \right)^2 - 2 \delta \sqrt{r_\ell(\mat A(\vec x)) c_\ell(\mat A(\vec x))}.
  \end{align*}
\end{lem}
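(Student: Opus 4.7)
The plan is to directly compute $f(\vec x) - f(\vec x')$ exactly and then bound the error coming from the approximation. Since $x_k' = x_k$ for $k \neq \ell$, the only entries of $\mat A(\vec x)$ that change are those in the $\ell$-th row and $\ell$-th column. Writing $\Step = x_\ell' - x_\ell$, for $j \neq \ell$ the $(\ell,j)$-entry gets multiplied by $e^\Step$, and for $i \neq \ell$ the $(i,\ell)$-entry gets multiplied by $e^{-\Step}$; the $(\ell,\ell)$-entry is zero by our assumption that the diagonal of $\mat A$ vanishes. Therefore, abbreviating $r = r_\ell(\mat A(\vec x))$ and $c = c_\ell(\mat A(\vec x))$,
\begin{align*}
  f(\vec x') - f(\vec x) = (e^\Step - 1)\, r + (e^{-\Step} - 1)\, c,
\end{align*}
so that $f(\vec x) - f(\vec x') = r + c - r e^\Step - c e^{-\Step}$.

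Next I would parameterize the approximate update as $\Step = \frac12 \ln(c/r) + \eta$ with $|\eta| \leq \delta$, which is exactly the condition in~\eqref{eq:delta approx osborne condition}. Substituting $e^\Step = \sqrt{c/r}\, e^\eta$ and $e^{-\Step} = \sqrt{r/c}\, e^{-\eta}$, the two middle terms combine nicely:
\begin{align*}
  r e^\Step + c e^{-\Step} = \sqrt{rc}\,(e^\eta + e^{-\eta}) = 2\sqrt{rc}\,\cosh(\eta).
\end{align*}
Hence $f(\vec x) - f(\vec x') = r + c - 2\sqrt{rc}\,\cosh(\eta)$. Comparing with the target inequality, it suffices to establish $\cosh(\eta) \leq 1 + \delta$ whenever $|\eta| \leq \delta \leq 1$, since then
\begin{align*}
  r + c - 2\sqrt{rc}\,\cosh(\eta) \;\geq\; r + c - 2\sqrt{rc}(1 + \delta) \;=\; \bigl(\sqrt r - \sqrt c\bigr)^2 - 2\delta\sqrt{rc}.
\end{align*}

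The only non-routine step is therefore the scalar inequality $\cosh(\eta) \leq 1 + |\eta|$ for $|\eta| \leq 1$. I would prove it as a one-variable calculus fact: the function $g(t) = 1 + t - \cosh(t)$ satisfies $g(0) = 0$ and $g(1) = 2 - \cosh(1) > 0$, while $g'(t) = 1 - \sinh(t)$ changes sign exactly once on $[0,1]$ (at $t = \mathrm{arcsinh}(1)$). Thus $g$ first increases and then decreases on $[0,1]$, and since both endpoint values are non-negative, $g \geq 0$ throughout. Combined with the symmetry $\cosh(-\eta) = \cosh(\eta)$, this yields the desired bound and completes the argument. I do not anticipate any genuine obstacle here; the main thing to be careful about is not to replace $\cosh(\eta)$ by the weaker quadratic bound $1 + \eta^2/2 \cdot \text{(something)}$, which would give a linear error in $\delta^2$ rather than the cleaner $2\delta\sqrt{rc}$ appearing in the statement.
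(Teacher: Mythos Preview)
Your proof is correct and follows essentially the same approach as the paper: both compute $f(\vec x)-f(\vec x')=r+c-(e^z+e^{-z})\sqrt{rc}$ with $z=x_\ell'-x_\ell-\tfrac12\ln(c/r)$ and then use the scalar inequality $e^z+e^{-z}\leq 2+2|z|$ for $|z|\leq 1$ (which is exactly your $\cosh(\eta)\leq 1+|\eta|$). The only difference is cosmetic—your derivation of the cosh bound is more detailed than the paper's one-line claim.
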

\begin{proof}
  Note that $\mat A(\vec x')$ and $\mat A(\vec x)$ have the same entries outside of the $\ell$-th row and column.
  Expanding the definition and recalling that $A_{\ell\ell}=0$ gives
  \[
    f(\vec x) - f(\vec x') = r_\ell(\mat A(\vec x)) + c_\ell(\mat A(\vec x)) - r_\ell(\mat A(\vec x')) - c_\ell(\mat A(\vec x')).
  \]
  For convenience, write $z = x_\ell' - x_\ell - \ln(\sqrt{c_\ell(\mat A (\vec x)) / r_\ell(\mat A(\vec x))})\in[-\delta,\delta]$.
  Then
  \begin{align*}
    r_\ell(\mat A(\vec x')) &
    = e^{x'_\ell - x_\ell} \cdot r_\ell(\mat A(\vec x))
    = e^z \cdot \sqrt{r_\ell(\mat A(\vec x)) c_\ell(\mat A(\vec x))}, \\
    c_\ell(\mat A(\vec x')) &
    = e^{x_\ell - x'_\ell} \cdot c_\ell(\mat A(\vec x))
    = e^{-z} \cdot \sqrt{r_\ell(\mat A(\vec x)) c_\ell(\mat A(\vec x))}.
  \end{align*}
  Since $\abs z \leq \abs \delta \leq 1$, we have the estimate $e^z + e^{-z} \leq 2 + 2 \abs{z} \leq 2 + 2 \delta$, which yields
  \begin{align*}
    f(\vec x) - f(\vec x') & \geq r_\ell(\mat A(\vec x)) + c_\ell(\mat A(\vec x)) - (2 + 2 \delta) \sqrt{r_\ell(\mat A(\vec x)) c_\ell(\mat A(\vec x))}
  \end{align*}
  as desired.
\end{proof}

The following lemma gives a sufficient choice for the parameter $b_1$ to ensure correct functioning of the algorithm, i.e., that the requirements of \ApproxScalingFactor are always satisfied.
\begin{lem}
  Let $\vec x^{(0)} = \vec 0$ and $T\in\N$.
  Suppose that for all $t\in[T]$, $\vec x^{(t)}$ is a $\delta$-additive approximation of an Osborne update for $\vec x^{(t-1)}$ as in \cref{eq:delta approx osborne condition}.
  Then we have the following bound for all~$t\leq T$,
  \begin{equation}\label{eq:norm bound osborne}
    \norm{\vec x^{(t)}}_\infty \leq t \cdot \left( \frac12 \ln(\norm{\mat A}_1 / \mu) + \delta \right).
  \end{equation}
  Thus the choice $b_1 = \lceil \log_2(\sigma + T \cdot (\ln(\norm{\mat A}_1 / \mu)/2 + 1)) \rceil$, where $\sigma = \max_\ell \{ \abs{\ln(r_\ell(\mat A))}, \abs{\ln(c_\ell(\mat A))} \}$, guarantees that $2^{b_1} \geq \abs{\ln(\sum_{j=1}^n A_{\ell j} e^{-x_j^{(t)}})}$ and $2^{b_1} \geq \abs{\ln(\sum_{i=1}^n A_{i\ell} e^{x_i^{(t)}})}$ for any $\ell \in [n]$ and $t \leq T$.
\end{lem}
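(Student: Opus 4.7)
The plan is to establish the norm bound~\cref{eq:norm bound osborne} by induction on $t$ and then derive the $b_1$ bound as a direct consequence. The base case $t=0$ is trivial since $\vec x^{(0)} = \vec 0$. For the inductive step, at iteration $t$ only one coordinate, say $x_\ell$, is modified, so $\norm{\vec x^{(t)}}_\infty = \max\bigl(\norm{\vec x^{(t-1)}}_\infty, \, \abs{x_\ell^{(t)}}\bigr)$ and it suffices to control $\abs{x_\ell^{(t)}}$. The key observation is the rewriting of the exact Osborne update shown just before the algorithm: using $A_{\ell\ell}=0$, the exact update value equals
\[
  \tfrac12\Bigl(\ln\sum_{i=1}^n A_{i\ell}e^{x_i^{(t-1)}} - \ln\sum_{j=1}^n A_{\ell j}e^{-x_j^{(t-1)}}\Bigr),
\]
which critically no longer involves $x_\ell^{(t-1)}$. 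By hypothesis, $x_\ell^{(t)}$ is a $\delta$-additive approximation of this value.

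The next step is to bound each logarithm. For any non-zero entry $A_{i_0 \ell}\geq \mu$ (which exists by assumption) and using $\sum_i A_{i\ell}\leq \norm{\mat A}_1$, the sandwich
\[
  \mu\, e^{-\norm{\vec x^{(t-1)}}_\infty} \;\leq\; \sum_{i=1}^n A_{i\ell}e^{x_i^{(t-1)}} \;\leq\; \norm{\mat A}_1\, e^{\norm{\vec x^{(t-1)}}_\infty}
\]
holds, and the analogous bound holds for $\sum_j A_{\ell j}e^{-x_j^{(t-1)}}$. Subtracting the two logarithms, the absolute value of the target is at most $\norm{\vec x^{(t-1)}}_\infty + \tfrac12\ln(\norm{\mat A}_1/\mu)$. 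Adding the $\delta$-approximation error and invoking the induction hypothesis then yields
\[
  \abs{x_\ell^{(t)}} \;\leq\; \norm{\vec x^{(t-1)}}_\infty + \tfrac12\ln(\norm{\mat A}_1/\mu) + \delta \;\leq\; t\bigl(\tfrac12\ln(\norm{\mat A}_1/\mu) + \delta\bigr),
\]
completing the induction.

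For the $b_1$ bound, for any $\ell\in[n]$ and $t\leq T$ the same kind of sandwich gives
\[
  e^{-\norm{\vec x^{(t)}}_\infty} r_\ell(\mat A) \;\leq\; \sum_{j=1}^n A_{\ell j}e^{-x_j^{(t)}} \;\leq\; e^{\norm{\vec x^{(t)}}_\infty} r_\ell(\mat A),
\]
so $\bigl|\ln \sum_{j=1}^n A_{\ell j}e^{-x_j^{(t)}}\bigr| \leq \norm{\vec x^{(t)}}_\infty + \abs{\ln r_\ell(\mat A)}$. Applying~\cref{eq:norm bound osborne}, the definition of $\sigma$, and $\delta\leq 1$, this is at most $\sigma + T\bigl(\tfrac12\ln(\norm{\mat A}_1/\mu)+1\bigr)$, which is in turn at most $2^{b_1}$ by the stated choice of $b_1$. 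The bound on $\bigl|\ln\sum_{i=1}^n A_{i\ell}e^{x_i^{(t)}}\bigr|$ follows by the symmetric argument (with $c_\ell(\mat A)$ in place of $r_\ell(\mat A)$).

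There is no real obstacle here; the only subtle point is noticing that when the exact Osborne update is rewritten using $A_{\ell\ell}=0$, the $x_\ell^{(t-1)}$-dependence cancels, so the inductive growth of $\norm{\vec x^{(t)}}_\infty$ is only additive in $t$ rather than multiplicative, which is exactly what yields the linear-in-$t$ bound.
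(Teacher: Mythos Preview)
Your proposal is correct and follows essentially the same approach as the paper: both prove the single-step bound $\norm{\vec x'}_\infty \leq \norm{\vec x}_\infty + \tfrac12\ln(\norm{\mat A}_1/\mu) + \delta$ via the rewriting of the exact Osborne update as $\tfrac12\ln\bigl(\sum_i A_{i\ell}e^{x_i}/\sum_j A_{\ell j}e^{-x_j}\bigr)$, induct, and then derive the $b_1$ bound from $\bigl|\ln\sum_j A_{\ell j}e^{-x_j^{(t)}} - \ln r_\ell(\mat A)\bigr| \leq \norm{\vec x^{(t)}}_\infty$. One minor remark: the rewriting of the update does not actually require $A_{\ell\ell}=0$ (the $x_\ell$-dependence cancels regardless, since $c_\ell(\mat A(\vec x))/r_\ell(\mat A(\vec x)) = e^{-2x_\ell}\cdot\sum_i A_{i\ell}e^{x_i}/\sum_j A_{\ell j}e^{-x_j}$), so that part of your commentary is slightly off, though it does not affect the validity of the argument.
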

\begin{proof}
  We show that if $\vec x$ is any vector, then any vector $\vec x'$ obtained by a $\delta$-additive approximate Osborne update of~$\vec x$ satisfies
  \begin{equation}\label{eq:norm bound single osborne}
  \norm{\vec x'}_\infty \leq \norm{\vec x}_\infty + \frac12 \ln(\norm{\mat A}_1 / \mu) + \delta.
  \end{equation}
  Suppose $\vec x'$ is obtained by updating the $\ell$-th index of $\vec x$, i.e.,
  \[ \left|x_\ell' - \left(x_\ell + \ln \sqrt{\frac{c_\ell(\mat A(\vec x))}{r_\ell(\mat A(\vec x))}}\right)\right| \leq \delta. \]
  Then observe that
  \begin{align*}
    x_\ell + \ln \sqrt{\frac{c_\ell(\mat A(\vec x))}{r_\ell(\mat A(\vec x))}}
    & = x_\ell + \ln \sqrt{\frac{\sum_{i=1}^n A_{i \ell} e^{x_i - x_\ell}}{\sum_{j=1}^n A_{\ell j} e^{x_\ell - x_j}}}
    = \ln \sqrt{\frac{\sum_{i=1}^n A_{i \ell} e^{x_i}}{\sum_{j=1}^n A_{\ell j} e^{- x_j}}}.
  \end{align*}
  Since we have
  \[
    \sqrt{\frac{\mu}{\norm{\mat A}_1}} \cdot e^{-\norm{\vec x}_\infty} \leq \sqrt{\frac{\sum_{i=1}^n A_{i \ell} e^{x_i}}{\sum_{j=1}^n A_{\ell j} e^{- x_j}}} \leq \sqrt{\frac{\norm{\mat A}_1}{\mu}} \cdot e^{\norm{\vec x}_\infty},
  \]
  the updated coordinate $x_\ell'$ satisfies
  \[
    \abs{x_\ell'} \leq \norm{\vec x}_\infty + \frac12 \ln(\norm{\mat A}_1 / \mu) + \delta.
  \]
  Since all other coordinates of $\vec x'$ and $\vec x$ agree and $\norm{\mat A}_1 \geq \mu$, the same upper bound holds for~$\|\vec x'\|_\infty$.
  Thus we have proved \cref{eq:norm bound single osborne}, and \cref{eq:norm bound osborne} now follows by induction.

  As a consequence of \cref{eq:norm bound osborne}, for every $t\leq T$ and $\ell \in [n]$, we have that
  \begin{align*}
    \left|\ln \left(\sum_{j=1}^n A_{\ell j} e^{-x_j^{(t)}}\right) - \ln(r_\ell(\mat A))\right| &\leq \norm{\vec x^{(t)}}_\infty \leq t \cdot \left( \frac12 \ln(\norm{\mat A}_1 / \mu) + \delta \right), \\
    \left|\ln \left(\sum_{i=1}^n A_{i \ell} e^{x_i^{(t)}}\right) - \ln(c_\ell(\mat A))\right| &\leq \norm{\vec x^{(t)}}_\infty \leq t \cdot \left( \frac12 \ln(\norm{\mat A}_1 / \mu) + \delta \right).
  \end{align*}
  This implies the second statement.
\end{proof}

We now show that \cref{alg:Random Osborne} finds approximate balancings in a certain number of iterations.

\begin{prp}\label{thm:random osborne guarantee}
Let $\mat A \in [0, 1]^{n \times n}$ be a rational matrix with all zeroes on the diagonal, each row and column containing at least one non-zero element and all non-zero entries at least $\mu > 0$.
Assume~$\mat A$ is asymptotically balanceable, and let $\eps \in (0, 1]$ and $p \in (0, 1]$.
Choose
\[ T = \left\lceil\frac{12 n \ln(\norm{\mat A}_1 / \mu)}{p \eps^2}\right\rceil,
\]
as well as $\delta = p \eps^2 / 24$,
$\eta = p \eps^2 / (12 n T)$,
$b_1 = \lceil \log_2(\sigma + T \cdot (\ln(\norm{\mat A}_1 / \mu) / 2 + 1)) \rceil$,
where $\sigma = \max_\ell \{ \abs{\ln(r_\ell(\mat A))}, \abs{\ln(c_\ell(\mat A))} \}$, and $b_2 = \lceil\log_2(1/\delta)\rceil$.
Then \cref{alg:Random Osborne} with these parameters returns a vector~$\vec x$ such that $\mat A(\vec x)$ is $\eps$-$\ell_1$-balanced with probability at least $1 - p$.
\end{prp}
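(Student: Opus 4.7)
The plan is to mirror the structure of the analysis of the randomized Sinkhorn algorithm (\cref{thm:random sinkhorn guarantee}), using $f(\vec x) = \norm{\mat A(\vec x)}_1$ as potential (for which $f(\vec 0) = \norm{\mat A}_1$ and $f^* \geq \mu$ by \cref{lem: balancing potential}). The key twist is that, because the \emph{additive} gap $f(\vec 0) - f^*$ can be as large as $\norm{\mat A}_1 - \mu$, working with $f$ directly only yields an iteration count scaling with $\norm{\mat A}_1/\mu$. To recover the $\ln(\norm{\mat A}_1/\mu)$ factor of the theorem statement, I would instead work with the \emph{logarithmic} potential $h(\vec x) = \ln f(\vec x)$, which satisfies the pointwise bound $h(\vec x^{(0)}) - h(\vec x^{(t)}) \leq \ln(\norm{\mat A}_1/\mu)$ for all $t$.

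Define, as in \cref{def: events Gt and St}, the events $S_t$ (both \ApproxScalingFactor calls at iteration~$t$ succeed), $S = \bigcap_{t=1}^T S_t$, and $G_t$ ($\mat A(\vec x^{(t-1)})$ is $\eps$-$\ell_1$-balanced). A union bound gives $\Pr[\overline S] \leq 2T\eta$. The central estimate is the analogue of \cref{lem:expected progress when S holds}. Conditioning on $\vec x^{(t-1)}$ and restricting to $S_t$, \cref{lem:osborne progress lower bound} and the elementary inequality $2\sqrt{ab} \leq a + b$ yield, for the random index $\ell \in [n]$,
\[
  f(\vec x^{(t-1)}) - f(\vec x^{(t)}) \geq \bigl(\sqrt{r_\ell^{(t-1)}} - \sqrt{c_\ell^{(t-1)}}\bigr)^2 - 2\delta\sqrt{r_\ell^{(t-1)} c_\ell^{(t-1)}} \geq -2\delta f(\vec x^{(t-1)}),
\]
so $f_t/f_{t-1} \in [1 - 1, 1 + 2\delta]$ and hence $h_{t-1} - h_t \geq -\ln(1+2\delta) \geq -2\delta$ pointwise on $S$. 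Averaging the lower bound over the uniformly random $\ell$, using $\sum_\ell \sqrt{r_\ell c_\ell} \leq f_{t-1}$ (Cauchy--Schwarz) and \cref{lem:unnormalized hellinger} applied to $\vec r(\mat A^{(t-1)})$ and $\vec c(\mat A^{(t-1)})$ gives
\[
  \Exp_\ell\bigl[f_{t-1} - f_t \mid \vec x^{(t-1)}, S_t\bigr] \geq \frac{1}{n}\left[\frac{\norm{\vec r(\mat A^{(t-1)}) - \vec c(\mat A^{(t-1)})}_1^2}{4 f_{t-1}} - 2\delta f_{t-1}\right],
\]
and on the bad event $\overline{G_t}$ the numerator inside the brackets is strictly larger than $\eps^2 f_{t-1}^2$. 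Dividing by $f_{t-1}$ and applying $\ln(1-u) \leq -u$ (with the progress ratio $u = (f_{t-1}-f_t)/f_{t-1}$) shows that
\[
  \Exp_\ell\bigl[h_{t-1} - h_t \mid \vec x^{(t-1)}, S_t, \overline{G_t}\bigr] \geq \frac{1}{n}\left(\frac{\eps^2}{4} - 2\delta\right).
\]

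Assembly then proceeds exactly as in the proof of \cref{thm:random sinkhorn guarantee}: decompose $\Exp[(h_{t-1} - h_t)\indic_S]$ over the three disjoint events $\overline{G_t}\cap S$, $G_t\cap S$, and $\overline S$, using respectively the multiplicative progress bound, the uniform lower bound $-2\delta$, and a crude uniform upper bound on $\abs{h_{t-1} - h_t}$ (cheap since each $\vec x^{(t)}$ is bounded by \cref{eq:norm bound osborne}). Telescoping and the pointwise inequality $h_0 - h_T \leq \ln(\norm{\mat A}_1/\mu)$ give
\[
  \ln(\norm{\mat A}_1/\mu) \geq \sum_{t=1}^T \Exp[(h_{t-1} - h_t)\indic_S] \geq \frac{\eps^2/4 - 2\delta}{n}\sum_{t=1}^T \Pr[\overline{G_t}] - 2T\delta - \text{(small error from $\overline S$)},
\]
so $\Pr[\overline{G_\tau}] = \tfrac1T\sum_t \Pr[\overline{G_t}] \leq p$ for the stated choice $T = \lceil 12 n \ln(\norm{\mat A}_1/\mu)/(p\eps^2) \rceil$, $\delta = p\eps^2/24$, $\eta = p\eps^2/(12nT)$, after substituting and using $\eps,p\leq 1$.

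The main obstacle is the passage from the additive potential to the logarithmic one: the proof of \cref{lem:osborne progress lower bound} naturally produces an \emph{additive} progress bound in $f$, but a naive telescoping in $f$ only yields $T \sim n\norm{\mat A}_1/(\mu p\eps^2)$. The step $h_{t-1} - h_t \geq (f_{t-1}-f_t)/f_{t-1}$ (via $\ln(1-u)\leq -u$) converts the progress into multiplicative form, but requires controlling both signs of $u$, which is where the uniform bound $f_t/f_{t-1} \leq 1+2\delta$ plays a critical role in keeping the backward progress on the good event bounded by $O(\delta)$ per step and therefore summable.
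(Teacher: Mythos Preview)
Your overall strategy matches the paper's: pass to the logarithmic potential $h=\ln f$, convert the additive progress of \cref{lem:osborne progress lower bound} into multiplicative form via $\ln(1+z)\leq z$, invoke \cref{lem:unnormalized hellinger} on the event~$\overline{G_t}$, and telescope against $h_0-h_T\leq \ln(\norm{\mat A}_1/\mu)$.

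The gap is in your handling of $\overline S$. The decomposition you describe is vacuous there, since $(h_{t-1}-h_t)\indic_S\equiv 0$ on $\overline S$; so what remains are terms indexed by $\overline{G_t}\cap S$ and $G_t\cap S$, and you never recover the unconditional $\Pr[\overline{G_t}]$ needed for the final bound. Your stated justification via \cref{eq:norm bound osborne} is also unavailable on $\overline S$, because that norm bound is proved assuming all prior \ApproxScalingFactor calls succeeded. A subtler obstruction is that $S$ depends on randomness in iterations \emph{after} time~$t$, so you cannot average over the uniformly random index $\ell^{(t)}$ while restricting to~$S$. The fix---already present in the proof of \cref{lem:expected progress when S holds}, which you invoke---is to introduce the auxiliary variable
\[
  R_t \;=\; \frac{1}{f_{t-1}}\left[\Bigl(\sqrt{r_{\ell^{(t)}}(\mat A^{(t-1)})}-\sqrt{c_{\ell^{(t)}}(\mat A^{(t-1)})}\Bigr)^{2} - 2\delta\sqrt{r_{\ell^{(t)}}(\mat A^{(t-1)})\,c_{\ell^{(t)}}(\mat A^{(t-1)})}\,\right],
\]
which depends only on $(\vec x^{(t-1)},\ell^{(t)})$, satisfies $h_{t-1}-h_t\geq R_t$ on $S$, and is uniformly bounded above by~$1$ (since the squared term is at most $r_{\ell^{(t)}}(\mat A^{(t-1)})+c_{\ell^{(t)}}(\mat A^{(t-1)})\leq f_{t-1}$, using $A_{\ell\ell}=0$). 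Writing $\indic_S=\indic_{G_t}+\indic_{\overline{G_t}}-\indic_{\overline S}$ and using independence of $\ell^{(t)}$ from $G_t$ to average the first two terms, one gets $\Exp[R_t\indic_{\overline S}]\leq\Pr[\overline S]\leq\eta T$, which is what makes the final arithmetic close with the stated~$\eta$.
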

\begin{proof}
  We proceed in analogy with the analysis of randomized Sinkhorn.
  Let $S_t$ denote the event that both calls to \ApproxScalingFactor in the $t$-th iteration of~\cref{alg:Random Osborne} succeed, and let $S$ be the intersection of all these events for $t\in[T]$.
  Then $\Pr[\overline{S_t}] \leq \eta$, since two calls are made and each call fails with probability at most~$\eta/2$, and also $\Pr[\overline{S}] \leq \eta T$ by the union bound.
  Let $G_t$ be the event that $\mat A(\vec x^{(t-1)})$ is $\eps$-$\ell_1$-balanced, and let $\ell^{(t)}$ be the choice of index in the $t$-th iteration.
  For convenience, we abbreviate
  $F(\vec x) = \ln f(\vec x)$,
  $F_t = \ln f_t = \ln f(\vec x^{(t)})$,
  $\mat A^{(t)} = \mat A(\vec x^{(t)})$,
  $\prevrowmarginal = r_{\ell^{(t)}}(\mat A^{(t-1)})$, and
  $\prevcolmarginal = c_{\ell^{(t)}}(\mat A^{(t-1)})$.
  \Cref{lem:osborne progress lower bound} implies that if $S$ holds, then
  \begin{align*}
      f_{t-1} - f_t & \geq \Big(\sqrt{\prevrowmarginal} - \sqrt{\prevcolmarginal}\Big)^2 - 2 \delta \sqrt{\prevrowmarginal \prevcolmarginal}.
  \end{align*}
  Dividing by $f_{t-1}$ and rearranging yields
  \begin{align*}
      \frac{f_t}{f_{t-1}} & \leq 1 - \frac{1}{f_{t-1}}\Big(\sqrt{\prevrowmarginal} - \sqrt{\prevcolmarginal}\Big)^2 + \frac{2 \delta}{f_{t-1}} \sqrt{\prevrowmarginal \prevcolmarginal}.
  \end{align*}
  The quantity on the right-hand side is positive since $f_{t-1} \geq \phi_t + \psi_t > (\sqrt{\phi_t} - \sqrt{\psi_t})^2$ and $\phi_t, \psi_t$ are both positive, so taking logarithms and using the estimate $\ln(1+z) \leq z$ gives
  \begin{align*}
      F_t - F_{t-1} & \leq \frac{1}{f_{t-1}} \left(2 \delta \sqrt{\prevrowmarginal \prevcolmarginal} - \Big(\sqrt{\prevrowmarginal} - \sqrt{\prevcolmarginal}\Big)^2 \right).
  \end{align*}
  Define the random variable $R_t$ by
  \begin{align*}
      R_t = \frac{1}{f_{t-1}} \left(\Big(\sqrt{\prevrowmarginal} - \sqrt{\prevcolmarginal}\Big)^2 - 2 \delta \sqrt{\prevrowmarginal \prevcolmarginal} \right).
  \end{align*}
  Then the above estimates yield
  \begin{equation}\label{eq:osborne progress via R_t}
    \Exp[(F_{t-1} - F_t) \indic_S] \geq \Exp[R_t \indic_S].
  \end{equation}
  We now expand the right-hand side as
  \begin{align*}
      \Exp[R_t \indic_S] = \Exp[R_t \indic_{G_t}] + \Exp[R_t \indic_{\overline{G_t}}] - \Exp[R_t \indic_{\overline{S}}],
  \end{align*}
  and bound the terms individually.
  The first term can be bounded as
  \begin{align*}
      \Exp[R_t \indic_{G_t}]
      & \geq \Exp\left[- \frac{2 \delta}{f_{t-1}} \sqrt{\prevrowmarginal \prevcolmarginal} \indic_{G_t}\right] \\
      & \geq -\Exp\left[ \frac{\delta}{f_{t-1}} (\prevrowmarginal + \prevcolmarginal) \indic_{G_t}\right] \\
      & = -\delta \, \Exp\left[ \frac{r_{\ell^{(t)}}(\mat A^{(t-1)}) + c_{\ell^{(t)}}(\mat A^{(t-1)})}{\norm{\mat A^{(t-1)}}_1} \indic_{G_t}\right] \\
      & = -\delta \, \Exp\left[ \frac1n \sum_{\ell=1}^n \frac{r_{\ell}(\mat A^{(t-1)}) + c_{\ell}(\mat A^{(t-1)})}{\norm{\mat A^{(t-1)}}_1} \indic_{G_t}\right] \\
      & = - \frac{2\delta}{n} \Pr[G_t],
  \end{align*}
  where the first inequality is obtained by discarding a positive term, the second follows from the arithmetic-geometric mean inequality, and the second to last equality follows from the independence of $\ell^{(t)}$ and $\mat A^{(t-1)}$, which allows us to first average over $\ell^{(t)}$ and then over $\mat A^{(t-1)}$, using that $\sum_{\ell=1}^n r_{\ell}(\mat A^{(t-1)})=\sum_{\ell=1}^n c_{\ell}(\mat A^{(t-1)})=\norm{\mat A^{(t-1)}}_1$.

  For the second term, we bound
  \begin{align*}
      \Exp[R_t \indic_{\overline{G_t}}]
      & \geq \Exp\left[\frac{1}{f_{t-1}} \left( \sqrt{\prevrowmarginal} - \sqrt{\prevcolmarginal} \right)^2 \indic_{\overline{G_t}}\right] - \frac{2\delta}{n} \Pr[\overline{G_t}] \\
      & = \Exp\left[\frac{\left(\sqrt{r_{\ell^{(t)}}(\mat A^{(t-1)})} - \sqrt{c_{\ell^{(t)}}(\mat A^{(t-1)})} \right)^2}{\norm{\mat A^{(t-1)}}_1} \indic_{\overline{G_t}}\right] - \frac{2\delta}{n} \Pr[\overline{G_t}] \\
      & = \Exp\left[\frac1n \sum_{\ell=1}^n \frac{\left(\sqrt{r_{\ell}(\mat A^{(t-1)})} - \sqrt{c_{\ell}(\mat A^{(t-1)})} \right)^2}{\norm{\mat A^{(t-1)}}_1} \indic_{\overline{G_t}}\right] - \frac{2\delta}{n} \Pr[\overline{G_t}] \\
      & = \Exp\left[\frac1n \frac{\left\lVert \sqrt{\vec r(\mat A^{(t-1)})} - \sqrt{\vec c(\mat A^{(t-1)})} \right\rVert_2^2}{\norm{\mat A^{(t-1)}}_1} \indic_{\overline{G_t}}\right] - \frac{2\delta}{n} \Pr[\overline{G_t}] \\
      & \geq \Exp\left[\frac1{2n} \frac{\left\lVert \vec r(\mat A^{(t-1)}) - \vec c(\mat A^{(t-1)}) \right\rVert_1^2}{(\|\vec r(\mat A^{(t-1)})\|_1 + \|\vec c(\mat A^{(t-1)})\|_1)\norm{\mat A^{(t-1)}}_1} \indic_{\overline{G_t}}\right] - \frac{2\delta}{n} \Pr[\overline{G_t}] \\
      & = \Exp\left[\frac1{4n} \frac{\left\lVert \vec r(\mat A^{(t-1)}) - \vec c(\mat A^{(t-1)}) \right\rVert_1^2}{\norm{\mat A^{(t-1)}}_1^2} \indic_{\overline{G_t}}\right] - \frac{2\delta}{n} \Pr[\overline{G_t}] \\
      & \geq \frac{\eps^2}{4n} \Pr[\overline{G_t}] - \frac{2\delta}{n} \Pr[\overline{G_t}],
  \end{align*}
  where the first inequality is derived as in the lower bound for $\Exp[R_t \indic_{G_t}]$,
  the second equality holds by independence of $\ell^{(t)}$ from $\mat A^{(t-1)}$, the second inequality is \cref{lem:unnormalized hellinger} (the lower bound on the Hellinger distance), and the last inequality holds since $\mat A^{(t-1)}$ is not $\eps$-$\ell^1$-balanced in the event $\overline{G_t}$.

  Lastly, to estimate $\Exp[R_t \indic_{\overline{S}}]$, note that
  \begin{align*}
    R_t
  \leq \frac{\left( \sqrt{\prevrowmarginal} - \sqrt{\prevcolmarginal} \right)^2}{f_{t-1}}
  \leq \frac{\prevrowmarginal + \prevcolmarginal}{f_{t-1}}
  =  \frac{r_{\ell^{(t)}}(\mat A^{(t-1)}) + c_{\ell^{(t)}}(\mat A^{(t-1)})}{\norm{\mat A^{(t-1)}}_1}
  \leq 1,
  \end{align*}
  where in the last inequality we use that $A_{\ell^{(t)}\ell^{(t)}}=0$.
  Therefore, we obtain
  \begin{align*}
      \Exp[R_t \indic_{\overline{S}}] \leq \Pr[\overline S] \leq \eta T.
  \end{align*}
  Using \cref{eq:osborne progress via R_t} and the three bounds just derived, we have
  \begin{align*}
    \Exp[(F_{t-1} - F_t) \indic_S] \geq \Exp[R_t \indic_S] & = \Exp[R_t \indic_{G_t}] + \Exp[R_t \indic_{\overline{G_t}}] - \Exp[R_t \indic_{\overline{S}}] \\
      & \geq - \frac{2\delta}{n} \Pr[\overline{G_t}] + \frac{\eps^2}{4n} \Pr[\overline{G_t}] - \frac{2\delta}{n} \Pr[G_t] - \eta T \\
      & = \frac{\eps^2}{4n} \Pr[\overline{G_t}] - \frac{2\delta}{n} -  \eta T.
  \end{align*}
  Since we also have $F_0 - F_T \leq \ln(\norm{\mat A}_1)-\ln(\mu)=\ln(\norm{\mat A}_1 / \mu)$ by \cref{lem: balancing potential}, we can finish up the argument in exactly the same manner as for \cref{thm:random sinkhorn guarantee}.
  Indeed, we have a telescoping sum
  \begin{align*}
    \ln(\norm{\mat A}_1 / \mu) & \geq \Exp[(F_0 - F_T) \indic_S] = \sum_{t=1}^T \Exp[(F_{t-1} - F_t) \indic_S] \\
    & \geq \frac{\eps^2}{4n} \sum_{t=1}^T \Pr[\overline{G_t}] - T \left(\frac{2\delta}{n} +  \eta T\right).
  \end{align*}
  In other words, we have
  \[
    \sum_{t=1}^T \Pr[\overline{G_t}] \leq \frac{4}{\eps^2}\left(n \ln(\norm{\mat A}_1 / \mu) + 2\delta T + \eta n T^2\right).
  \]
  Since the choice of stopping time $\tau \in [T]$ is uniformly random, we have
  \begin{align*}
    \Pr[\overline{G_{\tau}}] = \frac1T \sum_{t=1}^T \Pr[\overline{G_t}]
    & \leq \frac{4}{T \eps^2}\left(n \ln(\norm{\mat A}_1 / \mu) + 2 \delta T +  \eta n T^2\right) \\
    & = \frac{4 n \ln(\norm{\mat A}_1 / \mu)}{T\eps^2} + \frac{8 \delta}{\eps^2} + \frac{4 \eta n T}{\eps^2} \leq p,
  \end{align*}
  where the last inequality follows from the choice of parameters.
\end{proof}

\noindent
Note that the number of iterations~$T$ (and hence the runtime) of this algorithm scales inverse-polynomially with the failure probability~$p$. As in the matrix scaling case, this is not optimal. One can implement a procedure similar to \TestScaling to test whether $\mat A(\vec x)$ is $\eps$-$\ell_1$-balanced, with success probability $1 - \eta$, for a quantum cost of $\widetilde O(\ln(1/\eta) \sqrt{mn} / \eps)$. To boost the success probability of \cref{alg:Random Osborne} to $1-p$, we can use  \cref{alg:Random Osborne} $O(\ln(1/p))$ many times with a constant success probability in each run, say $2/3$, and test if we have obtained a good balancing after each run.

Using the quantum implementation of the \ApproxScalingFactor subroutine from~\cref{sec:arithmetic}, we obtain the following performance guarantee for~\cref{alg:Random Osborne} (in a similar fashion as the cost computation in the proof of~\cref{thm:random sinkhorn guarantee}).
\begin{thm}
  \label{cor:random osborne performance}
  Let $\mat A \in [0, 1]^{n \times n}$ be a rational matrix whose non-zero entries are at least $\mu > 0$, with only zeroes on the diagonal, each row and column containing at least one non-zero element, and let $\eps \in (0, 1]$.
  Assume $\mat A$ is asymptotically balanceable.
  Then there exists a quantum algorithm that, given sparse oracle access to $\mat A$, returns with probability $\geq 2/3$ a vector $\vec x \in \R^n$ such that $\mat A(\vec x)$ is $\eps$-$\ell_1$-balanced, for a total time complexity of $\widetilde O(\sqrt{mn} / \eps^4)$ on expectation.
\end{thm}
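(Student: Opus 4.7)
The plan is to instantiate Proposition~\ref{thm:random osborne guarantee} with failure probability $p = 1/3$ and then bound the expected running cost of the resulting \cref{alg:Random Osborne} when \ApproxScalingFactor is realized by its quantum implementation (\cref{thm:quantum ApproxScalingFactor guarantee}). With $p = 1/3$, the parameter choices from Proposition~\ref{thm:random osborne guarantee} yield a number of iterations $T = \widetilde{O}(n/\eps^2)$ (absorbing the $\ln(\norm{\mat A}_1/\mu)$ factor into $\widetilde{O}$, since by assumption all rationals have $\polylog(n)$-bit encodings), update precision $\delta = \Theta(\eps^2)$, and per-call failure probability $\eta = 1/\poly(n, 1/\eps)$. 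Under these choices the algorithm returns $\vec x$ with $\mat A(\vec x)$ being $\eps$-$\ell_1$-balanced with probability at least $2/3$, which is exactly the success guarantee claimed in the theorem.

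Next I would bound the expected cost per iteration. Each iteration performs two calls to \ApproxScalingFactor, each on a uniformly random row or column index $\ell \in [n]$, with precision $\delta$ and failure probability~$\eta/2$. By \cref{thm:quantum ApproxScalingFactor guarantee}, one such call on a row/column containing $s_\ell$ potentially non-zero entries has time complexity $\widetilde{O}(\sqrt{s_\ell}/\delta)$, where $\widetilde O$ hides $\polylog(n, 1/\eps, 1/\mu)$ factors (and the $\ln(1/\eta)$ factor, which is likewise polylogarithmic in the relevant parameters). Averaging uniformly over $\ell \in [n]$ and applying the Cauchy--Schwarz or Jensen inequality to the concave function $\sqrt{\cdot}$,
\[
\Exp_\ell\!\left[\sqrt{s_\ell}\right] \;\leq\; \sqrt{\tfrac1n \textstyle\sum_{\ell=1}^n s_\ell} \;\leq\; \sqrt{m/n},
\]
so the expected per-iteration cost is $\widetilde{O}(\sqrt{m/n}/\delta) = \widetilde{O}(\sqrt{m/n}/\eps^2)$.

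Multiplying the expected per-iteration cost by $T$ and using linearity of expectation then yields a total expected time complexity of
\[
T \cdot \widetilde{O}\!\left(\frac{\sqrt{m/n}}{\eps^2}\right) \;=\; \widetilde{O}\!\left(\frac{n}{\eps^2}\right) \cdot \widetilde{O}\!\left(\frac{\sqrt{m/n}}{\eps^2}\right) \;=\; \widetilde{O}\!\left(\frac{\sqrt{mn}}{\eps^4}\right),
\]
which matches the claim. The only mild subtlety is that the cost is bounded only in expectation because the per-iteration complexity depends on the random index drawn; this is consistent with the ``on expectation'' qualifier in the statement. I expect no genuine obstacle here: Proposition~\ref{thm:random osborne guarantee} already does the hard analytic work (potential argument, bounds on \TestScaling-free progress, handling of subroutine failures), while \cref{thm:quantum ApproxScalingFactor guarantee} supplies the quantum speed-up on each call; the proof essentially consists of plugging these two results together with a Jensen-inequality averaging over the randomly chosen row/column.
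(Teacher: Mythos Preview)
Your proposal is correct and follows essentially the same approach as the paper: instantiate \cref{thm:random osborne guarantee} with constant failure probability, use \cref{thm:quantum ApproxScalingFactor guarantee} for the per-call cost, and average over the uniformly random index via Jensen's inequality on $\sqrt{\cdot}$ to get $\widetilde O(\sqrt{m/n}/\delta)$ expected cost per iteration, exactly as in the cost computation of \cref{thm:random sinkhorn guarantee}. The only cosmetic imprecision is that the two calls in an iteration are to the $\ell$-th row \emph{and} the $\ell$-th column for the same random $\ell$, not two independent row/column draws; this does not affect the bound since $\Exp_\ell[\sqrt{s^r_\ell}+\sqrt{s^c_\ell}]\leq 2\sqrt{m/n}$ by the same concavity argument.
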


The ``on expectation'' time complexity here can be converted to worst-case time complexity using Markov's inequality.


\medskip

The Osborne and Sinkhorn algorithms are special cases of a more general algorithm for a more general problem (see, e.g.,~\cite{cmtv17,burgisser2020interior} for details and motivations).
Suppose one is given a matrix~$\mat A \in \R_+^{n \times n}$ and a vector~$\vec d \in \R^n$, and one wishes to find $\vec x$ such that
\[
    r_\ell(\mat A(\vec x)) - c_\ell(\mat A(\vec x)) = d_\ell
\]
for each $\ell\in[n]$.
That is, one prescribes the differences between the row and column sums.
One can again solve this for individual $\ell \in [n]$, by expanding the above equation and solving for $x_\ell$.
It is clear that the case $\vec d=\vec 0$ amounts to the matrix balancing problem and the procedure just described is the Osborne algorithm.
On the other hand, the matrix scaling problem for $\mat B \in \R_+^{n \times n}$ and targets $\vec r, \vec c \in \R^n$ can be modeled by the above problem for the choices
\[
    \mat A = \begin{bmatrix} \mat 0 & \mat B \\ \mat 0 & \mat 0 \end{bmatrix}, \quad \vec d = (\vec r, - \vec c) \in \R^{2n},
\]
so that the first $n$ constraints yield the desired constraints on the row marginals, and the last $n$ yield the desired constraints on the column marginals.
Note that the support of this matrix $\mat A$ is such that we may simultaneously update the first $n$ coordinates (or the last $n$ coordinates) since their updates are independent.
This leads to the Sinkhorn algorithm.
More generally, if $G$ is the directed graph with vertex set $[n]$ and adjacency defined by the support of $\mat A$, then any subset of vertices which form an independent set in $G$ can be updated simultaneously (cf.~\cite[Sec.~2.5 \& App.~B]{altschuler2020random}).

In general, one can give an explicit expression for the updates, and analyze the progress  via the potential $\vec x \mapsto \norm{\mat A(\vec x)}_1 - \ip{\vec d}{\vec x}$.
This potential generalizes the ones we used for matrix scaling and matrix balancing (up to a change of sign for the last $n$ variables in the former case), and it admits similar potential bounds and lower bounds on the progress as derived above.
However, the details of carrying out such an analysis are less clear and we leave this to future work.

\section{A quantum query lower bound for matrix scaling}\label{sec:lower bound}
In this section we will prove an ${\Omega}(\sqrt{mn})$ lower bound on the query complexity of quantum algorithms for $\Theta(1)$-$\ell_1$-scaling to the uniform marginals $(\vec 1/n, \vec 1/n)$. We will do so by showing an $\Omega(n\sqrt{s})$ lower bound on instances with $s$ potentially non-zero entries per row and column (note that with $m=ns$, we have $n\sqrt{s}=\sqrt{mn}$).

\subsection{Partially learning permutations}
We first prove a query lower bound for the problem of learning a permutation `modulo two', given dense or sparse matrix oracle access to the corresponding permutation matrix.
In the setting where in each row and column of an $n \times n$ permutation matrix, the location of the $1$-entry in that row/column is restricted to $s$ possible locations, we will show that solving this problem requires~$\Omega(n \sqrt{s})$ quantum queries, and that this lower bound even holds for only recovering a constant fraction of this information.
This lower bound is also tight, since one can use Grover search on each column to fully recover the permutation matrix using $O(n \sqrt{s})$ quantum queries.

We will use the following definition:
\begin{definition}[Single-bit descriptor]
Let $\sigma \in S_n$ be a permutation.
The \emph{single-bit descriptor} of~$\sigma$ is the bit string $z \in \{0,1\}^n$ with entries $z_i \equiv \sigma(i) \bmod 2$.
\end{definition}

We record here the version of the adversary lower bound that we will use.

\begin{lem}[{\cite[Theorem~6.1]{ambainis:lowerboundsj}}]\label{lem:adv}
Let $f\colon A\subseteq\Sigma^N\rightarrow B$ be a function of $N$ variables, which takes values in some finite set $B$.
Let $X,Y \subseteq A$ be two sets of inputs such that $f(x)\neq f(y)$ if $x\in X$ and $y\in Y$.
Let $R\subseteq X\times Y$ be nonempty, and satisfy:
\begin{itemize}
\item For every $x\in X$, there exist at least $m_X$ different $y\in Y$ such that $(x,y)\in R$.
\item For every $y\in Y$, there exist at least $m_Y$ different $x\in X$ such that $(x,y)\in R$.
\end{itemize}
Let $\ell_{x,i}$ be the number of $y\in Y$ such that $(x,y)\in R$ and $x_i\neq y_i$, and similarly for $\ell_{y,i}$.
Let $\ell_{max} = \max_{i\in [N]} \max_{(x,y)\in R,x_i\neq y_i}\ell_{x,i}\ell_{y,i}$.
Then any algorithm that computes $f$ with success probability $\geq 2/3$ uses $\Omega\left(\sqrt{\frac{m_X m_Y}{\ell_{max}}}\right)$ quantum queries to the input function.
\end{lem}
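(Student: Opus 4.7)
The plan is to follow Ambainis's original progress-measure (hybrid) argument. View any $T$-query quantum algorithm as $U_T\, O\, U_{T-1}\, O \cdots U_1\, O\, U_0$, where $U_0,\dots,U_T$ are input-independent unitaries and $O = O_x$ is the standard phase-or-XOR query oracle on input $x$. Let $\ket{\psi_x^t}$ denote the state after the $t$-th query when the oracle is $O_x$. I will track the progress measure
\[
W_t \;=\; \sum_{(x,y)\in R} \bigl|\langle \psi_x^t \mid \psi_y^t \rangle\bigr|
\]
and derive a lower bound on $T$ by comparing its initial value, its final value, and a per-step Lipschitz bound.

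First I would handle the two easy boundary estimates. Since $\ket{\psi_x^0}$ does not depend on the input, we get $W_0 = |R|$. At the end, for any $(x,y)\in R$ we have $f(x)\neq f(y)$, so a successful algorithm must produce measurement outcomes whose distributions are far in total variation; a standard Fuchs--van de Graaf calculation then forces $|\langle \psi_x^T \mid \psi_y^T\rangle| \le 2\sqrt{\tfrac{2}{3}\cdot\tfrac{1}{3}} =: c < 1$, so $W_0 - W_T \ge (1-c)|R|$.

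The heart of the proof is the per-step estimate $|W_{t+1}-W_t| \le 2\sqrt{\ell_{\max}\cdot|R|^2 / (m_X m_Y)}$. Expanding $\ket{\psi_x^t} = \sum_{i,b}\alpha^x_{i,b}\ket{i}\ket{b}$ and writing $S(x,y) = \{i: x_i\neq y_i\}$, one checks that the query only changes amplitudes on indices $i\in S(x,y)$, yielding
\[
\bigl|\langle\psi_x^t\mid\psi_y^t\rangle - \langle\psi_x^{t+1}\mid\psi_y^{t+1}\rangle\bigr| \;\le\; 2\sum_{i\in S(x,y)}\sum_{b}\bigl|\alpha^x_{i,b}\bigr|\,\bigl|\alpha^y_{i,b}\bigr|.
\]
Summing over $(x,y)\in R$, swapping the order of summation to index over pairs sharing a given disagreement position $i$, and applying Cauchy--Schwarz twice — once to pull the amplitude product out as an $\ell_2$ factor, and once to distribute the combinatorial weights — gives a bound of the form $\sum_i \sqrt{\ell_{x,i}\ell_{y,i}}\cdot (\text{amplitude mass on }i)$, which after a further Cauchy--Schwarz using $\sum_{i,b}|\alpha^x_{i,b}|^2 = 1$ and the minimum-degree bounds $m_X, m_Y$ on $R$ produces the claimed per-step estimate. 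This combinatorial bookkeeping is the main obstacle and is where the parameters $m_X$, $m_Y$, and $\ell_{\max}$ genuinely enter.

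Combining the three ingredients, $(1-c)|R| \le W_0 - W_T \le T \cdot 2|R|\sqrt{\ell_{\max}/(m_X m_Y)}$, and rearranging gives the desired bound $T = \Omega\bigl(\sqrt{m_X m_Y / \ell_{\max}}\bigr)$. The only nontrivial step is the Cauchy--Schwarz chain in the per-query estimate; everything else is bookkeeping.
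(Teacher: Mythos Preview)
The paper does not prove this lemma; it is quoted verbatim as \cite[Theorem~6.1]{ambainis:lowerboundsj} and used as a black box in the lower-bound argument of \cref{sec:lower bound}. So there is no ``paper's own proof'' to compare against.

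Your sketch is a faithful outline of Ambainis's original progress-measure argument and is essentially correct. One small remark: the per-step bound you wrote as $2\sqrt{\ell_{\max}\cdot|R|^2/(m_X m_Y)}$ is the right shape, but the way the combinatorics actually go is slightly more delicate than ``Cauchy--Schwarz twice.'' After swapping sums you have to group terms by $x$ (using the $\ell_{x,i}$ bound) and separately by $y$ (using the $\ell_{y,i}$ bound), then combine these via Cauchy--Schwarz together with the degree lower bounds $\sum_{y:(x,y)\in R} 1 \ge m_X$ and $\sum_{x:(x,y)\in R} 1 \ge m_Y$ to convert the weighted amplitude sums back into something bounded by $|R|$. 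Your narrative compresses this into a single sentence, which is fine for a sketch but would need to be unpacked in a full proof. Everything else (the initial value $W_0=|R|$, the final bound via distinguishability, and the telescoping) is exactly right.
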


We will start by proving a lower bound on the query complexity of recovering the single-bit descriptor of a permutation matrix from a dense matrix oracle. A dense matrix oracle is an oracle $O_{\mat A}$, without the oracles $O_I^{\mathrm{row}}$ and $O_I^{\mathrm{col}}$ for the non-zero entries.

\begin{lem}\label{lem:upperblockham}
  Let $n$ be a positive multiple of $4$.
  Given an $n\times n$ permutation matrix $\mat P$ corresponding to a permutation $\sigma$ (i.e., $P_{ij} = \delta_{i,\sigma(j)}$ for $i,j\in[n]$),
  recovering the single-bit descriptor~$z$ of $\sigma$, with success probability at least $2/3$, requires $\Omega(n\sqrt{n})$ quantum queries to a dense matrix oracle for $\mat P$.
\end{lem}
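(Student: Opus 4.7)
The plan is to apply Ambainis' quantum adversary method (\cref{lem:adv}) together with a direct-product argument to obtain the $\Omega(n\sqrt n)$ lower bound for recovering the full $n$-bit string $z$. Since $z(\sigma) \in \{0,1\}^n$ encodes one parity bit per column of $\mat P$, I would first argue that each bit $z_i$ is individually ``Grover-hard'' to compute (taking $\Omega(\sqrt n)$ queries), and then show that the $n$ bits are determined by essentially disjoint columns, so solving all of them jointly scales as $n$ times the single-bit cost.

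For the single-bit step, I would instantiate \cref{lem:adv} as follows. Take $X = \{\sigma \in S_n : \sigma(1) \text{ odd}\}$ and $Y = \{\sigma : \sigma(1) \text{ even}\}$, which are separated by the first coordinate $z_1$ (so $f(\sigma) \neq f(\sigma')$ for $\sigma \in X$, $\sigma' \in Y$). Let $R$ contain pairs $(\sigma,\sigma')$ with $\sigma' = (\sigma(1)\;a) \circ \sigma$ for some value $a$ of parity opposite to $\sigma(1)$; such a transposition flips $z_1$ while changing exactly four entries of the permutation matrix $\mat P$. A direct count gives $m_X = m_Y = n/2$, and examining the four entries where each pair differs shows $\ell_{\max} \le n/2$ (the maximum being attained at the entry $(\sigma(1),1)$, which is changed by all $n/2$ candidate transpositions for $\sigma$ but only by one for $\sigma'$). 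Plugging into \cref{lem:adv} yields the single-bit bound $\Omega\bigl(\sqrt{(n/2)^2/(n/2)}\bigr) = \Omega(\sqrt n)$.

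To amplify this to the desired $\Omega(n\sqrt n)$, I would then invoke a strong direct product theorem for quantum query complexity. The crucial structural observation is that the entries of $\mat P$ relevant to determining $z_j$ lie entirely within column $j$, so queries pertinent to different bits are supported on \emph{disjoint} sets of input variables. An algorithm that outputs $z$ correctly with probability $\ge 2/3$ thus simultaneously solves $n$ column-local subproblems, each of individual quantum query complexity $\Omega(\sqrt n)$; the strong direct product theorem then forces its total query cost to be $\Omega(n \cdot \sqrt n) = \Omega(n\sqrt n)$.

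The main obstacle I foresee is handling the \emph{global} correlations introduced by the permutation constraint (each row of $\mat P$ also has a unique $1$), which weakly couples the otherwise disjoint columns and so prevents a naive application of the direct product theorem for \emph{independent} instances. I would address this by restricting the hard instances to a sub-family where $\sigma$ is approximately block-random---for example, permutations constructed by choosing $\sigma(1),\dots,\sigma(n/2)$ as an arbitrary injection into the $n/2$ even values (with $\sigma$ on $\{n/2{+}1,\dots,n\}$ an arbitrary bijection onto the odd values)---so that roughly half of the parity bits are fully \emph{independent} uniform bits, each still requiring $\Omega(\sqrt n)$ queries by the single-bit argument above. Recovering $\Theta(n)$ such independent bits then triggers the direct product lower bound $\Omega(n\sqrt n)$, completing the proof.
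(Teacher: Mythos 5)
Your single-bit adversary calculation is fine, but the amplification step contains a genuine gap, and the fix you propose for the obstacle you (correctly) identify does not work. The sub-family you suggest --- permutations sending $\{1,\dots,n/2\}$ injectively into the even values and $\{n/2+1,\dots,n\}$ onto the odd values --- makes the single-bit descriptor \emph{constant}: every such $\sigma$ has $z_i=0$ for $i\le n/2$ and $z_i=1$ for $i>n/2$, so $z$ is known with zero queries and no lower bound can be extracted from this family. More fundamentally, the coupling you flag is not something a strong direct product theorem can be made to absorb by passing to a sub-family: to keep a single bit $z_j$ individually $\Omega(\sqrt n)$-hard, the location of the $1$ in column $j$ must range over $\Theta(n)$ rows, and any two such columns then compete for the same rows, so the promise (one $1$ per row) never factorizes into independent column-local instances. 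It is true that column $j$ of the \emph{input} consists of disjoint variables, but SDPT-type statements need the hard input distribution, not just the variable partition, to be a product over the blocks; a lower bound for the unconstrained problem (each column independently containing one $1$) does not transfer to the promise problem on permutation matrices, since restricting the input set can only make the problem easier.

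The paper's proof avoids composition entirely for this lemma: it reduces recovering $z$ to \emph{counting} the $1$-entries that lie in an odd row and an odd column, takes $X$ and $Y$ to be the permutation matrices with exactly $n/4$ resp.\ $n/4+1$ such entries, and relates them by column swaps. This gives $m_X,m_Y=\Theta(n^2)$ and $\ell_{\max}=\Theta(n)$, hence $\Omega(n\sqrt n)$ from a single application of \cref{lem:adv}, with the permutation constraint built into the input sets rather than fought against. A direct-sum argument of the kind you envision does appear in the paper, but only in \cref{cor:sparseonebit}, where the instance is a block-diagonal sum of $s\times s$ permutation matrices and the blocks genuinely live on disjoint, independently chosen inputs --- and there the inner hardness is supplied precisely by the dense lemma you are trying to prove, so it cannot serve as the base case of your induction.
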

\begin{proof}
  We consider the problem of determining the number of $1$s in $\mat P$ that are both in an odd row and an odd column (we call such an entry an `odd-odd entry').
  Since this problem can clearly be solved using $z$, a lower bound on this problem will prove the lemma.

  We will use the adversary bound (\cref{lem:adv}), with $\Sigma = \{0,1\}$, $N = n^2$, $A$ the set of $n \times n$ permutation matrices, $B = \{0,\dots,n\}$, and $f$ the function that assigns to a permutation matrix~$\mat P$ the number of odd-odd entries.
  We distinguish two kinds of inputs:
  the set $X$ of those permutation matrices for which there are $n/4$ odd-odd entries in the matrix, and the set $Y$ of those permutation matrices for which there are $n/4+1$ odd-odd entries.
  Finally, define the relation $R \subseteq X\times Y$ by letting $(x,y)\in R$ if and only if~$y$ can be obtained from $x$ by swapping two columns.

  For a swap to change an input $x\in X$ to an input $y\in Y$ (or vice versa), one of the swapped columns needs to be odd, and the other needs to be even.
  For a swap that brings an input $x\in X$ to some input in $Y$, we can pick $n/4$ odd columns (those with only $0$s in odd rows), and $n/4$ even columns (those with a $1$ in an odd row).
  Hence $m_X = n^2/16$.
  Similarly, for the swaps that brings an input $y\in Y$ to an input in $X$, we have $m_Y = (n/4+1)^2$ possibilities.

  Now we compute $\ell_{max}=\max_{i,j \in [n]} \max_{(x,y)\in R, x_{ij} \neq y_{ij}} \ell_{x,(i,j)}\ell_{y,(i,j)}$.
  For any choice of $(i,j)$ and $(x,y)\in R$ there are two possibilities: either $x_{ij} = 1$ and $y_{ij} = 0$, or the other way around.
  We will only focus on the case where $i$ is odd, as the even case works analogously.
  If $x_{ij} = 1$ (and $y_{ij} = 0$) then $j$ has to be even, since $y$ has more odd-odd entries.
  There will be $n/4$ different $y'\in Y$ in relation with $x$ for which the $j$-th column is swapped (one for each odd column in $x$ with a $1$ in an even row), therefore $\ell_{x,(i,j)} = n/4$.
  There is, however, only a single $x$ in relation to $y$ that is distinguished by an $(i,j)$ query: a swap would have to be between the $j$-th column and the one containing a $1$ in the $i$-th row of $y$, so $\ell_{y,(i,j)} = 1$.
  Using a similar argument we get that, if $x_{ij} = 0$ (and $y_{ij} = 1$) then $j$ has to be odd, $\ell_{x,(i,j)} = 1$, and $\ell_{y,(i,j)} = n/4+1$.
  So $\ell_{max} = n/4+1$.
  In conclusion, we get a lower bound of $\Omega\left(\sqrt{\frac{m_X m_Y}{\ell_{max}}}\right) = \Omega(n\sqrt{n})$ for finding the number of $1$s that are both in odd rows and odd columns, and hence the lemma follows.
\end{proof}

The following lemma is well known and easy to prove.

\begin{lem}
Recovering an $n$-bit string (with success probability $\geq 2/3$) takes $\Omega(n)$ quantum queries t the entries of that string.
\end{lem}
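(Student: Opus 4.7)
The plan is to reduce string recovery to the PARITY problem and apply the adversary lower bound (\cref{lem:adv}). Any algorithm that recovers an input string $x \in \{0,1\}^n$ with success probability at least $2/3$ can, by XORing its output bits, compute the parity $\bigoplus_{i=1}^n x_i$ with the same success probability and using the same number of queries. It therefore suffices to prove an $\Omega(n)$ lower bound for computing parity.

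To instantiate \cref{lem:adv}, I would take $\Sigma = \{0,1\}$, $N = n$, $A = \{0,1\}^n$, $B = \{0,1\}$, and let $f$ be the parity function. Partition the inputs by parity: set $X = f^{-1}(0)$ (even-weight strings) and $Y = f^{-1}(1)$ (odd-weight strings), so that $f(x) \neq f(y)$ whenever $x \in X$ and $y \in Y$, as required. Let the relation $R \subseteq X \times Y$ consist of those pairs $(x,y)$ at Hamming distance exactly $1$.

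For each $x \in X$ there are exactly $n$ strings $y \in Y$ related to $x$ (one per coordinate flip), so $m_X = n$; by symmetry $m_Y = n$. For any coordinate $i \in [n]$ and any pair $(x,y) \in R$ with $x_i \neq y_i$, the unique string $y' \in Y$ related to $x$ that differs from $x$ at position $i$ is $y$ itself, so $\ell_{x,i} = 1$ and symmetrically $\ell_{y,i} = 1$, giving $\ell_{\max} = 1$. Substituting into \cref{lem:adv} yields a lower bound of $\Omega(\sqrt{m_X m_Y / \ell_{\max}}) = \Omega(n)$ quantum queries for parity, and hence for string recovery. The argument is essentially routine once the reduction to parity is made; the only part requiring care is verifying the combinatorial parameters $m_X, m_Y, \ell_{\max}$ from the Hamming-distance-$1$ relation, all of which are immediate.
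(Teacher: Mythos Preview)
Your proof is correct. The paper does not actually supply a proof of this lemma, merely remarking that it is ``well known and easy to prove,'' so there is nothing to compare against; your reduction to parity via the adversary method with the Hamming-distance-$1$ relation is a standard and valid way to fill in the details.
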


We now combine the above two lemmas to show that learning the single-bit descriptor in the sparse access model requires $\Omega(n\sqrt{s})$ quantum queries.
Below we use a special case of the sparse access model defined in~\cref{sec:preliminaries} where each row and column has the same number~$s$ of potentially non-zero elements; we will refer to this as the \emph{$s$-sparse oracle model}.
Note that for this model we have $m=ns$.

\begin{cor} \label{cor:sparseonebit}
    Let $s$ be a positive multiple of $4$, and let $n$ be a multiple of $s$.
  Given an $n\times n$ permutation matrix $\mat P$ corresponding to a permutation $\sigma$,
  recovering the single-bit descriptor~$z$ of $\sigma$, with success probability at least $2/3$, requires $\Omega(n\sqrt{s})$ quantum queries to an $s$-sparse matrix oracle for $\mat P$.
\end{cor}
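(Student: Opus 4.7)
The plan is to reduce to the dense case of \cref{lem:upperblockham} by restricting attention to block-diagonal permutation matrices, and then apply the positive-weight adversary method (\cref{lem:adv}) with an input distribution that distributes the hard information evenly across blocks.

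First, I would restrict to $n\times n$ permutation matrices that are block-diagonal with $k=n/s$ blocks, each of size $s\times s$. For such matrices, all potentially non-zero entries of each row lie within the $s$ columns of its block, so the $s$-sparse oracle has a natural structure and a query to it amounts to a dense-oracle query to one of the blocks. Crucially, since $s$ is even, the offset $(b-1)s$ is even for every block $b$, so the parity of $\sigma(i)$ agrees with the parity of the local image within the block containing $i$. Therefore the global single-bit descriptor is precisely the concatenation of the $k$ local single-bit descriptors.

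Next, I would apply \cref{lem:adv} with the following choice of hard inputs: let $X$ be the set of block-diagonal permutations in which each block has either $s/4$ or $s/4+1$ odd-odd entries, such that the XOR over all blocks of the indicator ``this block has $s/4+1$ odd-odd entries'' equals $0$; let $Y$ be defined identically but with XOR equal to $1$. Since the XOR is determined by the single-bit descriptor, $f(x)\neq f(y)$ for every $x\in X, y\in Y$. Let $R\subseteq X\times Y$ consist of pairs $(x,y)$ such that $y$ is obtained from $x$ by a single swap of two columns within one block that changes the number of odd-odd entries in that block by $\pm 1$ (thereby flipping that block's indicator and hence the global XOR).

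The key step is to compute $m_X$, $m_Y$, and $\ell_{max}$. For any $x\in X$ and any block $b$, the number of valid swaps in $b$ that flip its indicator is at least $(s/4)^2 = s^2/16$ (via the ``increase'' swap type if $b$ has $s/4$ odd-odd entries in $x$, or the ``decrease'' type of size $(s/4+1)^2$ if $b$ has $s/4+1$). Summing over all $n/s$ blocks gives $m_X \geq (n/s)(s^2/16) = ns/16$, and by symmetry $m_Y \geq ns/16$. For $\ell_{max}$, note that any swap witnessed by $(x,y)\in R$ with $x_{ij}\neq y_{ij}$ lies entirely within the unique block containing $(i,j)$; restricted to this single block, the counting is identical to that in the proof of \cref{lem:upperblockham} (with $n \mapsto s$), yielding $\ell_{max} \leq s/4+1$. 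Therefore \cref{lem:adv} gives
\[
\Omega\!\left(\sqrt{\frac{m_X m_Y}{\ell_{max}}}\right) = \Omega\!\left(\sqrt{\frac{(ns/16)^2}{s/4+1}}\right) = \Omega(n\sqrt{s}),
\]
as desired. The main obstacle I expect is the bookkeeping around the swap-type analysis: one must verify that each swap in $R$ falls cleanly into the ``increase'' or ``decrease'' category depending on the status of its block in $x$, and that because $R$ allows only single-block swaps, no cross-block factors creep into $\ell_{x,(i,j)}\,\ell_{y,(i,j)}$, so that the per-block estimate from \cref{lem:upperblockham} carries over unchanged.
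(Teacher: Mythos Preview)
Your argument is correct and gives the desired $\Omega(n\sqrt{s})$ bound, but it takes a different route from the paper's proof. The paper does not re-run the adversary method; instead it invokes the multiplicativity of quantum query complexity under composition (Reichardt). Concretely, the paper observes that learning the single-bit descriptor of the block-diagonal $\mat P$ determines, for each of the $n/s$ blocks, whether that block lies in the $X$-set or the $Y$-set of \cref{lem:upperblockham}; recovering this $(n/s)$-bit string costs $\Omega(n/s)$ queries to the bits, and each bit costs $\Omega(s\sqrt{s})$ queries to the corresponding block by \cref{lem:upperblockham}, so composition yields $\Omega((n/s)\cdot s\sqrt{s})=\Omega(n\sqrt{s})$.

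Your approach replaces the black-box composition theorem by a direct adversary computation over the XOR of the per-block indicators. This is the standard ``parity trick'' for amplifying an adversary bound across independent instances, and your parameter count is right: $m_X,m_Y\ge ns/16$ because every block contributes at least $(s/4)^2$ indicator-flipping swaps, and $\ell_{max}\le s/4+1$ because any witnessing position lies in a single block, reducing the count to the per-block analysis of \cref{lem:upperblockham}. The payoff of your route is that it is self-contained and avoids citing the composition theorem; the payoff of the paper's route is modularity and brevity, since it reuses the dense lower bound and the string-recovery lower bound as black boxes. Both are valid proofs of the same bound.
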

\begin{proof}
  We use the multiplicativity of quantum query complexity under composition~\cite{reichardt:tight}.
  Consider $n/s$ permutation matrices $\mat P^{(1)},\ldots, \mat P^{({n/s)}}$, each of size $s \times s$ with the promise that each permutation is in one of the two sets $X$ or $Y$ from the proof of \cref{lem:upperblockham}.
  Consider the bit string~$x$ of length $n/s$ whose $j$-th bit indicates which of the two sets $\mat P^{(j)}$ belongs to.
  Finally consider the $n \times n$ permutation matrix $\mat P = \oplus_{j=1}^{n/s} \mat P^{(j)}$, which has the different $\mat P^{(j)}$'s as its diagonal blocks.
  This matrix~$\mat P$ naturally has an $s$-sparse oracle corresponding to dense access to the $\mat P^{(j)}$'s.
  Finding the single-bit descriptor $z$ of $\mat P$ allows us to learn the full string $x$, and thus (due to the multiplicativity of quantum query complexity) takes at least $\Omega(\frac{n}{s}\cdot s\sqrt{s}) = \Omega(n\sqrt{s})$ queries to the $s$-sparse oracle for~$\mat P$.
\end{proof}

We now show that learning only a constant fraction of the entries of the single-bit descriptor of a permutation correctly still requires $\Omega(n\sqrt{s})$ quantum queries.

\begin{lem}\label{lem:partonebit}
  Let $s$ be a positive multiple of $4$, and let $n$ be a positive multiple of $s$.
  There exists a constant $\lambda \in (0,1/3]$ such that, given an $n\times n$ permutation matrix~$\mat P$ corresponding to a permutation $\sigma$,
  finding a $\tilde{z}$ that agrees with the single-bit descriptor $z$ of $\sigma$ on a $1-\lambda$ fraction of the elements (i.e., $|\tilde{z} -z |\leq \lambda n$), with success probability at least $2/3$, requires $\Omega(n\sqrt{s})$ quantum queries to an $s$-sparse matrix oracle for $\mat P$.
\end{lem}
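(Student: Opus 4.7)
The plan is to reduce exact single-bit descriptor recovery (Corollary~\ref{cor:sparseonebit}) to the approximate version via column-randomization and majority-vote amplification. Suppose, towards a contradiction, that a quantum algorithm $\mathcal{A}$ using $Q$ queries to the $s$-sparse oracle outputs $\tilde z$ with $\abs{\tilde z - z} \leq \lambda n$ with probability $\geq 2/3$. We will build an exact-recovery algorithm with cost $O(Q \log n)$, so that Corollary~\ref{cor:sparseonebit} forces $Q = \widetilde\Omega(n\sqrt s)$.

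The key observation is that column permutations preserve the $s$-sparse oracle structure. For a uniformly random $\pi \in S_n$, the matrix $\mat P \pi$ corresponds to the permutation $\sigma \circ \pi$ and has single-bit descriptor $z \circ \pi$; the $s$-sparse oracle for $\mat P \pi$ can be simulated from that of $\mat P$ with $O(1)$ overhead per query by relabeling indices. Running $\mathcal{A}$ on (the simulated oracle of) $\mat P \pi$ yields, with probability $\geq 2/3$, a string $\tilde z'$ within Hamming distance $\lambda n$ of $z \circ \pi$, and then $\hat z := \tilde z' \circ \pi^{-1}$ is within Hamming distance $\lambda n$ of $z$. Over the randomness of $\pi$, the error set of $\hat z$ is, conditional on $\mathcal{A}$ succeeding, uniformly distributed among subsets of $[n]$ of its size, so for each fixed $i \in [n]$ the probability that $\hat z_i \neq z_i$ is at most $1/3 + 2\lambda/3$, which is strictly below $1/2$ for any constant $\lambda < 1/4$. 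Repeating this procedure $T = \Theta(\log n)$ times with independent $\pi_1, \dotsc, \pi_T$ and taking the bit-wise majority gives, by Chernoff's inequality and a union bound over the $n$ positions, exact recovery of $z$ with probability $\geq 2/3$ and total query cost $TQ$. Corollary~\ref{cor:sparseonebit} then yields $TQ = \Omega(n \sqrt s)$, i.e., $Q = \Omega(n\sqrt s / \log n)$.

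The main obstacle is removing the $\log n$ factor from the union bound to obtain the claimed $\Omega(n\sqrt s)$ bound exactly. A cleaner route would be to invoke Ambainis's adversary method (Lemma~\ref{lem:adv}) directly for the approximate recovery problem: one would choose sets $X, Y$ of permutation matrices whose descriptors are pairwise separated by Hamming distance strictly greater than $2\lambda n$ (so that no $\tilde z$ is simultaneously a valid output for inputs from both sets), together with a relation $R$ whose parameters $m_X$, $m_Y$, $\ell_{\max}$ yield the adversary bound $\Omega(n \sqrt s)$. The block-diagonal construction of Corollary~\ref{cor:sparseonebit} suggests taking $X$ and $Y$ to be block-diagonal with per-block descriptor constraints that amplify the small Hamming gap of Lemma~\ref{lem:upperblockham} into a macroscopic one; the main challenge will be designing $R$ so that the swap-based relation used in Lemma~\ref{lem:upperblockham} lifts to the block-diagonal inputs while simultaneously preserving both the descriptor separation and the adversary bound.
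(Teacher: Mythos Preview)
Your randomization-and-majority argument is valid as far as it goes, but it only proves $Q = \Omega(n\sqrt{s}/\log n)$, and you correctly identify this as a gap. Your sketched second route via a direct adversary argument is not carried out, so as it stands the proposal does not establish the stated $\Omega(n\sqrt{s})$ bound.

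The paper's proof is considerably simpler and avoids both amplification and a fresh adversary computation. The missing idea is this: rather than boosting the approximate answer to an exact one by \emph{repetition}, you can \emph{correct} it in one shot using Grover search. Given $\tilde z$ with $\abs{\tilde z - z} \leq \lambda n$, run repeated Grover search over the $ns$ potentially nonzero positions of $\mat P$, marking those $(i,j)$ with $P_{ij}=1$ and $\tilde z_j \not\equiv i \pmod 2$; there are at most $\lambda n$ marked elements, so all of them can be found with $c_2\sqrt{ns \cdot \lambda n} = c_2 n\sqrt{\lambda s}$ queries (for some absolute constant $c_2$). Flipping those bits of $\tilde z$ recovers $z$ exactly. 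If the exact-recovery lower bound of Corollary~\ref{cor:sparseonebit} is $c_1 n\sqrt{s}$, then $T + c_2 n\sqrt{\lambda s} \geq c_1 n\sqrt{s}$, and choosing the constant $\lambda \leq c_1^2/(4c_2^2)$ gives $T \geq \tfrac{c_1}{2} n\sqrt{s}$. No log factor, no union bound, and $\lambda$ is chosen after the fact to absorb the correction cost into the constant.
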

\begin{proof}
Let $N, S \in \mathbb{N}$ and $c_1\in \mathbb{R}_{>0}$ be the constants from the big-$\Omega$ notation in \cref{cor:sparseonebit}, i.e., for  any $n>N$ and $s>S$, at least $c_1 n \sqrt{s}$ quantum queries to an $s$-sparse oracle are required to fully recover $z$ with success probability at least $2/3$.

Let $c_2$ be a constant such that repeated Grover search over a search space of size $a$, with at most~$b$ marked elements, uses at most $c_2 \sqrt{ab}$ queries to find all marked elements with probability~$1$.\footnote{This bound can be obtained for instance by running exact versions of Grover~\cite{grover1996QSearch} assuming there are $t$ solutions, for $t$ going down from $b$ to 1, and removing solutions when they are found. This has overall complexity $\sum_{t=1}^b \bigO{\sqrt{a/t}}=\bigO{\sqrt{ab}}$. It is not hard to see that if the number of solutions is at most $b$ then this algorithm will find all solutions with certainty (ignoring the negligible errors that come from the fact that with a finite gate set we cannot implement every rotation perfectly).}

Let $\lambda = \min\{\tfrac{1}{3},\tfrac{c_1^2}{4 c_2^2}\}$.
Let $\mathcal{A}$ be an algorithm that uses at most $T$ queries to an $s$-sparse oracle for the input $\mat P$ and outputs an $n$-bit string $\tilde{z}$ that (with probability $\geq 2/3$) agrees with the single-bit descriptor $z$ for $\mat P$ on all but $\leq\lambda n$ elements.
Searching for the elements where these strings do not agree can be done by Grover searching through all $ns$ possible non-zero positions in the matrix, and marking those $(i,j)$ where $P_{ij} = 1$ and $\tilde{z}_j \not\equiv i \bmod 2$. There are at most $\lambda n$ such elements, so we can find all of them using $c_2 n\sqrt{\lambda s}$ queries to $\mat P$. We can now flip the erroneous bits in $\tilde{z}$ to recover $z$ fully.

Hence $z$ can be identified exactly with error probability at most $1/3$ using $T+c_2n\sqrt{\lambda s}$ queries.
It follows from \cref{cor:sparseonebit} that (if $n>N$ and $s>S$)
\[
T+c_2 n \sqrt{\lambda s} \geq c_1 n\sqrt{s}.
\]
So
\begin{align*}
T &\geq c_1 n\sqrt{s} -c_2 n \sqrt{\lambda s}\\
& \geq c_1 n\sqrt{s} - c_2 \sqrt{\frac{c_1^2}{4c_2^2}}n  \sqrt{s}\\
&= c_1 n \sqrt{s}/2. \qedhere
\end{align*}
\end{proof}

\subsection{Matrix scaling}
To obtain a query lower bound for matrix scaling, we will reduce the problem of learning the single-bit descriptor of a permutation to the scaling problem, by replacing each $1$-entry of the permutation matrix by one of two $2 \times 2$ gadget matrices (and each $0$-entry by the $2\times 2$ all-$0$ matrix).
These gadget matrices are chosen such that we can determine the single-bit descriptor from the column-scaling vectors $\vec y$ of an $\Theta(1)$-$\ell_1$-scaling to uniform marginals.

We will use the following gadget matrices:
\begin{align*}
  \mat B_0 =
  \begin{bmatrix}
    \tfrac29 & \tfrac49 \\
    \tfrac19 & \tfrac29
  \end{bmatrix}, \qquad
  \mat B_1 =
  \begin{bmatrix}
    \tfrac49 & \tfrac29 \\
    \tfrac29 & \tfrac19
  \end{bmatrix},
\end{align*}
Note that the two matrices have the same columns, but in reverse order.

\begin{lem}\label{lem:gadgets}
  The matrices $\mat B_0, \mat B_1 \in \{\frac19,\tfrac29,\tfrac49\}^{2\times 2}$ are entrywise positive, with entries summing to one, and they are exactly scalable to uniform marginals.
  For $i\in\{0,1\}$, let $(\vec x,\vec y)$ be $\frac18$-$\ell_1$-scaling vectors for $\mat B_i$ to uniform marginals.
  If~$i=0$ then $y_1 - y_2 > 0.18$, while if $i=1$ then $y_1 - y_2 <-0.18$.
  Moreover, $(\vec x,(y_2,y_1))$ are $\frac18$-$\ell_1$-scaling vectors for $\mat B_{1-i}$ to uniform marginals.

  In other words, the matrices can be distinguished just by learning the column-scaling vectors, but they have the same set of possible row-scaling vectors.
\end{lem}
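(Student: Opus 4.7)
The proof will be driven by the observation that both gadgets are rank one: $\mat B_0 = \vec u \vec v^{\top}$ with $\vec u = (\tfrac{2}{3}, \tfrac{1}{3})$ and $\vec v = (\tfrac{1}{3}, \tfrac{2}{3})$, while $\mat B_1 = \vec u'(\vec v')^{\top}$ with $\vec u' = (\tfrac{2}{3}, \tfrac{1}{3})$ and $\vec v' = (\tfrac{2}{3}, \tfrac{1}{3})$. From this, entrywise positivity and the fact that the entries sum to $1$ are immediate by inspection. Exact scalability to $(\vec 1/2, \vec 1/2)$ is also immediate from the factorization: setting $x_i = -\ln(2 u_i)$ and $y_j = -\ln(2 v_j)$ (resp.\ with $\vec u', \vec v'$) makes every scaled entry equal to $\tfrac{1}{4}$, so all row and column marginals are $\tfrac{1}{2}$.

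The quantitative bound on $y_1 - y_2$ will be extracted from the rank-one structure via the column marginals alone. After scaling, the $j$-th column sum of $\mat B_0(\vec x, \vec y)$ is $v_j e^{y_j} \sum_i u_i e^{x_i}$, so the ratio of the two column sums of $\mat B_0(\vec x, \vec y)$ equals $\tfrac{v_1}{v_2} e^{y_1 - y_2} = \tfrac{1}{2} e^{y_1 - y_2}$, while for $\mat B_1(\vec x, \vec y)$ the ratio equals $\tfrac{v_1'}{v_2'} e^{y_1 - y_2} = 2 e^{y_1 - y_2}$. I will then prove the elementary sub-claim that if $\gamma_1, \gamma_2 \geq 0$ satisfy $\abs{\gamma_1 - \tfrac{1}{2}} + \abs{\gamma_2 - \tfrac{1}{2}} \leq \tfrac{1}{8}$, then $\gamma_1/\gamma_2 \in [\tfrac{3}{4}, \tfrac{4}{3}]$; writing $\gamma_1 = \tfrac{1}{2} + a$, $\gamma_2 = \tfrac{1}{2} + b$ with $\abs a + \abs b \leq \tfrac{1}{8}$ and a short monotonicity argument pins down the extremum at the corner $(\tfrac{1}{2}, \tfrac{3}{8})$. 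Applying this to the column sums of $\mat B_i(\vec x, \vec y)$ (which are, by hypothesis, a valid input to the sub-claim since we have a $\tfrac{1}{8}$-$\ell_1$-scaling), I get for $\mat B_0$ that $\tfrac{1}{2} e^{y_1 - y_2} \geq \tfrac{3}{4}$, hence $y_1 - y_2 \geq \ln(3/2) > 0.18$; and for $\mat B_1$ that $2 e^{y_1 - y_2} \leq \tfrac{4}{3}$, hence $y_1 - y_2 \leq -\ln(3/2) < -0.18$.

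The ``moreover'' statement is a direct consequence of the fact that $\mat B_1$ is obtained from $\mat B_0$ by swapping its two columns: therefore $\mat B_{1-i}(\vec x, (y_2, y_1))$ equals $\mat B_i(\vec x, \vec y)$ with its two columns permuted. The row marginals are unchanged, and the column marginals are permuted; since the target $(\tfrac{1}{2}, \tfrac{1}{2})$ is symmetric under this permutation, both $\ell_1$-errors are preserved, and the scaling property transfers. There is no real obstacle in this proof; the only quantitative point requiring care is confirming the numerical comparison $\ln(3/2) \approx 0.405 > 0.18$ that the desired cut-off of $0.18$ is chosen to make the bound comfortable.
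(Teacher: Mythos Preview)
Your proposal is correct and follows essentially the same approach as the paper: both exploit that the columns of $\mat B_0$ are proportional (your rank-one observation) so that the ratio of the scaled column sums equals $\tfrac12 e^{y_1-y_2}$, and then bound this ratio via the $\tfrac18$-$\ell_1$ constraint on the column marginals. Your sub-claim uses the joint $\ell_1$ constraint and yields the sharper bound $y_1-y_2 \geq \ln(3/2) \approx 0.405$, whereas the paper bounds each column marginal separately ($c_1 \geq \tfrac38$, $c_2 \leq \tfrac58$) and obtains only $y_1-y_2 \geq \ln(6/5) \approx 0.182$; both comfortably exceed $0.18$.
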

\begin{proof}
Since one matrix is obtained by swapping the columns of the other, the last claim is immediately clear, and it suffices to prove the remaining claims for~$\mat B_0$.

First, we note that $\mat B_0$ is exactly scalable, since
\begin{align*}
  \begin{bmatrix}
    \frac34 & 0 \\
    0 & \frac32
  \end{bmatrix}
  \begin{bmatrix}
    \tfrac29 & \tfrac49 \\
    \tfrac19 & \tfrac29
  \end{bmatrix}
  \begin{bmatrix}
    \frac32 & 0 \\
    0 & \frac34
  \end{bmatrix}
= \begin{bmatrix}
  \frac14 & \frac14 \\
  \frac14 & \frac14
\end{bmatrix}
\end{align*}
has uniform marginals.
Now suppose that $(\vec x,\vec y)$ is an $\frac18$-$\ell_1$-scaling of $\mat B_0$ to uniform marginals.
By the requirement on the column marginals, we have
\begin{align*}
  \left( \frac29 e^{x_1} + \frac19 e^{x_2} \right) e^{y_1} \geq \frac12 - \frac18 \quad\text{and}\quad
  \left( \frac49 e^{x_1} + \frac29 e^{x_2} \right) e^{y_2} \leq \frac12 + \frac18.
\end{align*}
By dividing the first inequality by the second one we get
\begin{align*}
  \frac12 \cdot \frac{e^{y_1}}{e^{y_2}} \geq \frac35,
\end{align*}
and so
\begin{equation*}
  y_1 - y_2 \geq \ln\frac65 > 0.18.
  \qedhere
\end{equation*}
\end{proof}
We now prove our query lower bound for matrix scaling. This lower bound also holds if we allow the algorithm to know the set of vectors $\vec x$ that can occur in an $\eps$-$\ell_1$-scaling $(\vec x, \vec y)$ of the matrix.
\begin{thm}\label{thm:scaling lower bound}
Let $s$ be a positive multiple of $8$ and let $n$ be a multiple of $s$.
There exists a constant~$\eps\in (0,1)$ such that any quantum algorithm which, given an $s$-sparse oracle for an $n\times n$-matrix with entries summing to one that is exactly scalable to uniform marginals, returns an $\eps$-$\ell_1$-scaling with probability $\geq 2/3$, requires $\Omega(n \sqrt{s})$ quantum queries to the oracle.
\end{thm}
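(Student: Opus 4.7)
The plan is to reduce the problem of partially learning the single-bit descriptor of a permutation, for which \cref{lem:partonebit} provides an $\Omega(n\sqrt{s})$ query lower bound, to the task of computing an $\eps$-$\ell_1$-scaling, using the gadget matrices $\mat B_0, \mat B_1$ from \cref{lem:gadgets}. Given an $(n/2) \times (n/2)$ permutation matrix $\mat P$ corresponding to $\sigma$ in the $(s/2)$-sparse oracle model, I build an $n \times n$ matrix $\mat A$ by viewing it as an $(n/2) \times (n/2)$ grid of $2 \times 2$ blocks: at block position $(i, j)$, place the scaled gadget $\tfrac{2}{n} \mat B_{i \bmod 2}$ if $P_{ij} = 1$, and the $2\times 2$ zero block otherwise. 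Since the identity of the gadget depends only on the block row $i$ and not on $\mat P$ itself, one query to the sparse oracle for $\mat P$ suffices to answer one query to the sparse oracle for $\mat A$, so the simulation preserves query lower bounds. The resulting matrix $\mat A$ is $s$-sparse, has entries summing to $1$, and is exactly scalable to the uniform marginals $\vec 1/n$ by scaling each non-zero block as in \cref{lem:gadgets}.

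Now suppose an algorithm returns an $\eps$-$\ell_1$-scaling $(\vec x, \vec y)$ of $\mat A$ after $T$ queries. For $j \in [n/2]$ let $C_j$ denote the contribution of columns $\{2j-1, 2j\}$ to the column-marginal $\ell_1$-error, and let $R_i$ denote the analogous row contribution at rows $\{2i-1, 2i\}$; then $\sum_j C_j \leq \eps$ and $\sum_i R_i \leq \eps$. A direct computation, tracking the factor $2/n$ in the block entries against the target marginal $1/n$, shows that the column (respectively row) $\ell_1$-error of the gadget $\mat B_{z_j}$ at block position $(\sigma(j), j)$ under the induced block-level scaling equals $(n/2) C_j$ (respectively $(n/2) R_{\sigma(j)}$), where $z_j = \sigma(j) \bmod 2$. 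Call $j$ \emph{good} if both $C_j \leq 1/(4n)$ and $R_{\sigma(j)} \leq 1/(4n)$. By Markov's inequality the number of bad indices is at most $8 n \eps$. For each good $j$, the restricted pair $(x_{2\sigma(j)-1}, x_{2\sigma(j)})$ and $(y_{2j-1}, y_{2j})$ forms a $\tfrac{1}{8}$-$\ell_1$-scaling of $\mat B_{z_j}$ to uniform marginals, so by \cref{lem:gadgets} the sign of $y_{2j-1} - y_{2j}$ determines $z_j$; setting $\tilde z_j = 0$ if $y_{2j-1} - y_{2j} > 0$ and $\tilde z_j = 1$ otherwise yields $\tilde z_j = z_j$ for every good $j$.

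Choose $\eps = \lambda/16$, where $\lambda$ is the constant from \cref{lem:partonebit}. Then the number of bad indices is at most $\lambda n/2$, so $\tilde z$ agrees with the single-bit descriptor $z$ of $\sigma$ on at least $(1-\lambda)(n/2)$ of its $n/2$ coordinates. Applying \cref{lem:partonebit} with parameters $\hat n = n/2$ and $\hat s = s/2$ (which satisfy its hypotheses because $s$ is a multiple of $8$ and $n$ is a multiple of $s$), we get a lower bound of $\Omega(\hat n \sqrt{\hat s}) = \Omega(n \sqrt s)$ queries to $\mat P$. Combined with the $O(1)$ per-query simulation of the oracle for $\mat A$, this forces $T = \Omega(n\sqrt{s})$, as claimed.

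The main obstacle is the quantitative passage from the global $\ell_1$-error $\eps$ to per-block $\ell_1$-errors: only an averaged fraction of the blocks can be guaranteed to inherit a $\tfrac{1}{8}$-scaling, and one must choose the constant $\eps$ small enough that the Markov bound leaves fewer than $\lambda n/2$ bad blocks while respecting the $\tfrac{1}{8}$ threshold required by \cref{lem:gadgets}. Carefully tracking the $2/n$ normalization between the block of $\mat A$ and the gadget $\mat B_{z_j}$ on both sides of the scaling is exactly what pins down the choice $\eps = \lambda/16$.
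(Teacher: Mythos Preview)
Your proposal is correct and follows essentially the same approach as the paper's proof: the same gadget construction, the same choice $\eps=\lambda/16$, and the same Markov-type averaging to pass from the global $\ell_1$-error to a $\tfrac18$-scaling on all but $\lambda n/2$ blocks, then invoking \cref{lem:gadgets} and \cref{lem:partonebit}. The only cosmetic difference is that you apply Markov separately to the row and column contributions with threshold $1/(4n)$ each, whereas the paper applies it once to the combined per-block error with threshold $1/8$; the resulting bad-block counts coincide.
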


\begin{proof}
Let $\eps = \frac\lambda{16} \in \Theta(1)$, where $\lambda$ is the constant from~\cref{lem:partonebit}.
Assume there is a $T$-query quantum algorithm~$\mathcal{A}$ that solves the $\eps$-$\ell_1$-scaling problem with success probability $\geq 2/3$.
We will construct an algorithm for recovering a $1-\lambda$ fraction of the $n/2$-bit single-bit descriptor~$z$ of an unknown permutation in $S_{n/2}$, given $s/2$-sparse oracle access to the corresponding $n/2\times n/2$ permutation matrix $\mat P$.

Let $\mat Q \in \mathbb{R}^{n\times n}$ be a $n/2 \times n/2$ block-matrix with the following $2 \times 2$ blocks:
 \begin{itemize}
     \item If $P_{ij} = 0$ then the $(i,j)$-block in $\mat Q$ is the $2\times 2$ all-$0$ matrix.
     \item If $P_{ij} = 1$ then the $(i,j)$-block in $\mat Q$ is the matrix $\mat{B}_{i \bmod 2}$ from \cref{lem:gadgets}.
 \end{itemize}

Finally, let $\mat R = \frac2n \mat Q$.
Then $\mat R$ is an $n\times n$ matrix that has entries summing to one, and it is exactly scalable to uniform marginals. Note that an $s$-sparse matrix oracle for $\mat R$ can be constructed using $\bigO{1}$ queries to an $s/2$-sparse oracle for $\mat P$.

By applying $\mathcal{A}$ to $\mat R$ we obtain $\eps$-$\ell_1$-scaling vectors $(\vec x,\vec y)$.
Let~$\vec x^{(i)} \in \R^2$ (resp.~$\vec y^{(j)}$) denote the restriction of $\vec x$ (resp.~$\vec y$) to the two coordinates corresponding to the $i$-th row (resp.~the $j$-th column) of $\mat P$.
We can compute the $\ell_1$-error of the row and column marginals in terms of the $2\times 2$-blocks $\frac{2}{n}\mat B_{i\bmod 2}$ corresponding to $P_{ij} = 1$.
Accordingly, we obtain
\begin{align*}
  \sum_{(i,j) : P_{ij} = 1}  \norm{\vec r(e^{\vec x^{(i)}} \frac{2}{n}\mat B_{i \bmod 2} e^{\vec y^{(j)}}) - \frac{\vec 1} n}_1 \leq \eps
  \quad\text{and}\quad
  \sum_{(i,j) : P_{ij} = 1}  \norm{\vec c(e^{\vec x^{(i)}} \frac2n \mat B_{i \bmod 2} e^{\vec y^{(j)}}) - \frac{\vec 1} n}_1 \leq \eps,
\end{align*}
and hence
\begin{align*}
  \sum_{(i,j) : P_{ij} = 1}  \norm{\vec r(e^{\vec x^{(i)}} \mat B_{i \bmod 2} e^{\vec y^{(j)}}) - \frac{\vec 1} 2}_1 + \norm{\vec c(e^{\vec x^{(i)}} \mat B_{i \bmod 2} e^{\vec y^{(j)}}) - \frac{\vec 1} 2}_1 \leq \eps n.
\end{align*}
This implies that for all but $\frac{\eps n}{1/8} = \lambda n/2$ of these blocks, the corresponding~$(\vec x^{(i)}, \vec y^{(j)})$ are $\frac18$-$\ell_1$-scaling vectors.
By \cref{lem:gadgets}, we can therefore correctly identify $i\bmod 2$ from $\vec y^{(j)}$ for  $(1-\lambda)n/2$ columns, and hence we learn that $z_j = i \bmod 2$ for those $j$.
Since this identifies $(1-\lambda)n/2$ of the $n/2$ elements of~$z$, and because the algorithm has error probability at most $1/3$, it follows from \cref{lem:partonebit} that $T \in \Omega(n\sqrt{s})$.
\end{proof}

\begin{cor}
  There exists a constant $\eps\in (0,1)$ such that any quantum algorithm which, given a sparse oracle for an $n\times n$-matrix that is exactly scalable to uniform marginals and has $m$ potentially non-zero entries which sum to~$1$, returns an $\eps$-$\ell^1$-scaling with probability $\geq 2/3$, requires $\Omega(\sqrt{mn})$ quantum queries to the oracle.
\end{cor}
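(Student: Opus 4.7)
The plan is to reduce to \cref{thm:scaling lower bound} via an embedding argument. Given $n$ and $m \geq n$, we will choose parameters $n' \leq n$ and $s'$ (a multiple of $8$ with $n'$ a multiple of $s'$) such that $n'\sqrt{s'} = \Omega(\sqrt{mn})$, and embed the hard $n' \times n'$ $s'$-sparse instance from \cref{thm:scaling lower bound} into an $n \times n$ matrix with at most $m$ potentially non-zero entries, in such a way that an $\eps$-$\ell_1$-scaling of the latter yields an $O(\eps)$-$\ell_1$-scaling of the former.

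For the main case, assume $m \geq 9n$ (the boundary regime $m = \Theta(n)$ is handled separately by applying \cref{thm:scaling lower bound} with fixed $s = 8$, which already yields $\Omega(n) = \Omega(\sqrt{mn})$ in that range). Take $s'$ to be the largest multiple of $8$ with $s' \leq (m-n)/n$, and $n' = s' \lfloor n/s' \rfloor$. One checks that $n' \geq n/2$, $s' \geq 8$, and $n'\sqrt{s'} = \Theta(\sqrt{mn})$. Given a hard instance $\mat R \in \R^{n' \times n'}$ from \cref{thm:scaling lower bound}, define the block-diagonal matrix
\[
  \mat M = \frac{n'}{n} \mat R \,\oplus\, \frac{1}{n} \mat I_{n-n'}.
\]
Then $\mat M$ is entrywise non-negative, sums to $1$, has $n's' + (n-n') \leq (m-n) + (n-n') \leq m$ potentially non-zero entries, and is exactly scalable to the uniform marginals $\vec 1/n$ (since $\mat R$ is exactly scalable to $\vec 1/n'$ and the identity block already carries the right marginals). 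A sparse oracle for $\mat M$ can be simulated with $O(1)$ overhead from a sparse oracle for $\mat R$, since queries that fall into the identity part require no oracle access.

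Finally, a direct computation using the block-diagonal structure shows that for any $\eps$-$\ell_1$-scaling $(\vec x, \vec y)$ of $\mat M$, the restriction $(\vec x_1, \vec y_1)$ of these vectors to the first $n'$ coordinates satisfies
\[
  \norm{\vec r(\mat R(\vec x_1, \vec y_1)) - \vec 1/n'}_1 \leq \frac{n}{n'}\eps \leq 2\eps,
\]
and analogously for columns, because the contribution of the upper-left block to the $\ell_1$-error of the row marginals of $\mat M(\vec x, \vec y)$ equals $(n'/n) \norm{\vec r(\mat R(\vec x_1, \vec y_1)) - \vec 1/n'}_1$. Choosing $\eps = \eps_0/2$, where $\eps_0$ is the constant from \cref{thm:scaling lower bound}, any quantum algorithm producing an $\eps$-$\ell_1$-scaling of $\mat M$ must therefore solve the $\eps_0$-$\ell_1$-scaling problem for $\mat R$, and hence uses $\Omega(n'\sqrt{s'}) = \Omega(\sqrt{mn})$ queries. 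The argument is essentially routine once the reduction is set up; the only delicate step is choosing $n'$ and $s'$ so that the sparsity, divisibility, and scaling constraints are all simultaneously met.
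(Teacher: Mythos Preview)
Your proposal is correct and follows the same route as the paper: the corollary is deduced from \cref{thm:scaling lower bound} via the relation $m=ns$, so that $n\sqrt{s}=\sqrt{mn}$. The paper in fact states the corollary without proof, treating it as immediate; your block-diagonal embedding $\mat M=\tfrac{n'}{n}\mat R\oplus\tfrac{1}{n}\mat I_{n-n'}$ together with the choice of $n',s'$ just makes explicit the (routine) rounding needed to accommodate arbitrary $(n,m)$ subject to the divisibility constraints of the theorem, and your computation relating the $\ell_1$-error of $\mat M$ to that of $\mat R$ is correct.
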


\bibliographystyle{alpha}
\bibliography{references}

\appendix

\section{Potential bound for matrix scaling}\label{sec:potential bound}
The purpose of this section is to prove the following result.
Recall that we write $\mat A(\vec x, \vec y)$ for the matrix whose $(i,j)$-th entry is $A_{ij} e^{x_i + y_j}$.
For vectors $\vec x,\vec y\in\R^n$, denote by $\ip{\vec x}{\vec y} =\sum_{i=1}^n x_i y_i$ their inner product.

\begin{prp}
\label{thm:generalpotentialbound}
  Let $\mat A \in \R_+^{n \times n}$ be a non-zero matrix with non-negative entries, and let $\vec r, \vec c \in \R_+^n$ such that $\norm{\vec r}_1 = \norm{\vec c}_1 = 1$.
  Then the following statements are equivalent:
  \begin{enumerate}[label=(\roman*)]
  \item\label{item:f} The function $f\colon \R^n \times \R^n \to \R$ given by
    \[
      f(\vec x, \vec y) = \sum_{i,j = 1}^n A_{ij} e^{x_i + y_j} - \ip{\vec r}{\vec x} - \ip{\vec c}{\vec y}
    \]
    is bounded from below.
  \item\label{item:F} The function $F\colon \R^n \times \R^n \to \R$ given by
    \[
      F(\vec x, \vec y) = \ln\left(\sum_{i,j = 1}^n A_{ij} e^{x_i + y_j}\right) - \ip{\vec r}{\vec x} - \ip{\vec c}{\vec y}
    \]
    is bounded from below.
  \item\label{item:scalable} The matrix $\mat A$ is asymptotically $(\vec r, \vec c)$-scalable.
  \item\label{item:conv} The point $(\vec r, \vec c)$ is in the convex hull of the set
  \begin{equation*}
    \Omega = \{ \vec\omega_{ij} = (\vec e_i, \vec e_j) : A_{ij} > 0 \} \subseteq \R^n \times \R^n.
  \end{equation*}
  \end{enumerate}
  Furthermore, if any of these conditions hold, then
  \begin{equation}\label{eq:generalpotentialbound}
    f(\vec 0, \vec 0) - \inf_{\vec x, \vec y \in \R^n} \, f(\vec x, \vec y) \leq \norm{\mat A}_1 - 1 + \ln(1/\mu)
  \end{equation}
  where $\mu$ is the smallest non-zero entry of $\mat A$.
\end{prp}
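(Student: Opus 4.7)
The plan is to prove the four-way equivalence by establishing the chain (i)$\Leftrightarrow$(ii)$\Leftrightarrow$(iv)$\Leftrightarrow$(iii), and to obtain the quantitative bound \eqref{eq:generalpotentialbound} as a byproduct of the Jensen step used for (iv)$\Rightarrow$(ii).

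First, for (i)$\Leftrightarrow$(ii), I would exploit the one-parameter family $(\vec x + t\vec 1, \vec y)$. Using $\norm{\vec r}_1 = 1$, we have $f(\vec x + t\vec 1, \vec y) = e^t S(\vec x, \vec y) - t - \ip{\vec r}{\vec x} - \ip{\vec c}{\vec y}$, where $S(\vec x, \vec y) = \sum_{i,j} A_{ij} e^{x_i + y_j}$. Minimizing over $t \in \R$ gives $t^\star = -\ln S$ and minimum value $1 + F(\vec x, \vec y)$, yielding both directions of the equivalence and the identity $\inf f = 1 + \inf F$.

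Next, for (iv)$\Rightarrow$(ii) together with the quantitative bound, I would pick a probability distribution $\vec p$ on $\{(i,j) : A_{ij}>0\}$ with marginals $(\vec r, \vec c)$ (which exists by (iv)) and apply the weighted AM--GM inequality
\begin{equation*}
  \sum_{i,j} A_{ij} e^{x_i + y_j} \;\geq\; \prod_{p_{ij}>0} \Bigl(\tfrac{A_{ij}}{p_{ij}}\, e^{x_i+y_j}\Bigr)^{p_{ij}}.
\end{equation*}
Taking logarithms and using the marginal conditions $\sum_j p_{ij} = r_i$, $\sum_i p_{ij} = c_j$ yields $F(\vec x, \vec y) \geq \sum_{i,j} p_{ij} \ln(A_{ij}/p_{ij}) \geq \ln \mu + H(\vec p) \geq \ln \mu$, where I used that $A_{ij} \geq \mu$ on the support of $\vec p$ and that Shannon entropy is non-negative. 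Combined with $\inf f = 1 + \inf F$ and $f(\vec 0, \vec 0) = \norm{\mat A}_1$, this gives \eqref{eq:generalpotentialbound}. For the reverse direction (ii)$\Rightarrow$(iv), I would argue by contrapositive using a separating hyperplane: if $(\vec r, \vec c) \notin \conv(\Omega)$, pick $(\vec \alpha, \vec \beta) \in \R^n \times \R^n$ with $\alpha_i + \beta_j > \ip{\vec \alpha}{\vec r} + \ip{\vec \beta}{\vec c}$ for all $(i,j)$ with $A_{ij} > 0$; evaluating $F$ along the ray $(-t\vec \alpha, -t \vec \beta)$ then drives $F \to -\infty$ as $t \to \infty$.

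Finally, for (iii)$\Leftrightarrow$(iv): the forward direction is easy, since if $\mat B^{(\varepsilon)} = \mat A(\vec x_\varepsilon, \vec y_\varepsilon)$ is an $\varepsilon$-$\ell_1$-scaling then its support equals that of $\mat A$, so the normalized marginal $(\vec r(\mat B^{(\varepsilon)}), \vec c(\mat B^{(\varepsilon)}))/\norm{\mat B^{(\varepsilon)}}_1$ lies in $\conv(\Omega)$, and the limit $\varepsilon \to 0$ combined with closedness of $\conv(\Omega)$ gives $(\vec r, \vec c) \in \conv(\Omega)$. The reverse direction is what I expect to be the main obstacle, since we need to produce approximate scalings even though the infimum of $f$ may not be attained. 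Noting that $\nabla f(\vec x, \vec y) = \bigl(\vec r(\mat A(\vec x, \vec y)) - \vec r,\; \vec c(\mat A(\vec x, \vec y)) - \vec c\bigr)$, it suffices to show that $\inf_{(\vec x, \vec y)} \norm{\nabla f}_2 = 0$. To achieve this I would use Tikhonov regularization: the strongly convex and coercive function $f_\varepsilon(\vec x, \vec y) = f(\vec x, \vec y) + \varepsilon(\norm{\vec x}_2^2 + \norm{\vec y}_2^2)$ has a unique minimizer $(\vec x_\varepsilon, \vec y_\varepsilon)$ satisfying $\nabla f(\vec x_\varepsilon, \vec y_\varepsilon) = -2\varepsilon (\vec x_\varepsilon, \vec y_\varepsilon)$; comparing $f_\varepsilon(\vec x_\varepsilon, \vec y_\varepsilon) \leq f_\varepsilon(\vec 0, \vec 0) = f(\vec 0, \vec 0)$ gives $\varepsilon \norm{(\vec x_\varepsilon, \vec y_\varepsilon)}_2^2 \leq f(\vec 0, \vec 0) - \inf f$, hence $\norm{\nabla f(\vec x_\varepsilon, \vec y_\varepsilon)}_2 \leq 2 \sqrt{\varepsilon (f(\vec 0, \vec 0) - \inf f)} \to 0$ as $\varepsilon \to 0$, producing the required sequence of approximate scalings.
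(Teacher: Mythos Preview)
Your proposal is correct and complete; the overall logical structure matches the paper's (reducing $f$ to $F$ via the shift $t\vec 1$, linking $F$ to the polytope $\conv\Omega$, and producing approximate scalings from near-infimal points), but two of the key steps are carried out by genuinely different arguments.

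For (iv)$\Rightarrow$(ii) and the bound \eqref{eq:generalpotentialbound}, the paper argues via Farkas' lemma: since $(\vec r,\vec c)\in\conv\Omega$, for every $(\vec x,\vec y)$ some exponent $\ip{\vec\omega_{ij}-(\vec r,\vec c)}{(\vec x,\vec y)}$ is nonnegative, so the sum inside the logarithm is at least $\mu$. Your Jensen/AM--GM argument with a coupling $\vec p$ having marginals $(\vec r,\vec c)$ is equally short and in fact yields the sharper intermediate bound $\inf F\geq \ln\mu + H(\vec p)$ (which you then discard); this is a nice byproduct the paper's Farkas route does not give. For the production of approximate scalings, the paper shows (ii)$\Rightarrow$(iii) by observing that $F$ is $2$-smooth and using the standard descent-lemma estimate $\norm{\nabla F}_2^2\leq 2L\,(F-F^*)$. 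Your Tikhonov regularization of $f$ achieves the same conclusion without invoking smoothness of $F$, at the cost of a short coercivity/strong-convexity argument; both yield $\inf\norm{\nabla f}_2=0$ and hence asymptotic scalability. Finally, you close the cycle with a direct separation argument for (ii)$\Rightarrow$(iv), whereas the paper routes through (iii); either works.
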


\noindent
The equivalence of the conditions is well-known for general $\mat A$, and the same goes for the bound in \cref{eq:generalpotentialbound} when $\mat A$ has \emph{strictly positive} entries.
For general $\mat A$, this bound appears to be new.

The equivalence will be proved by a series of lemmas.
We first establish the equivalence between \ref{item:f} and \ref{item:F}.

\begin{lem}
  \label{lem:equality f F inf}
  For any $\vec x, \vec y \in \R^n$, we have the equality
  \[
    \min_{t \in \R} \, f(\vec x + t \vec 1, \vec y) = 1 + F(\vec x, \vec y).
  \]
\end{lem}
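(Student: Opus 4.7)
The plan is a direct calculation: expand $f(\vec x + t\vec 1, \vec y)$ explicitly as a function of $t$ and minimize.

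First, I would write
\[
  f(\vec x + t\vec 1, \vec y) = e^t \cdot S(\vec x, \vec y) - \ip{\vec r}{\vec x} - \ip{\vec c}{\vec y} - t,
\]
where $S(\vec x, \vec y) = \sum_{i,j=1}^n A_{ij} e^{x_i + y_j}$, using that $\ip{\vec r}{t\vec 1} = t \norm{\vec r}_1 = t$ by the normalization $\norm{\vec r}_1 = 1$. Since $\mat A$ is non-zero (as assumed in \cref{thm:generalpotentialbound}, of which this lemma is part), $S(\vec x, \vec y) > 0$, so the function $t \mapsto e^t S(\vec x, \vec y) - t$ is strictly convex and attains its minimum at the unique critical point $t^* = -\ln S(\vec x, \vec y)$, with minimum value $1 + \ln S(\vec x, \vec y)$.

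Plugging this in yields
\[
  \min_{t \in \R} f(\vec x + t\vec 1, \vec y) = 1 + \ln S(\vec x, \vec y) - \ip{\vec r}{\vec x} - \ip{\vec c}{\vec y} = 1 + F(\vec x, \vec y),
\]
as desired. There is no real obstacle here; the only subtlety is that the argument requires $S(\vec x, \vec y) > 0$ so that $\ln S$ is defined, but this is guaranteed by the standing assumption that $\mat A$ is non-zero and has non-negative entries (every exponential is strictly positive, so at least one term contributes). The computation is essentially a one-variable optimization and takes a few lines.
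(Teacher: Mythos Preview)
Your proposal is correct and essentially identical to the paper's own proof: both expand $f(\vec x + t\vec 1,\vec y)$ using $\ip{\vec r}{\vec 1}=1$, observe that the resulting function of $t$ is strictly convex, solve for the critical point $t^* = -\ln S(\vec x,\vec y)$, and read off the minimum value $1+F(\vec x,\vec y)$. You even make explicit the positivity of $S(\vec x,\vec y)$, which the paper leaves implicit.
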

\begin{proof}
  Consider the function $g: \R \to \R$ given by $g(t) = f(\vec x + t \vec 1, \vec y)$.
  Then
  \begin{align*}
    g(t) & = \sum_{i,j=1}^n A_{ij} e^{x_i + y_j + t} - \ip{\vec r}{\vec x} - t \ip{\vec r}{\vec 1} - \ip{\vec c}{\vec y} \\
    & = \left( \sum_{i,j=1}^n A_{ij} e^{x_i + y_j} \right) e^t - \ip{\vec r}{\vec x} - \ip{\vec c}{\vec y} - t
  \end{align*}
  since $\ip{\vec r}{\vec 1} = \norm{\vec r}_1 = 1$.
  From this expression, it is clear that $g(t)$ is strictly convex, and attains its minimum at $t^* \in \R$ such that
  \[
    g'(t^*) = \left( \sum_{i,j=1}^n A_{ij} e^{x_i + y_j} \right) e^{t^*} - 1 = 0,
  \]
  i.e.,
  \[
    t^* = - \ln\left( \sum_{i,j=1}^n A_{ij} e^{x_i + y_j} \right).
  \]
  Consequently, we see that
  \[
    \min_{t \in \R} g(t) = g(t^*) = 1 - \ip{\vec r}{\vec x} - \ip{\vec c}{\vec y} + \ln\left(\sum_{i,j=1}^n A_{ij} e^{x_i + y_j}\right) = 1 + F(\vec x, \vec y)
  \]
  as desired.
\end{proof}

\begin{cor}
  \label{cor:equality f F inf}
  We have
  \[
    \inf_{\vec x, \vec y \in \R^n} \, f(\vec x, \vec y) = 1 + \inf_{\vec x, \vec y \in \R^n} \, F(\vec x, \vec y).
  \]
  In particular, $f$ is bounded from below if and only if $F$ is.
\end{cor}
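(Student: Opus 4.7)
The plan is essentially to take infima of both sides of the identity in Lemma~\ref{lem:equality f F inf} and observe that the left-hand side, when optimized over $\vec x$ and $\vec y$ as well, recovers $\inf f$.

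More concretely, I would start from the identity
\[
  \min_{t \in \R} f(\vec x + t \vec 1, \vec y) = 1 + F(\vec x, \vec y)
\]
established in Lemma~\ref{lem:equality f F inf}, and take the infimum over $(\vec x, \vec y) \in \R^n \times \R^n$ on both sides. The right-hand side becomes $1 + \inf_{\vec x, \vec y \in \R^n} F(\vec x, \vec y)$, while the left-hand side equals
\[
  \inf_{\vec x, \vec y \in \R^n} \inf_{t \in \R} f(\vec x + t \vec 1, \vec y) = \inf_{\vec x, \vec y \in \R^n, \, t \in \R} f(\vec x + t \vec 1, \vec y).
\]
The key (trivial) observation is that the reparametrization $(\vec x, \vec y, t) \mapsto (\vec x + t \vec 1, \vec y)$ is surjective onto $\R^n \times \R^n$, so the latter infimum equals $\inf_{\vec x, \vec y \in \R^n} f(\vec x, \vec y)$. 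This yields the claimed equality.

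The ``in particular'' statement is then immediate: $f$ is bounded from below precisely when $\inf f > -\infty$, which by the identity is equivalent to $\inf F > -\infty$, i.e., $F$ is bounded from below. There is no real obstacle here; the whole corollary is a direct consequence of the previous lemma, and the proof should fit in a few lines.
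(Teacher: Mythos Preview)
Your proposal is correct and is exactly the intended argument: the paper states this corollary without proof, as an immediate consequence of Lemma~\ref{lem:equality f F inf}, and the surjectivity of the reparametrization $(\vec x,\vec y,t)\mapsto(\vec x+t\vec 1,\vec y)$ is precisely the (trivial) step that makes taking the infimum over all three variables equal to $\inf f$.
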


The rest of the equivalences we prove by showing that~\ref{item:scalable} implies~\ref{item:conv}, which implies~\ref{item:F}, which in turn implies \ref{item:scalable}.
We first show that~\ref{item:scalable} implies~\ref{item:conv}.
\begin{lem}
  Let $\mat A$ be asymptotically $(\vec r, \vec c)$-scalable for $\norm{\vec r}_1 = \norm{\vec c}_1 = 1$.
  Then, $(\vec r, \vec c) \in \conv \Omega$.
\end{lem}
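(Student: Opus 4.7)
The plan is to use the assumption of asymptotic scalability to produce a sequence of elements of $\conv \Omega$ converging to $(\vec r, \vec c)$, and then invoke closedness of the convex hull of a finite set.

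First, I would unpack the assumption: since $\mat A$ is asymptotically $(\vec r, \vec c)$-scalable, for every $\eps > 0$ there exist $\vec x, \vec y \in \R^n$ such that $\mat B := \mat A(\vec x, \vec y)$ satisfies $\norm{\vec r(\mat B) - \vec r}_1 \leq \eps$ and $\norm{\vec c(\mat B) - \vec c}_1 \leq \eps$. In particular, $\norm{\mat B}_1 = \sum_i r_i(\mat B) \in [1-\eps, 1+\eps]$, which is strictly positive for $\eps < 1$.

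Next, I would observe that the normalized matrix $\mat B / \norm{\mat B}_1$ defines a probability distribution supported only on pairs $(i,j)$ with $A_{ij} > 0$ (since the sign of entries is preserved by the scaling). Writing $p_{ij} = B_{ij}/\norm{\mat B}_1$, we have
\[
\sum_{i,j : A_{ij} > 0} p_{ij} \, (\vec e_i, \vec e_j) = \left(\frac{\vec r(\mat B)}{\norm{\mat B}_1}, \frac{\vec c(\mat B)}{\norm{\mat B}_1}\right),
\]
which exhibits the right-hand side as a point in $\conv \Omega$. By the closeness of $\vec r(\mat B), \vec c(\mat B)$ to $\vec r, \vec c$ and the fact that $\norm{\mat B}_1 \to 1$ as $\eps \to 0$, this point can be made arbitrarily close (in, say, $\ell_1$-norm) to $(\vec r, \vec c)$.

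Finally, since $\Omega$ is a finite set, its convex hull $\conv \Omega$ is compact, hence closed. Therefore the limit $(\vec r, \vec c)$ lies in $\conv \Omega$, completing the proof. No step seems to pose a genuine obstacle; the main thing to be careful about is to verify that $\norm{\mat B}_1$ stays bounded away from $0$ (guaranteed by $\eps < 1$) so that the normalization is well-defined, and to spell out the explicit bound to make the limit argument precise.
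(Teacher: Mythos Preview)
Your proposal is correct and follows essentially the same approach as the paper's proof: normalize the scaled matrix $\mat A(\vec x,\vec y)$ by its $\ell_1$-norm to obtain a convex combination of the $\vec\omega_{ij}$, then take a limit and use that $\conv\Omega$ is closed (being the convex hull of a finite set). The paper phrases the limit via an explicit sequence indexed by $k$, but the content is identical.
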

\begin{proof}
  For any $(\vec x, \vec y)$, note that
  \[
    \bigl(\vec r(\mat A(\vec x, \vec y)), \vec c(\mat A(\vec x, \vec y))\bigr) = \sum_{i,j=1}^n A_{ij} e^{x_i + y_i} \vec \omega_{ij} = \sum_{i,j=1, \, A_{ij} > 0}^n A_{ij} e^{x_i + y_j}.
  \]
  Now choose for every $k > 0$ a pair $(\vec x^k, \vec y^k)$ such that
  \[
     \norm{\vec r(\mat A(\vec x^k, \vec y^k)) - \vec r}_1 + \norm{\vec c(\mat A(\vec x^k, \vec y^k)) - \vec c}_1 \leq \frac{1}{k}.
  \]
  Then as $k \to \infty$, we have
  \[
    \norm{\mat A(\vec x^k, \vec y^k)}_1 = \norm{\vec r(\mat A(\vec x^k, \vec y^k))}_1 \to \norm{\vec r}_1 = 1,
  \]
  so we also have
  \[
    \frac{\sum_{i,j = 1, A_{ij} > 0}^n A_{ij} e^{x_i^k + y_j^k} \vec\omega_{ij}}{\sum_{i,j = 1, A_{ij} > 0}^n A_{ij} e^{x_i^k + y_j^k}} = \frac{(\vec r(\mat A(\vec x^k, \vec y^k)), \vec c(\mat A(\vec x^k, \vec y^k)))}{\norm{\mat A(\vec x^k, \vec y^k)}_1} \to (\vec r, \vec c)
  \]
  as $k \to \infty$. As the left-hand side is in $\conv \Omega$, which is closed, we conclude that $(\vec r, \vec c) \in \conv \Omega$.
\end{proof}
Now, we show that~\ref{item:conv} implies~\ref{item:F}.
\begin{lem}
  \label{lem:F lower bound}
  Assume $(\vec r, \vec c) \in \R^n \times \R^n$ is in the convex hull of $\Omega$. Then $F$ is bounded from below by $\ln(\mu)$, where $\mu$ is the smallest non-zero $A_{ij}$.
\end{lem}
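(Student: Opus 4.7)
The plan is to exploit the combinatorial content of $(\vec r, \vec c) \in \conv \Omega$: we can write $(\vec r, \vec c) = \sum_{(i,j) : A_{ij} > 0} p_{ij} \vec \omega_{ij}$ with $p_{ij} \geq 0$ and $\sum_{ij} p_{ij} = 1$. Reading off the coordinates, $r_i = \sum_j p_{ij}$ and $c_j = \sum_i p_{ij}$, so $(p_{ij})$ is a joint distribution with marginals $\vec r$ and $\vec c$, whose support is contained in the support of $\mat A$. The point is that this immediately rewrites the linear part of $F$ as an expectation:
\[
  \ip{\vec r}{\vec x} + \ip{\vec c}{\vec y} = \sum_{i,j : p_{ij} > 0} p_{ij}(x_i + y_j).
\]

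Now I would lower-bound the log-term in $F$ in three steps. First, restrict the sum to pairs $(i,j)$ with $p_{ij} > 0$; for these $A_{ij} \geq \mu$, so
\[
  \sum_{i,j=1}^n A_{ij} e^{x_i+y_j} \;\geq\; \mu \sum_{i,j : p_{ij} > 0} e^{x_i+y_j}.
\]
Second, since $p_{ij} \in [0,1]$ for every such pair, $\sum_{i,j : p_{ij} > 0} e^{x_i+y_j} \geq \sum_{i,j : p_{ij} > 0} p_{ij} e^{x_i+y_j}$. Finally, since $(p_{ij})$ is a probability distribution, Jensen's inequality applied to the concave function $\ln$ gives
\[
  \ln\!\left(\sum_{i,j : p_{ij} > 0} p_{ij} e^{x_i + y_j}\right) \;\geq\; \sum_{i,j : p_{ij} > 0} p_{ij}(x_i+y_j).
\]
Chaining these three estimates yields $\ln(\sum_{ij} A_{ij} e^{x_i+y_j}) \geq \ln(\mu) + \sum_{ij} p_{ij}(x_i+y_j)$, and subtracting $\ip{\vec r}{\vec x} + \ip{\vec c}{\vec y}$ from both sides gives $F(\vec x, \vec y) \geq \ln(\mu)$, as desired.

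There is no real obstacle: everything reduces to recognizing $(p_{ij})$ as a coupling and then applying Jensen once. The only thing to be mildly careful about is handling the support of $(p_{ij})$ correctly, namely that $p_{ij} > 0$ forces $A_{ij} \geq \mu$, which is exactly where the bound $\ln(\mu)$ enters.
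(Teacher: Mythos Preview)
Your proof is correct and takes a genuinely different route from the paper's. The paper rewrites
\[
  F(\vec x, \vec y) = \ln\Bigl(\sum_{(i,j):A_{ij}>0} A_{ij}\, e^{\ip{\vec\omega_{ij} - (\vec r,\vec c)}{(\vec x,\vec y)}}\Bigr)
\]
and then invokes a separation/Farkas argument: if every inner product $\ip{\vec\omega_{ij} - (\vec r,\vec c)}{(\vec x,\vec y)}$ were negative, the hyperplane $(\vec x,\vec y)$ would strictly separate $(\vec r,\vec c)$ from $\conv\Omega$, contradicting the hypothesis. Hence at least one exponent is nonnegative, and the single corresponding term already gives $\geq \mu$. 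Your argument instead fixes once and for all a convex combination $(p_{ij})$ witnessing $(\vec r,\vec c)\in\conv\Omega$, interprets it as a coupling with marginals $\vec r,\vec c$, and applies Jensen's inequality to the log. The two proofs are in some sense dual: the paper bounds the sum by its \emph{largest} term (using that the max over a set containing a zero-average convex combination is nonnegative), while you bound it by a \emph{weighted average} of its terms and then push the log inside. Your approach is slightly more elementary in that it avoids any appeal to LP duality or separation, and the coupling interpretation is pleasant; the paper's argument is a one-liner once Farkas is granted.
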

\begin{proof}
  Note that we have
  \begin{align*}
    F(\vec x, \vec y) & = \ln\left(\sum_{i,j=1}^n A_{ij} e^{x_i + y_j}\right) - \ip{\vec r}{\vec x} - \ip{\vec c}{\vec y} \\
    & = \ln\left(\sum_{i,j=1, A_{ij} > 0}^n A_{ij} e^{\ip{\vec \omega_{ij} - (\vec r, \vec c)}{(\vec x, \vec y)}}\right).
  \end{align*}
  Since $(\vec r, \vec c)$ is in the convex hull of $\Omega$, for every $(\vec x, \vec y)$ at least one of the inner products in the exponential is non-negative by Farkas' lemma (since it is impossible to separate $(\vec r, \vec c)$ from $\conv \Omega$).
  Therefore, $\sum_{i,j=1}^n A_{ij} e^{\ip{\vec \omega_{ij} - (\vec r, \vec c)}{(\vec x, \vec y)}} \geq \mu$.
\end{proof}
Lastly, we show that~\ref{item:F} implies~\ref{item:scalable}.
\begin{lem}
  Assume $F$ is bounded from below. Then $\mat A$ is asymptotically $(\vec r, \vec c)$-scalable.
\end{lem}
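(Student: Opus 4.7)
The plan is to exploit the smoothness and convexity of $F$, together with the observation that $\nabla F$ is precisely (up to normalization) the marginal error.  A direct computation gives
\[
  \nabla_{\vec x} F(\vec x, \vec y) = \frac{\vec r(\A x y)}{\norm{\A x y}_1} - \vec r, \qquad
  \nabla_{\vec y} F(\vec x, \vec y) = \frac{\vec c(\A x y)}{\norm{\A x y}_1} - \vec c,
\]
so $\nabla F(\vec x, \vec y) = 0$ if and only if the \emph{normalized} matrix $\A x y / \norm{\A x y}_1$ is exactly $(\vec r, \vec c)$-scaled, and more generally, small gradient at $(\vec x, \vec y)$ translates into the normalized marginals of $\A x y$ being close to $(\vec r, \vec c)$.

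The next step is to produce a sequence $(\vec x^k, \vec y^k)$ along which $\nabla F(\vec x^k, \vec y^k) \to 0$. Here I invoke the classical fact that any continuously differentiable convex function on $\R^d$ that is bounded from below has $\inf_{(\vec x, \vec y)} \norm{\nabla F(\vec x, \vec y)} = 0$. One clean way to see this is the gradient-flow argument: if $\norm{\nabla F} \geq \alpha > 0$ everywhere, then along the flow $\dot{\vec z}(t) = -\nabla F(\vec z(t))$ one has $\tfrac{d}{dt} F(\vec z(t)) = -\norm{\nabla F(\vec z(t))}^2 \leq -\alpha^2$, forcing $F(\vec z(t)) \to -\infty$, which contradicts the hypothesis that $F$ is bounded below. (Alternatively, one may invoke Ekeland's variational principle applied to a value-minimizing sequence.) This is the step I expect to require the most care to justify rigorously, since $\nabla F$ is only locally Lipschitz and one must either argue existence of the flow or bypass it via a variational principle.

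Finally, I would rescale to turn the vanishing-gradient sequence into a scaling sequence. Set $s_k = \norm{\mat A(\vec x^k, \vec y^k)}_1$ and define $\vec{\tilde x}^k = \vec x^k - \tfrac12 \ln(s_k) \vec 1$ and $\vec{\tilde y}^k = \vec y^k - \tfrac12 \ln(s_k) \vec 1$; then $\norm{\mat A(\vec{\tilde x}^k, \vec{\tilde y}^k)}_1 = 1$ and its row/column marginals coincide with the normalized marginals of $\mat A(\vec x^k, \vec y^k)$. Using $\norm{\vec v}_1 \leq \sqrt n \norm{\vec v}_2$ in finite dimension, we conclude
\[
  \norm{\vec r(\mat A(\vec{\tilde x}^k, \vec{\tilde y}^k)) - \vec r}_1 + \norm{\vec c(\mat A(\vec{\tilde x}^k, \vec{\tilde y}^k)) - \vec c}_1 \;\leq\; \sqrt{n} \cdot \norm{\nabla F(\vec x^k, \vec y^k)} \;\longrightarrow\; 0,
\]
which by definition says that $\mat A$ is asymptotically $(\vec r, \vec c)$-scalable. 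Combined with the earlier lemmas, this closes the equivalence loop \ref{item:scalable}$\Rightarrow$\ref{item:conv}$\Rightarrow$\ref{item:F}$\Rightarrow$\ref{item:scalable}, and together with \cref{cor:equality f F inf} and \cref{lem:F lower bound} it also yields the quantitative bound \cref{eq:generalpotentialbound} via $f(\vec 0, \vec 0) - \inf f = \norm{\mat A}_1 - 1 - \inf F \leq \norm{\mat A}_1 - 1 - \ln(\mu)$.
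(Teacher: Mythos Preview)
Your proposal is correct and follows essentially the same outline as the paper: compute $\nabla F$ as the normalized marginal error, produce points with arbitrarily small gradient, then renormalize. The only substantive difference is the mechanism for forcing $\norm{\nabla F}\to 0$. The paper observes that $F$ is globally $L$-smooth with $L=2$ (log-sum-exp has bounded Hessian), and then uses a single gradient step from a $\delta$-near minimizer: smoothness gives $F(\vec z - \tfrac1L\nabla F(\vec z)) - F(\vec z) \leq -\tfrac1{2L}\norm{\nabla F(\vec z)}_2^2$, while the near-minimality lower-bounds the same quantity by $-\delta$, yielding $\norm{\nabla F(\vec z)}_2^2 \leq 2L\delta$. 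Your gradient-flow/Ekeland route is more general (it works for any $C^1$ convex function bounded below), but note that your worry about $\nabla F$ being only locally Lipschitz is unfounded here: the log-sum-exp structure makes $\nabla F$ globally Lipschitz, so the flow exists for all time and the argument goes through directly. The paper's version is slightly more quantitative and avoids invoking Ekeland, but both are perfectly valid. Your closing remarks on rescaling and on assembling the equivalence and the bound \cref{eq:generalpotentialbound} match the paper exactly.
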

\begin{proof}
  The first observation is that the gradient of $F$ at $(\vec x, \vec y)$ is
  \[
    \grad F(\vec x, \vec y) = \frac{\sum_{i,j=1}^n A_{ij} e^{x_i + y_j} \vec \omega_{ij}}{\sum_{i,j=1}^n A_{ij} e^{x_i + y_j}} - (\vec r, \vec c).
  \]
  From this, one can show that $F$ is $L$-smooth with $L = 2$ (see, e.g., \cite[Lem.~3.10]{burgisser2020interior}).
  Now assume that $(\vec x, \vec y)$ are such that $F(\vec x, \vec y) \leq F^* + \delta$ for some $\delta > 0$, where $F^*$ is the infimum of~$F$.
  Then after a small gradient step $(\vec x', \vec y') = (\vec x, \vec y) - \frac{1}{L} \grad F(\vec x, \vec y)$, with a second-order Taylor expansion one obtains
  \begin{align*}
    F(\vec x', \vec y') - F(\vec x, \vec y) \leq - \frac{1}{L} \norm{\grad F(\vec x, \vec y)}_2^2 + \frac{1}{2L} \norm{\grad F(\vec x, \vec y)}_2^2 = -\frac{1}{2L} \norm{\grad F(\vec x, \vec y)}_2^2.
  \end{align*}
  Since also $F(\vec x', \vec y') - F(\vec x, \vec y) \geq F^* - F(\vec x, \vec y) \geq -\delta$, this implies that $\norm{\grad F(\vec x, \vec y)}_2^2/(2L) \leq \delta$.
  Therefore, if $(\vec x, \vec y)$ is such that $F(\vec x, \vec y) \leq F^* + \eps^2/2L$, then $\norm{\grad F(\vec x, \vec y)}_2^2\leq\eps^2$ and hence $\mat A(\vec x, \vec y)/\norm{\mat A(\vec x, \vec y)}_1$ is an $\eps$-$\ell_2$-scaling.
  Here $\eps>0$ is arbitrary, so it follows that $\mat A$ is asymptotically $(\vec r,\vec c)$-scalable.
\end{proof}
\begin{proof}[Proof of~\cref{thm:generalpotentialbound}]
The equivalence of the conditions follows from the previous lemmas.
To upper bound $f(\vec 0, \vec 0) - \inf_{(\vec x, \vec y)} f(\vec x, \vec y)$, note that we now have
\begin{align*}
    f(\vec 0, \vec 0) - \inf_{(\vec x, \vec y)} f(\vec x, \vec y) & = \norm{\mat A}_1 - \inf_{(\vec x, \vec y)} (1 + F(\vec x, \vec y)) \\
    & = \norm{\mat A}_1 - 1 -\inf_{(\vec x, \vec y)} F(\vec x, \vec y) \\
    & \leq \norm{\mat A}_1 - 1 -\ln(\mu) \\
    & = \norm{\mat A}_1 - 1 + \ln(1/\mu),
\end{align*}
where the first equality follows from \cref{cor:equality f F inf} and the inequality follows from \cref{lem:F lower bound}.
\end{proof}

\section{Proof of generalized Pinsker's inequality and Hellinger bound}\label{sec:generalized pinsker}
In this appendix we prove the following generalization of Pinsker's inequality (used in the analysis of the Sinkhorn algorithm) and a lower bound on the Hellinger distance (used in the analysis of the Osborne algorithm).

\lemGenPinsker*
\noindent
Note that we do not require $\vec b$ to be a probability distribution.
The proof we give is heavily inspired by arguments made in~\cite{klrs08}, which gave a lower bound on a relative entropy in terms of an $\ell_2$-distance.
We start with a lemma that verifies the properties of the function $w(\beta)$.

\begin{lem}
  \label{lem:generalized pinsker ingredient two}
  Let $w\colon(-1, \infty) \to \R$ be the function defined in~\cref{lem:generalized pinsker}.
  Then for $\beta \in [0, 1]$, we have $w(\beta) \geq \beta^2 / 4$.
  Furthermore, for $\beta \geq 1$, we have $w(\beta) \geq (1 - \ln 2) \beta$.
\end{lem}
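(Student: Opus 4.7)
The plan is to verify both inequalities by elementary calculus: for each range of $\beta$, define the difference between the two sides and show it is non-negative by checking its value at an endpoint and the sign of its derivative.

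For the first inequality, I would define $g(\beta) = w(\beta) - \beta^2/4 = \beta - \ln(1+\beta) - \beta^2/4$ on $[0,1]$. Clearly $g(0) = 0$. A direct computation gives
\[
g'(\beta) = 1 - \frac{1}{1+\beta} - \frac{\beta}{2} = \frac{\beta}{1+\beta} - \frac{\beta}{2} = \frac{\beta(1-\beta)}{2(1+\beta)},
\]
which is non-negative on $[0,1]$. Hence $g$ is non-decreasing on this interval, so $g(\beta) \geq g(0) = 0$ for all $\beta \in [0,1]$, proving $w(\beta) \geq \beta^2/4$.

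For the second inequality, I would define $h(\beta) = w(\beta) - (1-\ln 2)\beta = (\ln 2)\beta - \ln(1+\beta)$ on $[1,\infty)$. A direct evaluation shows $h(1) = \ln 2 - \ln 2 = 0$. Differentiating,
\[
h'(\beta) = \ln 2 - \frac{1}{1+\beta}.
\]
For $\beta \geq 1$ we have $\frac{1}{1+\beta} \leq \frac{1}{2} < \ln 2$, so $h'(\beta) > 0$ on $[1,\infty)$. Thus $h$ is strictly increasing on $[1,\infty)$, and since $h(1) = 0$, we conclude $h(\beta) \geq 0$, i.e.~$w(\beta) \geq (1-\ln 2)\beta$ for $\beta \geq 1$.

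There is no real obstacle here; the only mild subtlety is remembering that $\ln 2 > 1/2$, which is what makes the derivative $h'$ positive on the whole interval $[1,\infty)$. Both steps are short one-variable calculus arguments, so this should be a few lines in the final write-up.
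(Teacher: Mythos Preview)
Your proof is correct and follows essentially the same approach as the paper: both parts are handled by comparing derivatives and using the shared endpoint value. The only cosmetic difference is that for the second inequality the paper phrases the argument via convexity of $w$ (tangent line at $\beta=1$), whereas you directly show $h'(\beta)>0$; these are equivalent.
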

\begin{proof}
  Set $g(\beta) = \beta^2 / 4$.
  We have $w'(\beta) = 1 - 1/(1 + \beta)$ and $g'(\beta) = 2 \beta/4$.
  On $[0, 1]$, we have
  \begin{align*}
    (1 + \beta) w'(\beta) = 1 + \beta - 1 \geq \beta (1 + \beta) / 2 = (1 + \beta)g'(\beta)
  \end{align*}
  and $w(0) = g(0)$, so we see that $w(\beta) \geq g(\beta) = \beta^2/4$ on $[0,1]$.
  For the last claim, note that $w(1) = 1 + \ln 2$ and $w'(1) = \tfrac12 > (1 - \ln 2)$, so by convexity of $w$ we have $w(\beta) \geq (1 - \ln 2) \beta$ for any $\beta \in (-1, \infty)$.
\end{proof}

Next we prove the main inequality that will imply the generalized Pinsker inequality.

\begin{lem}
  \label{lem:generalized pinsker ingredient one}
  Let $\vec a, \vec d \in \R^n$ be vectors such that $\norm{\vec a}_1 = 1$, $\vec a$ has positive entries, and $d_\ell > - a_\ell$ for every $\ell \in [n]$.
  Then
  \begin{align*}
    \sum_{\ell=1}^n d_\ell - a_\ell \ln\left(1 + \frac{d_\ell}{a_\ell}\right) \geq \norm{\vec d}_1 - \ln(1 + \norm{\vec d}_1).
  \end{align*}
\end{lem}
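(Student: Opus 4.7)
The plan is to reduce the inequality to a single-variable convexity argument. Introduce the relative deviations $\beta_\ell = d_\ell / a_\ell$, which are all strictly greater than $-1$ by hypothesis. Since $\norm{\vec a}_1 = 1$, the left-hand side rewrites cleanly as
\begin{align*}
  \sum_{\ell=1}^n \bigl(d_\ell - a_\ell \ln(1 + \beta_\ell)\bigr) = \sum_{\ell=1}^n a_\ell \bigl(\beta_\ell - \ln(1+\beta_\ell)\bigr) = \sum_{\ell=1}^n a_\ell w(\beta_\ell),
\end{align*}
and the goal becomes $\sum_{\ell=1}^n a_\ell w(\beta_\ell) \geq w\bigl(\sum_{\ell=1}^n a_\ell \abs{\beta_\ell}\bigr)$, using $\norm{\vec d}_1 = \sum_\ell a_\ell \abs{\beta_\ell}$.

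The key observation that makes this work is a symmetrization step: $w(\beta) \geq w(\abs{\beta})$ for every $\beta > -1$. This is trivial for $\beta \geq 0$, so the content is when $\beta = -t$ with $t \in (0,1)$. For this, I would consider the difference $h(t) := w(-t) - w(t) = -2t + \ln\tfrac{1+t}{1-t}$. One checks $h(0) = 0$ and differentiates to get $h'(t) = -2 + \tfrac{2}{1-t^2} > 0$ on $(0,1)$, so $h(t) > 0$ there, giving $w(-t) > w(t)$. (Equivalently, one could expand $\ln\tfrac{1+t}{1-t} = 2\sum_{k\geq 0} t^{2k+1}/(2k+1) \geq 2t$.) This is the only step that requires a short calculation and is the main point to verify carefully.

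Combining the symmetrization with Jensen's inequality for the convex function $w$ restricted to $[0, \infty)$ (convexity holds since $w''(\beta) = (1+\beta)^{-2} > 0$), and using that $(a_\ell)_\ell$ is a probability distribution, yields
\begin{align*}
  \sum_{\ell=1}^n a_\ell w(\beta_\ell) \;\geq\; \sum_{\ell=1}^n a_\ell w(\abs{\beta_\ell}) \;\geq\; w\!\left(\sum_{\ell=1}^n a_\ell \abs{\beta_\ell}\right) = w(\norm{\vec d}_1),
\end{align*}
which is precisely the claimed bound. No further ingredients are needed; everything else follows immediately from the definitions.
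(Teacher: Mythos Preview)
Your proof is correct and follows essentially the same approach as the paper: the symmetrization step $w(\beta) \geq w(\lvert\beta\rvert)$ is exactly the paper's inequality $d - a\ln(1+d/a) \geq \lvert d\rvert - a\ln(1+\lvert d\rvert/a)$ (divide by $a$ and set $\beta = d/a$), proved by the same derivative computation, and your Jensen step for convex $w$ is equivalent to the paper's Jensen step for concave $t \mapsto \ln(1+t)$ since the linear part of $w$ passes through unchanged. The only difference is packaging in terms of $w$ and $\beta_\ell$, which makes the argument slightly cleaner.
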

\begin{proof}
  The proof consists of two parts. First, we show that for any $a,d \in \R$ with $a > 0$ and $d > -a$, we have
  \begin{equation}\label{eq:genpinintermed}
    d - a \ln\left(1 + \frac{d}{a}\right) \geq \abs{d} - a \ln\left(1 + \frac{\abs{d}}{a}\right).
  \end{equation}
  Clearly this holds with equality if $d \geq 0$, so assume $d < 0$.
  The function
  \[
    g(\beta) = -2\beta - a \ln\left(1 - \frac{\beta}{a}\right) + a \ln\left(1 + \frac{\beta}{a}\right)
  \]
  satisfies $g(0) = 0$ and $g'(\beta) \geq 0$ on $[0, a)$, so $g(\beta) \geq 0$ on $[0, a)$. Setting $\beta = - d$ yields
  \[
    2 d - a \ln\left(1 + \frac{d}{a}\right) + a \ln\left(1 - \frac{d}{a}\right) \geq 0
  \]
  for $d < 0$, as desired.

  To finish the proof, we use \cref{eq:genpinintermed} and see that
  \begin{align*}
    \sum_{\ell = 1}^n d_\ell - a_\ell \ln\left(1 + \frac{d_\ell}{a_\ell}\right) & \geq \sum_{\ell = 1}^n \abs{d_\ell} - a_\ell \ln\left(1 + \frac{\abs{d_\ell}}{a_\ell}\right) \\
    & = \norm{\vec d}_1 - \sum_{\ell=1}^n a_\ell \ln\left(1 + \frac{\abs{d_\ell}}{a_\ell}\right).
  \end{align*}
  Since the function $t \mapsto \ln(1 + t)$ is concave and $\norm{\vec a}_1 = 1$ and $a_\ell > 0$, we see that
  \begin{align*}
    \sum_{\ell=1}^n a_\ell \ln\left(1 + \frac{\abs{d_\ell}}{a_\ell}\right) \leq \ln\left(1 + \sum_{\ell=1}^n a_\ell \cdot \frac{\abs{d_\ell}}{a_\ell}\right) = \ln(1 + \norm{\vec d}_1)
  \end{align*}
  and the desired result follows.
\end{proof}

\begin{proof}[Proof of~\cref{lem:generalized pinsker}.]
  By continuity, we may assume without loss of generality that $\vec a$ has positive entries.
  Recall that
  \begin{align*}
    D(\vec a \Vert \vec b) & = \sum_{\ell = 1}^n b_\ell - a_\ell + a_\ell \ln(\frac{a_\ell}{b_\ell}) \\
                           & = \sum_{\ell = 1}^n (b_\ell - a_\ell) - a_\ell \ln\left(\frac{b_\ell - a_\ell}{a_\ell} + 1\right).
  \end{align*}
  Set $\vec d = \vec b - \vec a$, so that $d_\ell = b_\ell - a_\ell > -a_\ell$ for every $\ell \in [n]$.
  Therefore, we may apply \cref{lem:generalized pinsker ingredient one} to $\vec a$ and $\vec d$ to get
  \begin{align*}
    D(\vec a \Vert \vec b) \geq w(\norm{\vec b - \vec a}_1).
  \end{align*}
  The claimed bounds on the function $w(\beta)$ follow from \cref{lem:generalized pinsker ingredient one}.
\end{proof}

We also prove the following lower bound on the Hellinger distance between two non-negative vectors that are not necessarily normalized.
\lemPinskerHell*
\begin{proof}
  Note that we have
  \begin{align*}
      \norm{\vec a - \vec b}_1^2 & = \Big(\sum_{\ell=1}^n \abs{a_\ell - b_\ell}\Big)^2 \\
       & = \Big(\sum_{\ell=1}^n \abs*{\sqrt{a_\ell} - \sqrt{b_\ell}} \cdot \abs*{\sqrt{a_\ell} + \sqrt{b_\ell}}\Big)^2 \\
       & \leq \norm*{\sqrt{\vec a} - \sqrt{\vec b}}_2^2 \cdot \norm*{\sqrt{\vec a} + \sqrt{\vec b}}_2^2
  \end{align*}
  where we used the Cauchy--Schwarz inequality in the last step.
  The bound then follows from
  \[
    \norm{\sqrt{\vec a} + \sqrt{\vec b}}_2^2 = \sum_{\ell = 1}^n (\sqrt{a_\ell} + \sqrt{b_\ell})^2 \leq 2 \sum_{\ell=1}^n (a_\ell + b_\ell) = 2 (\norm{\vec a}_1 + \norm{\vec b}_1).
  \]
  where the inequality follows from the arithmetic-geometric mean inequality.
\end{proof}

\section{Improved analysis of Sinkhorn algorithm for entrywise-positive matrices}\label{sec:positive}
In this section we give an improved runtime analysis of~\cref{alg:FSFP testing} where the goal is to produce an $\eps$-relative-entropy-scaling of an \emph{entrywise-positive} matrix $\mat A$ to arbitrary marginals~$\vec r, \vec c$.
In \cref{sec: full testing}, we showed that for general matrices $\mat A$ and arbitrary marginals $\vec r, \vec c$, Sinkhorn's algorithm takes $\widetilde O(1/\eps)$ iterations to find an $\eps$-relative-entropy-scaling of $\mat A$ to $(\vec r, \vec c)$ (in this section, $\widetilde O$ always suppresses polylogarithmic factors in $n$ and $1/\eps$).
Here we show that for entrywise-positive matrices~$\mat A$, the required number of iterations is in fact $\widetilde O(1/\sqrt{\eps})$.
We do so by mimicking the analysis of~\cite{klrs08}, where a similar result is derived for $\ell_2$-scaling.
We show that their analysis is robust with respect to only using estimates of marginals, and that it extends to the relative-entropy setting.

The analysis we give here improves upon~\cref{sec: full testing} as follows.
For entrywise-positive matrices one can show the scaling vectors $\vec x, \vec y$ produced by the Sinkhorn algorithm each have \emph{variation norm} ($x_{\max} - x_{\min}$) bounded by a \emph{constant}, whereas for arbitrary matrices we can only show a bound that is linear in the number of iterations (compare~\cref{lem:full support bounded variation} with~\cref{lem:scaling norm bound}).
The convexity of the potential $f$ can then be used, along with the previous fact, to determine a potential bound which becomes better as the scaling error goes down.
Combining this `adaptive' potential bound with Pinsker's inequality then allows one to show that once~$\mat A$ is $\eps$-relative-entropy-scaled for $\eps \leq 1$, it takes $\widetilde O(1/\sqrt{\eps})$ full Sinkhorn iterations to obtain an $\eps/2$-relative-entropy-scaling.
Obtaining a relative-entropy-scaling with constant error takes a constant number of Sinkhorn iterations, and from there onwards it suffices to halve the scaling error at most $\log_2(1/\eps)$ times, where the number of iterations required to halve the scaling error increases by a factor $\sqrt{2}$ every time.
Carefully keeping track of the total number of iterations then gives a total iteration count of $\widetilde O(1/\sqrt{\eps})$.

As in the setting of \cref{sec: full testing}, let $\vec r, \vec c \in \R_+^n$ with $\norm{\vec r}_1 = \norm{\vec c}_1 = 1$.
Let $\ip{\cdot}{\cdot}$ denote the standard inner product in $\R^n$.

The following lemma is an $\ell_1$-analog of \cite[Lem.~6.1]{klrs08}.
\begin{lem}
  \label{lem:full support l1 potential gap}
  Let $\mat A \in \R_+^{n \times n}$ and let $(\vec x, \vec y), (\vec x^*, \vec y^*) \in \R^n \times \R^n$.
  If $(\vec x, \vec y)$ is such that $\vec c(\A x y) = \vec c$, then
  \[
    f(\vec x, \vec y) - f(\vec x^*, \vec y^*) \leq \norm{\vec r(\A x y) - \vec r}_1 (x_{\max} - x_{\min} + x_{\max}^* - x_{\min}^*).
  \]
  A similar statement holds if $\vec r(\A x y) = \vec r$.
\end{lem}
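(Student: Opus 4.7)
The plan is to combine convexity of the potential $f$ with the hypothesis $\vec c(\A x y) = \vec c$ and a translation-invariance argument. By convexity of $f$, the first-order inequality gives
\[
  f(\vec x, \vec y) - f(\vec x^*, \vec y^*) \leq \ip{\nabla f(\vec x, \vec y)}{(\vec x, \vec y) - (\vec x^*, \vec y^*)}.
\]
A direct computation yields $\nabla_{\vec x} f(\vec x, \vec y) = \vec r(\A x y) - \vec r$ and $\nabla_{\vec y} f(\vec x, \vec y) = \vec c(\A x y) - \vec c$, so under the hypothesis the $\vec y$-component of the inner product vanishes, and I am reduced to bounding $\ip{\vec r(\A x y) - \vec r}{\vec x - \vec x^*}$.

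The key observation is that $\vec r(\A x y) - \vec r$ is orthogonal to the all-ones vector $\vec 1$: its entries sum to $\norm{\A x y}_1 - \norm{\vec r}_1 = \norm{\vec c(\A x y)}_1 - 1 = \norm{\vec c}_1 - 1 = 0$, using $\vec c(\A x y) = \vec c$ together with the standing assumption $\norm{\vec r}_1 = \norm{\vec c}_1 = 1$. Consequently the inner product is invariant under translating $\vec x - \vec x^*$ by any multiple of $\vec 1$, so for every $t \in \R$ Hölder's inequality gives
\[
  \abs{\ip{\vec r(\A x y) - \vec r}{\vec x - \vec x^*}} = \abs{\ip{\vec r(\A x y) - \vec r}{\vec x - \vec x^* - t \vec 1}} \leq \norm{\vec r(\A x y) - \vec r}_1 \cdot \norm{\vec x - \vec x^* - t \vec 1}_\infty.
\]
I will then pick the optimal $t$, namely the midpoint of the interval $[\min_i(x_i - x_i^*), \max_i(x_i - x_i^*)]$, for which $\norm{\vec x - \vec x^* - t \vec 1}_\infty = \tfrac{1}{2}\bigl(\max_i(x_i - x_i^*) - \min_i(x_i - x_i^*)\bigr)$. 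The elementary estimates $\max_i(x_i - x_i^*) \leq x_{\max} - x_{\min}^*$ and $\min_i(x_i - x_i^*) \geq x_{\min} - x_{\max}^*$ then yield an upper bound of $\tfrac{1}{2}\bigl((x_{\max} - x_{\min}) + (x_{\max}^* - x_{\min}^*)\bigr)$, which is in fact a factor of two better than the stated bound.

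The symmetric case where $\vec r(\A x y) = \vec r$ is proved by the same argument after interchanging the roles of rows and columns. There is no serious obstacle: the proof is essentially a one-line convexity-plus-Hölder estimate, and the one observation that drives it is that the hypothesis $\vec c(\A x y) = \vec c$ forces the deviation vector $\vec r(\A x y) - \vec r$ to have zero sum, which is precisely what enables the translation trick used to replace $\ell_\infty$-norm by variation norm.
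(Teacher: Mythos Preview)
Your proof is correct and follows the same strategy as the paper: compute the gradient, use the first-order convexity inequality, observe that the hypothesis forces $\ip{\vec r(\A x y) - \vec r}{\vec 1} = 0$, and exploit this translation invariance together with H\"older's inequality. The only difference is in how the translation is carried out. The paper shifts $\vec x$ and $\vec x^*$ \emph{separately} to have zero mean, then applies the triangle inequality $\norm{\vec z - \vec z^*}_\infty \leq \norm{\vec z}_\infty + \norm{\vec z^*}_\infty$ and bounds each term by the corresponding variation norm (using that a zero-sum vector has $\norm{\cdot}_\infty$ at most its range). You instead shift the \emph{difference} $\vec x - \vec x^*$ by the optimal scalar (the midpoint of its range), which gives $\tfrac12$ of its variation norm directly. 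Your route is slightly cleaner and, as you note, recovers the bound with an extra factor of $\tfrac12$; this sharper constant is not needed downstream in the paper, but the argument is a genuine (if minor) improvement.
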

\begin{proof}
  We have
  \begin{align*}
    \grad_{\vec x} f(\vec x, \vec y) & = \vec r(\A x y) - \vec r, \\
    \grad_{\vec y} f(\vec x, \vec y) & = \vec c(\A x y) - \vec c,
  \end{align*}
  so in particular, if $\vec c(\A x y) = c$, then $\grad_{\vec y} f(\vec x, \vec y) = 0$.
  Now, by convexity of $f$, we have
  \[
    f(\vec x, \vec y) + \ip{\grad_{\vec x} f(\vec x, \vec y)}{\vec x^* - \vec x} \leq f(\vec x^*, \vec y^*),
  \]
  which we rearrange as
  \begin{equation}\label{eq:convex rearranged}
    f(\vec x, \vec y) - f(\vec x^*, \vec y^*)
  \leq \ip{\grad_{\vec x} f(\vec x, \vec y)}{\vec x - \vec x^*}
  = \ip{\vec r(\A x y) - \vec r}{\vec x - \vec x^*}.
  \end{equation}
  Since $\vec c(\A x y) = \vec c$ and $\|\vec c\|_1=1$, we have $\norm{\A x y}_1 = 1$ and $\norm{\vec r(\A x y)}_1 = 1$ as well. In particular,
  \[
    \ip{\vec r(\A x y) - \vec r}{\vec 1} = 0.
  \]
  Set $\vec z = \vec x - \frac{\ip{\vec x}{\vec 1}}{n} \vec 1$ and $\vec z^* = \vec x^* - \frac{\ip{\vec x^*}{\vec 1}}{n} \vec 1$.
  Then, using \cref{eq:convex rearranged},
  \begin{align*}
    f(\vec x, \vec y) - f(\vec x^*, \vec y^*)
    & \leq \ip{\vec r(\A x y) - \vec r}{\vec x - \vec x^*} \\
    & = \ip{\vec r(\A x y) - \vec r}{\vec z - \vec z^*} \\
    & \leq \norm{\vec r(\A x y) - \vec r}_1 \left( \norm{\vec z}_\infty + \norm{\vec z^*}_\infty \right) \\
    & \leq\norm{\vec r(\A x y) - \vec r}_1 (z_{\max} - z_{\min} + z_{\max}^* - z_{\min}^*) \\
    & =\norm{\vec r(\A x y) - \vec r}_1 (x_{\max} - x_{\min} + x_{\max}^* - x_{\min}^*)
  \end{align*}
  as desired.
  The last inequality holds because the entries of $\vec z$ and of $\vec z^*$ sum to zero.
\end{proof}

In the next lemma, we provide an analog of \cite[Lem.~6.2]{klrs08}, which shows that the vectors $\vec x$ and $\vec y$ produced by full Sinkhorn iterations for entrywise-positive $\mat A$ have bounded variation norm.
This is the only part of the analysis which requires entrywise positivity.
\begin{lem}\label{lem:full support bounded variation}
  Let $0 < \mu < \nu \leq 1$ and assume $\mat A \in [\mu, \nu]^{n \times n}$ and $\vec r \in \R_+^n$ strictly positive.
  Let $\vec y \in \R^n$, let~$\delta \geq 0$, and let $\vec x' \in \R^n$ be such that $\abs{x_\ell' - \ln(r_\ell / \sum_{j=1}^n A_{\ell j} e^{y_j})} \leq \delta$ for all $\ell \in [n]$.
  Then
  \[
    x_{\max}' - x_{\min}' \leq 2 \delta + \ln \frac{\nu}{\mu} + \ln \frac{r_{\max}}{r_{\min}}.
  \]
  The analogous statement holds for the column-scaling vectors after a column update.
\end{lem}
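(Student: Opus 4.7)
The plan is to bound the variation norm by comparing the candidate values for $x'_k$ and $x'_\ell$ at any two indices $k, \ell \in [n]$, exploiting the two-sided approximation guarantee together with the entrywise bounds on $\mat A$.

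First I would use the assumption $\abs{x'_\ell - \ln(r_\ell / \sum_{j=1}^n A_{\ell j} e^{y_j})} \leq \delta$ to write, for any $k, \ell \in [n]$,
\[
  x'_k - x'_\ell \leq \ln\!\left(\frac{r_k}{\sum_{j=1}^n A_{kj} e^{y_j}}\right) - \ln\!\left(\frac{r_\ell}{\sum_{j=1}^n A_{\ell j} e^{y_j}}\right) + 2\delta,
\]
which rearranges to
\[
  x'_k - x'_\ell \leq \ln\!\left(\frac{r_k}{r_\ell}\right) + \ln\!\left(\frac{\sum_{j=1}^n A_{\ell j} e^{y_j}}{\sum_{j=1}^n A_{kj} e^{y_j}}\right) + 2\delta.
\]

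Next, the entrywise bound $\mat A \in [\mu, \nu]^{n \times n}$ gives $\sum_j A_{\ell j} e^{y_j} \leq \nu \sum_j e^{y_j}$ and $\sum_j A_{kj} e^{y_j} \geq \mu \sum_j e^{y_j}$, so the middle logarithm is at most $\ln(\nu/\mu)$. Combined with the trivial bound $\ln(r_k/r_\ell) \leq \ln(r_{\max}/r_{\min})$, this yields
\[
  x'_k - x'_\ell \leq \ln\!\left(\frac{r_{\max}}{r_{\min}}\right) + \ln\!\left(\frac{\nu}{\mu}\right) + 2\delta.
\]
Taking $k$ to be an index at which $x'$ is maximized and $\ell$ one at which it is minimized proves the claim. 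The analogous statement for column updates is obtained by swapping the roles of rows and columns, replacing $\vec r$ by $\vec c$ and $\vec y$ by $\vec x$; the argument is identical.

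There is no real obstacle here: the proof is essentially a direct calculation, and the only ``work'' is observing that the ratio of the sums $\sum_j A_{\ell j} e^{y_j}$ and $\sum_j A_{kj} e^{y_j}$ is controlled entrywise by $\nu/\mu$, which is exactly where the entrywise-positivity hypothesis $\mat A \in [\mu,\nu]^{n\times n}$ is used (and why the analogous bound fails for matrices with zero entries).
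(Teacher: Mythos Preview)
Your proof is correct and follows essentially the same approach as the paper's: both bound $x'_k - x'_\ell$ by combining the $\delta$-approximation guarantee with the entrywise bounds $A_{ij}\in[\mu,\nu]$ to control the ratio $\bigl(\sum_j A_{\ell j}e^{y_j}\bigr)/\bigl(\sum_j A_{kj}e^{y_j}\bigr)$ by $\nu/\mu$. The only difference is cosmetic---the paper first sandwiches each $x'_\ell$ between $\ln(r_{\min}/(\nu\sum_j e^{y_j}))-\delta$ and $\ln(r_{\max}/(\mu\sum_j e^{y_j}))+\delta$ and then subtracts, whereas you subtract first and then bound the resulting ratio---but the content is identical.
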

\noindent In other words, the variation in the row-scaling vectors after a $\delta$-approximate Sinkhorn update is bounded above by $2\delta$ plus a quantity depending only on $\mat A$ and $\vec r$.
\begin{proof}
  For any $\ell \in [n]$, we have $\abs{x_\ell' - \ln (r_\ell / \sum_{j=1}^n A_{\ell j} e^{y_j})} \leq \delta$.
  By using the upper and lower bound on the entries of $\mat A$, we obtain
  \begin{align*}
    &\ln\left(\frac{r_{\max}}{\mu \sum_{j = 1}^n e^{y_j}}\right) + \delta
    \geq \ln\left(\frac{r_\ell}{\sum_{j = 1}^n A_{\ell j} e^{y_j}}\right) + \delta \\
    \geq x_\ell' \geq &\ln\left(\frac{r_\ell}{\sum_{j = 1}^n A_{\ell j} e^{y_j}}\right) - \delta
    \geq \ln\left(\frac{r_{\min}}{ \nu \sum_{j = 1}^n e^{y_j}}\right) - \delta.
  \end{align*}
  Therefore, for any $k, \ell \in [n]$, we obtain
  \begin{align*}
    x'_k - x'_{\ell} \leq \ln\left(\frac {r_{\max}} {\mu \sum_{j=1}^n e^{y_j}} \right) + \delta - \ln\left(\frac {r_{\min}} {\nu \sum_{j=1}^n e^{y_j}} \right) + \delta = 2 \delta + \ln \frac{r_{\max}}{r_{\min}} + \ln \frac{\nu}{\mu}
  \end{align*}
  as desired.
\end{proof}

Note that the previous proof fails if $\mat A$ does not have full support; one can still attempt to use an upper and lower bound on the non-zero entries of $\mat A$, but the support in the $k$-th and $\ell$-th rows generally differ, so the corresponding (logarithms of) sums of column-scaling vectors do not necessarily cancel.
We now state a useful corollary of \cref{lem:full support bounded variation}.

\begin{cor}
  \label{cor:fs bounded variation fixed point}
  Let $\mat A \in [\mu, \nu]^{n \times n}$ and let $(\vec x^*, \vec y^*) \in \R^n \times \R^n$ be such that $\mat A(\vec x^*, \vec y^*)$ is exactly $(\vec r, \vec c)$-scaled.
  Then
  \[
    x_{\max}^* - x_{\min}^* \leq \ln \frac{\nu}{\mu} + \ln \frac{r_{\max}}{r_{\min}},
  \]
  and
  \[
    y_{\max}^* - y_{\min}^* \leq \ln \frac{\nu}{\mu} + \ln \frac{c_{\max}}{c_{\min}}.
  \]
\end{cor}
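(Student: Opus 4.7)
The plan is to derive this corollary as a direct application of \cref{lem:full support bounded variation} with zero approximation error. The key observation is that an exact scaling is a fixed point of the Sinkhorn iteration: since $\mat A(\vec x^*, \vec y^*)$ is exactly scaled to $(\vec r, \vec c)$, we have $r_\ell(\mat A(\vec x^*, \vec y^*)) = e^{x_\ell^*} \sum_{j=1}^n A_{\ell j} e^{y_j^*} = r_\ell$ for every $\ell \in [n]$, and rearranging yields
\[
  x_\ell^* = \ln\left(\frac{r_\ell}{\sum_{j=1}^n A_{\ell j} e^{y_j^*}}\right).
\]
In other words, $\vec x^*$ is \emph{exactly} the vector produced by a row-update of Sinkhorn applied to $\vec y^*$.

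Now I would invoke \cref{lem:full support bounded variation} with $\vec y = \vec y^*$, $\delta = 0$, and $\vec x' = \vec x^*$ (the hypothesis of the lemma is satisfied trivially by the displayed equality above). The lemma then yields
\[
  x_{\max}^* - x_{\min}^* \leq 2 \cdot 0 + \ln\frac{\nu}{\mu} + \ln\frac{r_{\max}}{r_{\min}} = \ln\frac{\nu}{\mu} + \ln\frac{r_{\max}}{r_{\min}},
\]
which is precisely the first claimed bound. The bound on $\vec y^*$ follows by a completely symmetric argument: exactness of the column-marginals implies $y_\ell^* = \ln(c_\ell / \sum_{i=1}^n A_{i \ell} e^{x_i^*})$, which is the fixed-point condition for a column-update, and applying the analogous column version of \cref{lem:full support bounded variation} (mentioned in its statement) with $\delta = 0$ gives the desired bound involving $c_{\max}/c_{\min}$.

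There is no real obstacle here---the entire content of the corollary is that an exact scaling is a fixed point of one Sinkhorn half-iteration, so the variation bound of \cref{lem:full support bounded variation}, which already governs the variation after \emph{any} such half-iteration (even approximate ones), applies verbatim with $\delta = 0$. The only thing to double-check is that the hypotheses match: we need $\mat A$ to be entrywise in $[\mu, \nu]$, which is given, and the target marginals $\vec r$ (resp.\ $\vec c$) to be strictly positive, which is part of our standing assumptions on the matrix-scaling problem.
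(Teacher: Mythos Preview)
Your proposal is correct and essentially identical to the paper's own proof: both observe that an exact scaling is a fixed point of the Sinkhorn update (so $x_\ell^* = \ln(r_\ell / \sum_j A_{\ell j} e^{y_j^*})$), then apply \cref{lem:full support bounded variation} with $\delta = 0$, and handle $\vec y^*$ symmetrically.
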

\begin{proof}
  Since $(\vec x^*, \vec y^*)$ exactly scale $\mat A$ to $(\vec r, \vec c)$, an exact full Sinkhorn update does not change the scaling vectors, so it holds that
  $x^*_\ell = \ln(r_\ell / \sum_{j=1}^n A_{\ell j} e^{y^*_j})$ for all $\ell\in[n]$.
  Thus, $\vec x^*$ satisfies the assumptions of~\cref{lem:full support bounded variation} with $\delta = 0$ and the first bound follows.
  The second bound is proved completely analogously.
\end{proof}

To deal with updates with finite precision, we adapt \cref{lem:full support l1 potential gap} to the case where $\vec c(\A x y)$ and $\vec c$ are only approximately equal.

\begin{lem}
  \label{lem:full support l1 potential gap fp}
  Let $(\vec x, \vec y) \in \RR^n \times \RR^n$ and $\delta \in [0, 1/2]$ be such that $\abs{y_\ell - \ln(c_\ell / \sum_{i=1}^n A_{i \ell} e^{x_i})} \leq \delta$.
  Then, for any $(\vec x^*, \vec y^*) \in \RR^n \times \RR^n$,
  \[
    f(\vec x, \vec y) - f(\vec x^*, \vec y^*) \leq \delta + \left( \norm{\vec r(\A x y) - \vec r}_1 + 2 \delta \right)(x_{\max} - x_{\min} + x_{\max}^* - x_{\min}^*).
  \]
\end{lem}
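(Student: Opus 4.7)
The plan is to reduce to the exact-update lemma~\ref{lem:full support l1 potential gap} by inserting an auxiliary iterate $(\vec x, \vec y')$ whose column marginals match $\vec c$ exactly. Define $y'_\ell := \ln(c_\ell / \sum_{i=1}^n A_{i\ell} e^{x_i})$, so that $\vec c(\mat A(\vec x, \vec y')) = \vec c$ by construction, and set $z_\ell := y_\ell - y'_\ell$, so that $\abs{z_\ell} \leq \delta$ by hypothesis. I then split
\[
  f(\vec x, \vec y) - f(\vec x^*, \vec y^*) = \bigl[f(\vec x, \vec y) - f(\vec x, \vec y')\bigr] + \bigl[f(\vec x, \vec y') - f(\vec x^*, \vec y^*)\bigr]
\]
and bound each bracket separately.

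The first bracket is handled by the exact column-update analog of \cref{eq:exact row update}, proven by the same one-line computation: $f(\vec x, \vec y) - f(\vec x, \vec y') = D(\vec c \Vert \vec c(\A x y))$. A direct calculation gives $c_\ell(\A x y) = c_\ell e^{z_\ell}$, so this equals $\sum_{\ell=1}^n c_\ell (e^{z_\ell} - 1 - z_\ell)$. The elementary bound $e^t - 1 - t \leq t^2$ for $\abs t \leq 1$, combined with $\norm{\vec c}_1 = 1$, bounds this by $\delta^2 \leq \delta$, supplying the stand-alone $\delta$ in the claim.

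For the second bracket, \cref{lem:full support l1 potential gap} applied at $(\vec x, \vec y')$ (whose hypothesis $\vec c(\mat A(\vec x, \vec y')) = \vec c$ holds by construction) yields
\[
  f(\vec x, \vec y') - f(\vec x^*, \vec y^*) \leq \norm{\vec r(\mat A(\vec x, \vec y')) - \vec r}_1 \cdot (x_{\max} - x_{\min} + x_{\max}^* - x_{\min}^*).
\]
To pass from $\vec r(\mat A(\vec x, \vec y'))$ to $\vec r(\A x y)$, expand $r_k(\A x y) - r_k(\mat A(\vec x, \vec y')) = \sum_{j=1}^n A_{kj} e^{x_k + y'_j}(e^{z_j} - 1)$. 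The estimate $\abs{e^{z_j} - 1} \leq e^\delta - 1 \leq 2 \delta$, valid for $\delta \leq 1/2$, together with $\norm{\mat A(\vec x, \vec y')}_1 = \norm{\vec c}_1 = 1$, gives $\norm{\vec r(\A x y) - \vec r(\mat A(\vec x, \vec y'))}_1 \leq 2 \delta$. The triangle inequality then replaces $\norm{\vec r(\mat A(\vec x, \vec y')) - \vec r}_1$ by $\norm{\vec r(\A x y) - \vec r}_1 + 2 \delta$, and combining with the first bracket produces the stated inequality.

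The main obstacle, had we tried a direct convexity-based argument in the spirit of \cref{lem:full support l1 potential gap}, is that neither $\vec c(\A x y) = \vec c$ nor $\norm{\vec r(\A x y)}_1 = 1$ holds. Consequently the zero-mean shift trick from the exact proof leaves a spurious $\ip{\vec r(\A x y) - \vec r}{\vec 1}$ contribution coupled to the uncontrolled overall shift of $\vec x^*$; the auxiliary-iterate reduction above sidesteps this bookkeeping cleanly.
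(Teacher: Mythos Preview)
Your proof is correct and follows essentially the same route as the paper's: introduce the exact column update $\vec y'$, split $f(\vec x,\vec y)-f(\vec x^*,\vec y^*)$ through $(\vec x,\vec y')$, bound the first piece by $D(\vec c\Vert\vec c(\A x y))\leq\delta$, apply \cref{lem:full support l1 potential gap} to the second, and compare $\vec r(\A x y)$ with $\vec r(\mat A(\vec x,\vec y'))$ via $\abs{e^{z_j}-1}\leq 2\delta$ and the triangle inequality. The only cosmetic differences are that you bound the relative-entropy term by $\delta^2\leq\delta$ via $e^t-1-t\leq t^2$ whereas the paper cites its earlier estimate $e^z-1-z\leq\abs z$, and you write the row-marginal perturbation directly rather than through the multiplicative form $r_\ell(\A x y)\in[(1-2\delta),(1+2\delta)]\,r_\ell(\mat A(\vec x,\vec y'))$.
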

\begin{proof}
  Let $\vec y'$ be the vector defined by
  \[
    y_\ell' = \ln\left( \frac {c_\ell} {\sum_{i=1}^n A_{i \ell} e^{x_i}} \right),
  \]
  i.e., $\vec y'$ is the vector of column-scaling vectors after an exact Sinkhorn column update starting from~$(\vec x, \vec y)$.
  Then
  \begin{equation}\label{eq:intermed ful gap 2}
    f(\vec x, \vec y) - f(\vec x, \vec y') = D(\vec c \Vert \vec c(\A x y)) \leq \delta,
  \end{equation}
  where the first inequality is \cref{eq:exact row update} and the second inequality follows from the assumption on $\vec y$ (see proof of \cref{lem:other marginals full}; the event $S_t$ corresponds precisely to the assumption on $\vec y$).

  Furthermore, $\vec c(\mat A(\vec x, \vec y')) = \vec c$, so we may apply \cref{lem:full support l1 potential gap} with $(\vec x,\vec y')$ and $(\vec x^*,\vec y^*)$ to obtain
  \begin{equation}\label{eq:intermed ful gap}
    f(\vec x, \vec y') - f(\vec x^*, \vec y^*) \leq \norm{\vec r(\mat A(\vec x, \vec y')) - \vec r}_1 (x_{\max} - x_{\min} + x_{\max}^* - x_{\min}^*).
  \end{equation}
  Since $y_\ell' - y_{\ell} \in [-\delta, \delta]$ for every $\ell \in [n]$, for every $i, j \in [n]$ we have
  \[
    A_{ij} e^{x_i + y_j} \in [e^{-\delta} A_{ij}e^{x_i + y_j'}, e^{\delta} A_{ij}e^{x_i + y_j'}].
  \]
  Since $\delta \leq 1/2$, we can use the estimates $e^{-\delta} \geq 1 - 2 \delta$ and $e^{\delta} \leq 1 + 2 \delta$, which imply that
  \[ r_\ell(\A x y) \in \left[ (1 - 2 \delta) r_\ell(\mat A(\vec x, \vec y')),(1 + 2 \delta) r_\ell(\mat A(\vec x, \vec y')) \right] \]
  for every $\ell \in [n]$.
  By the triangle inequality we get
  \begin{align*}
    \norm{\vec r(\mat A(\vec x, \vec y')) - \vec r}_1 & \leq
                                                        \norm{\vec r(\mat A(\vec x, \vec y')) - \vec r(\A x y)}_1 +
                                                        \norm{\vec r(\A x y) - \vec r}_1 \\
                                                      & \leq 2 \delta \norm{\vec r(\mat A(\vec x, \vec y'))}_1 + \norm{\vec r(\A x y) - \vec r}_1 \\
                                                      & = 2 \delta + \norm{\vec r(\A x y) - \vec r}_1.
  \end{align*}
  where the last equality holds since $\norm{\vec r(\mat A(\vec x, \vec y'))}_1 = \norm{\vec c(\mat A(\vec x, \vec y'))}_1 = \norm{\vec c}_1 = 1$.
  If we plug this into \cref{eq:intermed ful gap} then together with \cref{eq:intermed ful gap 2} the proof is complete.
\end{proof}

We now combine these results to obtain an adaptive potential bound for iterations produced by the Sinkhorn algorithm.

\begin{cor}
  \label{cor:fs adaptive potential bound}
  Let $A \in [\mu, \nu]^{n \times n}$, let $t \geq 1$, and let $x^{(t)}$ and $y^{(t)}$ be as in \cref{alg:FSFP testing}, and assume no call to {\ApproxScalingFactor} has failed.
  If $t$ is even, then we have
  \[
    f(\vec x^{(t)}, \vec y^{(t)}) - f^* \leq \delta + 2 \left( \norm{\vec r(\mat A(\vec x^{(t)},\vec y^{(t)}) - \vec r}_1 + 2 \delta \right) \left( \delta + \ln \frac{r_{\max}}{r_{\min}} + \ln \frac{\nu}{\mu} \right)
  \]
  where $f^* = \inf_{\vec x, \vec y} f(\vec x, \vec y)$.
  A similar statement holds if $t$ is odd.
\end{cor}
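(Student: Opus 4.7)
The plan is to combine the three previously established facts: the potential-gap estimate in \cref{lem:full support l1 potential gap fp}, the variation-norm bound for iterates produced by a $\delta$-approximate Sinkhorn update (\cref{lem:full support bounded variation}), and the variation-norm bound for an exact fixed point (\cref{cor:fs bounded variation fixed point}). The only nontrivial observation is to unpack the indexing of \cref{alg:FSFP testing} so that the hypotheses of these lemmas are actually met at iteration $t$.

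First, I would fix an exact $(\vec r, \vec c)$-scaling $(\vec x^*, \vec y^*)$ of $\mat A$; such a pair exists by Sinkhorn's theorem, since $\mat A \in [\mu, \nu]^{n\times n}$ is entrywise positive and $\vec r, \vec c$ are entrywise positive. Since $(\vec x^*, \vec y^*)$ is a stationary point of the convex function $f$, it is a global minimizer and $f(\vec x^*, \vec y^*) = f^*$. Now suppose $t$ is even. Then in iteration $t$ the algorithm performed a column update, which under the no-failure assumption gives, for every $\ell \in [n]$,
\[
  \abs*{ y^{(t)}_\ell - \ln\!\left(\frac{c_\ell}{\sum_{i=1}^n A_{i\ell}\, e^{x_i^{(t)}}} \right)} \leq \delta,
\]
because $\vec x^{(t)} = \vec x^{(t-1)}$ and the call to \ApproxScalingFactor on \cref{algline:sinkhorn approx col} is a $\delta$-approximation. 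Hence the hypothesis of \cref{lem:full support l1 potential gap fp} is satisfied, and applying it to $(\vec x^{(t)}, \vec y^{(t)})$ and $(\vec x^*, \vec y^*)$ yields
\[
  f(\vec x^{(t)}, \vec y^{(t)}) - f^* \leq \delta + \bigl( \norm{\vec r(\mat A(\vec x^{(t)}, \vec y^{(t)})) - \vec r}_1 + 2\delta \bigr)\,(V + V^*),
\]
where $V = x^{(t)}_{\max} - x^{(t)}_{\min}$ and $V^* = x^*_{\max} - x^*_{\min}$.

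It remains to control $V$ and $V^*$. For $V^*$, \cref{cor:fs bounded variation fixed point} gives $V^* \leq \ln(\nu/\mu) + \ln(r_{\max}/r_{\min})$ directly. For $V$, observe that since $t$ is even and $t \geq 2$, iteration $t-1$ was odd and performed a row update; because all calls to \ApproxScalingFactor succeed, $\vec x^{(t)} = \vec x^{(t-1)}$ is the output of a $\delta$-approximate row update starting from $\vec y^{(t-2)}$. Therefore \cref{lem:full support bounded variation} (applied to that row update) gives $V \leq 2\delta + \ln(\nu/\mu) + \ln(r_{\max}/r_{\min})$. Adding the two estimates,
\[
  V + V^* \leq 2\delta + 2\ln(\nu/\mu) + 2\ln(r_{\max}/r_{\min}) = 2\bigl(\delta + \ln(\nu/\mu) + \ln(r_{\max}/r_{\min})\bigr),
\]
which plugged into the display above gives precisely the stated inequality. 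The case $t$ odd is symmetric, with the roles of rows and columns (and hence of $\vec r$ and $\vec c$) swapped. There is no real obstacle here; the only subtlety is keeping track of which iteration performed the most recent row update, which is why we use $\vec x^{(t)} = \vec x^{(t-1)}$ rather than trying to bound $V$ directly at iteration $t$.
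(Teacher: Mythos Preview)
Your proposal is correct and follows essentially the same approach as the paper: combine \cref{lem:full support l1 potential gap fp} with the variation bounds from \cref{lem:full support bounded variation} and \cref{cor:fs bounded variation fixed point}, applied to an exact scaling $(\vec x^*,\vec y^*)$. You are simply more explicit than the paper in verifying that the hypotheses of \cref{lem:full support l1 potential gap fp} hold at iteration~$t$ and in noting that $f(\vec x^*,\vec y^*)=f^*$ (since an exact scaling is a critical point of the convex~$f$).
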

\begin{proof}
  It is well-known that any entrywise-positive matrix is exactly scalable to arbitrary strictly positive $(\vec r,\vec c)$.
  Thus, there exist $(\vec x^*, \vec y^*)$ that exactly $(\vec r, \vec c)$-scale $\mat A$.
  By \cref{cor:fs bounded variation fixed point}, we have $x_{\max}^* - x_{\min}^* \leq \ln \frac{r_{\max}}{r_{\min}} + \ln \frac{\nu}{\mu}$.
  Furthermore, as we assume no call to \ApproxScalingFactor fails, we also have, by~\cref{lem:full support bounded variation},
  \[
    x_{\max}^{(t)} - x_{\min}^{(t)} \leq 2 \delta + \ln \frac{r_{\max}}{r_{\min}} + \ln \frac{\nu}{\mu}.
  \]
  The result now follows from applying \cref{lem:full support l1 potential gap fp} and a simple estimate.
\end{proof}

The main result of this section is the following theorem, which adapts~\cite[Thm.~6.1]{klrs08} from the $\ell_2$-distance to the relative-entropy setting, and assumes only $\delta$-additive precision in the updates.
The proof follows the same strategy.
Note that the adaptive potential bound (\cref{cor:fs adaptive potential bound}) is stated in terms of the $\ell_1$-distance, which we can further upper bound in terms of the relative entropy using Pinsker's inequality.

\begin{thm} \label{thm: faster for positive}
  Let $0 < \mu \leq \nu \leq 1$, let $\mat A \in [\mu, \nu]^{n \times n}$ with $\norm{\mat A}_1 \leq 1$, let $\vec r, \vec c \in (0, 1]^n$ with $\norm{\vec r}_1 = \norm{\vec c}_1 = 1$, and let $\eps \in (0, 1]$.
  Choose
  \[
    T = \left\lceil\frac{32\ln(1 / \mu) + \log_2(2/\eps) (1+34C)}{\sqrt{\eps}}\right\rceil,
  \]
  $\delta = \eps / 64$,
  $\delta' = \eps / 2$,
  $\eta = 1/(3(n+1)T)$,
  $b_1 = \lceil \log_2(T) + \log_2(\ln( \frac1{\mu}) + 1 + \sigma) \rceil$,
  $b_2 = \lceil\log_2(1/\delta)\rceil$,
  where $C = \delta + \ln(r_{\max}/r_{\min}) + \ln(c_{\max}/c_{\min}) + \ln(\nu / \mu)$
  and $\sigma = \max\{\abs{\ln r_{\min}}, \abs{\ln c_{\min}}\}$.
  Then \cref{alg:FSFP testing} with these parameters returns vectors $(\vec x, \vec y) \in \R^n \times \R^n$ such that, with probability $\geq 2/3$, $\A x y$ is $\eps$-relative-entropy-scaled to~$(\vec r, \vec c)$.

  The resulting classical algorithm has time complexity $\widetilde O(n^2 / \sqrt{\eps})$, while the corresponding quantum algorithm has time complexity $\widetilde O(n^{1.5} / \eps^{1.5})$.
\end{thm}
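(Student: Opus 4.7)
The plan is to adapt the analysis of~\cref{thm: Full testing} by replacing the crude potential gap~\cref{lem:potential gap} with the adaptive potential bound~\cref{cor:fs adaptive potential bound}, which exploits entrywise positivity via the bounded-variation estimate~\cref{lem:full support bounded variation}. The correctness machinery (choice of $b_1, b_2$, failure-probability union bounds across $(n+1)T$ subroutine calls, and the \TestScaling output guarantee) is essentially identical to that in~\cref{thm: Full testing}; the heart of the proof is the improved iteration count $T = \widetilde O(1/\sqrt\eps)$.

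Conditioning on all calls to \ApproxScalingFactor succeeding, I would set $g_t = f(\vec x^{(t)}, \vec y^{(t)}) - f^*$ and let $D_{\mathrm{large}}^{(t)}$ denote the larger of $D(\vec r\Vert \vec r(\mat A^{(t)}))$ and $D(\vec c\Vert \vec c(\mat A^{(t)}))$, noting that by~\cref{lem:other marginals full} the other one is at most $\delta$. Two key inequalities drive the argument:
\begin{itemize}
\item \emph{Adaptive bound:} From~\cref{cor:fs adaptive potential bound} applied according to the parity of $t$, together with Pinsker's inequality (\cref{lem:generalized pinsker}) applied to the ``large'' marginal, one obtains
$g_t \leq 4 C \sqrt{D_{\mathrm{large}}^{(t)}} + O(\delta C).$
\item \emph{Progress:} From~\cref{lem:progress no failure full} applied to iteration $t+1$,
$g_t - g_{t+1} \geq D_{\mathrm{large}}^{(t)} - 2\delta.$
\end{itemize}

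I would then run a dyadic phase argument. Define $\tau_k$ as the first iteration $t$ with $D_{\mathrm{large}}^{(t)} \leq 2^{-k}$ for $k=0,1,\dots,K:=\lceil\log_2(2/\eps)\rceil$. During the warm-up phase $t < \tau_0$, progress per iteration is at least $1-2\delta \geq 1/2$, so by~\cref{lem:potential gap}, $\tau_0 \leq 2\ln(1/\mu)$. Within phase $k \geq 1$, i.e., for $\tau_{k-1} \leq t < \tau_k$, the adaptive bound gives $g_{\tau_{k-1}} \leq 4C \cdot 2^{-(k-1)/2} + O(\delta C)$, while per-iteration progress is at least $2^{-k}/2 - 2\delta \geq 2^{-k}/4$ (using $\delta \leq \eps/64 \leq 2^{-K}/4$). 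This yields $\tau_k - \tau_{k-1} = O(C \cdot 2^{k/2})$, and summing the geometric series gives $\tau_K \leq 2\ln(1/\mu) + O(C/\sqrt\eps)$, which is bounded by the stated $T$. At $t = \tau_K$, both $D_r^{(t)}$ and $D_c^{(t)}$ are at most $\delta' = \eps/2$, so \TestScaling returns \True with high probability.

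The main technical obstacle is bookkeeping: ensuring that the multiplicative $\delta$-errors from \ApproxScalingFactor do not corrupt either the adaptive bound or the progress bound at any phase, which forces $\delta = \Theta(\eps)$ (hence $\delta = \eps/64$) and requires checking that the $O(\delta C)$ slack in the adaptive bound and the $2\delta$ slack in the progress bound remain strictly dominated by $2^{-k}/4$ throughout every phase. The complexity then follows immediately: each iteration costs $\widetilde O(m) = \widetilde O(n^2)$ classically and $\widetilde O(\sqrt{mn}/\delta) = \widetilde O(n^{1.5}/\eps)$ quantumly via~\cref{thm:quantum ApproxScalingFactor guarantee} (using $m = n^2$ for entrywise-positive matrices), and multiplying by $T = \widetilde O(1/\sqrt\eps)$ yields $\widetilde O(n^2/\sqrt\eps)$ and $\widetilde O(n^{1.5}/\eps^{1.5})$ respectively. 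The per-iteration \TestScaling cost matches the per-iteration \ApproxScalingFactor cost (\cref{lem:TestScaling guarantee}) and is absorbed.
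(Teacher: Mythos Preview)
Your proposal is correct and follows essentially the same route as the paper's proof: both use the adaptive potential bound (\cref{cor:fs adaptive potential bound}) combined with Pinsker's inequality and the per-iteration progress lemma (\cref{lem:progress no failure full}) in a dyadic halving argument, after an initial warm-up phase bounded via the crude potential gap (\cref{lem:potential gap}). One minor point worth tightening: the paper sets the warm-up threshold at $1-\ln 2$ rather than $1$, precisely because the generalized Pinsker inequality (\cref{lem:generalized pinsker}) only guarantees $\norm{\cdot}_1 \leq 1$ (and hence the quadratic bound $D \geq \norm{\cdot}_1^2/4$) once $D \leq 1-\ln 2$; you should adjust your $\tau_0$ threshold accordingly, though this affects only constants and not the structure or final bound.
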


\begin{proof}
  Observe first that we have chosen $b_1, b_2$ such that the guarantees of \ApproxScalingFactor and \TestScaling are satisfied at any iteration (cf.~\cref{lem:scaling norm bound}).
  Throughout the proof, we assume that all calls to \ApproxScalingFactor and to \TestScaling made by \cref{alg:FSFP testing} succeed, which by the choice of $\eta$ happens with probability $\geq 2/3$.
  As always, we write $f^* = \inf_{\vec x, \vec y} f(\vec x, \vec y)$, and we abbreviate $f_t=f(\vec x^{(t)},\vec y^{(t)})$, $\vec r^{(t)} = \vec r(\mat A(\vec x^{(t)}, \vec y^{(t)}))$, and $\vec c^{(t)} = \vec c(\mat A(\vec x^{(t)}, \vec y^{(t)}))$ for the potential and the row and column marginals after the $t$-th iteration.

  The strategy is as follows.
  For $t \geq 0$, define $\eps_t = D(\vec r \Vert \vec r^{(t)})$ if $t$ is even, and $\eps_t = D(\vec c \Vert \vec c^{(t)})$ if $t$ is odd.
  By \cref{lem:other marginals full}, the other of the two relative entropies is at most $\delta$ for every $t \geq 1$, and so it suffices to bound the time until $\eps_t$ is sufficiently small.
  We will first bound the number of iterations until $\eps_t \leq 1-\ln 2$, and subsequently bound the number of iterations required for $\eps_t$ to halve.

  We first argue that there exists an $N \leq 32 \ln(1/\mu)$ such that $\eps_N \leq 1 - \ln 2$.
  Suppose for contradiction that this is not the case.
  Then for $t = 0, \dotsc, \lfloor 32 \ln(1 / \mu) \rfloor$, we have $\eps_t > 1 - \ln 2 \geq 1/4$, and by~\cref{lem:potential gap,lem:progress no failure full}, we see that
  \begin{align*}
      \ln(1/\mu)
    & \geq f_0 - f^*
      \geq f_0 - f_{\lfloor 32 \ln(1/\mu) \rfloor + 1}
    = \sum_{t=0}^{\lfloor 32 \ln(1/\mu) \rfloor } \left( f_t - f_{t+1} \right)
    \geq \sum_{t=0}^{\lfloor 32 \ln(1/\mu) \rfloor } \left(\eps_t - 2 \delta \right) \\
    & > \left( \lfloor 32 \ln(1/\mu)\rfloor + 1 \right) \left((1 - \ln 2) - 2 \delta\right)
    \geq 32 \ln(1/\mu) \cdot \frac{1}{16}
    = 2 \ln(1/\mu)
  \end{align*}
  where we used $2 \delta \leq 1/32$. This is the desired contradiction.

  We now bound the halving time.
  Let $t \geq 1$ be such that $\eps_t \leq 1 - \ln 2$, and define
  \begin{equation}\label{eq:def N_t}
    N_t = \inf \{ \tau \geq 0 \;:\; \eps_{t + \tau} \leq \eps_t / 2 \}.
  \end{equation}
  For $\tau = 0, \dotsc, N_t-1$, we have $\eps_{t+\tau} > \eps_t / 2$, and so
  \[
    f_t - f^*
  \geq f_t - f_{N_t}
  \geq \sum_{\tau=0}^{N_t-1} \left( f_{t+\tau} - f_{t+\tau+1} \right)
  \geq \sum_{\tau=0}^{N_t-1} \left( \eps_{t+\tau} - 2 \delta \right)
  \geq N_t \left( \frac {\eps_t} 2 - 2 \delta \right).
  \]
  again by \cref{lem:potential gap,lem:progress no failure full}.
  Therefore, so long as $\eps_t > 4\delta$, we obtain
  \begin{equation}\label{eq:N_t first bound}
    N_t \leq \frac{f_t - f^*}{\eps_t / 2 - 2 \delta}.
  \end{equation}
  We first prove a bound on $N_t$ assuming $8\delta\leq \eps_t\leq 1-\ln 2$.
  For $t$ even, \cref{lem:generalized pinsker} then implies that $\norm{\vec r - \vec r^{(t)}}_1 \leq 1$ (since $\eps_t = D(\vec r\Vert\vec r^{(t)}) \leq 1-\ln 2$, while the function $w(\alpha)$ is strictly larger than $1-\ln 2$ for $\alpha > 1$) and hence $D(\vec r\Vert\vec r^{(t)}) \geq \norm{\vec r - \vec r^{(t)}}_1^2 / 4$, so
  \begin{align*}
      N_t 
    \leq \frac{f_t - f^*}{\eps_t / 2 - 2 \delta}
    &\leq \frac{\delta + 2 \left( \norm{\vec r^{(t)} - \vec r}_1 + 2 \delta \right) C} {\eps_t / 2 - 2 \delta} \\
    &\leq \frac{\delta + 4 \left( \sqrt{\eps_t} + \delta \right) C} {\eps_t / 2 - 2 \delta} \\
    &= \frac{2 \delta (1 + 4C) + 8 \sqrt{\eps_t} C} {\eps_t - 4 \delta} \\
    &\leq \frac{(\eps_t/4) (1 + 4C) + 8 \sqrt{\eps_t} C} {\eps_t/2}
    = \frac {1 + 4C} 2 + \frac {16 C} {\sqrt{\eps_t}},
  \end{align*}
  where the first inequality is \cref{eq:N_t first bound}, the second follows from \cref{cor:fs adaptive potential bound}, and in the last inequality we assume that $\eps_t \geq 8\delta$.
  The same inequality holds for $t$ odd, with an analogous proof.
  Thus, we have proved that, for any $t$ such that $8\delta \leq \eps_t \leq 1-\ln 2$,
  \begin{equation}\label{eq:N_t bound}
      N_t \leq \frac {1 + 4C} 2 + \frac {16 C} {\sqrt{\eps_t}}.
  \end{equation}

  We now combine the preceding to verify that the desired number of iterations suffices for \cref{alg:FSFP testing} to return an $\eps$-relative-entropy-scaling.
  For $s\geq0$, define
  \[
    h_s = \min \left\{ t \geq 0 : \eps_t \leq \frac{1 - \ln 2}{2^{s}} \right\}.
  \]
  We proved above that $h_0 \leq 32 \ln(1/\mu)$.
  Clearly, the sequence $h_s$ is non-decreasing.
  Let
  \[
    S = \min \left\{ s\geq0 : \frac{1 - \ln 2}{2^{s}} \leq 32\delta = \delta' \right\}.
  \]
  Note that the algorithm will necessarily return within the first $h_S$ iterations with an $\eps$-relative-entropy-scaling.
  Indeed, either \TestScaling returns \True during one of the first~$h_S-1$ iterations, or it must return \True in the $h_S$-th iteration, since then $\eps_{h_S} \leq \delta'$ (and the other relative entropy is always at most~$\delta \leq \delta'$).
  Thus it suffices to bound $h_S$.
  For any $0 \leq s < S$, if $h_{s+1} > h_s$ we have $\eps_{h_s} > \frac{1 - \ln 2}{2^{s+1}} > 16\delta > 8\delta$, so
  \begin{align}\label{eq:h_s step}
    h_{s+1} - h_s
  \leq N_{h_s}
  \leq \frac{1+4C}2 + \frac{16C}{\sqrt{\eps_{h_s}}}
  < \frac{1+4C}2 + 2^{s/2} \frac{16\sqrt2C}{\sqrt{1 - \ln 2}}
  \end{align}
  where the first inequality holds by definition of $N_t$ in \cref{eq:def N_t} and the second inequality is \cref{eq:N_t bound};
  the latter is applicable since $8\delta \leq \eps_{h_s} \leq 1-\ln 2$.
  Clearly \cref{eq:h_s step} also holds if $h_{s+1} = h_s$.
  Thus we can upper bound the total number of iterations required by
  \begin{align*}
      h_S
    &= h_0 + \sum_{s=0}^{S-1} \left( h_{s+1} - h_s \right) \\
    &< 32 \ln(1/\mu) + \sum_{s=0}^{S-1} \left( \frac{1+4C}2 + 2^{s/2} \frac{16\sqrt2C}{\sqrt{1 - \ln 2}} \right) \\
    &\leq 32 \ln(1/\mu) + S \left( \frac{1+4C}2 + 2^{(S-1)/2} \frac{16\sqrt2C}{\sqrt{1 - \ln 2}} \right) \\
    &\leq 32 \ln(1/\mu) + S \left( \frac{1+4C}2 + \frac{\sqrt{2(1 - \ln 2)}}{\sqrt{\delta'} \cdot \sqrt{2}} \frac{16\sqrt2 C}{\sqrt{1 - \ln 2}} \right) \\
    &\leq 32 \ln(1/\mu) + \log_2\bigl(2(1-\ln 2) / \delta'\bigr) \left( \frac{1+4C}2 + \frac{32C}{\sqrt{2\delta'}} \right) \\
    &\leq 32 \ln(1/\mu) + \log_2\bigl(2 / \eps\bigr) \left( \frac12 + 2C + \frac{32C}{\sqrt{\eps}} \right) \\
    &\leq \frac{32 \ln(1/\mu) + \log_2\bigl(2 / \eps\bigr) \left( 1 + 34C \right)}{\sqrt\eps},
  \end{align*}
  where we used that $2^S < 2(1-\ln 2) / \delta'$ by definition of~$S$ (noting that $S\geq1$), as well as $\delta'=\eps/2$ and $\eps\leq1$.
\end{proof}

\end{document}